\providecommand{\U}[1]{\protect\rule{.1in}{.1in}}
\newtheorem{theorem}{Theorem}[section]
\newtheorem{corollary}[theorem]{Corollary}
\newtheorem{definition}[theorem]{Definition}
\newtheorem{lemma}[theorem]{Lemma}
\newtheorem{proposition}[theorem]{Proposition}
\newtheorem{remark}[theorem]{Remark}
\numberwithin{equation}{section}
\begin{document}
\title{On the Theory of Weak Turbulence for the Nonlinear Schr\"odinger Equation.}

\author[Miguel Escobedo]{Miguel Escobedo$^{1,2}$}
\address{$^{1}$Departamento de Matem\'aticas,
Universidad del Pa\'{\i}s Vasco,
Apartado 644, 48080 Bilbao, Spain. miguel.escobedo@ehu.es}
\address{$^{2}$Basque Center for Applied Mathematics
(BCAM), Alameda de Mazarredo 14, E--48009 Bilbao, Spain.}
\author[Juan J. L. Vel\'azquez]{Juan J. L. Vel\'azquez$^{3}$}
\address{$^{2}$Institute of Applied Mathematics, University of Bonn, Endenicher Allee 60,
53115 Bonn, Germany. E-mail: velazquez@iam.uni-bonn.de}

\keywords{weak turbulence,  finite time blow up, condensation, pulsating solution, energy transfer.}

\subjclass[2010]{45G05,  35Q20, 35B40, 35D05}

\thanks{This work has been supported by DGES Grant 2011-29306-C02-00, Basque Government Grant IT641-13, the
Hausdorff Center for Mathematics of the University of Bonn and the Collaborative Research Center {\it The Mathematics of Emergent Effects} (DFG SFB 1060, University of Bonn).
}

\begin{abstract}
We study the Cauchy problem for  a kinetic equation arising in the weak turbulence theory  for the cubic nonlinear Schr\"odinger equation. We define suitable concepts of weak and mild solutions and prove local and global well posedness results. Several qualitative properties of the solutions, including long time asymptotics, blow up results and  condensation in finite time are obtained. We also prove the existence of a family of solutions that exhibit pulsating behavior. 
\end{abstract}

\maketitle

\section{Introduction}

The name weak turbulence \index{weak turbulence}is often used in the physical literature to describe
the transfer of energy between different frequencies which takes place in
several nonlinear wave equations with weak nonlinearities.

The theory of weak turbulence has been extensively developed in the last
decades and many applications are available today. From the mathematical point
of view, the starting point of all the problems which can be studied using the
weak turbulence \index{weak turbulence}approach is a set of nonlinear wave equations with weak
nonlinearities. We will denote as $\varepsilon$ a small number which measures
the strength of the nonlinear interactions. If $\varepsilon$ is set to zero,
the problem becomes a linear system of wave equations which will be termed in
the following as the linearized problem. In order to simplify the presentation
we will restrict this introductory description of weak turbulence \index{weak turbulence}theory to
the cases in which the set of nonlinear equations is solved in the whole space
$x\in\mathbb{R}^{N},$ for $t\in\mathbb{R}$ and where the equations are
invariant under space and time translations. This allows to solve the
linearized problem using standard Fourier transforms, but in principle the
same ideas could be applied to nonhomogeneous systems. Suppose that the set of
magnitudes by the wave equations is denoted as $u=u\left(  t,x\right)  ,$
where $u\in\mathbb{C}^{L}$ or $u\in\mathbb{R}^{L}.$ Then, the linearized
problem admits solutions proportional to $e^{i\left(  kx+\omega t\right)  }$
with $\omega=\Omega\left(  k\right)  ,$ where $\Omega$ is a function, perhaps
multivalued, which is often referred as dispersion relation. In conservative
(vs. dissipative) problems, the function $\Omega\left(  \cdot\right)  $ is
real. A large class of initial data for the linearized problem can be
decomposed in Fourier modes $e^{ikx}$ and then, the solution of the linear
equation is given by the form:%
\[
u\left(  t,x\right)  =\int a\left(  t,k\right)  e^{i\left(  kx-\omega
t\right)  }d^{N}k\ \ \text{with\ \ }u\left(  x,0\right)  =u_{0}\left(
x\right)  =\int a\left(  0,k\right)  e^{ikx}d^{N}k
\]

A crucial quantity in weak turbulence \index{weak turbulence}theories is the density in the
wavenumber space $f\left(  t,k\right)  =\left\vert a\left(  t,k\right)
\right\vert ^{2}.$ Since $\Omega\left(  \cdot\right)  $ is real, the function
$f\left(  t,k\right)  $ is constant in time for the solutions of the
linearized problem. However, the dynamics of $f\left(  t,k\right)  $ becomes
nontrivial if the nonlinear terms in the original system of wave equations are
taken into account. Typically, due to the effect of resonances between some
specific wavenumbers $k,$ the function $f$ changes in time with a rate that
usually is a power law of the strength of the nonlinearities.

In principle, it is not possible to write a closed evolution equation for the
function $f\left(  t,k\right)  $ because the dynamics of the function
$a\left(  t,\cdot\right)  $ does not depend only on $\left\vert a\left(
t,k\right)  \right\vert $ but also in the phase of $a\left(  t,\cdot\right)
.$ However, one of the key hypothesis in weak turbulence \index{weak turbulence}theory is that for a
suitably chosen class of initial data $u_{0},$ it is possible to approximate
the evolution of $f$ by means of a kinetic equation. Moreover, in the limit of
weak nonlinear interactions, it is possible to give an interpretation of the
evolution of $f\left(  t,k\right)  $ by means of a particle model. The
evolution of $f$ is driven to the leading order in $\varepsilon$ by resonances
between linear modes with different values of $k.$ This resonance condition
can be given the interpretation of a collision between a number of particles,
which results in another group of particles. The numbers of particles involved
in these fictitious collisions depend on the dispersion relation for the
linearized problem as well as in the form of the nonlinear terms. The
resonance condition between modes can be understood as a condition for the
conservation of the moment and energy of the particles in the collision
process, assuming that $k$ and $\omega$ are given the interpretation of moment
and energy of the particles respectively. This makes this particle
interpretation for the effect of the nonlinearities particularly appealing.

The precise conditions that  allow to approximate the dynamics of wave
equations by the kinetic models of weak turbulence \index{weak turbulence}have not been obtained in a
fully rigorous manner. However, the physical derivations of the kinetic models
of weak turbulence assume the statistical independence of the phases of the
modes $a\left(  0,k\right)  .$ From this point of view the derivation of the
kinetic models of weak turbulence starting from wave equations have several
analogies with the formal derivations of the Boltzmann equation starting from
the dynamics of a particle system which can be found in the physical
literature. It is also worth mentioning that the theory of weak turbulence
assumes that the solutions of the underlying wave equation can be approximated
to the leading order by means of solutions of the linearized problem. However,
it is well known that effects induced by the nonlinear terms in the equation,
which can become relevant for some ranges of $k,$ can have a strong influence
in the distribution function $f$ (cf. \cite{DNPZ}, \cite{MMT}, \cite{NR},
\cite{ZDP}). 

The
collision kernels arising in the kinetic equations of weak turbulence \index{weak turbulence}theory
depend strongly on the details of the problem under consideration, as well as
the number of particles involved in the collisions. Nevertheless this approach
has been shown to be very fruitful in several physical problems, including
water surface and capillary water waves (cf. \cite{J}, \cite{ZFww},
\cite{ZF}), internal waves on density stratifications (cf. \cite{CZ}, \cite{LT}), nonlinear
optics (cf. \cite{DNPZ}) and\ waves in Bose-Einstein condensates, planetary
Rossby waves (cf. \cite{BN}, \cite{LG}), and vibrating elastic plates (cf.
\cite{DJR}) among others. Many more applications as well as an extensive
references list can be found in \cite{N}.

The first derivation of a kinetic model of weak turbulence \index{weak turbulence}was obtained, to
our knowledge, in \cite{Peierls} in the context of the study of phonon
interactions in anharmonic crystals. Derivations which take as starting point
wave equations arising in a large variety of physical contexts and yielding
analogous kinetic models were obtained in the 1960's in \cite{BS}, \cite{GK1},
\cite{Hass1}, \cite{Hass2}, \cite{New1}, \cite{New2}, \cite{SG}. There has
been a large increase in the number of applications of weak turbulence theory
in the last fifteen years. References about these more recent developments can
also be found in \cite{N}.

One of the most relevant mathematical results for the kinetic models of weak
turbulence was the discovery by V. E. Zakharov of a class of stationary power
law solutions for many models of weak turbulence.\index{weak turbulence} The earliest solutions of
this class can be found in \cite{Z1}, \cite{Z2}. Some of the solutions found
in \cite{Z1}, \cite{Z2} are just thermodynamic equilibria. These equilibria, 
take the form of a power law and they are usually termed as Rayleigh Jeans \index{Rayleigh Jeans}equilibria. 
However, some of the solutions found by V. E. Zakharov are characterized by the presence of
fluxes of some physical magnitude (typically number of particles of energy)
between different regions of the space $k.$ From this point of view they have
very strong analogies with the Kolmogorov solutions for the theory of
turbulence in fluids, although in this last case the nonlinearities of the underlying
problem (namely Euler's equations) are much stronger than in the case of weak
turbulence. Due to this analogy, power law stationary solutions of kinetic
models of weak turbulence \index{weak turbulence}which describe fluxes between different regions of
the phase space are usually termed as Kolmogorov-Zakharov \index{Kolmogorov-Zakharov}solutions. Some of
the earliest examples of such type of solutions can be found in
\cite{KatsKant}, \cite{ZF}. Several other examples can be found in \cite{ZF}.
Methods to study linear stability for the Kolmogorov-Zakharov \index{Kolmogorov-Zakharov}solutions in
several models of weak turbulence can be found in \cite{BZ}.

One of the simplest, and most widely studied models of weak turbulence, \index{weak turbulence}is the
one in which the underlying nonlinear wave equation is the nonlinear
Schr\"{o}dinger equation \index{Schr\"{o}dinger equation}\index{Schr\"{o}dinger equation}(cf.   \cite{DNPZ}, \cite{N}, \cite{Zbook} and references therein).
More precisely, the function $u=u\left(  x,t\right)
\in\mathbb{C}$ satisfies:%
\begin{equation}
i\partial_{t}u=-\Delta u+\varepsilon\left\vert u\right\vert ^{2}%
u\ \ ,\ \ u\left(  0,\cdot\right)  =u_{0} \label{M1E1}%
\end{equation}
We will assume by definiteness that this problem is considered in $\left(
t,x\right)  \in\mathbb{R}^{3}\times\mathbb{R}.$ If $\varepsilon=0$, equation
(\ref{M1E1}) becomes the linear Schr\"{o}dinger equation \index{Schr\"{o}dinger equation}whose solutions are
given by integrals of the form $\int_{\mathbb{R}^{3}}e^{i\left(  kx-\omega
t\right)  }d\mu\left(  k\right)  $ for a large class of measures $\mu
\in\mathcal{M}_{+}\left(  \mathbb{R}^{3}\right)  $ with $\omega=k^{2}.$ Weak
turbulence theory suggests that, for a choice of initial data $u_{0}$
according to a suitable class of probability measures homogeneous in space,
the dynamics of the solutions of (\ref{M1E1}) for small $\varepsilon$ can be
obtained by means of the kinetic equation (cf. \cite{DNPZ}):
\begin{eqnarray}
\partial_{t}F_{1}  &  = & \frac{\varepsilon^{2}}{\pi}\iiint_{\left(
\mathbb{R}^{3}\right)  ^{3}}\delta\left(  k_{1}+k_{2}-k_{3}-k_{4}\right)
\delta\left(  \omega_{1}+\omega_{2}-\omega_{3}-\omega_{4}\right)
\cdot\label{M1E2}\\
&&\hskip 2cm   \cdot\left[  F_{3}F_{4}\left(  F_{1}+F_{2}\right)  -F_{1}F_{2}\left(
F_{3}+F_{4}\right)  \right]  dk_{2}dk_{3}dk_{4}\ \nonumber
\end{eqnarray}
with $\omega=\Omega\left(  k\right)  =k^{2},$ and where from now on we will
use the notation $F_{\ell}=F\left(  t,k_{\ell}\right)  ,\ \ell=1,2,3,4.$
Equation (\ref{M1E2}) is one of the most important examples of kinetic
model arising in weak turbulence theory. \index{weak turbulence}
It allows to understand some of the solutions
of the nonlinear equation (\ref{M1E1}) in terms of particle collisions.  Equation (\ref{M1E2}) has been extensively used to study problems in optical turbulence and Bose Einstein \index{Bose Einstein}condensation \index{condensation}(cf. \cite{DNPZ}, \cite{JPR}, \cite{LLPR}, \cite{LY1, LY2, LY3},  \cite{N},  \cite{P}, \cite{SK1, SK2}, \cite{Spohn}, \cite{Zbook}). 

In this paper we only consider the isotropic case of  equation
(\ref{M1E2}). The main reason for such a restriction is that it is not possible to 
give a meaning to the
operator in the right-hand side of (\ref{M1E2}) in a unique way if $F$
contains Dirac masses, as it was noticed in in \cite{Lu1} for an equation closely related, namely  the Nordheim  \index{Nordheim} equation which will be discussed in Section \ref{Nordheim} .

Suppose therefore that we look for solutions of (\ref{M1E2}) with the form:
$F\left(  t,k\right)  =f\left(  t,\omega\right)  ,$ $\omega=k^{2}.$ Then,
after rescaling the time variable $t$ in order to eliminate from the equation
some constants we obtain that $f$ solves:%
\begin{equation}
\partial_{t}f_{1}=\iint  W\left[  \left(  f_{1}+f_{2}\right)  f_{3}%
f_{4}-\left(  f_{3}+f_{4}\right)  f_{1}f_{2}\right]  d\omega_{3}d\omega
_{4},\ \ t>0  \label{E1}
\end{equation}
where $f_{k}=f\left(  t,\omega_{k}\right)  ,\ k=1,2,3,4$ and
\begin{eqnarray}
W=\frac{\min\left\{  \sqrt{\omega_{1}},\sqrt{\omega_{2}},\sqrt{\omega_{3}%
},\sqrt{\omega_{4}}\right\}  }{\sqrt{\omega_{1}}}\ \ ,\ \ \ \omega_{2}%
=\omega_{3}+\omega_{4}-\omega_{1} \label{E2}
\end{eqnarray}
We are interested in
the initial value problem associated to (\ref{E1}), (\ref{E2}). We will then
assume that (\ref{E1}), (\ref{E2}) is solved with initial value $f_{0}\left(
\omega\right)  ,$ i.e.:%
\begin{equation}
f\left(  0,\omega\right)  =f_{0}\left(  \omega\right)  \geq0\ \ ,\ \ \omega
\geq0 \label{E2a}%
\end{equation}

The function $f$ is not a particle density in the space of frequencies
$\omega,$ due to the presence of some nontrivial jacobians. A magnitude that
is proportional to the density of particles in the space $\left\{  \omega
\geq0\right\}  $ is the function $g$ defined by means of:%
\begin{equation}
g= \sqrt \omega \, f \label{E2b}%
\end{equation}

Then $g$ solves:
\begin{equation}
\partial_{t}g_{1}=\iint\Phi\left[\left(\frac{g_{1}}{\sqrt{\omega_{1}}%
}+\frac{g_{2}}{\sqrt{\omega_{2}}}\right)  \frac{g_{3}g_{4}}{\sqrt{\omega
_{3}\omega_{4}}}-\left(  \frac{g_{3}}{\sqrt{\omega_{3}}}+\frac{g_{4}}%
{\sqrt{\omega_{4}}}\right)  \frac{g_{1}g_{2}}{\sqrt{\omega_{1}\omega_{2}}%
}\right]  d\omega_{3}d\omega_{4} \label{S2E1}%
\end{equation}
where:%
\begin{equation}
\Phi=\min\left\{  \sqrt{\omega_{1}},\sqrt{\omega_{2}},\sqrt{\omega_{3}}%
,\sqrt{\omega_{4}}\right\}  \label{S2E1a}%
\end{equation}%
\begin{equation}
g\left(  0,\omega\right)  =g_{0}\left(  \omega\right)  =\sqrt \omega \, f_0(\omega )\geq0\ \ ,\ \ \omega\geq0 \label{S2E1b}%
\end{equation}

The integrations in (\ref{E1}), (\ref{S2E1}) are always restricted to the
region 
$$D(\omega  _1)=\left\{\omega_{3}\geq0,\ \omega_{4}\geq0; \,\,\omega _3+\omega _4\ge \omega _1\right\},\,\,\, \omega _1\ge 0.$$ 
In order to simplify the notation, we will assume in all the remainder of the paper
that $\Phi=0$ in $\mathbb{R}^{4}\setminus\left[  0,\infty\right)  ^{4}.$

As we already mentioned, the equations (\ref{M1E1}) and (\ref{M1E2}) have been used to study different questions related with the Bose Einstein \index{Bose Einstein}condensation.\index{condensation} In particular,  on the basis of physical arguments, as well as formal asymptotics and numerical simulations, it has been  accepted that, under some conditions, the solutions of equation (\ref{M1E2}) would contain, at least after some time, a Dirac mass at the origin (cf.  \cite{DNPZ}, \cite{LY1, LY2, LY3}, \cite{P}, \cite{SK1, SK2}), a question that we will consider in some detail later. Let us only say here that, in the mentioned literature,  this property is considered  as  reminiscent of the Bose Einstein condensation phenomena. Then, with  some abuse of language, we will refer  to the solutions of (\ref{M1E2}) that have a Dirac measure at the origin as solutions with a condensate.  \index{condensate}

\subsection{Main results}
\label{results}
The main goals of this paper are to study the Cauchy problem associated to  (\ref{S2E1})-(\ref{S2E1b}) (or equivalently (\ref{E1})-(\ref{E2a})),  to obtain some of the qualitative behavior of the solutions and describe their long time asymptotic behavior.  Although we  prove
several well-posedness results for  initial data with infinite number of particles, i.e.  $\int g_{0}=\infty$, we  restrict most of the analysis in this paper  to the case where$\int g_{0}<\infty.$ 

The stationary solutions of the equation (\ref{S2E1})-(\ref{S2E1b}) have been studied in the physics literature. On the other hand, the Cauchy problem for the Nordheim equation,  (cf. Subsection \ref{Nordheim}),  has been studied in \cite{EMV1, EMV2}, \cite{EV1, EV2}, \cite{Lu1, Lu2, Lu3}.

 The main results that are proved in this paper are the following:
 \subsubsection{Existence results \& stationary solutions.}\hfill\break
 
 \textbf{1.}- Existence of bounded mild solutions \index{mild solution} of (\ref{E1})-(\ref{E2a}), i.e. solutions 
of  the equation in the sense of the integral formulation which results from
the Duhamel's formula. These solutions are  locally defined in time for a large class
of bounded initial data (cf. Theorem \ref{localExBounded}).

\textbf{2.}- Existence of global weak solutions \index{weak solution} of (\ref{E1})-(\ref{E2a}), i.e.
solutions in the sense of distributions, globally defined in time for a large
class of initial measures with total finite mass (cf. Theorem
\ref{globalWeakSol}). 

\textbf{3.} Characterisation of the weak stationary solutions of (\ref{S2E1})-(\ref{S2E1a}) with finite mass as the
Dirac masses $g_{stat}=\delta_{R},$ with $R\geq0$ (cf.  Theorem \ref{StatIsot}).

 \subsubsection{Qualitative behavior of the solutions.}\hfill\break
 
 \textbf{4.} Characterization of the long time asymptotics of the weak solutions \index{weak solution} of
(\ref{S2E1})-(\ref{S2E1b}) with finite mass $\int g\left(  t,d\omega\right)
<\infty$  in terms of the properties of the initial data $g_{0}$ (cf. 
Theorem \ref{Asympt}). This result states that $g\left(  t,\cdot\right)
\rightharpoonup\delta_{R_{\ast}}$ with $R_{\ast}=\inf\left[
\operatorname*{supp}g_{0}\right]  .$ 

\textbf{5.} Transport of the energy of the system  towards $\omega\rightarrow\infty.$ 
as $t\rightarrow \infty,$ for a large class of weak solutions \index{weak solution} of (\ref{S2E1})-(\ref{S2E1b}) with
finite energy the  (cf. Corollary \ref{AsEnergy}).

\textbf{6.} Optimal upper estimates for the rate of transport of the
energy towards large values of $\omega$ for the solutions described in the
point 8 (cf. Proposition \ref{coarsening}).

\textbf{7.}
If $g$ is a weak solution \index{weak solution} of (\ref{S2E1})-(\ref{S2E1b})
globally defined in time and if  we define $R_{\ast}=\inf\left[  \operatorname*{supp}%
g_{0}\right]  =0,$ it is possible to show the following alternative: Either
$\int_{\left\{  0\right\}  }g\left(  t,d\omega\right)>0$ for $t>t_{\ast},$
or the mass of $g$ approaches towards $\omega=0$ in a "pulsating manner" \index{pulsating}
(cf. Theorem \ref{AsympOsc}).

\textbf{8.} (Blow-up in finite time). 
Existence of solutions of (\ref{E1})-(\ref{E2a}) with initial data such that $\left\Vert
f_{0}\right\Vert _{L^{\infty}\left(  \left[  0,\infty\right)  \right)
}<\infty$ for which $\lim\sup_{t\rightarrow T}\left\Vert f\left(
t,\cdot\right)  \right\Vert _{L^{\infty}\left(  \left[  0,\infty\right)
\right)  }=\infty$ for some $T<\infty$ (cf. Theorem
\ref{BU}).

\textbf{9.} There exist initial data $g_{0}$ such that the first alternative stated
in the point 6 takes place (cf. Theorem \ref{weakCond}). Moreover, there
exist also initial data $g_{0}$ such that the second alternative stated in the
point (7) takes place (cf. Theorem \ref{globOsc}).

It may be useful to precise the meaning of the pulsating solutions \index{pulsating}
mentioned in the point (7). Suppose that $R_{\ast}=0.$ Then, either there
exists $t^{\ast}\in\left(  0,\infty\right)  $ such that $\int_{\left\{  0\right\}  }g\left(  t,d\omega\right)  >0$ for $t>t_{\ast}$, or,
alternatively, during most of the times, there exists $\rho=\rho\left(  t\right)  >0$ such that $\frac{1}%
{\rho\left(  t\right)  }g\left(  t,\frac{\cdot}{\rho\left(  t\right)
}\right)  $ is close to the Dirac mass $M\delta_{1}\left(  \cdot\right)  $ in
the weak topology. It turns out that the function $\rho\left(  t\right)  $
does not change its position continuously in general.\ On the
contrary, we study in detail a class of initial data $g_{0}$ for which
the function $\rho\left(  t\right)  $ can be shown to change by means of some
jumps which take place at specific times (cf. Chapter 4). At those times $g$ ceases being
close to a Dirac mass and its mass spreads among a large set of values
$\omega.$ After a transient time $g\left(  t,\cdot\right)  $ concentrates its
mass again close to a Dirac mass whose position is closer to the origin than
the previous one. This process is iterated infinitely many times as $g\left(
t,\cdot\right)  $ approaches to $M\delta_{0}\left(  \cdot\right)  $ as
$t\rightarrow\infty.$ During all this evolution the solution satisfies
$\int_{\left\{  0\right\}  }g\left(  t,d\omega\right)  =0.$

The dynamics of the solutions of  (\ref{S2E1}) has been extensively studied in the physics literature by means of physical simulations and formal asymptotic arguments. Two of the  main questions that have been discussed are the finite time condensation \index{condensation}and the asymptotic behaviour as $t\to +\infty$.

When considering the long time behavior for the solutions of kinetic equations of type (\ref{S2E1}) (or equivalently (\ref{E1})) for gravity waves, it was seen  in \cite{Hass2} that the Dirac masses where stationary solutions of the corresponding weak turbulence \index{weak turbulence}equation, but it was suggested that generic solutions should converge to the Rayleigh Jeans \index{Rayleigh Jeans}equilibria. In \cite{DNPZ}, using dimensional and scaling arguments, the authors  indicate that as $t$ tends to $\infty$,  the solutions of  (\ref{S2E1}), in presence of a condensate, converge towards a Dirac mass located at the origin containing the total number of particles. The same result is also described in \cite{P} as well as in \cite{Zbook},  where finite time condensation \index{condensation}is also briefly described. Different scenarios  of  condensate formation in finite or infinite time have been discussed in
\cite{KSS},\cite{LY1, LY2, LY3}, \cite{S}.
It is now widely believed that a generic mechanism for the formation of a condensate  is the one described  in \cite{JPR}, \cite{LLPR}, \cite{SK1}, \cite{SK2}. In these papers, using numerical simulations and asymptotic arguments, it is shown  how the condensate arises by means of a finite time blow up \index{blow up}of the solutions of the equation (\ref{S2E1}). Near the blow up \index{blow up} point the particle distribution $f$ is given by a self similar solution of the second kind. Additional information about these issues may be found in See \cite{N} Chapter 15.

Our results  of points (3) and (4) above prove that all the weak solutions \index{weak solution} of (\ref{S2E1})-(\ref{S2E1a}), without flux at the origin and with finite mass, converge, in the weak sense of measures, to a Dirac delta containing all the mass of the solution and located at a suitable value of $\omega $. This asymptotic behavior can take place either with the formation of condensate in finite time or without it.  The results in points (7) and (9) show that both possibilities can take place.

It has been shown in  \cite{Z3} that the equation (\ref{E1}), (\ref{E2}) has two Kolmogorov-Zakharov \index{Kolmogorov-Zakharov}solutions namely  $f_1=\omega^{-7/6}$ and $f_2=\omega ^{-3/2}$. The first, $f_1$,  has a constant flux of particles towards the origin and a zero flux of energy. The second or $f_2$, has a constant flux of energy towards large values of $\omega$ and zero flux of particles, (cf  \cite{DNPZ} and \cite{Z3}).  However, since the integrals that define the fluxes for $f_2$ are divergent, we will only consider in this paper the solution $f_1$.
It is nevertheless interesting to notice that  the finite mass, zero flux weak 
solutions obtained in points (4) and (8)
present both fluxes, in the directions predicted by these two  Kolmogorov-Zakharov \index{Kolmogorov-Zakharov}solutions. 
This behavior  indicates a
tendency of these solutions  to transport particles
towards small values of $k.$ Since the energy is conserved in the particle
collisions and the energy is reduced if $\left\vert k\right\vert $ is reduced,
the inward particle flux must be compensated with an outward particle flux.
The tendency to have these particle and energy fluxes will be made precise in
this paper for isotropic solutions. We will derive in Section
\ref{Transfer} estimates for the rate of transfer of energy towards infinity
for particle distributions satisfying $\int g_{0}\left(  d\omega\right)
<\infty.$ Some heuristic estimates about the characteristic time scales for
the transfer of energy for arbitrary distributions $g_{0}$ are also discussed in Section \ref{Transfer2}.

It is interesting to compare the results concerning energy fluxes towards
infinity with those obtained for the nonlinear Schr\"{o}dinger equation \index{Schr\"{o}dinger equation}
obtained in the articles \cite{CKSTT}, \cite{K1} and \cite{K2}. The results in those
 papers show the existence of
solutions of the NLS equation for which the energy can be transferred to 
large values of the frequency. The results in our paper concern just the
kinetic approximation of the NLS equation, but prove rigorously the escape of
the energy towards large values of $\left\vert k\right\vert $ as
$t\rightarrow\infty.$ In the absence of a precise rigorous results relating
the solutions of the NLS equation and the corresponding kinetic theory of weak
turbulence it is hard to precise the connections between both types of
results. \

\subsection{Relation with the Nordheim  \index{Nordheim} equation}
\label{Nordheim}
Several  of the methods and results  in this paper bear some analogies
with those in \cite{EV1} for the Nordheim  \index{Nordheim}
equation. This equation, arises in the study of rarefied gases of
quantum particles and takes the following form for homogeneous isotropic
distributions:%
\begin{align}
\partial_{t}f_{1}  &  =\iint W\left[  \left(  1+f_{1}+f_{2}\right)
f_{3}f_{4}-\left(  1+f_{3}+f_{4}\right)  f_{1}f_{2}\right]  d\omega_{3}%
d\omega_{4}\label{E5}\\
W  &  =\frac{\min\left\{  \sqrt{\omega_{1}},\sqrt{\omega_{2}},\sqrt{\omega
_{3}},\sqrt{\omega_{4}}\right\}  }{\sqrt{\omega_{1}}} \label{E6}%
\end{align}

Equation (\ref{E5}), (\ref{E6}) differs from (\ref{E1}), (\ref{E2}) in the onset of the quadratic terms $f_3f_4-f_1f_2$. These additional terms come from the use of the Bose Einstein statistics, instead of  the classical, in the counting of the particles in the collisions.

The connection between the two equations (\ref{E1}),  (\ref{E5}) has been  already noticed by several authors,  (cf. for example \cite{BZ}, \cite{LLPR},  \cite{N} and the references therein). It is suggested in particular that the cubic terms in (\ref{E5}) should be dominant  in the limit of large occupation numbers.  
As a matter of fact, the results for the solutions of (\ref{E1}) (\ref{E5}) in points (1), (2) and (8) of Section (\ref{results}) above,  have also been obtained for the solutions of (\ref{E5}), (\ref{E6}) in \cite{EV1} and their proofs are very similar. 

\subsection{Plan of the paper.}

The plan of this paper is the following. Section 2 contains the
definition of the different concepts of solutions of (\ref{S2E1}%
)-(\ref{S2E1b}) which will be used in this paper, the relation
between then and several well posedness results. Two main concepts of solutions will be used in this paper,
namely mild solutions \index{mild solution}  (i.e. solutions in the sense of the variation of
constants formula), and weak solutions,\index{weak solution} which satisfy the equation in the
sense of distributions. This Section also contains a complete classification of the stationary solutions with finite mass. 
We end Section  2 explaining how the Kolmogorov-Zakharov \index{Kolmogorov-Zakharov}solutions \index{Kolmogorov-Zakharov}and some related ones, fit into the framework of this paper.
Section 3 describes several qualitative properties of the  solutions of (\ref{S2E1})-(\ref{S2E1b}). We obtain a classification of all the possible long time asymptotics of  one of the types of weak solutions that we have defined, namely those with interacting condensate and finite mass. We also derive some estimates for the rate of transfer of the energy towards large values of $\omega $. We also prove a refined theorem concerning the long time asymptotic of the solutions which shows that if the mass concentrates at the origin, either a condensate appears in finite time or the solution exhibits a behavior that we will denote as pulsating. \index{pulsating} Finally we also prove in this chapter that solutions blow up \index{blow up}in finite time.
Section 4 contains a construction of a large family of initial data for which the solutions do not condensate in finite time but exhibit pulsating behavior as $t$ goes to infinity. This is one of the most technical parts of the paper.
Section 5 gives a description by means of heuristic arguments  of how is the precise transfer of mass and energy for the different types of weak solutions considered in this paper. This chapter contains a list of open problems suggested by the results of this paper.
Section 6 contains several results that are basically adaptation of previous results obtained in \cite{EV1}.

\section{Well-Posedness Results}

We consider in this article different types of solutions for the equations  (\ref{E1}), (\ref{E2}) and
(\ref{S2E1}), (\ref{S2E1a}). Some of them are measured valued solutions that do not solve the equations in classical form. Therefore, we need suitable concepts of generalized solutions for these equations. 

An analysis of the  physical literature shows that two different types of solutions of (\ref{E1}), (\ref{E2}) have been implicitly considered,  depending on the interaction that is assumed between the condensate and the remaining particles of the system. For example, in \cite{LLPR}, \cite{JPR} it is assumed that there is no difference in the interactions between particles, whether they are or not in the condensate.  On the contrary, for the Kolmogorov-Zakharov \index{Kolmogorov-Zakharov}solutions, and related ones, it is implicitly assumed that the particles in the condensate do not interact with the remaining particles of the system. The difference between the two situations may be seen as reminiscent of the case of diffusive  particles reaching a boundary, where either reflecting or absorbing boundary conditions can be imposed. Motivated by these two different situations we define two different types of weak solutions.

We will also use in this paper mild solutions of (\ref{E1}), (\ref{E2}) and (\ref{S2E1}), (\ref{S2E1a}). They will be shown to be a subclass of  weak solutions, and several of their properties will be studied later in this paper. Mild solutions of a regularized version of (\ref{S2E1}), (\ref{S2E1a}) will be used as technical tool in one of the existence result for weak solutions.

We define:
\begin{eqnarray}
&& \Phi_{\sigma}=\min\left\{  \sqrt{\left(  \omega_{k}-\sigma\right)  _{+}}, k=1, 2, 3, 4.\right\},\,\, \hbox{for}\,\sigma>0;
\,\,\, \Phi_{0}=\Phi. \label{Z2E2}
 \end{eqnarray}
and  introduce for further reference the analogous of equation (\ref{S2E1})
with the collision kernel $\Phi$ replaced by $\Phi_{\sigma}:$
\begin{eqnarray}
&&\partial_{t}g_{1}=\iint \Phi_{\sigma}\left[  \left(  \frac{g_{1}}%
{\sqrt{\omega_{1}}}+\frac{g_{2}}{\sqrt{\omega_{2}}}\right)  \frac{g_{3}g_{4}%
}{\sqrt{\omega_{3}\omega_{4}}}-\left(  \frac{g_{3}}{\sqrt{\omega_{3}}}%
+\frac{g_{4}}{\sqrt{\omega_{4}}}\right)  \frac{g_{1}g_{2}}{\sqrt{\omega
_{1}\omega_{2}}}\right]  d\omega_{3}d\omega_{4}. \label{Z2E2a}%
\end{eqnarray}

\subsection{Weak solutions with interacting condensate.} \index{interacting condensate}
\label{weakinteracting}

The definition of weak solution that we introduce in this Section is similar to the one given for the Nordheim equation \index{Nordheim} in \cite{Lu1}.

We denote  as $\mathcal{M}_{+}\left(  \left[  0,\infty
\right)  \right)  $ the set of\ nonnegative Radon measures in $\left[
0,\infty\right)  .$ Given $\rho\in \mathbb{R}$, we will denote as $\mathcal{M}_{+}\left(
\left[  0,\infty\right)  :\left(  1+\omega\right)  ^{\rho}\right)  $ the set
of measures $\mu\in\mathcal{M}_{+}\left(  \left[  0,\infty\right)  \right)  $
such that:%
\[
||\mu || _{ \rho  }=\sup _{ R>0 } \frac {1} {(1+R)^\rho }\frac{1}{R}\int_{\frac{R}{2}}^{R}\mu\left(  d\omega\right)< +\infty.
\]
We will use also the functional space $L_{+}^{\infty}\left(
\mathbb{R}_{+}:\sqrt{\omega}\left( 1+\omega\right)^{\rho-\frac{1}{2}
}\right)  $ which is the space of locally, nonnegative, bounded functions $h$
such that:%
\[
||h|| _{L_{+}^{\infty}\left(
\mathbb{R}_{+}:\sqrt{\omega}\left( 1+\omega\right)^{\rho-\frac{1}{2}
}\right)   }=\sup_{\omega>0}\frac{h\left(  \omega\right)  }{\sqrt{\omega}\left(
1+\omega\right)  ^{\rho-\frac{1}{2}}}<\infty\ 
\]

\begin{remark}
\label{distWeak}We will use also at several points in the arguments 
that $\mathcal{M}_{+}\left(  \left[  0,\infty\right)  :\left(
1+\omega\right)  ^{\rho}\right)  $ endowed with the weak topology is
metrizable. We will denote the corresponding metric as $dist_{\ast}.$ The dependence of this distance in $\rho$ will not be written explicitly,
since it will not play any role in the arguments.
\end{remark}

We will assume in several of the results below that $\rho<-\frac{1}{2}.$ This
exponents corresponds to the slowest rate of decay which allows to define the
integrals appearing in the definitions of the solutions in classical form. The
typical behaviour of a function in $L_{+}^{\infty}\left(  \mathbb{R}_{+}%
:\sqrt{\omega}\left(  1+\omega\right)  ^{\rho-\frac{1}{2}}\right)  $ with
$\rho=-\frac{1}{2}$ is $g\left(  \omega\right)  \sim\frac{1}{\sqrt{\omega}}$
as $\omega\rightarrow\infty$ or equivalently $f\left(  \omega\right)
\sim\frac{1}{\omega}$ as $\omega\rightarrow\infty.$ This corresponds to
thermal equilibrium.

Notice that the range of powers $\rho<-\frac{1}{2}$ includes some functions
$g_{in}$ such that $\int_{0}^{\infty}g_{in}\left(  d\omega\right)  =\infty.$
We will impose additional constraints on $\rho$ if we need to consider
solutions either with finite number of particles or finite energy.

\begin{definition}
\label{weakSolution} 
\index{interacting condensate} 
Given $\sigma\geq0,$ and $\rho<-\frac{1}{2}$ we will say
that the measure valued function $g\in C\left(  \left[  0,T\right)  :\mathcal{M}_{+}\left(  \left[
0,\infty\right)  :\left(  1+\omega\right)  ^{\rho}\right)  \right)  $ is a
weak solution \index{weak solution}  of (\ref{Z2E2a}) with interacting condensate  and with initial datum $g_{0}\in\mathcal{M}
_{+}\left(  \left[  0,\infty\right)  :\left(  1+\omega\right)  ^{\rho}\right)
$ if the following identity holds for any test function $\varphi\in C_{0}%
^{2}\left(  \left[  0,T\right)  \times\left[  0,\infty\right)  \right)  :$%
\begin{align}
 & \int_{\left[  0,\infty\right)  }g\left(  t_{\ast},\omega\right)
\varphi\left(  t_{\ast},\omega\right)  d\omega  -\int_{\left[  0,\infty\right)
}g_{0}\varphi\left(  0,\omega\right)  d\omega
  =\int_{0}^{t_{\ast}}\int_{\left[  0,\infty\right)  }g\partial_{t}\varphi
d\omega dt+ \label{Z2E1}\\
&  +\int_{0}^{t_{\ast}}\iiint_{\left[  0,\infty\right)^3 }\frac{g_{1}g_{2}g_{3}%
\Phi_{\sigma}}{\sqrt{\omega_{1}\omega_{2}\omega_{3}}}\times \nonumber \\
&\hskip 2cm  \times\left[  \varphi\left(
\omega_{1}+\omega_{2}-\omega_{3}\right)  +\varphi\left(  \omega_{3}\right)
-\varphi\left(  \omega_{1}\right)  -\varphi\left(  \omega_{2}\right)  \right]
d\omega_{1}d\omega_{2}d\omega_{3}dt \nonumber
\end{align}
for any $t_{\ast}\in\left[  0,T\right)  .$
\end{definition}

\begin{remark}
The reason to assume the condition $\rho<-\frac{1}{2},$ 
is  to guarantee that  the integrals on the right-hand side of (\ref{Z2E1}) converge  for large values of $\omega$.
\end{remark}

It is important to prove that the nonlinear operator in the last term of
(\ref{Z2E1}) is well defined for $g\in C\left(  \left[  0,T\right)
:\mathcal{M}_{+}\left(  \left[  0,\infty\right)  :\left(  1+\omega\right)
^{\rho}\right)  \right)  $, $\rho <-1/2$. This is a consequence of the following Lemma.

\begin{lemma}
\label{Cont}Suppose that $\varphi\in C_{0}^{2}\left(  \left[  0,\infty\right)
\right)  .$ Then, for all $\sigma \in [0, 1]$, the functions defined by means of:%
\begin{equation}
\Delta_{\varphi,\sigma}\left(  \omega_{1},\omega_{2},\omega_{3}\right)
=\frac{\Phi_{\sigma}}{\sqrt{\omega_{1}\omega_{2}\omega_{3}}}\left[
\varphi\left(  \omega_{1}+\omega_{2}-\omega_{3}\right)  +\varphi\left(
\omega_{3}\right)  -\varphi\left(  \omega_{1}\right)  -\varphi\left(
\omega_{2}\right)  \right]  \ \label{Z2E6}%
\end{equation}
with $\Phi_{\sigma}$ as in (\ref{Z2E2}), are uniformly continuous 
on compact subsets of  $\left(
\omega_{1},\omega_{2},\omega_{3}\right)  \in\left[  0,\infty\right)
^{3},\ \omega_{4}=\omega_{1}+\omega_{2}-\omega_{3}.$
\end{lemma}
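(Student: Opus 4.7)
The plan is to combine a cancellation identity for the bracket of $\varphi$-values with a case analysis on which of $\omega_1,\omega_2,\omega_3,\omega_4$ achieves the minimum in $\Phi_\sigma$. Writing $B(\omega_1,\omega_2,\omega_3) = \varphi(\omega_1+\omega_2-\omega_3) + \varphi(\omega_3) - \varphi(\omega_1) - \varphi(\omega_2)$, I first rearrange
\[
B = \bigl[\varphi(\omega_1+(\omega_2-\omega_3)) - \varphi(\omega_1)\bigr] - \bigl[\varphi(\omega_2) - \varphi(\omega_3)\bigr],
\]
apply the fundamental theorem of calculus to each bracket, and then to the resulting difference of $\varphi'$-values, obtaining the representation
\[
B = (\omega_2-\omega_3)(\omega_1-\omega_3)\int_0^1\!\!\int_0^1 \varphi''\bigl(\omega_3 + s(\omega_2-\omega_3) + r(\omega_1-\omega_3)\bigr)\,dr\,ds.
\]
In particular $|B| \le \|\varphi''\|_\infty |\omega_1-\omega_3||\omega_2-\omega_3|$, so $B$ vanishes to second order on the coincidence sets $\{\omega_1=\omega_3\}$ and $\{\omega_2=\omega_3\}$.

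Next I use the pointwise inequality $\Phi_\sigma \le \sqrt{\omega_k}$ (in fact $\Phi_\sigma \le \sqrt{(\omega_k-\sigma)_+}$) valid for each $k\in\{1,2,3,4\}$, and carry out a case analysis according to which index realizes the minimum. If $\omega_3$ is smallest, then $|\omega_1-\omega_3| \le \omega_1$, $|\omega_2-\omega_3| \le \omega_2$, so that $|B| \le C\omega_1\omega_2$ and $|\Delta_{\varphi,\sigma}| \le C\sqrt{\omega_1\omega_2}$. If $\omega_4$ is smallest, the algebraic identity $(\omega_3-\omega_1)(\omega_3-\omega_2) = \omega_1\omega_2 - \omega_3\omega_4$ together with $\omega_4\le\omega_3$ yields $|\Delta_{\varphi,\sigma}| \le C\sqrt{\omega_1\omega_2\omega_4/\omega_3} \le C\sqrt{\omega_1\omega_2}$. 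If $\omega_1$ is smallest (the case $\omega_2$ smallest being symmetric), then necessarily $\omega_3\le\omega_2$, and one obtains $|\Delta_{\varphi,\sigma}| \le C\sqrt{\omega_2\omega_3}$. In every configuration one arrives at a bound of the form $|\Delta_{\varphi,\sigma}| \le C(\varphi)\sqrt{\omega_i\omega_j}$ for some pair $i,j\in\{1,2,3\}$.

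From these bounds the desired conclusion follows almost at once. On the open set $\{\omega_1,\omega_2,\omega_3>0,\ \omega_4>\sigma\}$, the function $\Delta_{\varphi,\sigma}$ is manifestly smooth. On the complementary boundary set inside $[0,\infty)^3$, either $\Phi_\sigma=0$ (which happens whenever $\omega_4<\sigma$ or any $\omega_k<\sigma$), in which case $\Delta_{\varphi,\sigma}=0$, or a boundary point with some $\omega_i=0$ is approached from the interior; the uniform bound $|\Delta_{\varphi,\sigma}|\le C\sqrt{\omega_i\omega_j}$ then forces $\Delta_{\varphi,\sigma}\to 0$. Extending $\Delta_{\varphi,\sigma}$ by zero on these boundary pieces yields a continuous function on all of $\{(\omega_1,\omega_2,\omega_3)\in[0,\infty)^3\}$, and continuity on a compact set implies uniform continuity.

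The main obstacle I anticipate is carefully verifying the boundary case analysis: several different boundary strata can be approached from several directions, so one has to check that the case-dependent bound $\sqrt{\omega_i\omega_j}$ is always in terms of two non-minimal indices, and that the extension to the boundary by zero is consistent across strata (for instance along $\{\omega_3=0\}$ one must check agreement between the limit coming from $\omega_3$ being the minimum and the limit coming from $\omega_1$ or $\omega_2$ being the minimum). Once these cases are dispatched, the uniform continuity on compacts is immediate.
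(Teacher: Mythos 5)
Your cancellation identity for $B$ and the case analysis on the minimizer of $\Phi_\sigma$ are correct and yield useful bounds, but the final step of the argument is flawed in a way that cannot be patched by routine checking. The assertion that ``the uniform bound $|\Delta_{\varphi,\sigma}|\le C\sqrt{\omega_i\omega_j}$ forces $\Delta_{\varphi,\sigma}\to 0$'' at a boundary point with some $\omega_k=0$ is not correct, because the pair $(i,j)$ produced by your case analysis consists of the \emph{non-minimal} indices --- precisely the ones that stay bounded away from zero. Concretely, take $\sigma=0$, fix $\omega_1,\omega_2>0$, and let $\omega_3\to 0$. Then $\omega_3$ is the minimizer, your bound is $|\Delta_{\varphi,0}|\le C\sqrt{\omega_1\omega_2}$, which is bounded but not small, and indeed
\[
\Delta_{\varphi,0}\ \longrightarrow\ \frac{\varphi(\omega_1+\omega_2)+\varphi(0)-\varphi(\omega_1)-\varphi(\omega_2)}{\sqrt{\omega_1\omega_2}},
\]
since $\Phi_0=\sqrt{\omega_3}$ there. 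This limit is generically nonzero, so extending $\Delta_{\varphi,0}$ by zero to the boundary plane $\{\omega_3=0\}$ does \emph{not} produce a continuous function, and the analogous failure occurs on $\{\omega_1=0\}$ and $\{\omega_2=0\}$ with $\omega_1$ (resp.\ $\omega_2$) as the minimizer.

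What is actually true --- and what the paper's argument aims at --- is that on the boundary planes with only one vanishing coordinate, $\Delta_{\varphi,\sigma}$ extends continuously to a \emph{nonzero} limit: there two of $\omega_1,\omega_2,\omega_3$ are bounded from below, $B/\sqrt{\omega_i\omega_j}$ is continuous, and the ratio $\Phi_\sigma/\sqrt{\omega_\ell}$ (with $\omega_\ell$ the small coordinate) has a finite limit. Your bilinear representation $|B|\le\|\varphi''\|_\infty|\omega_1-\omega_3||\omega_2-\omega_3|$ is the right tool near the lines $\Gamma_{k,j}$ where two coordinates vanish --- there it does force $\Delta_{\varphi,\sigma}\to 0$, in the same spirit as the paper's Taylor expansion around $(0,\omega_2,0)$ --- and it is also useful to control the one genuinely delicate stratum, the diagonal boundary points such as $(0,a,a)$ where $\Phi_0/\sqrt{\omega_1}$ by itself is direction-dependent but $B$ vanishes. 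So the estimates you derived are correct and valuable; what is missing is a separation of the boundary into the lines (where $\Delta\to 0$) and the planes (where $\Delta$ has a nonzero continuous extension via the factorization $\Delta=\frac{\Phi_\sigma}{\sqrt{\omega_\ell}}\cdot\frac{B}{\sqrt{\omega_i\omega_j}}$), rather than a single extension-by-zero.
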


\begin{proof}
We just need to prove uniform continuity of $\Delta_{\varphi,\sigma}$ near the
boundary of $\left[  0,\infty\right)  ^{3}.$ We first derive an uniform
estimate for $\Delta_{\varphi,\sigma}$ near the lines $\Gamma_{k,j}=\left\{
\omega_{k}=\omega_{j}=0\ \ ,\ \ k\neq j\right\}  $ for $k,j=1,2,3.$ Using that
$\Phi_{\sigma}\leq\Phi_{0}$ we obtain:%
\begin{equation}
\Delta_{\varphi,\sigma}\left(  \omega_{1},\omega_{2},\omega_{3}\right)
\leq\Delta_{\varphi,0}\left(  \omega_{1},\omega_{2},\omega_{3}\right)
\label{Z2E3}%
\end{equation}

Notice that the line $\Gamma_{1,2}$ is contained in the set $\left\{
\omega_{3}\geq\left(  \omega_{1}+\omega_{2}\right)  \right\}  $ where
$\Phi_{\sigma}\leq\Phi_{0}=0.$ Then
$\Delta_{\varphi,\sigma}$ vanish in the set
$\left\{  \omega_{3}\geq\left(  \omega_{1}+\omega_{2}\right)  \right\}  $ and
then, they are uniformly continuous there. We now examine the lines
$\Gamma_{1,3},\ \Gamma_{2,3}.$ Due to the symmetry of the functions
$\Delta_{\varphi,\sigma}$ we can restrict to study to the line $\Gamma_{1,3}.$
Suppose that $\omega_{1}\leq\omega_{2},$ $\omega_{3}\leq\left(  \omega
_{1}+\omega_{2}\right)  .$ We expand the term between brackets in (\ref{Z2E6})
using Taylor at the point $\left(  \omega_{1},\omega_{2},\omega_{3}\right)
=\left(  0,\omega_{2},0\right)  $. Then:%
\begin{align}
&\Delta_{\varphi,0}\left(  \omega_{1},\omega_{2},\omega_{3}\right)  \leq
\frac{\Phi_{0}}{\sqrt{\omega_{1}\omega_{2}\omega_{3}}}\times \\
&\hskip 2cm \left(  \left\vert
\varphi^{\prime}\left(  \omega_{2}\right)  \left(  \omega_{1}-\omega
_{3}\right)  -\varphi^{\prime}\left(  0\right)  \left(  \omega_{1}-\omega
_{3}\right)  \right\vert +C\left[  \left(  \omega_{1}\right)  ^{2}+\left(
\omega_{3}\right)  ^{2}\right]  \right)\nonumber 
\end{align}
for some constant $C>0$ depending only on $\varphi$ and its derivatives,
whence:
\[
\Delta_{\varphi,0}\left(  \omega_{1},\omega_{2},\omega_{3}\right)  \leq
\frac{C\Phi_{0}}{\sqrt{\omega_{1}\omega_{2}\omega_{3}}}\left(  \omega
_{2}\left\vert \omega_{1}-\omega_{3}\right\vert +\left(  \omega_{1}\right)
^{2}+\left(  \omega_{3}\right)  ^{2}\right)
\]

We now estimate $\Phi_{0}$ by $\min\left\{  \sqrt{\omega_{1}},\sqrt{\omega
_{3}}\right\}  .$ Then:%
\begin{equation}
\Delta_{\varphi,0}\left(  \omega_{1},\omega_{2},\omega_{3}\right)  \leq
\frac{C}{\max\left\{  \sqrt{\omega_{1}},\sqrt{\omega_{3}}\right\}
\sqrt{\omega_{2}}}\left[  \omega_{2}\left\vert \omega_{1}-\omega
_{3}\right\vert +\left(  \max\left\{  \omega_{1},\omega_{3}\right\}  \right)
^{2}\right]  \label{Z2E3a}%
\end{equation}

Combining (\ref{Z2E3}), (\ref{Z2E3a}) we obtain the desired uniform
convergence of the functions $\Delta_{\varphi,\sigma}$ in a neighbourhood of
the line $\Gamma_{1,3}$, including the origin $\left(  \omega_{1},\omega
_{2},\omega_{3}\right)  =\left(  0,0,0\right)  .$ It only remains to obtain
uniform continuity of the functions $\Delta_{\varphi,\sigma}$ in a
neighbourhood of the planes $\Pi_{k}=\left\{  \omega_{k}=0\right\}
,\ k=1,2,3.$ This follows from the fact that after removing the neighbourhoods
of the lines $\Gamma_{k,j},$ $k,j=1,2,3$ indicated above, we can restrict the
analysis to points where at least two of the coordinates $\omega
_{j},\ j=1,2,3,$ are bounded from below. Suppose that the remaining coordinate
is $\omega_{\ell}.$ The function $\frac{\Phi_{\sigma}}{\sqrt{\omega_{\ell}}}$
is then uniformly continuous in a neighbourhood of $\Pi_{k}$ and the result follows.
\end{proof}

\subsection{Weak solutions with non interacting condensate.}\index{non interacting condensate} \index{interacting condensate}
\label{weaknoninteracting}
Although, most of the results that we will obtain in this paper are for weak solutions with interacting condensate, we wish to have a precise  functional framework which allows to treat solutions that behave like the  Kolmogorov-Zakharov \index{Kolmogorov-Zakharov}solutions for small values of $\omega $.  
We recall that  the  Kolmogorov-Zakharov \index{Kolmogorov-Zakharov}solution 
$f_{s}\left(
\omega\right)  = K\omega ^{-7/6}$, $g_{s}\left(
\omega\right)  =K\omega ^{-2/3}$.

Some general properties of the solutions defined in this Section will be discussed in Section \ref{fluxes}.

\begin{definition}
\label{weakSolutionNI}Given $\sigma\geq0,$ and $\rho<-\frac{1}{2}$ we will say
that the measure valued function $g\in C\left(  \left[  0,T\right)  :\mathcal{M}_{+}\left(  \left[
0,\infty\right)  :\left(  1+\omega\right)  ^{\rho}\right)  \right)  $ is a
weak solution \index{weak solution} of (\ref{Z2E2a}) with non interacting condensate and with initial datum $g_{0}\in\mathcal{M}%
_{+}\left(  \left[  0,\infty\right)  :\left(  1+\omega\right)  ^{\rho}\right)
$ if the following identity holds for any test function $\varphi\in C_{0}%
^{2}\left(  \left[  0,T\right)  \times\left[  0,\infty\right)  \right)  :$%
\begin{align}
 & \int_{\left[  0,\infty\right)  }g\left(  t_{\ast},\omega\right)
\varphi\left(  t_{\ast},\omega\right)  d\omega  -\int_{\left[  0,\infty\right)
}g_{0}\varphi\left(  0,\omega\right)  d\omega
  =\int_{0}^{t_{\ast}}\int_{\left[  0,\infty\right)  }g\partial_{t}\varphi
d\omega dt+ \label{Z2E1NI}\\
&  +\int_{0}^{t_{\ast}}\iiint_{\left(  0,\infty\right)^3 }\frac{g_{1}g_{2}g_{3}%
\Phi_{\sigma}}{\sqrt{\omega_{1}\omega_{2}\omega_{3}}}\times \nonumber \\
&\hskip 2cm  \times\left[  \varphi\left(
\omega_{1}+\omega_{2}-\omega_{3}\right)  +\varphi\left(  \omega_{3}\right)
-\varphi\left(  \omega_{1}\right)  -\varphi\left(  \omega_{2}\right)  \right]
d\omega_{1}d\omega_{2}d\omega_{3}dt \nonumber
\end{align}
for any $t_{\ast}\in\left[  0,T\right)  .$
\end{definition}

\begin{remark}
\label{parentesis}
The difference between Definitions \ref{weakSolution} and \ref{weakSolutionNI} is extremely subtle. The domain of integration in the triple integral in the right hand side of (\ref{Z2E1}) is $[0, +\infty)^3$, while the corresponding domain of integration in  (\ref{Z2E1NI}) is $(0, +\infty)^3$. The reason for this difference is to avoid, in the second case,  any interaction between the particles in any possible condensate and the remaining particles. Notice also that as long as $\int  _{ \{0\} }g(t, d\omega ) = 0$, both Definitions are equivalent
\end{remark}

\subsection{Mild solutions.}

We will use two different concepts of mild solutions,  \index{mild solution} namely, measured valued
mild solutions and bounded mild solutions. The idea behind these definitions
is that they satisfy the equations in the sense of Duhamel's formula. 

The reason to introduce the bounded mild solutions is due to our interest to prove  finite time blow up \index{blow up}in $L^\infty$ norm  for some of these solutions. Our interest in the measured valued mild solutions is twofold. First, we use them as technical tools in order to obtain global existence of weak solutions with interacting condensate in the sense of Definition (\ref{weakSolution}).  
\index{weak solution} \index{interacting condensate}On the other hand, we construct a family of measured valued mild solutions with some peculiar asymptotic behavior (pulsating solutions) in Chapter 4. \index{pulsating}

We need to define some auxiliary functions.
\begin{lemma}
\label{LAg}Let $\rho<-\frac{1}{2}.$ Suppose that either $g\in\mathcal{M}%
_{+}\left(  \left[  0,\infty\right)  :\left(  1+\omega\right)  ^{\rho}\right)
$ and $\sigma>0$ or $g\in L_{+}^{\infty}\left(  \mathbb{R}_{+}:\sqrt{\omega
}\left(  1+\omega\right)  ^{\rho-\frac{1}{2}}\right)  $ and $\sigma\geq0.$
Then $A_{\sigma}\left(  \omega_{1}\right)  $ defined by means of%
\begin{equation}
A_{\sigma}\left(  \omega_{1}\right)  =-\iint \Phi_{\sigma}\left[
\frac{2g_{2}g_{3}}{\sqrt{\omega_{1}\omega_{2}\omega_{3}}}-\frac{g_{3}g_{4}%
}{\sqrt{\omega_{1}\omega_{3}\omega_{4}}}\right]  d\omega_{3}d\omega_{4}
\label{S7E3n}%
\end{equation}
where $\omega_{2}=\omega_{3}+\omega_{4}-\omega_{1}$ defines a continuous
function in $\left[  0,\infty\right)  .$\ Moreover, we have:%
\begin{equation}
A_{\sigma}\left(  \omega_{1}\right)  \geq0\ \ ,\ \ \omega_{1}\in\left[
0,\infty\right)  \label{S7E5n}%
\end{equation}

\end{lemma}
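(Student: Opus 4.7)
The lemma asserts three facts about $A_\sigma(\omega_1)$: convergence of the defining integral for every $\omega_1\in[0,\infty)$, continuity in $\omega_1$, and the pointwise estimate~(\ref{S7E5n}). I would establish them in that order, treating the two admissible classes of $g$ in parallel. The basic size estimate $\Phi_\sigma\leq\sqrt{\omega_k}$ for each $k\in\{1,2,3,4\}$ lets me cancel any one singular factor $1/\sqrt{\omega_j}$ in the integrand against an appropriate piece of $\Phi_\sigma$; in particular $\Phi_\sigma/\sqrt{\omega_1}\leq 1$, which keeps the defining integral finite down to $\omega_1=0$. In the measure case ($\sigma>0$) the kernel $\Phi_\sigma$ moreover vanishes on $\{\omega_k<\sigma\}$ for every $k$, removing the small-$\omega$ singularity of $g/\sqrt{\omega_j}$; combining with $\|g\|_\rho<\infty$ and $\rho<-\tfrac12$ yields a dominating function integrable in $(\omega_3,\omega_4)$. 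In the $L^\infty$ case, $n(\omega):=g(\omega)/\sqrt{\omega}$ is bounded on $[0,\infty)$ and decays like $(1+\omega)^{\rho-1/2}$ at infinity, so each integrand is controlled by a product of two decaying functions times a bounded kernel.

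Continuity would then follow from dominated convergence. With $\Phi_\sigma$ extended by zero outside $[0,\infty)^4$ as in the paper's convention, the $\omega_1$-dependent boundary $\partial D(\omega_1)$ is absorbed into the integrand and the resulting expression is continuous in $\omega_1$ for almost every $(\omega_3,\omega_4)$. The uniform majorant from the first step applies to every $\omega_1$ in a compact neighbourhood. The only delicate point is the behaviour at $\omega_1=0$, where the prefactor $1/\sqrt{\omega_1}$ is pulled inside and paired with $\Phi_\sigma$ via Lemma~\ref{Cont}, whose uniform continuity of the structured quantity $\Delta_{\varphi,\sigma}$ controls the integrand near the boundary lines $\Gamma_{k,j}$.

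The main obstacle is the pointwise bound~(\ref{S7E5n}). My plan is to first exploit the $\omega_3\leftrightarrow\omega_4$ symmetry of $\Phi_\sigma$ and of $D(\omega_1)$ to rewrite the double $g_2g_3$ integral in the definition as
\[
2\iint_{D(\omega_1)}\Phi_\sigma\,\frac{g_2g_3}{\sqrt{\omega_1\omega_2\omega_3}}\,d\omega_3d\omega_4
=\iint_{D(\omega_1)}\Phi_\sigma\,\frac{g_2}{\sqrt{\omega_1\omega_2}}\left(\frac{g_3}{\sqrt{\omega_3}}+\frac{g_4}{\sqrt{\omega_4}}\right)d\omega_3d\omega_4,
\]
and then to apply the change of variables $(\omega_3,\omega_4)\mapsto(\omega_2,\omega_4)$ with $\omega_3=\omega_1+\omega_2-\omega_4$ (unit jacobian, and $\Phi_\sigma$ unchanged because it is a symmetric function of its four arguments) so as to express this loss term as an integral of the same functional form as $\iint\Phi_\sigma\,g_3g_4/\sqrt{\omega_1\omega_3\omega_4}\,d\omega_3d\omega_4$, but over the shifted domain $\{\omega_2\geq 0,\;0\leq\omega_4\leq\omega_1+\omega_2\}$. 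Reconciling the two domains and combining with the non-negativity $g\geq 0$ and the pointwise inequality $2n(a)n(b)\leq n(a)^2+n(b)^2$ should isolate a manifestly non-negative residue and yield~(\ref{S7E5n}). The hard part will be the explicit algebra tracking the domain mismatch and the $\Phi_\sigma$-weighting across the two parametrizations of the collision manifold $\omega_1+\omega_2=\omega_3+\omega_4$; extra care is needed near the boundary where $\omega_2$ approaches $\sigma$, which is precisely where (in the measure case) $\Phi_\sigma$ vanishes and the putative singularity of $g_2/\sqrt{\omega_2}$ is absorbed.
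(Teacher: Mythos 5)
Your plan for convergence and continuity is essentially the paper's argument: exploit $\Phi_\sigma\le\sqrt{\omega_k}$ to cancel the singular $1/\sqrt{\omega_j}$ factors, use the fact that for $\sigma>0$ the kernel $\Phi_\sigma$ vanishes whenever any $\omega_k\le\sigma$ (so the measure-valued $g$ is tested only against bounded continuous kernels), and observe that continuity at $\omega_1=0$ is immediate since $\Phi_\sigma\equiv 0$ there for $\sigma>0$, while in the $L^\infty$ case the $\sqrt{\omega}$-weight does the work.

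For the positivity (\ref{S7E5n}) you correctly identify the crucial manoeuvre — split the loss term $\iint\Phi_\sigma\,2g_2g_3/\sqrt{\omega_1\omega_2\omega_3}$ using the $\omega_3\leftrightarrow\omega_4$ symmetry and then trade $\omega_2$ for an integration variable so that the loss term is rewritten against the same product $g_3g_4$ that appears in the gain term. But the endgame you sketch is off: you invoke an AM–GM inequality ``$2n(a)n(b)\le n(a)^2+n(b)^2$'' as if some residual mismatch in the $g$-factors needs to be absorbed. In fact, after the change of variables, \emph{all} three integrals are against the identical product measure $g_3g_4\,d\omega_3\,d\omega_4$ over the same domain, and $A_\sigma(\omega_1)$ reduces to
\[
A_\sigma(\omega_1)=\iint \frac{g_3 g_4}{\sqrt{\omega_1\omega_3\omega_4}}\,\bigl(\Psi-\Phi_\sigma\bigr)\,d\omega_3\,d\omega_4,
\]
where $\Psi=\Psi_1+\Psi_2$ is the explicit piecewise kernel that emerges from the two changes of variables (a sum of indicator functions times $\sqrt{\cdot}$ of the various $\omega$'s). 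The whole proof then collapses to the purely pointwise kernel inequality $\Psi\ge\Phi_\sigma$, which one verifies region by region on the collision manifold. No inequality involving $g$ beyond nonnegativity is used, and no AM–GM is needed. If you try to run the argument with AM–GM you will not close the loop, because the loss and gain terms are not paired in a way that produces squares. So you have the right transformation but have misidentified what makes the comparison work: it is an algebraic/pointwise fact about the transformed kernel $\Psi$ dominating $\Phi_\sigma$, not an inequality on $g$. You also under-specify what $\Psi$ actually looks like — making that explicit (with the indicator functions $\chi_{\{\omega_3\ge\omega_4\}}$, etc.) is precisely the ``explicit algebra'' step you flagged, and it is what must be carried out to finish.
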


\begin{remark}
Given $g\in\mathcal{M}_{+}\left(  \left(  1+\omega\right)  ^{\rho}\right)  ,$
we define the measure:
$$\iint \Phi_{\sigma}\left[  \frac{2g_{2}g_{3}}
{\sqrt{\omega_{1}\omega_{2}\omega_{3}}}\right]  d\omega_{3}d\omega_{4}$$
 by
means of its action over a test function $\varphi\in C_{0}^{2}\left(  \left[
0,\infty\right)  \right)  ,$ namely:%
\begin{align}
&\int_0^\infty \varphi\left(  \omega_{1}\right)  \iint \Phi_{\sigma}\left[
\frac{2g_{2}g_{3}}{\sqrt{\omega_{1}\omega_{2}\omega_{3}}}\right]  d\omega
_{3}d\omega_{4}d\omega_{1}=\\
&\hskip 2cm =\int_0^\infty \iint\Phi_{\sigma}\left[  \frac{2g_{2}%
g_{3}\varphi\left(  \omega_{3}+\omega_{4}-\omega_{2}\right)  }{\sqrt
{(\omega _3+\omega _4-\omega _2)\omega_{2}\omega_{3}}}\right]  d\omega_{3}d\omega_{4}d\omega_{2}\nonumber
\end{align}

\end{remark}

\begin{proof}
Suppose first that $g\in\mathcal{M}_{+}\left(  \left[  0,\infty\right)
:\left(  1+\omega\right)  ^{\rho}\right)  $ and $\sigma>0.$ The function
$\frac{\Phi}{\sqrt{\omega_{1}}}$ is continuous for $\omega_{1}>0$ and $\left(
\omega_{1},\omega_{2}\right)  \in\mathbb{R}_{+}^{2}.$ Moreover, $\Phi_{\sigma
}=0$ if $\min\left\{  \omega_{1},\omega_{2},\omega_{3}\right\}  \leq\sigma.$
Therefore, each of the terms $\frac{2\Phi_{\sigma}}{\sqrt{\omega_{1}\omega
_{2}\omega_{3}}},$ $\frac{\Phi_{\sigma}}{\sqrt{\omega_{1}\omega_{3}\omega_{4}%
}}$ are bounded. Then, $\frac{2\Phi_{\sigma}g_{2}g_{3}}{\sqrt{\omega_{1}%
\omega_{2}\omega_{3}}}$ and $\frac{\Phi_{\sigma}g_{3}g_{4}}{\sqrt{\omega
_{1}\omega_{3}\omega_{4}}}$ Radon measures in $\mathbb{R}_{+}^{2}%
$.\ Therefore, (\ref{S7E3n}) defines a continuous function in $\left\{
\omega_{1}>0\right\}  .$ Notice that the continuity at $\omega_{1}=0$ follows
from the fact that $\Phi_{\sigma}=0$ if $\omega_{1}\leq\sigma.$ Convergence of
the integrals for large values of $\omega_{3},\omega_{4}$ are a consequence of
the fact that $g\in\mathcal{M}_{+}\left(  \left[  0,\infty\right)  :\left(
1+\omega\right)  ^{\rho}\right)  $ with $\rho<-\frac{1}{2}.$

In order to prove (\ref{S7E5n}) we rewrite $A\left(  \omega_{1}\right)$ using  that:%
\begin{eqnarray}
&&\iint\Phi\frac{2g_{2}g_{3}}{\sqrt{\omega_{1}\omega_{2}\omega_{3}}}%
d\omega_{3}d\omega_{4}=\iint\Phi\frac{g_{2}g_{3}}{\sqrt{\omega_{1}%
\omega_{2}\omega_{3}}}d\omega_{3}d\omega_{4}+\label{S7E4n}\\
&&\hskip 6 cm +\iint\Phi\frac{g_{2}g_{4}%
}{\sqrt{\omega_{1}\omega_{2}\omega_{4}}}d\omega_{3}d\omega_{4} \nonumber
\end{eqnarray}

We now use the change of variables $\omega_{2}=\omega_{3}+\omega_{4}%
-\omega_{1},$ $d\omega_{2}=d\omega_{4}$ in the first integral and $\omega
_{2}=\omega_{3}+\omega_{4}-\omega_{1},$ $d\omega_{2}=d\omega_{3}$ in the
second one. Then, replacing the variable $\omega_{2}$ by $\omega_{4}$ in the
first resulting integral and $\omega_{2}$ by $\omega_{3}$ in the second, we
obtain that the integral in (\ref{S7E4n}) becomes
\[
\iint\frac{g_{3}g_{4}}{\sqrt{\omega_{1}\omega_{3}\omega_{4}}}\Psi
d\omega_{3}d\omega_{4},\,\,\,\Psi=\Psi_{1}+\Psi_{2}%
\]
where:
\begin{align*}
\Psi_{1}  &  =\chi_{\left\{  \omega_{3}\geq\omega_{4}\right\}  }\chi_{\left\{
\omega_{3}\geq\omega_{1}\right\}  }\sqrt{\left(  \omega_{1}+\omega_{4}%
-\omega_{3}\right)  _{+}}+\chi_{\left\{  \omega_{3}\leq\omega_{4}\right\}
}\chi_{\left\{  \omega_{3}\geq\omega_{1}\right\}  }\sqrt{\omega_{1}}+\\
&  +\chi_{\left\{  \omega_{3}\geq\omega_{4}\right\}  }\chi_{\left\{
\omega_{3}\leq\omega_{1}\right\}  }\sqrt{\omega_{4}}+\chi_{\left\{  \omega
_{3}\leq\omega_{4}\right\}  }\chi_{\left\{  \omega_{3}\leq\omega_{1}\right\}
}\sqrt{\omega_{3}}%
\end{align*}%
\begin{align*}
\Psi_{2}  &  =\chi_{\left\{  \omega_{3}\leq\omega_{4}\right\}  }\chi_{\left\{
\omega_{4}\geq\omega_{1}\right\}  }\sqrt{\left(  \omega_{1}+\omega_{3}%
-\omega_{4}\right)  _{+}}+\chi_{\left\{  \omega_{3}\geq\omega_{4}\right\}
}\chi_{\left\{  \omega_{4}\geq\omega_{1}\right\}  }\sqrt{\omega_{1}}+\\
&  +\chi_{\left\{  \omega_{3}\leq\omega_{4}\right\}  }\chi_{\left\{
\omega_{4}\leq\omega_{1}\right\}  }\sqrt{\omega_{3}}+\chi_{\left\{  \omega
_{3}\geq\omega_{4}\right\}  }\chi_{\left\{  \omega_{4}\leq\omega_{1}\right\}
}\sqrt{\omega_{4}}%
\end{align*}
Notice that $\Psi\geq\Phi$, whence (\ref{S7E5n}) follows.

If $g\in L_{+}^{\infty}\left(  \mathbb{R}_{+}:\sqrt{\omega}\left(
1+\omega\right)  ^{\rho-\frac{1}{2}}\right)  $ we argue in a similar way.
Notice that we can then assume that $\sigma=0$ because the boundedness of $g$
by $C\sqrt{\omega}$ for small $\omega$ implies the convergence of the
integrals in (\ref{S7E3n}).
\end{proof}

\begin{lemma}
\label{LQg}Let $\rho<-\frac{1}{2}.$ Suppose that either $g\in\mathcal{M}%
_{+}\left(  \left[  0,\infty\right)  :\left(  1+\omega\right)  ^{\rho}\right)
$ and $\sigma>0$ or $g\in L_{+}^{\infty}\left(  \mathbb{R}_{+}:\sqrt{\omega
}\left(  1+\omega\right)  ^{\rho-\frac{1}{2}}\right)  $ and $\sigma\geq0.$\\
The following formula defines mappings $\mathcal{O}_{\sigma}%
:\mathcal{M}_{+}\left(  \left[  0,\infty\right)  :\left(  1+\omega\right)
^{\rho}\right)  \rightarrow\mathcal{M}_{+}\left(  \left[  0,\infty\right)
:\left(  1+\omega\right)  ^{\rho}\right)  $ and $\mathcal{O}_{\sigma}%
:L_{+}^{\infty}\left(  \mathbb{R}_{+}:\sqrt{\omega}\left(  1+\omega\right)
^{\rho-\frac{1}{2}}\right)  \rightarrow L_{+}^{\infty}\left(  \mathbb{R}%
_{+}:\sqrt{\omega}\left(  1+\omega\right)  ^{\rho-\frac{1}{2}}\right)  $
respectively:%
\begin{equation}
\mathcal{O}_{\sigma}\left[  g\right]  =\iint\Phi_{\sigma}\frac{g_{2}%
g_{3}g_{4}}{\sqrt{\omega_{2}\omega_{3}\omega_{4}}}d\omega_{3}d\omega
_{4}\ ,\ \omega_{2}=\omega_{3}+\omega_{4}-\omega_{1} \label{Z2E7}%
\end{equation}
where the action of the measure $\mathcal{O}_{\sigma}\left[  g\right]  $
acting over a test function $\varphi\in C_{0}\left(  \left[  0,\infty\right)
\right)  $ is given by:%
\begin{equation}
\left\langle \mathcal{O}_{\sigma}\left[  g\right]  ,\varphi\right\rangle
=\iiint\Phi_{\sigma}\frac{g_{2}g_{3}g_{4}}{\sqrt{\omega_{2}\omega
_{3}\omega_{4}}}\varphi\left(  \omega_{1}\right)  d\omega_{3}d\omega
_{4}d\omega_{1} \label{Z2E8}%
\end{equation}

If $g\in L_{+}^{\infty}\left(  \mathbb{R}_{+}:\sqrt{\omega}\left(
1+\omega\right)  ^{\rho-\frac{1}{2}}\right)  $ we can define directly
$\mathcal{O}_{\sigma}\left[  g\right]  $ by means of the integration in
(\ref{Z2E7}) for any $\sigma\geq0.$
\end{lemma}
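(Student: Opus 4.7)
The plan is to prove the two mapping claims separately.

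\emph{$L^\infty$ case.} For $g \in L_{+}^{\infty}(\mathbb{R}_{+}:\sqrt{\omega}(1+\omega)^{\rho-1/2})$, the pointwise bound $g(\omega)/\sqrt{\omega} \leq \|g\|(1+\omega)^{\rho-1/2}$ reduces the required estimate to
\[
\iint_{D(\omega_1)} \Phi_{\sigma}\prod_{k=2,3,4}(1+\omega_k)^{\rho-1/2}\, d\omega_3\, d\omega_4 \leq C\sqrt{\omega_1}(1+\omega_1)^{\rho-1/2},
\]
with $\omega_2=\omega_3+\omega_4-\omega_1$. I would split $D(\omega_1)$ into three subregions (using the $\omega_3\leftrightarrow\omega_4$ symmetry) according to which of $\omega_1,\omega_2,\omega_3$ realizes $\min\{\omega_1,\omega_2,\omega_3,\omega_4\}$, and in each subregion apply the sharp bound $\Phi_{\sigma}\leq\sqrt{\min_k\omega_k}$. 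The one-dimensional slices would be controlled through the beta-integral estimate
\[
\int_{0}^{M}(1+u)^{\rho-1/2}(1+M-u)^{\rho-1/2}\,du \leq CM^{\rho-1/2}\quad \text{for } \rho<-1/2,
\]
and the dimensional balance of the remaining two-dimensional integration produces exponent $3\rho+1$, which is dominated by $\rho$ precisely because $\rho\leq-1/2$. Nonnegativity of $\mathcal{O}_{\sigma}[g]$ is immediate from positivity of the integrand.

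\emph{Measure case with $\sigma>0$.} Here the operator is defined by its action on a test function $\varphi\in C_0([0,\infty))$ through (\ref{Z2E8}). Since $\Phi_{\sigma}$ vanishes whenever any $\omega_k<\sigma$, the kernel $\Phi_{\sigma}/\sqrt{\omega_2\omega_3\omega_4}$ is bounded by $\sigma^{-3/2}\Phi_\sigma$ and continuous on the effective support of integration. Combined with the change of variables $\omega_4\mapsto\omega_2$ (unit Jacobian at fixed $\omega_1,\omega_3$), this writes the triple integral as an iterated integral against $g(d\omega_2)g(d\omega_3)g(d\omega_4)$, which converges absolutely because $\varphi$ has compact support and $\rho<-1/2$. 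The resulting Borel measure is nonnegative, and verifying $\|\mathcal{O}_{\sigma}[g]\|_\rho<\infty$ would then amount to applying smooth approximants of $\chi_{[R/2,R]}$ as test functions and running the same region-splitting argument from the $L^\infty$ case, now with $g$ treated as a measure.

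\emph{Main obstacle.} The delicate regime is $\omega_2\ll\omega_1\approx\omega_3\approx\omega_4$, where the crude bound $\Phi_{\sigma}\leq\sqrt{\omega_1}$ loses exactly one power of $\omega_1$ relative to the desired decay; the sharp estimate $\Phi_{\sigma}\leq\sqrt{\omega_2}$ is required there, as it cancels the singular factor $1/\sqrt{\omega_2}$ in the integrand and reduces matters to the beta integral above whose $M^{\rho-1/2}$ bound matches the target decay precisely when $\rho\leq-1/2$. Transferring this case-by-case analysis from pointwise estimates (Part 1) to the distributional setting (Part 2) requires only mild care with the supports of the approximating test functions, so the genuinely hard part of the proof is the tight balancing of exponents in the $L^\infty$ case.
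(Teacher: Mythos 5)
Your proof is correct and rests on the same two ingredients as the paper's: the elementary bound $\Phi_\sigma \le \sqrt{\omega_k}$ for each $k$ (with the choice $k=2$ being decisive in the regime $\omega_2 \ll \omega_1 \approx \omega_3 \approx \omega_4$ that you flag), and, for the measure case, the observation that the $\sigma>0$ cutoff makes $\Phi_\sigma/\sqrt{\omega_2\omega_3\omega_4}$ bounded and continuous, so that only the tail decay of $g$ is in play. The differences are organizational. For the $L^\infty$ estimate you symmetrize in $\omega_3 \leftrightarrow \omega_4$ and then partition according to which of $\omega_1,\omega_2,\omega_3$ attains the minimum of all four frequencies, controlling the inner one-dimensional slice by the beta-type bound $\int_0^M (1+u)^{\rho-1/2}(1+M-u)^{\rho-1/2}\,du \le CM^{\rho-1/2}$. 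The paper instead splits by $\omega_1 \le 1$ versus $\omega_1 \ge 1$ and, in the latter range, by $\omega_2 \le \omega_1$ versus $\omega_2 > \omega_1$; after symmetrization it exploits $\omega_4 \ge \omega_1/2$ to freeze the largest factor and then factorizes the remaining double integral into a product of one-dimensional integrals, with no beta integral needed. Both routes spend the same exponent budget (the $3\rho+1 \le \rho$ relation you identify is exactly the $\rho \le -\tfrac12$ threshold) and give the same conclusion; your min-variable split is conceptually cleaner since each subregion uses the true minimum, while the paper's is a cruder but shorter factorization. The one place you are too terse is the verification that $\mathcal{O}_\sigma[g]\in\mathcal{M}_+$ in the measure case: a measure $g$ has no pointwise bound $g(\omega)/\sqrt{\omega}\le\|g\|(1+\omega)^{\rho-1/2}$, so the pointwise region-splitting cannot be transferred literally; one must instead pair the output against test functions and estimate in the $\|\cdot\|_\rho$-norm (the paper does this via an increasing sequence $\varphi_n\nearrow(1+\omega)^\rho$ and monotone convergence), although the $\sigma$-cutoff keeps this routine.
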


\begin{proof}
If $g\in\mathcal{M}_{+}\left(  \left[  0,\infty\right)  :\left(
1+\omega\right)  ^{\rho}\right)  $ and $\sigma>0$ the function $\frac
{\Phi_{\sigma}}{\sqrt{\omega_{2}\omega_{3}\omega_{4}}}$ is bounded and
continuous in $\left[  0,\infty\right)  ^{3}.$ We can then compute the
integral (\ref{Z2E8}) which must be understood as:%
\begin{equation}
\left\langle \mathcal{O}_{\sigma}\left[  g\right]  ,\varphi\right\rangle
=\iiint\Phi_{\sigma}\frac{g_{2}g_{3}g_{4}}{\sqrt{\omega_{2}\omega
_{3}\omega_{4}}}\varphi\left(  \omega_{3}+\omega_{4}-\omega_{2}\right)
d\omega_{2}d\omega_{3}d\omega_{4} \label{Z3E1}%
\end{equation}

Since the function $\varphi\left(  \omega_{3}+\omega_{4}-\omega_{1}\right)  $
is not compactly supported, we must examine carefully the convergence of the
integral in (\ref{Z3E1}). Not convergence problems arise for small $\omega
_{2},\ \omega_{3},\ \omega_{4}$ due to the cutoff in $\Phi_{\sigma}.$ Since
$g\in\mathcal{M}_{+}\left(  \left[  0,\infty\right)  :\left(  1+\omega\right)
^{\rho}\right)  $ with $\rho<-1$ and $\varphi$ is bounded, we then obtain
convergence of the integral in (\ref{Z3E1}). It remains to prove that
$\mathcal{O}_{\sigma}\left[  g\right]  \in\mathcal{M}_{+}\left(  \left[
0,\infty\right)  :\left(  1+\omega\right)  ^{\rho}\right)  .$ To this end we
consider an increasing sequence of test functions $\left\{  \varphi_{n}\left(
\cdot\right)  \right\}  \subset C_{0}\left(  \left[  0,\infty\right)  \right)
$ and such that $\lim_{n\rightarrow\infty}\varphi_{n}\left(  \omega\right)
=\left(  1+\omega\right)  ^{\rho}$ uniformly for $\omega$ in compact sets of
$\left[  0,\infty\right)  .$ Notice that, since the support of the functions
$\varphi_{n}\left(  \cdot\right)  $ is contained in $\left\{  \omega
\geq0\right\}  ,$ we can restrict the integral in (\ref{Z3E1}) to the set
where $\omega_{2}\leq\left(  \omega_{3}+\omega_{4}\right)  .$ Due to the
symmetry under the permutation $\omega_{3}\longleftrightarrow\omega_{4}$ we
can assume that $\omega_{3}\leq\omega_{4}.$ Using the test functions
$\varphi=\varphi_{n}$ as well as the fact that $\Phi_{\sigma}\leq\sqrt
{\omega_{2}}$ we can estimate the integral on the right-hand side of
(\ref{Z3E1}) as:%
\[
C\int g_{2}d\omega_{2}\iint\frac{g_{3}g_{4}}{\sqrt{\omega_{3}\omega_{4}}%
}\left(  1+\left(  \omega_{4}\right)  ^{\rho-1}\right)  d\omega_{3}d\omega
_{4}<\infty
\]
whence $\mathcal{O}_{\sigma}\left[  g\right]  \in\mathcal{M}_{+}\left(
\left[  0,\infty\right)  :\left(  1+\omega\right)  ^{\rho}\right)  $ using
Monotone Convergence.

If $g\in L_{+}^{\infty}\left(  \mathbb{R}_{+}:\sqrt{\omega}\left(
1+\omega\right)  ^{\rho-\frac{1}{2}}\right)  $ and $\sigma\geq0$ we obtain
convergence for the integral in (\ref{Z2E7}), even if $\sigma=0,$ in the
region where $\omega_{2},\ \omega_{3},\ \omega_{4}$ are smaller than one using
the fact that $g\left(  \omega\right)  \leq C\sqrt{\omega}$ for $\omega\geq0.$
The convergence of the integrals for large values of $\omega$ follows from the
fact that $g\left(  \omega\right)  \leq C\omega^{-\rho},\ \rho<-1,$ if
$\omega\geq1.$ In order to prove that $\mathcal{O}_{\sigma}\left[  g\right]
\in L_{+}^{\infty}\left(  \mathbb{R}_{+}:\sqrt{\omega}\left(  1+\omega\right)
^{\rho-\frac{1}{2}}\right)  $ we argue as follows. Since $\Phi_{\sigma}%
\leq\sqrt{\omega_{1}}$ and $g\left(  \omega\right)  \leq C\sqrt{\omega}$ for
$\omega\geq0$ we obtain:%
\[
\mathcal{O}_{\sigma}\left[  g\right]  \left(  \omega_{1}\right)
=C\sqrt{\omega_{1}}\iint\frac{g_{3}g_{4}}{\sqrt{\omega_{3}\omega_{4}}%
}d\omega_{3}d\omega_{4}=C\sqrt{\omega_{1}}\left(  \int\frac{g\left(
\omega\right)  }{\sqrt{\omega}}d\omega\right)  ^{2}%
\]
whence the estimate $\mathcal{O}_{\sigma}\left[  g\right]  \left(  \omega
_{1}\right)  \leq C\sqrt{\omega_{1}}$ for $\omega_{1}\leq1$ follows. In order
to obtain estimates for $\omega_{1}\geq1,$ notice that $\min\left\{
\omega_{3},\omega_{4}\right\}  \geq\frac{\omega_{1}}{2}.$ We can assume,
without loss of generality that $\min\left\{  \omega_{3},\omega_{4}\right\}
=\omega_{3}$ by means of a symmetrization argument$.$ Then, using
$\Phi_{\sigma}\leq\sqrt{\omega_{2}}$ in the region where $\omega_{2}\leq
\omega_{1}$ and $\Phi_{\sigma}\leq\sqrt{\omega_{1}}$ if $\omega_{2}>\omega
_{1}$ we would obtain:%
\begin{equation}
\mathcal{O}_{\sigma}\left[  g\right]  \left(  \omega_{1}\right)  \leq
\iint_{\left\{  \omega_{2}\leq\omega_{1}\right\}  }\frac{g_{2}g_{3}g_{4}}%
{\sqrt{\omega_{3}\omega_{4}}}d\omega_{3}d\omega_{4}+\sqrt{\omega_{1}}
\iint_{\left\{  \omega_{2}>\omega_{1}\right\}  }\frac{g_{2}g_{3}g_{4}}%
{\sqrt{\omega_{2}\omega_{3}\omega_{4}}}d\omega_{3}d\omega_{4} \label{Z3E2}%
\end{equation}

In order to estimate the first integral on the right we symmetrize the
integral to have $\omega_{4}\geq\omega_{3}.$ In this integral we have then
$\omega_{4}$ of order $\omega_{1}\geq1.$ Therefore:%
\[
\iint_{\left\{  \omega_{2}\leq\omega_{1}\right\}  }\frac{g_{2}g_{3}g_{4}%
}{\sqrt{\omega_{3}\omega_{4}}}d\omega_{3}d\omega_{4}\leq\frac{C}{\left(
\omega_{1}\right)  ^{\frac{1}{2}+\rho}}\iint_{\left\{  \omega_{2}\leq
\omega_{1}\right\}  }\frac{g_{2}g_{3}}{\sqrt{\omega_{3}}}d\omega_{3}%
d\omega_{4}%
\]

We can now change the variable $\omega_{4}$ to $\omega_{2}$ and integrate also
$\omega_{3}$ in the whole space. Both integrals are finite and we obtain the
estimate:%
\[
\iint_{\left\{  \omega_{2}\leq\omega_{1}\right\}  }\frac{g_{2}g_{3}g_{4}%
}{\sqrt{\omega_{3}\omega_{4}}}d\omega_{3}d\omega_{4}\leq\frac{C}{\left(
\omega_{1}\right)  ^{\frac{1}{2}+\rho}}\ \ ,\ \ \omega_{1}\geq1
\]

We now estimate the second integral in (\ref{Z3E2}). We can assume also by
means of a symmetrization argument that $\omega_{4}\geq\omega_{3}.$ Then
$\omega_{4}\geq C\omega_{1}.$ We also replace the integration in $\omega_{4}$
by the integration in $\omega_{2}.$ Therefore
\begin{align*}
&  \sqrt{\omega_{1}}\iint_{\left\{  \omega_{2}>\omega_{1}\right\}  }%
\frac{g_{2}g_{3}g_{4}}{\sqrt{\omega_{2}\omega_{3}\omega_{4}}}d\omega
_{3}d\omega_{4}\\
&  \leq\frac{C\sqrt{\omega_{1}}}{\left(  \omega_{1}\right)  ^{\frac{1}{2}%
+\rho}}\iint_{\left\{  \omega_{2}>\omega_{1}\right\}  }\frac{g_{2}g_{3}%
}{\sqrt{\omega_{2}\omega_{3}}}d\omega_{3}d\omega_{2}\\
&  \leq\frac{C}{\left(  \omega_{1}\right)  ^{\rho}}\frac{1}{\left(  \omega
_{1}\right)  ^{\rho-\frac{1}{2}}}\int\frac{g_{3}}{\sqrt{\omega_{3}}}%
d\omega_{3}\leq\frac{C}{\left(  \omega_{1}\right)  ^{2\rho-\frac{1}{2}}}%
\end{align*}

Therefore, we obtain, combining these estimates:%
\[
\mathcal{O}_{\sigma}\left[  g\right]  \left(  \omega_{1}\right)  \leq
C\min\left\{  \sqrt{\omega_{1}},\frac{1}{\left(  \omega_{1}\right)  ^{\frac
{1}{2}+\rho}}\right\}
\]
whence $\mathcal{O}_{\sigma}\left[  g\right]  \in L_{+}^{\infty}\left(
\mathbb{R}_{+}:\sqrt{\omega}\left(  1+\omega\right)  ^{\rho-\frac{1}{2}%
}\right)  .$
\end{proof}

\subsubsection{Measured valued mild solutions. \index{mild solution} }

We now define measured valued mild solutions for $\sigma >0$.

\begin{definition}
\label{MeasMildSol}Let$\ \rho<-\frac{1}{2},\ \sigma>0$ and $T\in\left(
0,\infty\right]  $.  Given  $g_{in}\in\mathcal{M}_{+}\left(  \left[  0,\infty
\right)  :\left(  1+\omega\right)  ^{\rho}\right)  $ we will say that $g\in
C\left(  \left[  0,T\right]  :\mathcal{M}_{+}\left(  \left[  0,\infty\right)
:\left(  1+\omega\right)  ^{\rho}\right)  \right)  $ is a measured valued mild
solution of (\ref{Z2E2a}) with initial value $g\left(  \cdot,0\right)
=g_{in}$ if the following identity holds in the sense of measures:%
\begin{equation}
g\left(  t,\cdot\right)  =g_{in}\left(  \cdot\right)  \exp\left(  -\int
_{0}^{t}A_{\sigma}\left(  s,\cdot\right)  ds\right)  +\int_{0}^{t}\exp\left(
-\int_{s}^{t}A_{\sigma}\left(  \xi,\cdot\right)  d\xi\right)  \mathcal{O}%
_{\sigma}\left[  g\right]  \left(  s,\cdot\right)  ds \label{Z3E3}%
\end{equation}
for $0\leq t<T,$ where $A\left(  \cdot,s\right)  $ is defined as in Lemma
\ref{LAg} for each $g\left(  \cdot,s\right)  $ and $\mathcal{O}\left[
g\right]  \left(  \cdot,s\right)  $ is defined as in Lemma \ref{LQg} for each
$g\left(  \cdot,s\right)  .$
\end{definition}

\begin{remark}
The main reason to restrict the definition to $\sigma>0$ is because for
$\sigma=0$ the operator $\mathcal{O}_{\sigma}\left[  g\right]  $ cannot be
defined as a finite measure for arbitrary measures $g\in\mathcal{M}_{+}\left(
\left[  0,\infty\right)  :\left(  1+\omega\right)  ^{\rho}\right)  .$ The
solutions defined here will be used as an auxiliary tool in
order to construct global weak solutions \index{weak solution} of (\ref{Z2E2a}) for $\sigma=0.$ 
\end{remark}

\begin{remark}
We understand by solutions in the sense of measures, solutions defined by
means of their action over a test function i.e. (\ref{Z3E3}) means:%
\begin{eqnarray}
&&\int\varphi\left(  \omega\right)  g\left(  t,d\omega\right)  = \int
\varphi\left(  \omega\right)  \exp\left(  -\int_{0}^{t}A_{\sigma}\left(
s,\omega\right)  ds\right)  g_{in}\left(  d\omega\right) + \label{Z3E4}\\
&&\hskip 3cm +\int_{0}^{t}%
\int\varphi\left(  \omega\right)  \exp\left(  -\int_{s}^{t}A_{\sigma}\left(
\xi,\omega\right)  d\xi\right)  \mathcal{O}_{\sigma}\left[  g\right]  \left(
s,d\omega\right)  ds\nonumber %
\end{eqnarray}
for any $\varphi\in C_{0}\left(  \left[  0,\infty\right)  \right)  .$ Notice
that, since $A_{\sigma}\left(  \cdot,s\right)  $ is a continuous function all
the terms in (\ref{Z3E4}) are well defined.
\end{remark}

\subsubsection{Bounded mild solutions. \index{mild solution} }\hfill\break

We now define solutions in the space of functions $C\left(  \left[  0,T\right]  :L_{+}^{\infty
}\left(  \mathbb{R}_{+}:\sqrt{\omega}\left(  1+\omega\right)  ^{\rho-\frac
{1}{2}}\right)  \right)  .$

\begin{definition}
\label{BoundMildSol}Let$\ \rho<-\frac{1}{2},\ \sigma\geq0$ and $T\in\left(
0,\infty\right]  $. Given any  $g_{in}\in L_{+}^{\infty}\left(  \mathbb{R}_{+}%
:\sqrt{\omega}\left(  1+\omega\right)  ^{\rho-\frac{1}{2}}\right)  $ we
say that $g\in C\left(  \left[  0,T\right]  :L_{+}^{\infty}\left(
\mathbb{R}_{+}:\sqrt{\omega}\left(  1+\omega\right)  ^{\rho-\frac{1}{2}%
}\right)  \right)  $ is a bounded mild solution \index{mild solution} of (\ref{Z2E2a}) with initial
value $g\left(  \cdot,0\right)  =g_{in}$ if the following identity holds in
the sense of measures:%
\begin{equation}
g\left(  t,\cdot\right)  =g_{in}\left(  \cdot\right)  \exp\left(  -\int
_{0}^{t}A_{\sigma}\left(  s,\cdot\right)  ds\right)  +\int_{0}^{t}\exp\left(
-\int_{s}^{t}A_{\sigma}\left(  \xi,\cdot\right)  d\xi\right)  \mathcal{O}%
_{\sigma}\left[  g\right]  \left(  s,\cdot\right)  ds\ \label{Z3E5}%
\end{equation}
for $0\leq t<T,$ where $A\left(  s,\cdot\right)  $ is defined as in Lemma
\ref{LAg} for each $g\left(  \cdot,s\right)  $ and $\mathcal{O}\left[
g\right]  \left(  s,\cdot\right)  $ is defined as in Lemma \ref{LQg} for each
$g\left(  s,\cdot\right)  .$
\end{definition}

\begin{remark}
Notice that, differently from Definition \ref{MeasMildSol}, in Definition
\ref{BoundMildSol} we allow $\sigma$ to take the value $0.$
\end{remark}

\subsubsection{Relation between the different concepts of solution.}

The relation between the different concepts of solution mentioned above is
described in the following result.

\begin{proposition}
\label{relSolutions}(i) Suppose that $\sigma\geq0,\ \rho<-\frac{1}{2}$ and $g\in C\left(  \left[0,T\right]\!:\!L_{+}^{\infty}\left(  \mathbb{R}%
_{+}\!:\sqrt{\omega}\left(  1+\omega\right)  ^{\rho-\frac{1}{2}}\right)
\right)  $ is a bounded mild solution \index{mild solution} of (\ref{Z2E2a}) in the sense of
Definition \ref{BoundMildSol}. \\
Then $g\in C\left(  \left[  0,T\right]
:\mathcal{M}_{+}\left(  \left[  0,\infty\right)  :\left(  1+\omega\right)
^{\rho}\right)  \right)  $ and it is also a measured valued mild solution \index{mild solution} of
(\ref{Z2E2a}) in the sense of Definition \ref{MeasMildSol}.

(ii) Suppose that $\sigma>0,\ \rho<-\frac{1}{2}$ and $g\in C\left(  \left[
0,T\right]  :\mathcal{M}_{+}\left(  \left[  0,\infty\right)  :\left(
1+\omega\right)  ^{\rho}\right)  \right)  $ is a measured valued mild solution \index{mild solution} 
of (\ref{Z2E2a}) in the sense of Definition \ref{MeasMildSol}. Then $g\in
C\left(  \left[  0,T\right)  :\mathcal{M}_{+}\left(  \left[  0,\infty\right)
\right)  \right)  $ and it is also a weak solution with interacting condensate of (\ref{Z2E2a}) in the
sense of Definition \ref{weakSolution}.  \index{weak solution} \index{interacting condensate}
\end{proposition}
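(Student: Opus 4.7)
The plan is to treat the two parts separately. Both rest on identifying the Duhamel structure $\partial_t g = \mathcal{O}_\sigma[g] - A_\sigma g$ encoded in (\ref{Z3E3})--(\ref{Z3E5}) with the symmetric weak formulation (\ref{Z2E1}).

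\emph{Part (i)} reduces to a verification of function space inclusions. Any $g \in L^\infty_+(\mathbb{R}_+:\sqrt\omega(1+\omega)^{\rho-1/2})$ satisfies $g(\omega)\le C\sqrt\omega(1+\omega)^{\rho-1/2}$, hence $\tfrac{1}{R}\int_{R/2}^R g\,d\omega \le C\sqrt{R}(1+R)^{\rho-1/2}\lesssim (1+R)^\rho$ uniformly in $R$, so the absolutely continuous measure $g(\omega)\,d\omega$ lies in $\mathcal{M}_+([0,\infty):(1+\omega)^\rho)$. Continuity in $t$ in the weighted sup norm implies weak continuity as measures by dominated convergence against any $\varphi\in C_0([0,\infty))$. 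Lemmas \ref{LAg} and \ref{LQg} ensure that $A_\sigma$ and $\mathcal{O}_\sigma[g]$ are defined in both settings and coincide on bounded functions, so relation (\ref{Z3E5}) translates directly into (\ref{Z3E3}); thus $g$ automatically qualifies as a measured valued mild solution.

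\emph{Part (ii)} proceeds in two steps. First, for any $\varphi\in C^2_0([0,T)\times[0,\infty))$ I would pair (\ref{Z3E4}) with $\varphi(t,\cdot)$ and differentiate in $t$. Leibniz' rule yields $\tfrac{d}{dt}\int\varphi\,g(t,d\omega) = \int\partial_t\varphi\,g(t,d\omega) + \int\varphi\,[\mathcal{O}_\sigma[g]-A_\sigma g](t,d\omega)$: the factor $\exp(-\int_0^t A_\sigma\,ds)$ contributes $-A_\sigma\varphi$, the Duhamel integral contributes $\mathcal{O}_\sigma[g](t)$ from the moving upper limit together with a second $-A_\sigma\varphi$ inside, and one further use of (\ref{Z3E4}) recognises the $g(t,d\omega)$ factor in the ambient terms. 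Integrating from $0$ to $t_*$ produces the time boundary terms and the first term on the right-hand side of (\ref{Z2E1}). The remaining ingredient is the symmetrization identity
\[
\int\varphi\,[\mathcal{O}_\sigma[g] - A_\sigma g](d\omega_1) \;=\; \iiint_{[0,\infty)^3}\!\!\frac{g_1g_2g_3\,\Phi_\sigma}{\sqrt{\omega_1\omega_2\omega_3}}\bigl[\varphi_3+\varphi_4-\varphi_1-\varphi_2\bigr]\,d\omega_1 d\omega_2 d\omega_3,
\]
$\omega_4=\omega_1+\omega_2-\omega_3$. I would establish it by expanding the left-hand side into the four cubic pieces $\Phi_\sigma\bigl(g_1g_3g_4/\sqrt{\omega_1\omega_3\omega_4} + g_2g_3g_4/\sqrt{\omega_2\omega_3\omega_4} - g_1g_2g_3/\sqrt{\omega_1\omega_2\omega_3} - g_1g_2g_4/\sqrt{\omega_1\omega_2\omega_4}\bigr)$ paired with $\varphi(\omega_1)$, and then exploiting the invariance of the collision measure $\delta(\omega_1+\omega_2-\omega_3-\omega_4)\prod d\omega_i$ and of $\Phi_\sigma$ under the $8$-element subgroup of $S_4$ generated by $(12),(34),(13)(24)$. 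Applying suitable elements of this group recasts every term in the common form $\Phi_\sigma g_1g_2g_3/\sqrt{\omega_1\omega_2\omega_3}$ with the test function landing on $\omega_3,\omega_4,\omega_1,\omega_2$ respectively and with the signs $+,+,-,-$, yielding the stated identity.

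The main technical hurdle is ensuring that these manipulations make sense when $g$ is only a Radon measure. Lemma \ref{Cont} is tailored for precisely this: it gives uniform continuity of the integrand $\Delta_{\varphi,\sigma}$ on compact subsets of $[0,\infty)^3$, so the triple integral is well defined against the product measure $g(d\omega_1)g(d\omega_2)g(d\omega_3)$ even when $g$ carries Dirac masses at $\omega=0$; Lemmas \ref{LAg} and \ref{LQg} do the analogous job for the terms appearing in (\ref{Z3E4}), so Fubini applies throughout. The restriction $\sigma>0$ is essential throughout part (ii), since only then is $\mathcal{O}_\sigma[g]$ guaranteed to be a finite Radon measure for general $g\in\mathcal{M}_+([0,\infty):(1+\omega)^\rho)$.
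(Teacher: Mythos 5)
Your proposal follows essentially the same route as the paper: for (i), the weighted $L^\infty$ bound immediately yields the measure-space inclusion, and the Duhamel identity translates unchanged; for (ii), you differentiate the Duhamel representation to recover the pointwise equation $\partial_t g = \mathcal{O}_\sigma[g] - A_\sigma g$, then symmetrize the collision integral to land on the weak form (\ref{Z2E1}). The paper's version of (ii) is terser, simply saying "differentiate and symmetrize," while you spell out the symmetrization as invariance under the order-$8$ subgroup of $S_4$ generated by $(12),(34),(13)(24)$ that stabilizes the constraint $\omega_1+\omega_2=\omega_3+\omega_4$; this is correct (relabeling by $(13)(24)$, $(14)(23)$, the identity, and $(12)(34)$ turns the four cubic pieces into the common $g_1g_2g_3/\sqrt{\omega_1\omega_2\omega_3}$ factor paired with $\varphi_3,\varphi_4,-\varphi_1,-\varphi_2$). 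Your appeal to Lemmas \ref{Cont}, \ref{LAg}, \ref{LQg} to justify the measure-level manipulations and the restriction $\sigma>0$ matches the paper's reliance on the same lemmas. No gaps.
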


\begin{proof}
The proof of (i) is immediate, because from  $g\in C\left(  \left[
0,T\right]  :L_{+}^{\infty}\left(  \mathbb{R}_{+}:\sqrt{\omega}\left(
1+\omega\right)  ^{\rho-\frac{1}{2}}\right)  \right)  $ it follows at once that one also has $g\in
C\left(  \left[  0,T\right]  :\mathcal{M}_{+}\left(  \left[  0,\infty\right)
:\left(  1+\omega\right)  ^{\rho}\right)  \right)  $ and (\ref{Z3E5}) implies
(\ref{Z3E3}) in the sense of measures (equivalently (\ref{Z3E4})).

In order to prove (ii), suppose now that $g\in C\left(  \left[  0,T\right]
:\mathcal{M}_{+}\left(  \left[  0,\infty\right)  :\left(  1+\omega\right)
^{\rho}\right)  \right)  $ is a measured valued mild   solution \index{mild solution} of
(\ref{Z2E2a}). This implies the identity (\ref{Z3E4}) for any\ $\varphi\in
C_{0}\left(  \left[  0,\infty\right)  \right)  .$ Using the regularity
properties of $g$ we can differentiate (\ref{Z3E4}) for $a.e.$ $t\in\left[
0,T\right]  $ and check that the following identity holds:%
\begin{align*}
&  \partial_{t}\left(  \int\varphi\left(  t,\omega_{1}\right)  g_{1}\left(
t,d\omega_{1}\right)  \right)  =\int\partial_{t}\varphi\left(  t,\omega_{1}\right)  g_{1}\left(
t,d\omega_{1}\right)+ \\
&  +\int\!\!\int\!\!\int\!\!\Phi_{\sigma}\left[  \left(  \frac{g_{1}}{\sqrt{\omega_{1}}%
}+\frac{g_{2}}{\sqrt{\omega_{2}}}\right)  \frac{g_{3}g_{4}}{\sqrt{\omega
_{3}\omega_{4}}}-\left(  \frac{g_{3}}{\sqrt{\omega_{3}}}+\frac{g_{4}}%
{\sqrt{\omega_{4}}}\right)  \frac{g_{1}g_{2}}{\sqrt{\omega_{1}\omega_{2}}%
}\right]  \varphi\left(  \omega_{1}\right)  d\omega_{3}d\omega_{4}d\omega_{1}%
\end{align*}

Symmetrizing the variables in the integrals of the right-hand side and
integrating in the interval $\left[  0,T\right]  $ we obtain (\ref{Z2E1}).
\end{proof}

\subsection{Existence of bounded mild solutions. \index{mild solution}}

We now prove the following result.

\begin{theorem}
\label{localExBounded}Let $\rho<-\frac{1}{2},\ \sigma\geq0$ be two given constants and 
$g_{in}\in L_{+}^{\infty}\left(  \mathbb{R}^{+};\sqrt{\omega}\left(
1+\omega\right)  ^{\rho-\frac{1}{2}}\right)  .$ There exists $T>0$,
$T\leq\infty,$ depending only on $\left\Vert g_{0}\left(  \cdot\right)
\right\Vert _{L_{+}^{\infty}\left(  \mathbb{R}^{+};\sqrt{\omega}\left(
1+\omega\right)  ^{\rho-\frac{1}{2}}\right)  }$ and $\sigma$, and 
a unique mild solution \index{mild solution} of (\ref{Z2E2a}) $g\in C\left(  \left[  0,T\right)
:L_{+}^{\infty}\left(  \mathbb{R}_{+}:\sqrt{\omega}\left(  1+\omega\right)
^{\rho-\frac{1}{2}}\right)  \right)  $ with initial value $g\left(
\cdot,0\right)  =g_{in}$ in the sense of Definition \ref{BoundMildSol}.

If $\rho>2,$ the obtained solution $g$ satisfies:%
\begin{equation}
\int_{0}^{\infty}g_{in}\left(  \omega\right)  \omega d\omega=\int_{0}^{\infty
}g\left(  t,\omega\right)  \omega d\omega\,,\ \ t\in\left(  0,T\right)  \,.
\label{F3E6c}%
\end{equation}

If $\sigma>0$ we have $T=\infty.$

If $\sigma=0,$ the function $f$ is in the space $W^{1,\infty}\left(  \left(
0,T\right)  ;L^{\infty}\left(  \mathbb{R}^{+}\right)  \right)  $ and it
satisfies (\ref{E1}) $a.e.\ \omega\in\mathbb{R}^{+}$ for any $t\in\left(
0,T_{\max}\right)  .$ Moreover, $f$ can be extended as a mild solution \index{mild solution} of
(\ref{E1})-(\ref{E2b}) to a maximal time interval $\left(  0,T_{\max
}\right)  $ with $0<T_{\max}\leq\infty.$ If $T_{\max}<\infty$ we have:%
\[
\lim\sup_{t\rightarrow T_{\max}^{-}}\left\Vert f\left(  t,\cdot\right)
\right\Vert _{L^{\infty}\left(  \mathbb{R}^{+}\right)  }=\infty.
\]

\end{theorem}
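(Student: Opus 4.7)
The plan is a Banach fixed-point argument applied to the Duhamel identity (\ref{Z3E5}), followed by separate arguments to upgrade local existence to global existence when $\sigma>0$, to establish energy conservation when $\rho>2$, and to obtain the $W^{1,\infty}$ regularity and blow-up alternative when $\sigma=0$.

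For local existence and uniqueness, I would work in the complete metric space
\[
X_{T,M}=\bigl\{g\in C([0,T];L^\infty_+(\mathbb{R}_+:\sqrt{\omega}(1+\omega)^{\rho-1/2})):\sup_{t\in[0,T]}\|g(t,\cdot)\|\leq M\bigr\}
\]
and define $\mathcal{T}$ as the right-hand side of (\ref{Z3E5}). Lemma \ref{LAg} gives $A_\sigma[g]\geq 0$, so both exponential factors are bounded by $1$, while Lemma \ref{LQg} shows $\|\mathcal{O}_\sigma[g]\|\leq C\|g\|^3$ in the same weighted space. Setting $M=2\|g_{in}\|$ and $T$ small enough depending only on $M$ and $\sigma$, the map $\mathcal{T}$ sends $X_{T,M}$ into itself. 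The contraction property requires rewriting the differences $A_\sigma[g]-A_\sigma[\tilde g]$ and $\mathcal{O}_\sigma[g]-\mathcal{O}_\sigma[\tilde g]$ as trilinear expressions in $g-\tilde g$ and in $g,\tilde g$, via the identity $abc-\tilde a\tilde b\tilde c=(a-\tilde a)bc+\tilde a(b-\tilde b)c+\tilde a\tilde b(c-\tilde c)$, and applying the estimates of Lemmas \ref{LAg} and \ref{LQg} term by term. This yields a Lipschitz constant of order $TM^2$, making $\mathcal{T}$ a contraction for $T$ small, and producing the unique fixed point.

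For global existence when $\sigma>0$, the cutoff $\Phi_\sigma$ vanishes on a neighborhood of the coordinate planes $\{\omega_j=0\}$, so the collision integrals are uniformly bounded in terms of $\int g(t,d\omega)/\sqrt{\omega}$. Using Proposition \ref{relSolutions}(ii), the test function $\varphi\equiv 1$ in the weak formulation (\ref{Z2E1}) gives conservation of $\int g(t,d\omega)$, from which one obtains an a priori bound on the weighted $L^\infty$ norm; iterating the local result then yields $T=\infty$. For energy conservation when $\rho>2$, I would approximate $\varphi(\omega)=\omega$ by compactly supported $\varphi_n\in C_0^2$ in (\ref{Z2E1}). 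The symmetrized bracket $\varphi(\omega_1)+\varphi(\omega_2)-\varphi(\omega_3)-\varphi(\omega_4)$ vanishes in the limit because of the constraint $\omega_1+\omega_2=\omega_3+\omega_4$, so the collision integral drops out and only $\int\omega\, g(t,d\omega)=\int\omega\, g_{in}(d\omega)$ survives; the hypothesis $\rho>2$ provides the integrable majorant for dominated convergence.

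In the $\sigma=0$ case, the uniform bounds on $A_0[g(t,\cdot)]$ and $\mathcal{O}_0[g(t,\cdot)]$ along the constructed solution make the right-hand side of (\ref{Z3E5}) Lipschitz in $t$, so $f=g/\sqrt{\omega}$ lies in $W^{1,\infty}((0,T);L^\infty(\mathbb{R}_+))$ and satisfies (\ref{E1}) pointwise a.e. Since the local existence time depends only on $\|g(t)\|$, the standard continuation procedure produces a maximal existence time $T_{\max}\in(0,\infty]$ with $\limsup_{t\to T_{\max}^-}\|f(t,\cdot)\|_{L^\infty(\mathbb{R}_+)}=\infty$ whenever $T_{\max}<\infty$. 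The main technical obstacle is the contraction estimate itself: the operator $\mathcal{O}_\sigma$ is cubic in $g$ with a kernel that couples different frequencies delicately with the weight $\sqrt{\omega}(1+\omega)^{\rho-1/2}$, so one essentially has to repeat the case analysis of Lemma \ref{LQg} on differences via the trilinear identity above, with a parallel treatment of $A_\sigma$. Particular care is needed to keep constants uniform in $\sigma$, so that the local existence time remains positive even for $\sigma=0$.
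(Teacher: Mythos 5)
Your proposal follows the same strategy the paper itself uses: interpret mild solutions as fixed points of the map $\mathcal{T}[g]$ defined by the right-hand side of (\ref{Z3E5}), exploit the sign $A_\sigma\ge 0$ from Lemma \ref{LAg} so the exponential factors are bounded by $1$, and close the argument using the cubic estimate on $\mathcal{O}_\sigma[g]$ from Lemma \ref{LQg} (the paper records these quantitative estimates as (\ref{K8E1})--(\ref{K8E3}) and then defers the remaining fixed-point details to the proof of Theorem 3.4 of \cite{EV1}). Your explicit trilinear splitting $abc-\tilde a\tilde b\tilde c=(a-\tilde a)bc+\tilde a(b-\tilde b)c+\tilde a\tilde b(c-\tilde c)$ for the contraction step, your dominated-convergence argument for (\ref{F3E6c}), and your continuation/blow-up alternative for $\sigma=0$ are all standard and consistent with what the paper (and \cite{EV1}) does.

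The one step where your write-up is too quick is the global extension for $\sigma>0$. You argue: the cutoff bounds the collision integrals ``in terms of $\int g/\sqrt{\omega}$,'' then $\varphi\equiv 1$ gives mass conservation, ``from which one obtains an a priori bound on the weighted $L^\infty$ norm.'' This jump is not justified as written. First, using $\varphi\equiv 1$ to conclude $\int g(t,d\omega)=\int g_{in}\,d\omega$ requires $\rho<-1$ so that the mass is finite; the theorem covers the whole range $\rho<-\tfrac12$ (cf.\ the remark after Theorem \ref{globalWeakSol} and Remark \ref{infinitemass}). Second, even granting finite mass, conservation of $\int g$ alone does not control $\sup_\omega g(\omega)/[\sqrt{\omega}(1+\omega)^{\rho-1/2}]$: the weight vanishes as $\omega\to 0$ and as $\omega\to\infty$, and one must use the structure of $\Phi_\sigma$ (it vanishes unless $\omega_1,\dots,\omega_4\ge\sigma$, so the solution is frozen below $\sigma$ and the kernel $\Phi_\sigma/\sqrt{\omega_1\omega_2\omega_3}$ is bounded) to convert the cubic estimate $\|\mathcal{O}_\sigma[g]\|\le C\|g\|^3$ into one that is \emph{linear} in $\|g\|$ with the remaining two factors replaced by integral quantities that stay under control, whereupon Gronwall applies. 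The paper's one-line hint (``boundedness of $\frac{\Phi_\sigma}{\sqrt{\omega_1\omega_2\omega_3}}$'') is precisely pointing at this exchange, and your sketch should make the same exchange explicit rather than route through mass conservation, which is neither necessary in the regime $\rho<-1$ nor available for $-1\le\rho<-\tfrac12$.

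Two minor points: the hypothesis $\rho>2$ for (\ref{F3E6c}) in the statement is incompatible with $\rho<-\tfrac12$ and is evidently a typo for $\rho<-2$ (which is what makes $\int\omega\,g\,d\omega$ finite for the weight $\sqrt{\omega}(1+\omega)^{\rho-1/2}$); your argument is correct once read that way. And the Banach fixed point gives uniqueness only inside the ball $X_{T,M}$; a short bootstrap to full uniqueness in $C([0,T);L^\infty_+(\dots))$ should be added.
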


\begin{proof}
The proof of this result can be obtained in the same manner a the Proof of
Theorem 3.4 in \cite{EV1}. The key idea is to interpret mild solutions \index{mild solution} as a
fixed point for an operator $\mathcal{T}\left[  g\right]  $ which we define as
the right-hand side of (\ref{Z3E5}). Some of the main technical difficulties
in the Proof of Theorem 3.4 in \cite{EV1} are due to the need to control
quadratic terms in $f$ which appear in the Nordheim  \index{Nordheim} equation, but
which are not present in $\mathcal{O}_{\sigma}\left[  g\right]  $. We only
need to estimate then cubic terms and this allows to obtain well-posedness
results for a larger range of exponents than the one obtained in \cite{EV1}
(namely $\rho<-\frac{1}{2}$).

The key estimates needed to implement the fixed point argument are the
following ones:%
\begin{align}
\omega^{\rho}\mathcal{O}_{\sigma}\left[  g\right]  \left(  \omega\right)   &
\leq\frac{C}{\omega^{\min\left\{  \rho-\frac{1}{2},\frac{1}{2}\right\}  }%
}\left\Vert g\right\Vert _{L_{+}^{\infty}\left(  \mathbb{R}^{+};\sqrt{\omega
}\left(  1+\omega\right)  ^{\rho-\frac{1}{2}}\right)  }^{3}\ \ ,\ \ \omega
\geq1\label{K8E1}\\
0  &  \leq A_{\sigma}\left(  \omega\right)  \leq C\left\Vert g\right\Vert
_{L_{+}^{\infty}\left(  \mathbb{R}^{+};\sqrt{\omega}\left(  1+\omega\right)
^{\rho-\frac{1}{2}}\right)  }^{2}\ \ ,\ \ \omega\geq0 \label{K8E3}%
\end{align}
where $\mathcal{O}_{\sigma}\left[  g\right]  ,\ A_{\sigma}$ are as in
(\ref{S7E3n}), (\ref{Z2E7}) and $C$ depends in $\rho$. In order to derive
(\ref{K8E3}) we just notice that:%
\begin{eqnarray*}
A_{\sigma}\left(  \omega\right)  & \leq &\left\Vert g\right\Vert _{L_{+}^{\infty
}\left(  \mathbb{R}^{+};\sqrt{\omega}\left(  1+\omega\right)  ^{\rho-\frac
{1}{2}}\right)  }^{2}\int_{0}^{\infty}\int_{0}^{\infty}\frac{d\omega
_{3}d\omega_{4}}{\left(  1+\omega_{3}\right)  ^{\rho+\frac{1}{2}}\left(
1+\omega_{4}\right)  ^{\rho+\frac{1}{2}}}\\
&\leq& C\left\Vert g\right\Vert
_{L_{+}^{\infty}\left(  \mathbb{R}^{+};\sqrt{\omega}\left(  1+\omega\right)
^{\rho-\frac{1}{2}}\right)  }^{2}%
\end{eqnarray*}

To prove (\ref{K8E1}) we use  that we only need to integrate in the
domain $\left\{  \omega_{3}\geq0,\ \omega_{4}\geq0\ ,\ \omega_{2}%
\geq0\right\}  ,$ with $\omega_{2}$ as in (\ref{Z2E7}). We split the region in
the sets 
\begin{eqnarray*}
D_{I}=\left\{  0\leq\max\left\{  \omega_{3},\omega_{4}\right\}
\leq\omega_{1}\right\},  D_{II}=\left\{  \omega_{1}\leq\omega_{3}%
<\infty,\ 0\leq\omega_{4}<\omega_{1}\right\}\\
 D_{III}=\left\{  \omega
_{1}\leq\omega_{4}<\infty,\ 0\leq\omega_{3}<\omega_{1}\right\} ,
D_{IV}=\left\{  \omega_{1}\leq\omega_{3},\ \omega_{1}\leq\omega_{4}\right\}.
\end{eqnarray*} 
We then split the integrals which define the operator $\mathcal{O}_{\sigma
}\left[  g\right]  $ in the four domains. Notice that, since $g_{2}%
\leq\left\Vert g\right\Vert _{L_{+}^{\infty}\left(  \mathbb{R}^{+}%
;\sqrt{\omega}\left(  1+\omega\right)  ^{\rho-\frac{1}{2}}\right)  }\frac
{1}{\omega_{1}^{\rho}}$ in $D_{IV}:$
\[
\iint_{D_{IV}}\left[  \cdot\cdot\cdot\right]  \leq\frac{\left\Vert
g\right\Vert _{L_{+}^{\infty}\left(  \mathbb{R}^{+};\sqrt{\omega}\left(
1+\omega\right)  ^{\rho-\frac{1}{2}}\right)  }^{3}}{\omega_{1}^{\rho}}%
\int_{\omega_{1}}^{\infty}\int_{\omega_{1}}^{\infty}\frac{d\omega_{3}%
d\omega_{4}}{\left(  1+\omega_{3}\right)  ^{\rho+\frac{1}{2}}\left(
1+\omega_{4}\right)  ^{\rho+\frac{1}{2}}}%
\]
and the term on the right-hand side can be estimated by the right-hand side of
(\ref{K8E1}). On the other hand, using that $\omega_{3}\geq\omega_{1}$ in
$D_{II}$ we obtain:
\[
\int\!\!\int_{D_{II}}\!\!\left[  \cdot\cdot\cdot\right]  \leq\frac{\left\Vert
g\right\Vert _{L_{+}^{\infty}\left(  \mathbb{R}^{+};\sqrt{\omega}\left(
1+\omega\right)  ^{\rho-\frac{1}{2}}\right)  }^{3}}{\omega_{1}^{\rho+\frac
{1}{2}}}\int_{\omega_{1}}^{\infty}\!\!\int_{0}^{\omega_{1}}\!\!\!\!\!\frac{d\omega
_{3}d\omega_{4}}{\left(  1+\left(  \omega_{3}+\omega_{4}-\omega_{1}\right)
\right)  ^{\rho+\frac{1}{2}}\left(  1+\omega_{4}\right)  ^{\rho+\frac{1}{2}}}%
\]

Using that $\left(  \omega_{3}+\omega_{4}-\omega_{1}\right)  \geq\left(
\omega_{3}-\omega_{1}\right)  $ we can estimate the integral on the right-hand
side by the product of two integrals which can be bounded by the right-hand
side of (\ref{K8E1}). The estimate of the integral in the domain $D_{III}$ is similar.

Finally we estimate the integral in the domain $D_{I}.$ Due to the symmetry of
the integral it is enough to estimate the integrals in the region $\left\{
\omega_{3}\geq\omega_{4}\right\}  .$ We then have $\omega_{3}\geq\frac
{\omega_{1}}{2}.$ We also change variables in order to replace the integration
in $\omega_{3}$ by integration in $\omega_{2}.$ Then:%
\[
\iint_{D_{I}}\left[  \cdot\cdot\cdot\right]  \leq\frac{\left\Vert
g\right\Vert _{L_{+}^{\infty}\left(  \mathbb{R}^{+};\sqrt{\omega}\left(
1+\omega\right)  ^{\rho-\frac{1}{2}}\right)  }^{3}}{\omega_{1}^{\rho+\frac
{1}{2}}}\int_{0}^{\omega_{1}}\int_{0}^{\omega_{1}}\frac{d\omega_{2}d\omega
_{4}}{\left(  1+\left(  \omega_{2}\right)  \right)  ^{\rho+\frac{1}{2}}\left(
1+\omega_{4}\right)  ^{\rho+\frac{1}{2}}}%
\]
and this gives (\ref{K8E1}).

The rest of the fixed point argument can be made along the lines of the Proof
of Theorem 3.4 in \cite{EV1}. The fact that for $\sigma>0$ the solutions can
be extended to arbitrarily large values of $t$ follows just from the
boundedness of the function $\frac{\Phi_{\sigma}}{\sqrt{\omega_{1}\omega
_{2}\omega_{3}}}.$
\end{proof}

\subsection{Existence of global weak solutions with interacting condensate. \index{weak solution}} \index{interacting condensate}
\label{mildConst}

In order to prove well-posedness of measured valued weak solutions \index{weak solution}, we will
restrict our analysis to integrable distributions $g$ given that we will
consider the long time asymptotics of the solutions only in this case. More
precisely, we will assume that the initial data $g_{in}\in\mathcal{M}%
_{+}\left(  \left[  0,\infty\right)  :\left(  1+\omega\right)  ^{\rho}\right)
$ with $\rho<-1,$ and the resulting solutions $g\left(  t,\cdot\right)  $ will
be shown to be in the same space for $t\geq0.$ The reason to
assume that $\rho<-1$ is that we use in the argument yielding global
existence the finiteness of $\int g\left(  t,d\omega\right)  .$ It is likely
that global weak solutions \index{weak solution} could be obtained just with the assumption
$\rho<-\frac{1}{2}$ but the proof  would require a careful study of the
transfer of mass taking place at the region $\omega\rightarrow\infty.$
Therefore, this case will not be considered in this paper. Notice that due to
the cubic nonlinearity of the problem a simple Gronwall argument does not
allow to obtain global existence, in spite of the fact that the operator
$\mathcal{O}_{\sigma}\left[  g\right]  $ is defined for any $g\in
\mathcal{M}_{+}\left(  \left[  0,\infty\right)  :\left(  1+\omega\right)
^{\rho}\right)  ,\ \rho<-1.$

\begin{remark} 
\label{infinitemass}
On the other hand, we remark that the exponent $\rho=-\frac{1}{2}$ is in a
suitable sense optimal. Indeed, the operator $\mathcal{O}_{\sigma}\left[
g\right]  $ cannot be defined in general for $g\in\mathcal{M}_{+}\left(
\left[  0,\infty\right)  :\left(  1+\omega\right)  ^{\rho}\right)  $ with
$\rho\geq-\frac{1}{2}.$
\end{remark}

As a next step we prove a global existence Theorem of weak solutions \index{weak solution} for
(\ref{S2E1}). We use an idea similar to the one in \cite{Lu1} for the
Nordheim  \index{Nordheim} equation. We first regularize the problem using the kernels
$\Phi_{\sigma}$ with $\sigma>0.$ It is possible to obtain global weak
solutions in that case, just using the fact that mild solutions \index{mild solution} are weak
solutions. Finally we take the limit $\sigma\rightarrow0.$ The available
estimates for the solutions $g_{\sigma}$ will allow to prove that the limit
exists and it yields a global weak solution \index{weak solution} of (\ref{S2E1}). The main result
that we prove in this Section is the following.

\begin{theorem}
\label{globalWeakSol}Let $-2< \rho<-1$ and $g_{in}\in\mathcal{M}_{+}\left(
\left[  0,\infty\right)  :\left(  1+\omega\right)  ^{\rho}\right)  .$ There
exists a weak solution \index{weak solution} of (\ref{S2E1}) in the sense of Definition
\ref{weakSolution} with initial datum $g_{in}.$
\end{theorem}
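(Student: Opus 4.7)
The plan is to mimic Lu's approach for the Nordheim equation, as signaled in the paper: regularize by $\Phi_\sigma$ with $\sigma>0$, construct measured valued mild solutions for the regularized problem, and then pass to the limit $\sigma\to 0^{+}$. First, for each $\sigma>0$, I would produce a global measured valued mild solution $g_\sigma\in C([0,\infty):\mathcal M_+([0,\infty):(1+\omega)^\rho))$ of (\ref{Z2E2a}) with initial datum $g_{in}$. Local existence comes from a contraction argument in the space of measures, using Lemmas \ref{LAg} and \ref{LQg}: for $\sigma>0$ the kernel $\Phi_\sigma$ is bounded and vanishes near the coordinate hyperplanes, so $A_\sigma$ is continuous and nonnegative, and $\mathcal{O}_\sigma[g]$ stays in $\mathcal M_+([0,\infty):(1+\omega)^\rho)$. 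Global continuation then follows from mass conservation: testing (\ref{Z2E1}) with $\varphi\equiv 1$ kills the quadrilinear integrand (since $1+1-1-1=0$), so $\int g_\sigma(t,d\omega)=\int g_{in}(d\omega)=M<\infty$, using $\rho<-1$. By Proposition \ref{relSolutions}(ii), each $g_\sigma$ is a weak solution with interacting condensate of the $\sigma$-regularized equation.

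Next I would establish uniform-in-$\sigma$ estimates sufficient for compactness. Mass conservation gives $\sup_\sigma\sup_t g_\sigma(t,[0,\infty))\le M$. For tightness at infinity I would propagate the weighted bound: using test functions approximating $\omega\mapsto\mathbf{1}_{[R,\infty)}(\omega)$ and exploiting that $\Phi_\sigma\le \min\{\sqrt{\omega_i}\}$, one obtains a moment estimate of the form $\int_R^\infty g_\sigma(t,d\omega)\le C(T)R^{1+\rho}$ uniformly in $\sigma\in(0,1]$ and $t\in[0,T]$, which uses precisely $\rho>-2$ to close the estimate (this is where the upper bound on $\rho$ enters). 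Equicontinuity in time for the maps $t\mapsto\int\varphi\,g_\sigma(t,d\omega)$ with $\varphi\in C_0^2$ follows from bounding the right-hand side of (\ref{Z2E1}) by $C\|\Delta_{\varphi,\sigma}\|_\infty M^3$, uniformly in $\sigma$, thanks to Lemma \ref{Cont}. Combining uniform tightness with equicontinuity, the Arzelà-Ascoli theorem in $C([0,T]:(\mathcal M_+,\mathrm{dist}_*))$ yields a subsequence $g_{\sigma_n}\to g$ in $C([0,T]:\mathcal M_+([0,\infty):(1+\omega)^\rho))$.

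The decisive step is passing to the limit in the triple integral of (\ref{Z2E1}). On compact subsets of $[0,\infty)^3$, $\Delta_{\varphi,\sigma}\to\Delta_{\varphi,0}$ uniformly as $\sigma\to 0$, since $\Phi_\sigma\to\Phi_0$ uniformly and $\Delta_{\varphi,\sigma}$ is uniformly continuous by Lemma \ref{Cont}. On the product $g_{1,\sigma_n}\otimes g_{2,\sigma_n}\otimes g_{3,\sigma_n}$ I would invoke weak convergence on each factor and a standard product-measure argument (pointwise convergence together with uniform total mass on compacts implies weak convergence of the product), to pair continuous bounded integrands with the product measure. The obstruction is that the integrand $\Delta_{\varphi,\sigma_n}$ depends on the unbounded variable $\omega_1+\omega_2-\omega_3$, so I would first truncate the integration domain to a cube $[0,L]^3$ and use the uniform tail estimate from Step 2 together with the growth control $|\Delta_{\varphi,\sigma}(\omega_1,\omega_2,\omega_3)|\le C/\max\{\sqrt{\omega_1},\sqrt{\omega_3}\}$ (derived in Lemma \ref{Cont}) to make the complement arbitrarily small uniformly in $n$.

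The main difficulty I expect is exactly the justification of this truncation: since $\varphi\in C_0^2$ only controls the compactly supported terms $\varphi(\omega_1)$, $\varphi(\omega_2)$, $\varphi(\omega_3)$ but not directly $\varphi(\omega_1+\omega_2-\omega_3)$, one must exploit the cancellations inside the bracket encoded in Lemma \ref{Cont} together with the tail decay provided by $\rho>-2$. Once the truncation is justified uniformly in $\sigma_n$, the passage to the limit in the triple integral is routine, the linear terms converge by weak convergence of $g_{\sigma_n}(t,\cdot)$, and the identity (\ref{Z2E1}) for the limit $g$ with $\Phi=\Phi_0$ follows for all $\varphi\in C_0^2([0,T)\times[0,\infty))$, producing the desired weak solution with interacting condensate.
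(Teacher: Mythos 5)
Your overall strategy — regularize with $\Phi_\sigma$, construct global solutions for $\sigma>0$, extract $\sigma$-uniform compactness, and pass to the limit — is exactly the paper's, and your treatment of local existence (Schauder rather than contraction would be safer, but this is cosmetic), conservation of mass, time-equicontinuity, and the $\sigma\to 0$ limit of the triple integral are all sound and close to the paper. However, there is a real gap in the pivotal step: the $\sigma$-uniform tightness estimate.

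You propose to derive $\int_R^\infty g_\sigma(t,d\omega)\le C(T)R^{1+\rho}$ by testing with functions approximating $\mathbf{1}_{[R,\infty)}$ and invoking $\Phi_\sigma\le\min\{\sqrt{\omega_i}\}$. This does not close uniformly in $\sigma$. The gain term into the tail comes from configurations with (say) $\omega_1\ge R/2$ and $\omega_2\le\omega_3$ small; there the best pointwise bound you get is $\Phi_\sigma/\sqrt{\omega_1\omega_2\omega_3}\le 1/\sqrt{\omega_1\omega_3}$, and upon integrating against $g_3$ you are left with $\int g_3(d\omega_3)/\sqrt{\omega_3}$, which can be $+\infty$ for measures in $\mathcal M_+([0,\infty):(1+\omega)^\rho)$ concentrating near $\omega=0$ (e.g. $g=\sum 2^{-n}\delta_{4^{-n}}$). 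The regularization $\sigma>0$ cuts this singularity only at the price of a $\sigma$-dependent constant, which is exactly what you cannot afford. There is no sign information in the bracket for $\varphi\approx\mathbf{1}_{[R,\infty)}$ since this function is neither convex nor concave, so there is no cancellation to rescue the estimate.

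What the paper does instead is invoke the monotonicity formula (Proposition \ref{atractiveness} and Lemma \ref{Convex}): for any convex test function $\varphi$, the map $t\mapsto\int\varphi\,g_\sigma(t,d\omega)$ is nondecreasing, uniformly in $\sigma\ge 0$. Applying this to the convex function $\varphi(\omega)=(1-K\omega)_+$ (equivalently, applying the nonincrease to the concave $\min\{\omega/R,1\}$) yields, in Lemma \ref{cotInf}, the lower bound $\int_0^R g_\sigma(t,d\omega)\ge 1-\frac{1}{R}-\frac{AR^{\rho+1}}{\rho+2}$ for all $t\ge 0$ and all $\sigma\ge 0$, using $-2<\rho<-1$. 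Together with conservation of total mass, this gives the upper bound on the tail that you need — cleanly, without Gronwall, and with a time-independent constant. This monotonicity observation is the single idea your proposal is missing, and it is the genuine technical engine of the proof.
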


\begin{remark}
Notice that if $\rho<-1$ taking a sequence of test functions $\varphi_{n}$
which converge uniformly to one as $n\rightarrow\infty,$ we can prove the
following identity for any weak solution \index{weak solution} of (\ref{S2E1}) in the sense of
Definition \ref{weakSolution}:
\begin{equation}
\int g_{in}\left(  d\omega\right)  =\int g\left(  t,d\omega\right)
\ \ ,\ \ a.e.\ t\geq0\ \label{K1}%
\end{equation}
Moreover, if $g_{in}\in\mathcal{M}_{+}\left(  \left[  0,\infty\right)
:\left(  1+\omega\right)  ^{\rho}\right)  $ with $\rho<-2$ we would obtain,
using a similar argument that:%
\begin{equation}
\int\omega g_{in}\left(  d\omega\right)  =\int\omega g\left(  t,d\omega
\right)  \label{K2}%
\end{equation}
\end{remark}
We split the proof of Theorem (\ref{globalWeakSol}) in the different Subsections

\subsubsection{Global measured valued weak solutions for the problem with
regularized kernel $\Phi_{\sigma},\ \sigma>0$}

As a first step we prove the existence of global weak solutions \index{weak solution} for the
regularized problem (\ref{Z2E2a}) with $\sigma>0.$

\begin{lemma}
\label{globRegul}Let $\rho<-1,\ \sigma>0$ and $g_{in}\in\mathcal{M}_{+}\left(
\left[  0,\infty\right)  :\left(  1+\omega\right)  ^{\rho}\right)  .$ Then,
there exist $g_{\sigma}\in C\left(  \left[  0,\infty\right)  :\mathcal{M}%
_{+}\left(  \left[  0,\infty\right)  :\left(  1+\omega\right)  ^{\rho}\right)
\right)  $ which is a global weak solution \index{weak solution} of (\ref{Z2E2a}) in the sense of
Definition \ref{weakSolution} and satisfies $g\left(  0,\cdot\right)
=g_{in}.$
\end{lemma}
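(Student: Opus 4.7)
The strategy is the one already hinted at in the text: construct a measured-valued mild solution via a contraction-mapping argument, promote it to a weak solution by invoking Proposition \ref{relSolutions}(ii), and extend it globally in time using mass conservation. The condition $\sigma>0$ will be used in a decisive way, since it makes the kernel $\Phi_\sigma/\sqrt{\omega_2\omega_3\omega_4}$ bounded on its support.

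\textbf{Step 1 (local mild solution).} I would introduce the Duhamel operator
\[
\mathcal{T}[g](t,\cdot) \;=\; g_{in}(\cdot)\,\exp\!\Bigl(-\!\int_0^t A_\sigma(s,\cdot)\,ds\Bigr)
+\int_0^t \exp\!\Bigl(-\!\int_s^t A_\sigma(\xi,\cdot)\,d\xi\Bigr)\mathcal{O}_\sigma[g](s,\cdot)\,ds,
\]
acting on $g\in C([0,T]:\mathcal{M}_+([0,\infty):(1+\omega)^\rho))$ equipped with the supremum of the weak metric $\mathrm{dist}_\ast$ from Remark \ref{distWeak}. By Lemmas \ref{LAg} and \ref{LQg}, for $\sigma>0$ the continuous function $A_\sigma$ and the measure $\mathcal{O}_\sigma[g]$ are well defined and the cubic nonlinearity $g\mapsto\mathcal{O}_\sigma[g]$ is Lipschitz on bounded subsets of $\mathcal{M}_+([0,\infty):(1+\omega)^\rho)$, with Lipschitz constant controlled by $\sigma^{-1}$ (this uses the uniform bound $\Phi_\sigma/\sqrt{\omega_2\omega_3\omega_4}\le 1/\sigma$ coming from $\Phi_\sigma\le\sqrt{\omega_2}$ and $\omega_3,\omega_4\ge\sigma$ on the support of $\Phi_\sigma$). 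A standard Banach fixed-point argument on a small ball of $C([0,T]:\mathcal{M}_+(\cdots))$ then yields a unique mild solution on some maximal interval $[0,T_{\max})$.

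\textbf{Step 2 (mild $\Rightarrow$ weak).} Proposition \ref{relSolutions}(ii) applies directly, giving that this mild solution satisfies the weak identity (\ref{Z2E1}) on $[0,T_{\max})$, so it is a weak solution of (\ref{Z2E2a}) in the sense of Definition \ref{weakSolution}.

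\textbf{Step 3 (mass conservation and global extension).} Taking in (\ref{Z2E1}) a sequence of cut-off test functions $\varphi_n\in C_0^2([0,\infty))$ converging monotonically to $1$ on $[0,\infty)$, the symmetrized bracket
\[
\varphi(\omega_1+\omega_2-\omega_3)+\varphi(\omega_3)-\varphi(\omega_1)-\varphi(\omega_2)
\]
vanishes in the limit. Because $\rho<-1$ the measure $g_{in}$ has finite total mass, the triple integral is absolutely convergent, and dominated convergence gives the conservation identity $\int g(t,d\omega)=\int g_{in}(d\omega)$ for every $t\in[0,T_{\max})$. Combining this uniform mass bound with $\Phi_\sigma/\sqrt{\omega_2\omega_3\omega_4}\le 1/\sigma$, one bounds the local mass of $\mathcal{O}_\sigma[g(t)]$ on a dyadic annulus $[R/2,R]$ by a quantity that involves one free factor $\int_{R/2}^R g(t,d\omega)$ and two conserved factors $(\int g)^2$. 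Dividing by $R(1+R)^\rho$ and taking the supremum in $R$ yields a bound of the form
\[
\|\mathcal{O}_\sigma[g(t)]\|_\rho \le \frac{C}{\sigma}\,\Bigl(\!\int g_{in}\Bigr)^{\!2}\,\|g(t)\|_\rho,
\]
linear rather than cubic in $\|g(t)\|_\rho$. Inserting this into the mild formula and applying Gronwall produces an a priori bound on $\|g(t)\|_\rho$ on $[0,T_{\max})$, which rules out any finite-time blow-up of the norm. Hence the standard continuation argument forces $T_{\max}=\infty$, and the constructed solution lies in $C([0,\infty):\mathcal{M}_+([0,\infty):(1+\omega)^\rho))$, as claimed.

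The one non-routine point is Step 3: replacing the naive cubic Gronwall estimate by a linear one, which relies on using the conserved total mass to absorb two of the three factors of $g$ in $\mathcal{O}_\sigma$. Everything else is a direct adaptation of the fixed-point and Duhamel-style arguments already used in Theorem \ref{localExBounded} and in \cite{EV1}.
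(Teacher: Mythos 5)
Your plan replaces the paper's Schauder fixed-point argument with a Banach contraction argument, and this is where the gap lies. In Step 1 you assert that $g\mapsto\mathcal{O}_\sigma[g]$ is Lipschitz, with constant $\sim\sigma^{-1}$, on bounded subsets of $\mathcal{M}_+\left([0,\infty):(1+\omega)^\rho\right)$ equipped with the weak metric $\mathrm{dist}_*$. That claim is not justified and is doubtful. What the kernel bound $\Phi_\sigma/\sqrt{\omega_2\omega_3\omega_4}\le\sigma^{-1}$ gives you is a Lipschitz estimate in a total-variation type distance (differences of triple products factor as $\left(g_2-h_2\right)g_3g_4+h_2\left(g_3-h_3\right)g_4+h_2h_3\left(g_4-h_4\right)$ and are dominated by $\int|g-h|\cdot\bigl(\int g+\int h\bigr)^2$). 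But weak-$*$ closeness is far weaker than total-variation closeness -- for instance $\delta_{a+1/n}\rightharpoonup\delta_a$ while the total variation of the difference stays equal to $2$ -- so no inequality of the form $\mathrm{dist}_*\bigl(\mathcal{O}_\sigma[g],\mathcal{O}_\sigma[h]\bigr)\le C\,\mathrm{dist}_*(g,h)$ follows from kernel boundedness. Without it your Banach fixed-point argument does not close.

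The paper's proof circumvents this entirely. It proves only that $\mathcal{T}_\sigma$ is (weakly) \emph{continuous} -- which does follow from $\sigma>0$ making the kernel a bounded smooth function, so that $A_\sigma$ and $\mathcal{O}_\sigma$ depend on $g$ only through integral quantities -- and that $\mathcal{T}_\sigma$ is \emph{compact} on $\mathcal{Y}_T$ via Arzel\`a--Ascoli combined with the weak-$*$ compactness of the set $\left\{\mu:\|\mu\|_\rho\le 2\|g_{in}\|_\rho\right\}$. Local existence then comes from Schauder's theorem rather than Banach's, at the price of losing the uniqueness that your approach would have delivered. Your Steps 2 and 3 do match the paper's argument: Proposition~\ref{relSolutions}(ii) upgrades mild to weak, mass conservation holds for $\rho<-1$ by testing against cut-offs $\varphi_n\uparrow 1$, and the linear Gronwall inequality (absorbing two of the three $g$-factors in $\mathcal{O}_\sigma$ by the conserved mass, with the remaining factor localized to a dyadic annulus after splitting $\{\omega_3+\omega_4\gtrless 4R\}$) gives global extension. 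To repair Step 1 you should either switch to Schauder as the paper does, or carry out the contraction in a total-variation style norm on measures and then verify separately that the resulting fixed point is weakly continuous in time so that Proposition~\ref{relSolutions} applies.
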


\begin{proof}
We first construct a global mild solution \index{mild solution} in the sense of Definition
\ref{MeasMildSol}. The main idea for this construction is to reformulate
(\ref{Z3E5}) as a fixed point Theorem. Given $T>0,$ we define the following
operator
\[
\mathcal{T}_{\sigma}:C\left(  \left[  0,T\right)  :\mathcal{M}_{+}\left(
\left[  0,\infty\right)  :\left(  1+\omega\right)  ^{\rho}\right)  \right)
\rightarrow C\left(  \left[  0,T\right)  :\mathcal{M}_{+}\left(  \left[
0,\infty\right)  :\left(  1+\omega\right)  ^{\rho}\right)  \right)
\]
by means of:%
\begin{eqnarray*}
&&\mathcal{T}_{\sigma}\left[  g\right]  \left(  t,\cdot\right)  =g_{in}\left(
\cdot\right)  \exp\left(  -\int_{0}^{t}A_{\sigma}\left(  s,\cdot\right)
ds\right) +\\
&&\hskip 5cm  +\int_{0}^{t}\exp\left(  -\int_{s}^{t}A_{\sigma}\left(  \xi
,\cdot\right)  d\xi\right)  \mathcal{O}_{\sigma}\left[  g\right]  \left(
s,\cdot\right)  ds
\end{eqnarray*}

We now claim that the operators $g\rightarrow A_{\sigma},\ g\rightarrow
\mathcal{O}_{\sigma}\left[  g\right]  $ are continuous if we endow the space $C\left(
\left[  0,T\right)  :\mathcal{M}_{+}\left(  \left[  0,\infty\right)  :\left(
1+\omega\right)  ^{\rho}\right)  \right)  $ with the topology induced by the
metric:%
\[
dist\left(  g_{1},g_{2}\right)  =\sup_{0\leq t\leq
T}dist_{\ast}\left(  g_{1}\left(  t,\cdot\right)
,g_{2}\left(  t,\cdot\right)  \right)
\]
where $dist_{\ast}$ is as in Notation \ref{distWeak}.

We then need to prove that, given any test function $\varphi\in
C_{0}\left(  \left[  0,T\right)  \times\left[  0,\infty\right)  \right)$, the following functions depend continuously on $g$ in the weak
topology:%
\begin{align*}
I_{1}\left[  g\right]   &  =\int g_{in}\left(  \omega\right)  \exp\left(
-\int_{0}^{t}A_{\sigma}\left(  s,\omega\right)  ds\right)  \varphi\left(
\omega\right)  d\omega\\
I_{2}\left[  g\right]   &  =\iint_{0}^{t}\exp\left(  -\int_{s}^{t}%
A_{\sigma}\left(  \xi,\cdot\right)  d\xi\right)  \mathcal{O}_{\sigma}\left[
g\right]  \left(  s,\cdot\right)  \varphi\left(  \omega\right)  dsd\omega
\end{align*}

We now notice that since $\sigma>0$ the mapping $g\left(  \cdot,t\right)
\rightarrow A_{\sigma}\left(  t,\cdot\right)  $ defines a continuous map
between $\mathcal{M}_{+}\left(  \left[  0,\infty\right)  :\left(
1+\omega\right)  ^{\rho}\right)  ,$ endowed with the weak topology, and the
set of continuous bounded functions $C_{b}\left(  \left[  0,\infty\right)
\right)  ,$ endowed with the uniform convergence. This is due to the fact
that, since $\sigma>0$ the functions $\frac{\Phi_{\sigma}}{\sqrt{\omega
_{1}\omega_{2}\omega_{3}}},\ \frac{\Phi_{\sigma}}{\sqrt{\omega_{1}\omega
_{3}\omega_{4}}}$ are smooth, the values of $A_{\sigma}\left(  t,\omega
_{1}\right)  $ depend only on $g$ through integral quantities, and the decay
of the measure $g$ for large values implies that the contribution of the large
values of $\omega$ can be made small.  

Then, the operator $g\rightarrow A_{\sigma}$ from $C\left(
\left[  0,T\right)  :\mathcal{M}_{+}\left(  \left[  0,\infty\right)  :\left(
1+\omega\right)  ^{\rho}\right)  \right)  $ to $C\left(  \left[  0,T\right)
:C_{b}\left(  \left[  0,\infty\right)  \right)  \right)  $ is continuous.
Moreover, since $A_{\sigma}\geq0$ for $g\geq0$ it then follows that the
operator $I_{1}\left[  g\right]  ,$ which maps $C\left(  \left[  0,T\right)
:\mathcal{M}_{+}\left(  \left[  0,\infty\right)  :\left(  1+\omega\right)
^{\rho}\right)  \right)  $ to itself, is continuous.

On the other hand, the operator $\mathcal{O}_{\sigma}\left[  g\right]  $
defined in Lemma \ref{LQg} is a continuous operator from $\mathcal{M}%
_{+}\left(  \left[  0,\infty\right)  :\left(  1+\omega\right)  ^{\rho}\right)
$ to itself if this space is endowed with the weak topology if $\sigma>0$. The
proof of this uses the fact that the integrals in (\ref{Z3E1}) are well
defined as it can be seen from the arguments in the Proof of Lemma \ref{LQg}.
On the other hand the boundedness of $\frac{\Phi_{\sigma}}{\sqrt{\omega
_{2}\omega_{3}\omega_{4}}}$ implies that the functional $\mathcal{O}_{\sigma
}\left[  g\right]  $ depends continuously on convergent integrals of $g.$ The
continuity of the functional $g\rightarrow I_{2}\left[  g\right]  $ from
$C\left(  \left[  0,T\right)  :\mathcal{M}_{+}\left(  \left[  0,\infty\right)
:\left(  1+\omega\right)  ^{\rho}\right)  \right)  ,$ follows then similarly.
Therefore the transformation $\mathcal{T}_{\sigma}\left[  g\right]  $ defines
a continuous mapping from $C\left(  \left[  0,T\right)  :\mathcal{M}%
_{+}\left(  \left[  0,\infty\right)  :\left(  1+\omega\right)  ^{\rho}\right)
\right)  $ to itself if this space is endowed with the weak topology.
Moreover, this operator transforms the set
\begin{eqnarray}
&&\mathcal{Y}_{T}=\left\{  g\in C\left(  \left[  0,T\right)  :\mathcal{M}%
_{+}\left(  \left[  0,\infty\right)  :\left(  1+\omega\right)  ^{\rho}\right)
\right);   |||g||| _{ \rho , T }\leq2\left\Vert g_{in}\right\Vert  _{ \rho  }\right\}\\
&&|||g||| _{ \rho , T }=\sup_{0\leq t\leq T}||g(t)|| _{ \rho  }
\end{eqnarray}
into itself if $T$ is sufficiently small. 

Actually the operator $\mathcal{T}_{\sigma}\left[  g\right]  $ is compact in
the set $\mathcal{Y}_{T}$.  This a consequence of  the Arzela-Ascoli Theorem in metric
spaces (cf. \cite{DS}),  as well as the fact that the set
$$\left\{  g\in\mathcal{M}_{+}\left(  \left[  0,\infty\right)  :\left(
1+\omega\right)  ^{\rho}\right)  :\sup_{0\leq t\leq T}\frac{1}{\left[
1+R^{\rho}\right]  }\frac{1}{R}\int_{\frac{R}{2}}^{R}g\left(  t,d\omega
\right)  \leq2\left\Vert g_{in}\right\Vert \right\}  $$ is compact in
$\mathcal{M}_{+}\left(  \left[  0,\infty\right)  :\left(  1+\omega\right)
^{\rho}\right)  $ endowed with the weak topology. The uniform continuity of
$\mathcal{T}_{\sigma}\left[  g\right]  $ with respect to the  time variable follows from the
fact that the functions $t\rightarrow\Psi_{\varphi}\left(  t\right)
=\int\varphi\left(  \omega\right)  \mathcal{T}_{\sigma}\left[  g\right]
\left(  d\omega,t\right)  $ is Lifschitz continuous for any test function
$\varphi,$ as it can be seen from the definition of $\mathcal{T}_{\sigma
}\left[  g\right]$.

Local existence of solutions then follows using Schauder's Theorem. Notice
that, since $\sigma>0$ we can obtain that the corresponding fixed point, $g(t, \cdot)$ satisfies:
$$
||g(t)|| _{ \rho  }\le C_1+C_2\int _0^t ||g(s)|| _{ \rho  }ds
$$
where $C_1$ only depends on $\sigma $ and $||g_{in }|| _{ \rho  }$ and $C_2$ only depends on $\sigma $ and the total mass of $g (t)$ which is a constant and therefore depends on  total mass of $g _{ in }$. This is proved as follows. Integrating equation (\ref{Z3E5}) in the interval $(R, 2R)$ for any $R>0$,  the first term is immediately estimated using $||g _{ in }|| _{ \rho  }$. The integral of the second is estimated by splitting the domain in the subdomains $\{\omega _3+\omega _4 \ge 4R\}$ and   $\{\omega _3+\omega _4 \le 4R \}$.  Since $\sigma >0$ the term $(\omega _2\omega _3\omega _4)^{-1/2}$ is bounded by a constant depending on $\sigma $.  In the two resulting triple integrals, one of the integrations takes place in the interval $(R/4, 4R)$, and the two others are estimated by the total mass of $g(t)$, that is constant.
Using  Gronwall's lemma we deduce, 
$|||g||| _{ \rho , T }  \leq C\left(  T\right)$ for any finite $T.$ Iterating the construction it is then possible to prove
that the solution is global in time.

In order to conclude the proof of the Lemma, we just notice that mild
solutions \index{mild solution} of (\ref{Z2E2a}) in the sense of measures are weak solutions \index{weak solution} of
(\ref{Z2E2a}) in the sense of Definition \ref{weakSolution} due to Proposition
\ref{relSolutions}.
\end{proof}

\subsubsection{Monotonicity formula.\label{Mon}}

The following result is analogous to one that has been proved in \cite{EV1}, \cite{Lu3}.

\begin{proposition}
\label{atractiveness} Let $\sigma\geq0.$ Given $g\in\mathcal{M}_{+}\left(
\left[  0,\infty\right)  :\left(  1+\omega\right)  ^{\rho}\right)  $ we
define:%
\[
q\left[  g\right]  \left(  \omega_{1}\right)  =\left(  \frac{g_{1}}%
{\sqrt{\omega_{1}}}+\frac{g_{2}}{\sqrt{\omega_{2}}}\right)  \frac{g_{3}g_{4}%
}{\sqrt{\omega_{3}\omega_{4}}}-\left(  \frac{g_{3}}{\sqrt{\omega_{3}}}%
+\frac{g_{4}}{\sqrt{\omega_{4}}}\right)  \frac{g_{1}g_{2}}{\sqrt{\omega
_{1}\omega_{2}}}%
\]
with $\omega_{2}=\omega_{3}+\omega_{4}-\omega_{1}.$ Let us denote as
$\mathcal{S}^{3}$ the group of permutations of the three elements $\left\{
1,2,3\right\}  .$ Suppose that $\varphi\in C_{0}^{2}\left(  \left[
0,\infty\right)  \right)  $ is a test function. The following identity holds:%
\begin{eqnarray}
&&\int_{\left[  0,\infty\right)  ^{3}}\!\!\!\!\!\!d\omega_{1}d\omega_{3}d\omega_{4}%
\ \Phi_{\sigma}q\left[  g\right]  \left(  \omega_{1}\right)  \sqrt{\omega_{1}%
}\varphi\left(  \omega_{1}\right)  =\int_{\left[  0,\infty\right)  ^{3}%
}\!\!\!\!\!\!d\omega_{1}d\omega_{2}d\omega_{3}\,\frac{g_{1}\,g_{2}\,g_{3}}{\sqrt
{\omega_{1}\omega_{2}\omega_{3}}}\mathcal{G}_{\sigma,\varphi}\ \label{S1E12a}%
\end{eqnarray}
where:%
\begin{align*}
\mathcal{G}_{\sigma,\varphi}\equiv \mathcal{G}_{\sigma,\varphi}\left(  \omega_{1},\omega_{2},\omega_{3}\right)
&  =\frac{1}{6}\sum_{\sigma\in\mathcal{S}^{3}}H_{\varphi}\left(
\omega_{\sigma(1)},\omega_{\sigma(2)},\omega_{\sigma(3)}\right)  \Phi_{\sigma
}\left(  \omega_{\sigma(1)},\omega_{\sigma(2)};\omega_{\sigma(3)}\right)
%\label{S1E12bis}
\\
&  H_{\varphi}(x,y,z)=\varphi\left(  z\right)  +\varphi\left(  x+y-z\right)
-\varphi\left(  x\right)  -\varphi\left(  y\right)  %\label{S1E12ter}%
\end{align*}
with $\Phi_{\sigma}\left(  \omega_{1},\omega_{2};\omega_{3}\right)  $ given
as:%
\begin{eqnarray*}
&&\Phi_{\sigma}\left(  \omega_{1},\omega_{2};\omega_{3}\right)  =\min\left\{
\sqrt{\left(  \omega_{1}-\sigma\right)  _{+}},\sqrt{\left(  \omega_{2}%
-\sigma\right)  _{+}},\sqrt{\left(  \omega_{3}-\sigma\right)  _{+}}%
, \right. \\
&&\hskip 7.5cm \left. \sqrt{\left(  \omega_{1}+\omega_{2}-\omega_{3}-\sigma\right)  _{+}}\right\}
\end{eqnarray*}
and:%
\begin{equation}
\mathcal{G}_{\sigma,\varphi}\left(  \omega_{1},\omega_{2},\omega_{3}\right)
=\mathcal{G}_{\sigma,\varphi}\left(  \omega_{\sigma(1)},\omega_{\sigma
(2)},\omega_{\sigma(3)}\right)  \ \ \ \text{for any\ }\sigma\in\mathcal{S}^{3}
\label{S1E12four}%
\end{equation}
Moreover, if the function $\varphi$ is convex we have $\mathcal{G}_{\varphi
}\left(  \omega_{1},\omega_{2},\omega_{3}\right)  \geq0$ and if $\varphi$ is
concave we have $\mathcal{G}_{\varphi}\left(  \omega_{1},\omega_{2},\omega
_{3}\right)  \leq0.$ For any test function $\varphi$ the function
$\mathcal{G}_{\varphi}\left(  \omega_{1},\omega_{2},\omega_{3}\right)  $
vanishes along the diagonal $\left\{  \left(  \omega_{1},\omega_{2},\omega
_{3}\right)  \in\left[  0,\infty\right)  ^{3}:\omega_{1}=\omega_{2}=\omega
_{3}\right\}  $.
\end{proposition}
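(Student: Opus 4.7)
My plan is to deduce the identity (\ref{S1E12a}) from the weak-form computation already performed in the proof of Proposition \ref{relSolutions}(ii), then exploit $S^3$-invariance of the measure $\frac{g_1g_2g_3}{\sqrt{\omega_1\omega_2\omega_3}}\,d\omega_1d\omega_2d\omega_3$, and finally establish the sign statement by a case analysis on the $\min$ defining $\Phi_\sigma$. The starting point is the relabeling identity
\begin{equation*}
\iiint \Phi_\sigma\, q[g](\omega_1)\, \psi(\omega_1)\, d\omega_1 d\omega_3 d\omega_4
= \iiint \frac{g_1 g_2 g_3\, \Phi_\sigma(\omega_1,\omega_2;\omega_3)}{\sqrt{\omega_1\omega_2\omega_3}}\, H_\psi(\omega_1,\omega_2,\omega_3)\, d\omega_1 d\omega_2 d\omega_3,
\end{equation*}
which is the same affine change-of-variables argument performed inside Proposition \ref{relSolutions}(ii) for a single test function $\psi$ (applied here with $\psi(\omega)=\sqrt{\omega}\varphi(\omega)$). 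Using the $S^3$-invariance of the measure, I would replace the coefficient $\Phi_\sigma(\omega_1,\omega_2;\omega_3)H_\psi(\omega_1,\omega_2,\omega_3)$ by its $S^3$-average, which by direct inspection is $\mathcal{G}_{\sigma,\varphi}$. Alternatively, one can derive (\ref{S1E12a}) directly by expanding $q[g]$ into its four elementary summands $\pm g_ig_jg_k/\sqrt{\omega_i\omega_j\omega_k}$, performing on each an affine change of variables of unit-modulus Jacobian that promotes the three indices carrying a factor of $g$ to the new integration variables, invoking the full $S^4$-symmetry of $\Phi_\sigma$ in its four underlying arguments (subject to energy conservation), and then symmetrizing as above.

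The symmetry property (\ref{S1E12four}) is tautological since $\mathcal{G}_{\sigma,\varphi}$ is itself an $S^3$-average. On the diagonal $\omega_1=\omega_2=\omega_3=\omega$ every summand vanishes because $H_\varphi(\omega,\omega,\omega)=\varphi(\omega)+\varphi(\omega)-\varphi(\omega)-\varphi(\omega)=0$.

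The main substance is the sign claim. I would rewrite
\begin{equation*}
3\mathcal{G}_{\sigma,\varphi}(\omega_1,\omega_2,\omega_3) = \sum_{(i,j,k)\text{ cyclic}} H_\varphi(\omega_j,\omega_k,\omega_i)\, \Phi_\sigma(\omega_j,\omega_k;\omega_i),
\end{equation*}
set $M_i=\omega_j+\omega_k-\omega_i$, and note that $\Phi_\sigma(\omega_j,\omega_k;\omega_i)=0$ whenever $M_i<\sigma$, so those summands drop out. For each surviving summand the pairs $\{\omega_i,M_i\}$ and $\{\omega_j,\omega_k\}$ share the common sum $\omega_j+\omega_k$, so $H_\varphi(\omega_j,\omega_k,\omega_i)$ has indefinite sign termwise (it depends on whether $\omega_i$ lies inside or outside the interval $[\min(\omega_j,\omega_k),\max(\omega_j,\omega_k)]$). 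My plan is a case analysis sorted by which of the four quantities $\omega_i,\omega_j,\omega_k,M_i$ realizes the minimum defining each surviving $\Phi_\sigma$; in every case, after using the affine constraint $\omega_i+M_i=\omega_j+\omega_k$, the contributions rearrange into a non-negative linear combination of midpoint-convexity expressions $\varphi(a)+\varphi(b)-2\varphi(\tfrac{a+b}{2})\ge 0$, valid for convex $\varphi$. The concave case follows by applying the convex result to $-\varphi$. As a toy verification with $(\omega_1,\omega_2,\omega_3)=(1,2,3)$ and $\sigma=0$: the $i=3$ term vanishes (since $M_3=0$), while the $i=1$ and $i=2$ terms (each with $\Phi_\sigma=1$) cancel to $\varphi(2)+\varphi(4)-2\varphi(3)$, exactly the expected midpoint inequality.

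The principal obstacle is this case analysis: no pointwise termwise bound works because the three fourth-variables $M_1,M_2,M_3$ are all distinct in general, so non-negativity only emerges after the $S^3$-symmetrization has extracted enough cancellation among the three $H_\varphi\Phi_\sigma$ products. This step parallels the monotonicity arguments for the Nordheim equation in \cite{EV1} and \cite{Lu3}, which should transfer with only cosmetic adjustments given that the kernel here is cubic in $g$ rather than quartic.
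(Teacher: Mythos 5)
Your plan is essentially the route the paper itself takes, which is to defer the detailed verification to the analogous Proposition 4.1 of \cite{EV1}: the identity (\ref{S1E12a}) follows from the same affine change of variables that appears in the proof of Proposition \ref{relSolutions}(ii) followed by an $S^{3}$-symmetrization, the symmetry (\ref{S1E12four}) is tautological, and the diagonal vanishing is immediate from $H_{\varphi}(\omega,\omega,\omega)=0$. The one inaccuracy in your sketch concerns the sign claim. Once you sort $\omega_{-}\leq\omega_{0}\leq\omega_{+}$, the two cyclic summands whose ``fourth'' variable $M_{i}$ is at least $\omega_{+}$ carry the \emph{same} weight $\sqrt{(\omega_{-}-\sigma)_{+}}$ (so the weights are not three arbitrary distinct numbers), and the cyclic sum collapses, as the paper records in Lemma \ref{strictConvex}, to
\begin{equation*}
3\,\mathcal{G}_{0,\varphi}=\sqrt{\omega_{-}}\,H_{\varphi}^{1}+\sqrt{\left(\omega_{0}+\omega_{-}-\omega_{+}\right)_{+}}\,H_{\varphi}^{2}.
\end{equation*}
Here $H_{\varphi}^{1}=\varphi(\omega_{+}+\omega_{-}-\omega_{0})+\varphi(\omega_{+}+\omega_{0}-\omega_{-})-2\varphi(\omega_{+})$ is indeed a midpoint expression, but
\begin{equation*}
H_{\varphi}^{2}=\varphi(\omega_{+})+\varphi(\omega_{0}+\omega_{-}-\omega_{+})-\varphi(\omega_{0})-\varphi(\omega_{-})
\end{equation*}
is a genuine majorization (Karamata) expression, with the pair $\left(\omega_{+},\,\omega_{0}+\omega_{-}-\omega_{+}\right)$ majorizing $\left(\omega_{0},\omega_{-}\right)$; it is nonnegative for convex $\varphi$ by the majorization inequality, not by Jensen at the midpoint, and it is not a nonnegative combination of pure midpoint expressions. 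So your assertion that everything ``rearranges into a non-negative linear combination of midpoint-convexity expressions'' overstates what the case analysis produces; when $\omega_{0}+\omega_{-}>\omega_{+}$ you need the full majorization inequality. This is a small correction, since majorization for convex functions is equally elementary, and once you replace ``midpoint convexity'' by ``majorization'' your argument coincides with Lemma \ref{strictConvex}. Your toy check at $(1,2,3)$ falls in the degenerate case $\omega_{0}+\omega_{-}=\omega_{+}$ where only $H_{\varphi}^{1}$ survives, which is why it showed you only the midpoint form.
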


\begin{proof}
It is essentially identical to the Proof of Proposition 4.1 of \cite{EV1}. The
only difference is that we use $\Phi_{\sigma}$ instead of $\Phi.$ However,
using the fact that $\Phi_{\sigma}$ is invariant under permutations in their
variables, we can argue exactly as in the Proof of Proposition 4.1 of
\cite{EV1} by means of a symmetrization argument. The only relevant difference
with the result in \cite{EV1} is that due to the fact that $g$ are measures,
we must check the continuity of the functions which are integrated against
them. This follows from Lemma \ref{Cont}.
\end{proof}

We will need later a more detailed representation formula for the functions
$\mathcal{G}_{\sigma,\varphi}$ in the case $\sigma=0.$ To this end we define
the following functions which have been used also in \cite{EV1}.

\begin{definition}
\label{aux}We define auxiliary functions $\omega_{+},\ \omega_{0},\ \omega
_{-}$ from $\left[  0,\infty\right)  \times\left[  0,\infty\right)
\times\left[  0,\infty\right)  $ to $\left[  0,\infty\right)  $ as follows:%
\begin{align*}
\omega_{+}\left(  \omega_{1},\omega_{2},\omega_{3}\right)   &  =\max\left\{
\omega_{1},\omega_{2},\omega_{3}\right\}  ,\\
\omega_{-}\left(  \omega_{1},\omega_{2},\omega_{3}\right)   &  =\min\left\{
\omega_{1},\omega_{2},\omega_{3}\right\}  ,\\
\omega_{0}\left(  \omega_{1},\omega_{2},\omega_{3}\right)   &  =\omega_{k}%
\in\left\{  \omega_{1},\omega_{2},\omega_{3}\right\}  \setminus\left\{
\omega_{+},\omega_{-}\right\}  \text{ }%
\end{align*}
with $k\in\left\{  1,2,3\right\}  ,$ where we will assume that the set
$\left\{  \omega_{1},\omega_{2},\omega_{3}\right\}  $ has three different
elements even if some of the values of the elements $\omega_{j}$ are
identical. 
\end{definition}

\begin{lemma}
\label{strictConvex}The function $\mathcal{G}_{0,\varphi}$ defined in
Proposition \ref{atractiveness} can be written as:%
\[
\mathcal{G}_{0,\varphi}\left(  \omega_{1},\omega_{2},\omega_{3}\right)
=\frac{1}{3}\left[  \sqrt{\omega_{-}}H_{\varphi}^{1}\left(  \omega_{1}%
,\omega_{2},\omega_{3}\right)  +\sqrt{\left(  \omega_{0}+\omega_{-}-\omega
_{+}\right)  _{+}}H_{\varphi}^{2}\left(  \omega_{1},\omega_{2},\omega
_{3}\right)  \right]
\]%
\begin{align*}
H_{\varphi}^{1}\left(  \omega_{1},\omega_{2},\omega_{3}\right)   &
=\varphi\left(  \omega_{+}+\omega_{-}-\omega_{0}\right)  +\varphi\left(
\omega_{+}+\omega_{0}-\omega_{-}\right)  -2\varphi\left(  \omega_{+}\right) \\
H_{\varphi}^{2}\left(  \omega_{1},\omega_{2},\omega_{3}\right)   &
=\varphi\left(  \omega_{+}\right)  +\varphi\left(  \omega_{0}+\omega
_{-}-\omega_{+}\right)  -\varphi\left(  \omega_{0}\right)  -\varphi\left(
\omega_{-}\right)
\end{align*}
If $\varphi$ is concave both functions $H_{\varphi}^{1}$,\ $H_{\varphi}^{2}$
are nonpositive.
\end{lemma}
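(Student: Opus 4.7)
The plan is to establish the representation formula by direct computation using the symmetrization defined in Proposition \ref{atractiveness}, and then verify concavity of $H_\varphi^1, H_\varphi^2$ via elementary properties of concave functions.

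Since $\mathcal{G}_{0,\varphi}$ is symmetric under permutations of $(\omega_1,\omega_2,\omega_3)$ by \eqref{S1E12four}, it suffices to prove the identity assuming $\omega_1\le\omega_2\le\omega_3$, i.e., $\omega_-=\omega_1$, $\omega_0=\omega_2$, $\omega_+=\omega_3$. With this ordering I would expand the sum over $\mathcal{S}^3$ in the definition of $\mathcal{G}_{0,\varphi}$. Since both $H_\varphi(x,y,z)$ and $\Phi_0(x,y;z)$ are symmetric in their first two arguments, only three distinct terms appear (each counted twice), corresponding to the three possible choices of the ``third'' coordinate. This reduces $\frac{1}{6}\sum_{\mathcal{S}^3}$ to $\frac{1}{3}\bigl[\,\cdots\,\bigr]$ over the three triples $(\omega_1,\omega_2;\omega_3)$, $(\omega_1,\omega_3;\omega_2)$, $(\omega_2,\omega_3;\omega_1)$.

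Next I would evaluate $\Phi_0$ on each of these triples using the ordering $\omega_1\le\omega_2\le\omega_3$. A short case check gives $\Phi_0(\omega_1,\omega_2;\omega_3)=\sqrt{(\omega_1+\omega_2-\omega_3)_+}=\sqrt{(\omega_-+\omega_0-\omega_+)_+}$, while in the other two cases the fourth candidate $\sqrt{(x+y-z)_+}$ is not the minimum and one obtains $\Phi_0(\omega_1,\omega_3;\omega_2)=\Phi_0(\omega_2,\omega_3;\omega_1)=\sqrt{\omega_1}=\sqrt{\omega_-}$. The first triple yields $H_\varphi=\varphi(\omega_+)+\varphi(\omega_0+\omega_--\omega_+)-\varphi(\omega_0)-\varphi(\omega_-)$, which is exactly $H_\varphi^2$. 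Adding the two remaining $H_\varphi$ terms (both multiplied by $\sqrt{\omega_-}$), the contributions $\varphi(\omega_0),\varphi(\omega_-)$ and their opposites cancel pairwise, leaving $\varphi(\omega_++\omega_--\omega_0)+\varphi(\omega_++\omega_0-\omega_-)-2\varphi(\omega_+)=H_\varphi^1$. This gives the stated formula.

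For the sign claim, observe that $(\omega_++\omega_--\omega_0)+(\omega_++\omega_0-\omega_-)=2\omega_+$, so if $\varphi$ is concave, Jensen's inequality applied to the midpoint yields $H_\varphi^1\le 0$. For $H_\varphi^2$, I would use the ordering $\omega_-\ge\omega_0+\omega_--\omega_+$ (which holds since $\omega_0\le\omega_+$) together with the identity $\omega_++(\omega_0+\omega_--\omega_+)=\omega_0+\omega_-$: the pair $\{\omega_+,\,\omega_0+\omega_--\omega_+\}$ is a ``spreading'' of the pair $\{\omega_0,\omega_-\}$ with the same sum, so concavity forces $\varphi(\omega_+)+\varphi(\omega_0+\omega_--\omega_+)\le\varphi(\omega_0)+\varphi(\omega_-)$, i.e.\ $H_\varphi^2\le 0$. (This is the standard fact that concave functions applied to a more spread pair with fixed sum give a smaller total; it follows from writing each of $\omega_0,\omega_-$ as a convex combination of the endpoints.) No genuine obstacle arises: the whole argument is a bookkeeping exercise once the ordering $\omega_-\le\omega_0\le\omega_+$ is fixed, and the only point requiring care is the identification of which of the three values of $\Phi_0$ vanishes when $\omega_++\omega_+ > \omega_-+\omega_0$, which is automatically handled by the $(\cdot)_+$ in the representation.
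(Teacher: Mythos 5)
Your proof is correct. Note that the paper itself supplies no computation for this lemma --- the proof is a one-line citation to \cite{EV1} --- so there is nothing to compare with directly; your argument is the natural one and is presumably what \cite{EV1} carries out. Your expansion of $\frac{1}{6}\sum_{\mathcal{S}^3}$ into three terms (using symmetry of $H_\varphi$ and $\Phi_0$ in their first two slots), the case analysis showing $\Phi_0=\sqrt{(\omega_-+\omega_0-\omega_+)_+}$ when $\omega_+$ is the distinguished argument and $\Phi_0=\sqrt{\omega_-}$ in the other two cases, the cancellation producing $H_\varphi^1$, and the identification of the remaining term as $H_\varphi^2$ are all accurate. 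The concavity arguments are also sound: $H_\varphi^1\le 0$ follows from the midpoint inequality since $(\omega_++\omega_--\omega_0)+(\omega_++\omega_0-\omega_-)=2\omega_+$, and $H_\varphi^2\le 0$ follows from the fact that for concave $\varphi$ and $a'\le a\le b\le b'$ with $a'+b'=a+b$ one has $\varphi(a')+\varphi(b')\le\varphi(a)+\varphi(b)$, applied with $\{a,b\}=\{\omega_-,\omega_0\}$ and $\{a',b'\}=\{\omega_0+\omega_--\omega_+,\omega_+\}$. No gaps.
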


\begin{proof}
This result has been proved in \cite{EV1}.
\end{proof}

Using Proposition \ref{atractiveness} we can prove the following result.

\begin{lemma}
\label{Convex}For all $\sigma\geq0$ let  $g_{\sigma}\in C\left(  \left[
0,\infty\right)  :L_{+}^{\infty}\left(  \mathbb{R}_{+}:\sqrt{\omega}\left(
1+\omega\right)  ^{\rho-\frac{1}{2}}\right)  \right)  $ be a weak solution \index{weak solution} of
(\ref{Z2E2a}) in the sense of Definition \ref{weakSolution}. Let $\varphi\in
C\left(  \left[  0,\infty\right)  \right)  $ any convex function. Then:%
\[
\frac{d}{dt}\left(  \int_{0}^{\infty}g_{\sigma}\left(  t,\omega\right)
\varphi\left(  \omega\right)  d\omega\right)  \geq0,\ \ a.e.\ t\in\left[
0,\infty\right).
\]

\end{lemma}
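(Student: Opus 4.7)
My plan is to deduce the monotonicity from the symmetrization identity of Proposition \ref{atractiveness}, which rewrites the collision integral in a form whose integrand $\mathcal{G}_{\sigma,\varphi}$ is manifestly nonnegative whenever $\varphi$ is convex.

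The first step is to approximate the given continuous convex $\varphi$ by a sequence $\{\varphi_n\}$ of $C^2$ convex functions admissible as test functions in Definition \ref{weakSolution}. A nonzero convex function on $[0,\infty)$ cannot be compactly supported, so I would first note that the hypothesis $g_\sigma\in L^{\infty}_{+}(\mathbb{R}_{+}:\sqrt{\omega}(1+\omega)^{\rho-\frac{1}{2}})$ with $\rho<-\frac{1}{2}$, together with the estimates on $\mathcal{O}_\sigma$ and $A_\sigma$ in Lemmas \ref{LAg}--\ref{LQg}, imply that the weak formulation (\ref{Z2E1}) extends by density from $C_0^2$ to test functions of at most linear growth. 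It therefore suffices to approximate $\varphi$ by $C^2$ convex functions with at most linear growth, which I would construct by first mollifying $\varphi$ with a nonnegative symmetric kernel of shrinking support (preserving convexity) and then, for large argument, replacing $\varphi(\omega)$ for $\omega>n$ by the supporting tangent line at $\omega=n$ (again preserving convexity and making the growth linear).

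For each such $\varphi_n$, (\ref{Z2E1}) gives, for any $0\le t_1\le t_2$,
\begin{align*}
&\int g_\sigma(t_2,\omega)\varphi_n(\omega)\,d\omega-\int g_\sigma(t_1,\omega)\varphi_n(\omega)\,d\omega \\
&\qquad =\int_{t_1}^{t_2}\!\!\iiint_{[0,\infty)^3}\!\!\frac{g_1 g_2 g_3\,\Phi_\sigma}{\sqrt{\omega_1\omega_2\omega_3}}\,H_{\varphi_n}(\omega_1,\omega_2,\omega_3)\,d\omega_1 d\omega_2 d\omega_3\,ds,
\end{align*}
with $H_{\varphi_n}=\varphi_n(\omega_1+\omega_2-\omega_3)+\varphi_n(\omega_3)-\varphi_n(\omega_1)-\varphi_n(\omega_2)$. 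By the symmetrization identity (\ref{S1E12a}) of Proposition \ref{atractiveness}, the right-hand side equals
\[
\int_{t_1}^{t_2}\!\!\iiint_{[0,\infty)^3}\!\!\frac{g_1 g_2 g_3}{\sqrt{\omega_1\omega_2\omega_3}}\,\mathcal{G}_{\sigma,\varphi_n}(\omega_1,\omega_2,\omega_3)\,d\omega_1 d\omega_2 d\omega_3\,ds\ \ge\ 0,
\]
since $\mathcal{G}_{\sigma,\varphi_n}\ge 0$ by the convexity of $\varphi_n$ and $g_\sigma\ge 0$.

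Passing $n\to\infty$ by dominated convergence, justified by the polynomial decay of $g_\sigma$ encoded in $\rho<-\frac{1}{2}$ and the local control of $\varphi_n$ by $\varphi$, one obtains $\int g_\sigma(t_2,\omega)\varphi(\omega)\,d\omega\ge\int g_\sigma(t_1,\omega)\varphi(\omega)\,d\omega$ for all $0\le t_1\le t_2$. Hence $t\mapsto\int_0^\infty g_\sigma(t,\omega)\varphi(\omega)\,d\omega$ is nondecreasing, so it is differentiable almost everywhere with nonnegative derivative. The main obstacle is precisely the approximation step: compactly supported convex functions on $[0,\infty)$ are trivial, so one must first enlarge the class of admissible test functions to include linear growth (via the decay of $g_\sigma$), and then construct convex $C^2$ approximants by the mollify-plus-tangent-truncation scheme while ensuring that both the linear functional $g\mapsto\int g\,\varphi_n$ and the triple collision integrals remain stable in the limit.
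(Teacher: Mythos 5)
Your proposal takes essentially the same route as the paper: differentiate the weak formulation in time, apply the symmetrization identity of Proposition \ref{atractiveness}, and conclude from the sign $\mathcal{G}_{\sigma,\varphi}\ge 0$ for convex $\varphi$. The paper's own proof is a one-line invocation of Proposition \ref{atractiveness} together with identity (\ref{S2E5}) and does not spell out the passage from $C_0^2$ test functions to a general continuous convex $\varphi$; your mollify-then-tangent-truncate scheme (and the observation that $\rho<-\tfrac{1}{2}$ lets one extend (\ref{Z2E1}) to linear-growth test functions) is exactly the kind of argument needed to make that implicit approximation step rigorous.
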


\begin{proof}
It is just a consequence of Proposition \ref{atractiveness} as well as the identity:
\begin{equation}
\frac{d}{dt}\left(  \int_{0}^{\infty}g\left(  t,\omega\right)  \varphi\left(
\omega\right)  d\omega\right)  =\int_{0}^{\infty}\int_{0}^{\infty}\int
_{0}^{\infty}\frac{g_{1}g_{2}g_{3}}{\sqrt{\omega_{1}\omega_{2}\omega_{3}}%
}\mathcal{G}_{0,\varphi}\left(  \omega_{1},\omega_{2},\omega_{3}\right)
d\omega_{1}d\omega_{2}d\omega_{3}
\label{S2E5}
\end{equation}
a.e. $t\in\left[  0,\infty\right)$.
\end{proof}

\subsubsection{Tightness of the measures $\left\{  g_{\sigma}\right\}$}

The following result will be used several times in the following in order to
prove that the mass of the measures $g_{\sigma}$ cannot escape too far away.
In particular, since the Lemma provides uniform estimates in $\sigma$ of the
mass far away from the origin, it will play a crucial role taking the limit
$\sigma\rightarrow0,$ in order to prove the existence of weak solutions \index{weak solution} of of
(\ref{Z2E2a}) with $\sigma=0.$

\begin{lemma}
\label{cotInf}Suppose that $g_{\sigma}\in C\left(  \left[  0,\infty\right)
:\mathcal{M}_{+}\left(  \left[  0,\infty\right)  :\left(  1+\omega\right)
^{\rho}\right)  \right)  ,\ \rho<-\frac{1}{2}$ is a weak solution \index{weak solution} of
(\ref{Z2E2a}) in the sense of Definition \ref{weakSolution} for some
$\sigma\geq0$. Let $\eta>0,\ R>0.$ Suppose that $g_{\sigma}\left(
0,\cdot\right)  =g_{in}\left(  \cdot\right)  .$ Then:
\begin{equation}
\int_{\left[  0,L\right]  }g_{\sigma}\left(  t,d\omega\right)  \geq\left(
1-\eta\right)  \int_{\left[  0,R\right]  }g_{in}\left(  d\omega\right)
\ \ ,\ \ t\in\left[  0,T\right]  \label{S1E5}%
\end{equation}
where $L=\frac{R}{\eta}.$

Moreover, suppose that $g_{in}$ satisfies $\int g_{in}=1$ and $\int
_{R}^{\infty}g_{in}\left(  d\omega\right)  \leq AR^{\rho+1},\ $for some
$-2<\rho<-1$,\ $A>0$ and any $R\geq1.$ Then:%
\begin{equation}
\int_{0}^{R}g_{\sigma}\left(  t,d\omega\right)  \geq1-\frac{1}{R}%
-\frac{AR^{\rho+1}}{\left(  \rho+2\right)  }\ \ ,\ \ R\geq1 \label{S1E6}%
\end{equation}

\end{lemma}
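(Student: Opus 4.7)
The plan is to apply the monotonicity provided by Lemma \ref{Convex} (or its measure-valued analogue, obtained by combining Proposition \ref{atractiveness} with the weak formulation (\ref{Z2E1})) with a carefully chosen continuous, convex, compactly supported test function. For any $L>0$, set
$$\varphi_L(\omega)=(1-\omega/L)_{+}.$$
This function is continuous on $[0,\infty)$, non-negative, bounded by $1$, supported in $[0,L]$ and convex, being the pointwise maximum of the affine function $1-\omega/L$ and $0$. By Lemma \ref{Convex}, the map $t\mapsto\int \varphi_L(\omega)\,g_\sigma(t,d\omega)$ is non-decreasing, and together with the elementary bound $\varphi_L\leq \chi_{[0,L]}$ this yields, for every $t\in[0,T]$,
\begin{equation}
\int_{[0,L]} g_\sigma(t,d\omega)\ \geq\ \int \varphi_L(\omega)\, g_\sigma(t,d\omega)\ \geq\ \int \varphi_L(\omega)\, g_{in}(d\omega). \label{cotInfKey}
\end{equation}

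For part (i), restricting the right-most integral in (\ref{cotInfKey}) to $[0,R]$, on which $\varphi_L\geq 1-R/L=1-\eta$, gives $(1-\eta)\int_{[0,R]}g_{in}(d\omega)$ as a lower bound, which is precisely (\ref{S1E5}).

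For part (ii) I specialize (\ref{cotInfKey}) to $L=R$ and compute the right-most integral exactly. Setting $M(\omega):=\int_{\omega}^{\infty} g_{in}(d\omega')$, a Fubini computation gives $\int_0^R\omega\,g_{in}(d\omega)=\int_0^R M(\omega)\,d\omega-R\,M(R)$, whence
$$\int \varphi_R(\omega)\, g_{in}(d\omega)=[1-M(R)]-\frac{1}{R}\Bigl(\int_0^R M(\omega)\,d\omega - R\,M(R)\Bigr)=1-\frac{1}{R}\int_0^R M(\omega)\,d\omega.$$
Splitting at $\omega=1$, using $M\leq 1$ on $[0,1]$ and the hypothesis $M(\omega)\leq A\omega^{\rho+1}$ on $[1,R]$ (valid because $R\geq 1$ and $\rho+2>0$), I obtain $\int_0^R M(\omega)\,d\omega\leq 1+A(R^{\rho+2}-1)/(\rho+2)$. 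Dividing by $R$ and discarding the favourable term $-A/[R(\rho+2)]$ yields $(1/R)\int_0^R M\,d\omega\leq 1/R + AR^{\rho+1}/(\rho+2)$, which combined with (\ref{cotInfKey}) (for $L=R$) is exactly (\ref{S1E6}).

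The main technical point is to extend the monotonicity statement of Lemma \ref{Convex} — proved there for bounded weak solutions and typically applied to $C_0^2$ convex test functions — both to the measure-valued setting and to the merely continuous, non-smooth $\varphi_L$. Both extensions are routine: one applies the weak formulation (\ref{Z2E1}) to a sequence of $C_0^2$ convex mollifiers of $\varphi_L$ (first extending $\varphi_L$ convexly to $\mathbb{R}$ as $\max(0,1-\omega/L)$ and then mollifying, which preserves convexity), uses Proposition \ref{atractiveness} to observe that the cubic term has a fixed sign because $\mathcal{G}_{\sigma,\varphi}\geq 0$ for convex $\varphi$, and passes to the limit using the uniform continuity estimates of Lemma \ref{Cont} together with dominated convergence. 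Notably, no conservation law is invoked, so the argument covers the full range $\rho<-\tfrac{1}{2}$ stated in the lemma.
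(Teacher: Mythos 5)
Your proof is correct and follows essentially the same strategy as the paper: choose the convex, piecewise-affine tent $\varphi(\omega)=(1-\omega/L)_{+}$, invoke the monotonicity in $t$ of $\int \varphi\, g_{\sigma}(t,d\omega)$ coming from Proposition \ref{atractiveness}/Lemma \ref{Convex}, and then lower-bound the initial integral by restricting to $[0,R]$ (part (i)) or by rewriting it via the tail function $G_{in}$ and integrating the tail bound (part (ii)). Your extra remark about mollifying the non-$C^2$ test function and carrying the argument over to the measure-valued setting is a reasonable justification the paper leaves implicit, but it does not change the route of the proof.
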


\begin{proof}
We use the following test function:%
\[
\varphi\left(  \omega\right)  =\left(  1-K\omega\right)  _{+}%
\]
where $K>0$ is a constant to be precised. The function $\varphi$ is convex.
Applying Proposition \ref{atractiveness} and Lemma \ref{Convex} it follows
that:%
\begin{equation}
\int_{0}^{\infty}g_{\sigma}\left(  t,\omega\right)  \left(  1-K\omega\right)
_{+}d\omega\geq\int_{0}^{\infty}g_{in}\left(  \omega\right)  \left(
1-K\omega\right)  _{+}d\omega\ \ ,\ \ t\geq0\ \label{S1E6a}%
\end{equation}
whence, assuming that $KR\leq1$ and using that $\left(  1-K\omega\right)
_{+}\leq\chi_{\left(  0,\frac{1}{K}\right)  }:$%
\[
\int_{0}^{\frac{1}{K}}g_{\sigma}\left(  t,d\omega\right)  \geq \left(  1-KR\right)\int_{0}%
^{R}g_{in}\left(  \omega\right)    d\omega.
\]
Choosing then $K$ by means of $KR=\eta$ and writing $L=\frac{1}{K}$ we obtain
(\ref{S1E5}).

On the other hand, suppose that $\int g_{in}=1.$ We define $G_{in}\left(
\omega\right)  =\int_{\left[  \omega,\infty\right)  }g_{in}.$ Using
(\ref{S1E6a}) with $K=\frac{1}{R}$ we obtain:%
\[
\int_{0}^{R}g_{\sigma}\left(  t,d\omega\right)  \geq-\int_{0}^{\infty}%
\frac{dG_{in}}{d\omega}\left(  1-\frac{\omega}{R}\right)  _{+}d\omega
=1-\frac{1}{R}\int_{0}^{R}G_{in}\left(  \omega\right)  d\omega
\]

Using that $G_{in}\left(  \omega\right)  \leq A\omega^{\rho+1}$ if $\omega
\geq1$ and $G_{in}\left(  \omega\right)  \leq1$ for $\omega\geq0,$ it then
follows that:%
\[
\int_{0}^{R}g_{\sigma}\left(  t,d\omega\right)  \geq1-\frac{1}{R}%
-\frac{AR^{\rho+1}}{\left(  \rho+2\right)  }\ \ ,\ \ R\geq1
\]
if $\rho>-2$ whence (\ref{S1E6}) follows.
\end{proof}

\begin{remark}
It is important to notice that Lemma (\ref{cotInf}) also holds for $\sigma =0$. The proof of the existence of weak solutions for such a value of $\sigma $ is concluded in the next Subsection.
\end{remark}

\subsubsection{Limit $\sigma\rightarrow0.$ Global existence of weak
solutions.}

We can now prove Theorem \ref{globalWeakSol}:

\begin{proof}
[Proof of Theorem \ref{globalWeakSol}]We consider the solutions $\left\{
g_{\sigma}:\sigma>0\right\}  $ of the problems (\ref{Z2E2a}) which have been
found in Lemma \ref{globRegul}. Our goal is to prove suitable compactness
properties for these functions. The estimate (\ref{S1E6}) in Lemma
\ref{cotInf} imply uniform tightness on $\sigma$ for the measures $\left\{
g_{\sigma}:\sigma>0\right\}  .$ Moreover, this estimate yields also an uniform
estimate of the measures $g_{\sigma}$ in the space $\mathcal{M}_{+}\left(
\left[  0,\infty\right)  :\left(  1+\omega\right)  ^{\rho}\right)  $ with
$-2<\rho<-1.$ Therefore, the limit of these functions will be in the same
space. In order to prove the compactness of this family of measures in the space $C\left(
\left[  0,\infty\right)  :\mathcal{M}_{+}\left(  \left[  0,\infty\right)
:\left(  1+\omega\right)  ^{\rho}\right)  \right)  $ we need to obtain
estimates for the increments of time. It is enough to estimate the differences:
$$\int_{\left[  0,\infty\right)  }g_{\sigma}\left(  t_{2},\omega\right)
\varphi\left(  \omega\right)  d\omega-\int_{\left[  0,\infty\right)
}g_{\sigma}\left(  t_{1},\omega\right)  \varphi\left(  \omega\right)  d\omega$$
for any $\varphi\in C^{2}\left(  \left[  0,\infty\right)  \right)$, 
$t_{1},t_{2}\in\left[  0,\infty\right)  .$ Using (\ref{Z2E1}) and Lemma
\ref{Cont} we obtain:%
\[
\left\vert \int_{\left[  0,\infty\right)  }g_{\sigma}\left(  t_{2}%
,\omega\right)  \varphi\left(  \omega\right)  d\omega-\int_{\left[
0,\infty\right)  }g_{\sigma}\left(  t_{1},\omega\right)  \varphi\left(
\omega\right)  d\omega\right\vert \leq C\left\vert t_{2}-t_{1}\right\vert
\]
where $C>0$ is independent on $\sigma.$ The compactness of the
family $\left\{  g_{\sigma}:\sigma>0\right\}  $ follows then from Arzela-Ascoli
(cf. \cite{DS}). Taking a subsequence $\left\{  \sigma_{k}\right\}  $ we
obtain that $$g_{\sigma_{k}}\rightharpoonup g\in\mathcal{M}_{+}\left(  \left[
0,\infty\right)  :\left(  1+\omega\right)  ^{\rho}\right)  ,\ -2<\rho<-1.$$
Taking the limit in (\ref{Z2E1}) and using also Lemma \ref{Cont} we obtain
that $g$ is a weak solution \index{weak solution} of (\ref{S2E1}) in the sense of Definition
\ref{weakSolution} with initial datum $g_{in}$ and the result follows.
\end{proof}

\subsection{Stationary solutions.}

In this Section we discuss the stationary solutions of (\ref{M1E2}). It turns
out that in the isotropic case it is possible to obtain a complete
classification of the equlibria.

\subsubsection{Equilibria in the isotropic case.}

We first discuss the weak solutions \index{weak solution} in the sense of Definition
\ref{weakSolution} which do not depend on $t.$ Such solutions will be termed
as equilibria. In the isotropic case we can obtain a classification of all the equilibria.

\begin{theorem}
\label{StatIsot}Suppose that $g\in\mathcal{M}_{+}\left(  \left[
0,\infty\right)  :\left(  1+\omega\right)  ^{\rho}\right)  ,$ with
$\rho<-\frac{1}{2}$ has the property that the measure $\bar{g}\in C\left(
\left[  0,\infty\right)  :\mathcal{M}_{+}\left(  \left[  0,\infty\right)
:\left(  1+\omega\right)  ^{\rho}\right)  \right)  $ defined as $\bar
{g}\left(  t,\cdot\right)  =g\left(  \cdot\right)  $ for any $t\geq0$ is a
weak solution \index{weak solution} of (\ref{Z2E2a}) in the sense of Definition \ref{weakSolution}
with $\sigma=0.$ We will assume also that $\int_{\left[  0,\infty\right)
}g\left(  d\omega\right)  =m<\infty.$ Then there exists $\omega_{0}\geq0$ such
that:%
\[
g=m\delta_{\omega_{0}}%
\]

\end{theorem}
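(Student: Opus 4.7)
The plan is to exploit the monotonicity formula (Proposition~\ref{atractiveness} and Lemma~\ref{Convex}): since $\bar g$ is stationary, Lemma~\ref{Convex} gives
\[
0 \;=\; \frac{d}{dt}\!\int_{0}^{\infty} g(\omega)\,\varphi(\omega)\,d\omega \;=\; \iiint \frac{g_{1}g_{2}g_{3}}{\sqrt{\omega_{1}\omega_{2}\omega_{3}}}\,\mathcal{G}_{0,\varphi}(\omega_{1},\omega_{2},\omega_{3})\,d\omega_{1}d\omega_{2}d\omega_{3}
\]
for every convex test function $\varphi\in C^{2}_{0}([0,\infty))$. I would fix a one-parameter family of strictly convex compactly supported $\varphi$'s (obtained by multiplying, say, $\omega\mapsto(\omega-R/2)^2$ by a smooth cutoff on $[0,R]$ and letting $R\to\infty$). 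For such $\varphi$ we have $\mathcal{G}_{0,\varphi}\ge 0$, so the integrand — interpreted through the uniformly continuous kernel from Lemma~\ref{Cont} / Proposition~\ref{atractiveness} — is nonnegative and integrates to zero, hence must vanish $(g\otimes g\otimes g)$-almost everywhere on $[0,\infty)^{3}$.

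Next I would use the explicit representation of Lemma~\ref{strictConvex},
\[
\mathcal{G}_{0,\varphi}=\tfrac{1}{3}\bigl[\sqrt{\omega_{-}}\,H^{1}_{\varphi}+\sqrt{(\omega_{0}+\omega_{-}-\omega_{+})_{+}}\,H^{2}_{\varphi}\bigr],
\]
to identify the zero set. For a strictly convex $\varphi$, $H^{1}_{\varphi}=0$ forces $\omega_{-}=\omega_{0}$ (equal midpoints in a symmetric pair), and $H^{2}_{\varphi}=0$ forces $\omega_{+}=\omega_{0}$ or $\omega_{+}=\omega_{-}$. Hence, restricted to the interior $\{\omega_{1}\omega_{2}\omega_{3}>0\}$, the vanishing set reduces to the diagonal $\omega_{1}=\omega_{2}=\omega_{3}$ together with the cases in which two coordinates coincide and the third is at least twice that common value. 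A direct Fubini argument then rules out any non-atomic part of $g$ on $(0,\infty)$: three independent samples from a nonatomic measure are almost surely pairwise distinct, contradicting the condition $\omega_{-}=\omega_{0}$. Consequently $g=a\delta_{0}+\sum_{i}b_{i}\delta_{s_{i}}$ with $s_{i}>0$ distinct.

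Two distinct positive atoms are excluded by considering the triple $(s_{1},s_{2},s_{2})$ with $0<s_{1}<s_{2}$: it carries $(g\otimes g\otimes g)$-mass $b_{1}b_{2}^{2}>0$ while $\omega_{-}=s_{1}\neq s_{2}=\omega_{0}$, so the integrand is strictly positive there. The last case to rule out is the coexistence of a mass at the origin and a positive atom, i.e. $g=a\delta_{0}+b\delta_{s}$ with $a,b,s>0$. For this I would compute the boundary limit of $\mathcal{G}_{0,\varphi}/\sqrt{\omega_{1}\omega_{2}\omega_{3}}$ at the triple $(s,s,0)$: taking $(s,s,\epsilon)$ and letting $\epsilon\to 0^{+}$ yields
\[
\lim_{\epsilon\to 0^{+}}\frac{\mathcal{G}_{0,\varphi}(s,s,\epsilon)}{\sqrt{s^{2}\epsilon}}=\frac{1}{3s}\bigl[\varphi(0)+\varphi(2s)-2\varphi(s)\bigr],
\]
which is strictly positive by strict convexity. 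Since the triple has positive weight $ab^{2}$, the integral is strictly positive — a contradiction. Therefore at most one atom survives and $g=m\delta_{\omega_{0}}$.

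The main obstacle is the measure-theoretic bookkeeping at the boundary $\partial[0,\infty)^{3}$, where the factor $\sqrt{\omega_{1}\omega_{2}\omega_{3}}$ vanishes but $\mathcal{G}_{0,\varphi}$ may also vanish: one must verify that the identity of Proposition~\ref{atractiveness} genuinely extends to measures with atoms at the origin, and that the bounded continuous extension provided by Lemma~\ref{Cont} really does pick up the nonzero contribution of the triple $(s,s,0)$. This is where the argument differs from the analogous isotropic classification for the Nordheim equation in \cite{EV1}: because the kernel $\Phi$ degenerates like $\sqrt{\omega_{-}}$ at the origin, only the specific combination $(s,s,0)$ survives to detect the interaction between a condensate and a second atom, and it does so thanks precisely to the strictly convex test functions.
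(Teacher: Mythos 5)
Your strategy --- invoke the monotonicity identity with a strictly convex test function, note that the resulting integrand (interpreted through the continuous extension of the kernel) is nonnegative and integrates to zero over $[0,\infty)^3$, hence vanishes on $\operatorname*{supp}(g)^3$, and then exhibit a triple in $\operatorname*{supp}(g)^3$ where it is strictly positive --- is a genuinely different route from the paper's, and arguably cleaner. The paper splits into two cases: when $\int_{\{0\}}g=0$ it invokes the time-integrated estimate of Lemma~\ref{estProd} from Section~6, exploits stationarity by letting $T\to\infty$ to kill the off-diagonal product measure, and then sends $\rho\to 0$; when $\int_{\{0\}}g\in(0,1)$ it applies~(\ref{J1}) directly with the concave test function $\varphi(\omega)=\omega/(1+\omega)$ and the decomposition of Lemma~\ref{strictConvex} to obtain the contradiction $0\le -c_0\, m\,(\int_A g)^2$. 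Your unified convex-test-function argument handles both cases at once: your boundary computation at $(s,s,0)$ is the careful form of the paper's second case with the sign reversed, and the triple $(a,b,b)$ with $0<a<b$ both in the support replaces Lemma~\ref{estProd} entirely, since at that point $\mathcal{G}_{0,\varphi}=\frac{1}{3}\sqrt{a}\bigl[\varphi(a)+\varphi(2b-a)-2\varphi(b)\bigr]>0$ and the denominator $\sqrt{a b^2}$ is positive. The continuous extension of $\mathcal{G}_{0,\varphi}/\sqrt{\omega_1\omega_2\omega_3}$ that you use is indeed legitimate: it is the symmetrisation of $\Delta_{\varphi,0}$, which Lemma~\ref{Cont} makes uniformly continuous on compacts, and your limit along $(s,s,\epsilon)$ agrees with it.

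There is, however, a genuine gap in your choice of test function. A nonzero, strictly convex, compactly supported $\varphi$ on $[0,\infty)$ does not exist, and the construction you propose --- multiplying $(\omega-R/2)^2$ by a smooth cutoff supported in $[0,R]$ --- does not yield one: the cutoff must bring the function back to zero, so $\varphi$ is necessarily concave somewhere in the transition region. The sign claim $\mathcal{G}_{0,\varphi}\ge 0$, which carries the entire ``nonnegative integrand integrates to zero, hence vanishes on the support'' step, then fails in the affected region --- and note that $H^1_\varphi$ involves the argument $\omega_++\omega_0-\omega_-$, which can be as large as $2\omega_+$, so the defect reaches well into $[0,R]^3$ rather than staying near the cutoff. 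Letting $R\to\infty$ does not obviously repair this without a separate estimate of the contribution from the bad region. The fix is to drop compact support and use a single bounded strictly convex $C^2$ function, for instance $\varphi(\omega)=-\omega/(1+\omega)$ or equivalently $\varphi(\omega)=1/(1+\omega)$, and then justify that~(\ref{J1}) extends from $C_0^2$ to bounded $C^2$ test functions; this is a routine approximation argument (valid here precisely because $\int g\,d\omega<\infty$ and the kernel $\Phi/\sqrt{\omega_1\omega_2\omega_3}$ is bounded) and is the same extension the paper uses implicitly when it applies~(\ref{J1}) with the non-compactly-supported concave function $\omega/(1+\omega)$.
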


\begin{proof}
[Proof of Theorem \ref{StatIsot}] We can assume without loss of generality that $\int
g\left(  d\omega\right)  =1.$ Let us assume first that $\int_{\left\{
0\right\}  }g=0.$ Using the fact that $g$ is a weak solution of (\ref{Z2E2a})
in the sense of Definition \ref{weakSolution} it follows from Proposition
\ref{atractiveness} that for any concave test function $\varphi$ we have
\begin{equation}
\mathcal{G}_{0,\varphi}\leq0 \label{J0}%
\end{equation}

Since $g$ is an equilibrium it then follows that:
\begin{equation}
\int_{\left[  0,\infty\right)  ^{3}}d\omega_{1}d\omega_{2}d\omega_{3}%
\,\frac{g_{1}\,g_{2}\,g_{3}}{\sqrt{\omega_{1}\omega_{2}\omega_{3}}}%
\mathcal{G}_{0,\varphi}(\omega_{1}\,\omega_{2}\,\omega_{3})=0\ \label{J1}%
\end{equation}

We then apply Lemma \ref{estProd} to the function $\bar{g}\left(
t,\cdot\right)  =g\left(  \cdot\right)  $ which by assumption is a weak
solution of (\ref{Z2E2a}). Then:%
\[
T\int_{\mathcal{S}_{R,\rho}}\left[  \prod_{m=1}^{3}\,g_{m}\left(  d\omega
_{m}\right)  \right]  \leq\frac{2Bb^{\frac{7}{2}}R}{\rho^{2}\left(  \sqrt
{b}-1\right)  ^{2}},\
\]
where $R>0$, $b>1$ can be chosen arbitrarily close to one. and $\rho$
arbitrarily close to zero. The constant $B$ is independent of $b,\ \rho
,\ R,.T.$ The set $\mathcal{S}_{R,\rho}$ is contained in $\left(
0,\infty\right)  ^{3}.$ Taking the limit $T\rightarrow\infty$ it then follows
that:%
\[
\int_{\mathcal{S}_{R,\rho}}\left[  \prod_{m=1}^{3}\,g_{m}\left(  d\omega
_{m}\right)  \right]  =0
\]
and taking the limit $\rho\rightarrow0$ we obtain:%
\[
\int_{\left\{  \omega_{1}=\omega_{2}=\omega_{3}:\omega_{1}>0\right\}  }\left[
\prod_{m=1}^{3}\,g_{m}\left(  d\omega_{m}\right)  \right]  =0
\]

Therefore $g=\delta_{\omega_{0}}$ with $\omega_{0}>0$ and Theorem
\ref{StatIsot} would follow. Suppose then that $\int_{\left\{  0\right\}
}g>0.$ If $\int_{\left\{  0\right\}  }g=1$ the conclusion of the Theorem
follows with $\omega_{0}=0.$ If $m=\int_{\left\{  0\right\}  }g\in\left(
0,1\right)  $ there exists a bounded set $A$ such that $dist
\left(  A,\left\{  0\right\}  \right)  >0$ and $\int_{A}g>0.$ We then apply
(\ref{J1}) with the concave test function $\varphi\left(  \omega\right)
=\frac{\omega}{1+\omega}.$ Using (\ref{J1}), Lemma \ref{strictConvex}, as well as the fact that
$\varphi^{\prime\prime}\left(  \omega\right)  \leq-c_{1}<0$ in bounded sets we
obtain:
\[
0\leq-c_{0}m\left(  \int_{A}g\right)  ^{2}%
\]
with $c_{0}>0$. This gives a contradiction, whence $m\in\left\{
0,1\right\}$.
\end{proof}

\subsubsection{Equilibria in the nonisotropic case.}

The mathematical theory for the nonisotropic weak turbulence \index{weak turbulence}equation
(\ref{M1E2}) is far less developed than in the isotropic case. The main reason
for that is that the integral on the right-hand side of (\ref{M1E2}) does not
define a measure for an arbitrary measure $F.$ However, it is possible to
obtain a huge class of measures $F$ for which the right-hand side of
(\ref{M1E2}) is well defined in the sense of measures which actually vanishes.
The idea is construct measures $F$ with the form $\sum_{\ell}\delta_{k_{\ell}%
}$ where the values $k_{\ell}$ do not interact with each other.

 We first precise in which sense a measure $F\in\mathcal{M}_{+}\left(
\mathbb{R}^{3}\right)  $ is a stationary solution of (\ref{M1E2}).

\begin{definition}
\label{noniso}We will say that $F\in\mathcal{M}_{+}\left(  \mathbb{R}%
^{3}\right)  $ is a stationary solution of (\ref{M1E2}) if for any $\varphi\in
C_{0}\left(  \mathbb{R}^{3}\right)  $the following integrals are defined:
\begin{align*}
J_{k,\ell,m}  &  =\iiint_{\left(  \mathbb{R}^{3}\right)  ^{3}}%
\varphi\left(  k_{3}+k_{4}-k_{2}\right)  \delta\left(  \omega_{1}+\omega
_{2}-\omega_{3}-\omega_{4}\right)  F_{k}F_{\ell}F_{m}\\
\omega_{1}  &  =\left(  k_{3}+k_{4}-k_{2}\right)  ^{2}\ \ ,\ \ \omega
_{j}=k_{j}^{2}\ \ ,\ \ j=2,3,4
\end{align*}
with $\left(  k,\ell,m\right)  \in\left\{  \left(  3,4,1\right)  ,\left(
3,4,2\right)  ,\left(  1,2,3\right)  ,\left(  1,2,4\right)  \right\}  $ and,
moreover, the following identity holds:%
\[
J_{3,4,1}+J_{3,4,2}=J_{1,2,3}+J_{1,2,4}%
\]

\end{definition}

We can then construct infinitely many stationary solutions of (\ref{M1E2}) in
the sense of Definition \ref{noniso}. The possibility of obtaining stationary
solutions of weak turbulence \index{weak turbulence}equations by means of noninteracting particles
was already pointed out in \cite{Hass2}.

\begin{theorem}
Given $L=1,2,3,..,\infty,$ it is possible to choose vectors $\left\{
K_{j}\right\}  _{j=1}^{L}$, $K_{j}\in\mathbb{R}^{3}$ in infinitely many ways,
with the property that for any choice of numbers $\left\{  m_{j}\right\}
_{j=1}^{L},\ m_{j}>0,$ the measure $F=\sum_{j=1}^{L}m_{j}\delta_{K_{j}}$ is a
stationary solution of (\ref{M1E2}) in the sense of Definition \ref{noniso}.
\end{theorem}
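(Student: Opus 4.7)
The plan is to place all the $K_j$ on a single affine line in $\mathbb{R}^3$; any such line, together with any choice of distinct points on it, will produce a stationary solution in the sense of Definition \ref{noniso}. Since both the line (a four-parameter family in $\mathbb{R}^3$) and the positions of the points along it are arbitrary, this yields infinitely many valid configurations for every $L\in\{1,2,\dots,\infty\}$ and every assignment of weights $m_j>0$.

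The key algebraic fact I would use is a one-dimensional Sidon-type observation: if $K_a,K_b,K_c,K_d$ are collinear and satisfy both $K_a+K_b=K_c+K_d$ and $|K_a|^2+|K_b|^2=|K_c|^2+|K_d|^2$, then $\{K_a,K_b\}=\{K_c,K_d\}$ as multisets. Indeed, parameterising the line as $\{p+te:t\in\mathbb{R}\}$ with $|e|=1$, the two identities translate into $t_a+t_b=t_c+t_d$ and $t_a^2+t_b^2=t_c^2+t_d^2$, so the pairs $(t_a,t_b)$ and $(t_c,t_d)$ are the unordered roots of a common quadratic polynomial.

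Next I would verify Definition \ref{noniso} directly for $F=\sum_j m_j\delta_{K_j}$. Substituting the atomic measure into each $J_{k,\ell,m}$ and using the momentum delta to eliminate one wavevector, a short calculation shows that the remaining energy delta reduces to $\delta\bigl(2(K_\ell-K_i)\cdot(K_\ell-K_j)\bigr)$; in the collinear setting this factors as $\delta\bigl(2(t_\ell-t_i)(t_\ell-t_j)\bigr)$ and so vanishes unless $K_\ell\in\{K_i,K_j\}$, that is, unless $\{k_1,k_2\}=\{k_3,k_4\}$ as multisets. All nontrivial resonances are therefore excluded, and each $J_{k,\ell,m}$ is concentrated on this diagonal. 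A direct bookkeeping of the remaining diagonal contributions, with an inclusion–exclusion correction when all four wavevectors coincide, shows that the coefficient of $\delta(0)$ in $J_{3,4,1}+J_{3,4,2}$ equals the one in $J_{1,2,3}+J_{1,2,4}$. This coincidence is ultimately a reflection of the symmetry of the integrand $F_3F_4(F_1+F_2)-F_1F_2(F_3+F_4)$, which vanishes identically on the diagonal $\{k_1,k_2\}=\{k_3,k_4\}$.

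The main obstacle will be making precise the requirement that each $J_{k,\ell,m}$ be \emph{defined}, since the energy delta is forced to take the formal value $\delta(0)$ on the trivial-resonance locus, which concentrates all the support of the atomic measure. The natural remedy is to regularise the energy delta to $\delta_\varepsilon$ and pass to the limit $\varepsilon\downarrow 0$ simultaneously in all four $J_{k,\ell,m}$; with this common regularisation the divergent pieces on the two sides of the identity are identical and cancel in the combination $J_{3,4,1}+J_{3,4,2}-J_{1,2,3}-J_{1,2,4}$, and the diagonal bookkeeping above shows that the residual is zero. The delicate point, to be treated carefully, is that no implicit choice of regularisation is made differently in the four integrals.
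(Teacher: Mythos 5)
Your approach is correct but genuinely different from the paper's, and in one respect it is more careful. The paper builds the set $\{K_j\}$ inductively: having fixed $K_1,\dots,K_{n-1}$, it chooses $K_n$ outside the finite union of spheres $\{(K_a-K_c)\cdot(K_b-K_c)=0\}$ over triples involving $K_n$; the non-trivial resonances are thus killed by a genericity (codimension-one) argument, and this is iterated to obtain countable non-interacting configurations. You instead place all $K_j$ on a single affine line, and then the non-trivial resonances are killed identically because on a line the system $k_1+k_2=k_3+k_4$, $|k_1|^2+|k_2|^2=|k_3|^2+|k_4|^2$ forces $\{k_1,k_2\}=\{k_3,k_4\}$. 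Your Sidon-type observation is equivalent to the factorization $\Delta=2(k_2-k_3)\cdot(k_2-k_4)=2(t_2-t_3)(t_2-t_4)$ once one parametrizes the line; both routes close off the non-trivial resonances, yours giving an explicit family and the paper's a generic/iterative one, and both give the required "infinitely many ways." Where you go beyond the paper is in recognizing that even after non-trivial resonances are excluded, the trivial diagonal $\{k_1,k_2\}=\{k_3,k_4\}$ leaves atoms of $F_kF_\ell F_m$ sitting exactly on the resonance surface, so each $J_{k,\ell,m}$ naively contains a $\delta(0)$ factor; the paper simply asserts $\delta(\Delta)F_kF_\ell F_m=0$ and hence $J_{k,\ell,m}=0$, silently treating the surface-delta-times-point-mass product as null. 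Your regularization scheme and bookkeeping (which indeed closes: the $\delta(0)$-coefficients of $J_{3,4,1}+J_{3,4,2}$ and of $J_{1,2,3}+J_{1,2,4}$ agree, a manifestation of the vanishing of $F_3F_4(F_1+F_2)-F_1F_2(F_3+F_4)$ on the diagonal) is a more honest account of what is happening. Do note, though, that your argument as written establishes that the \emph{combination} $J_{3,4,1}+J_{3,4,2}-J_{1,2,3}-J_{1,2,4}$ is zero in the regularized limit, while Definition \ref{noniso} asks for each $J_{k,\ell,m}$ to be individually defined; if you want to match that literally you must either adopt the paper's (implicit) convention that the product of the resonance surface measure with an atom on the surface vanishes, or interpret "defined" for atomic $F$ as a statement about the regularized combination. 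The paper's proof carries the same interpretive burden without acknowledging it.
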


\begin{proof}
Given three arbitrary, different points $K_{1},K_{2},K_{3}\in\mathbb{R}^{3}$
we choose a point $K\,_{4}\in\mathbb{R}^{3}$ with the property that the
functions
\[
\Delta=\left(  k_{3}+k_{4}-k_{2}\right)  ^{2}+k_{2}^{2}-k_{3}^{2}-k_{4}^{2}%
\]
are different from zero for any choice of values $\left(  k_{1},k_{2}%
,k_{3},k_{4}\right)  \in\left\{  K_{1},K_{2},K_{3},K\,_{4}\right\}  .$ This
choice of $K\,_{4}$ can be made in infinitely many different ways. It then
follows that for any choice of numbers $\left\{  m_{j}\right\}  _{j=1}%
^{4},\ m_{j}>0$ we have:%
\[
\delta\left(  \omega_{1}+\omega_{2}-\omega_{3}-\omega_{4}\right)  F_{k}%
F_{\ell}F_{m}=0
\]
with $F=\sum_{j=1}^{4}m_{j}\delta_{K_{j}}.$ Therefore $J_{k,\ell,m}$ for any
choice of values of $\left(  k,\ell,m\right)  .$ We can iterate the procedure
in order to add an arbitrary number of particles. Actually it is possible to
form countable sets of particles with the same property This proves the result.
\end{proof}

\subsection{Weak solutions with non interacting condensate.}
\label{fluxes}

We now discuss the Kolmogorov-Zakharov \index{Kolmogorov-Zakharov}solutions, in the framework 
used in this paper. These solutions have the
form $f_{s}(\omega ) =K\omega ^{-7/6}$, whence $
g_{s}\left( \omega \right) =K\omega ^{-2/3}$ and  have been extensively studied in the physical literature, where it
has been seen that they yield a non-zero flux of particles from $\left\{
\omega >0\right\} $ to $\left\{ \omega =0\right\} .$ Different, but
equivalent, expressions for the fluxes have been obtained for instance in 
\cite{DNPZ}, \cite{EMV2}, \cite{Spohn}. We will use the following formulas
for the fluxes:

\begin{equation}
J_{n}\left[ g\right] \left( \omega \right) =J_{n,1}\left[ g\right] \left(
\omega \right) +J_{n,2}\left[ g\right] \left( \omega \right) +J_{n,3}\left[ g%
\right] \left( \omega \right) +J_{n,4}\left[ g\right] \left( \omega \right)
\   \label{KZP2}
\end{equation}%
with:%
\begin{eqnarray}
J_{n,1}\left[ g\right] \left( \omega \right) &=&\int_{0}^{\infty }d\omega
_{1}\int_{0}^{\omega }d\omega _{2}\int_{\omega }^{\infty }d\omega _{3}Q\left[
g\right] \left( \omega _{1},\omega _{2},\omega _{3}\right)  \label{KZP3} \\
J_{n,2}\left[ g\right] \left( \omega \right) &=&\int_{0}^{\omega }d\omega
_{1}\int_{\omega -\omega _{1}}^{\infty }d\omega _{2}\int_{0}^{\omega
_{1}+\omega _{2}-\omega }d\omega _{3}Q\left[ g\right] \left( \omega
_{1},\omega _{2},\omega _{3}\right)  \label{KZP4} \\
J_{n,3}\left[ g\right] \left( \omega \right) &=&-\int_{0}^{\infty }d\omega
_{1}\int_{\omega }^{\infty }d\omega _{2}\int_{0}^{\omega }d\omega _{3}Q\left[
g\right] \left( \omega _{1},\omega _{2},\omega _{3}\right) \   \label{KZP5}
\\
J_{n,4}\left[ g\right] \left( \omega \right) &=&-\int_{\omega }^{\infty
}d\omega _{1}\int_{0}^{\infty }d\omega _{2}\int_{\omega _{1}+\omega
_{2}-\omega }^{\omega _{1}+\omega _{2}}d\omega _{3}Q\left[ g\right] \left(
\omega _{1},\omega _{2},\omega _{3}\right)  \label{KZP6}
\end{eqnarray}%
where $\int_{a}^{b}=\int_{\left( a,b\right) }$ and $Q\left[ g\right] \left(
\omega _{1},\omega _{2},\omega _{3}\right) =\frac{\Phi g_{1}g_{2}g_{3}}{%
\sqrt{\omega _{1}\omega _{2}\omega _{3}}}.$ We define also:%
\begin{equation}
G_{0}\left( \omega \right) =\omega ^{-\frac{2}{3}}  \label{Gaux}
\end{equation}

The methods developed in the papers \cite{EMV1}, \cite{EMV2}
for the Nordheim equation allow to obtain a large class of solutions of (\ref%
{Z2E2a}) which behave asymptotically as the Kolmogorov-Zakharov \index{Kolmogorov-Zakharov}solutions for small values of 
$\omega .$ More precisely, the solutions described in the following Theorem
have the asymptotics $f\left( t,\omega \right) \sim a\left( t\right) \omega
^{-\frac{7}{6}}$ as $\omega \rightarrow 0$ for a suitable function $a\left(
t\right) .$

\begin{theorem}
\label{fluxSol}Given a function $f_{0}\in C^{1}\left( 0,\infty \right) $
satisfying $\left\vert \omega ^{\frac{7}{6}}f_{0}\left( \omega \right)
-A\right\vert +\left\vert \omega ^{\frac{13}{6}}f_{0}\left( \omega \right) +%
\frac{7A}{6}\right\vert \leq C\omega ^{\delta }$ for $0<\omega \leq 1$ and $%
\left\vert \omega ^{\frac{1}{2}+\rho }f_{0}\left( \omega \right) \right\vert
\leq C$ for $\omega \geq 1$ and $\rho >\frac{1}{2},$ there exists\ $T>0$ and
functions $f\in C^{1,0}\left( \left[ 0,T\right] \times \left[ 0,\infty
\right) \right) ,$ $a\in C\left( \left[ 0,T\right] \right) ,\ \left\vert
a\left( t\right) \right\vert \leq 2A$ such that $f$ solves (\ref{E1}), (\ref%
{E2}) and $\left\vert \omega ^{\frac{7}{6}}f\left( t,\omega \right) -a\left(
t\right) \right\vert \leq 2C\omega ^{\frac{\delta }{2}}$ for $0<\omega \leq
1,\ t\in \left[ 0,T\right] $ and $\left\vert \omega ^{\frac{1}{2}+\rho
}f\left( t,\omega \right) \right\vert \leq 2C$ for $\omega \geq 1,\ t\in %
\left[ 0,T\right] .$ Moreover, if $\rho >1$ we have:%
\begin{equation}
\partial _{t}\left( \int f\left( t,\omega \right) \sqrt{\omega }d\omega
\right) =J_{n}\left[ G_{0}\right] \left( 1\right) \left( a\left( t\right)
\right) ^{3}  \label{KZ3}
\end{equation}%
where $J_{n}\left[ G_{0}\right] \left( 1\right) $ is obtained using (\ref%
{KZP2}), (\ref{Gaux}).{}
\end{theorem}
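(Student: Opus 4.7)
First I would set up a Banach space $X_T$, for a small time horizon $T>0$ to be chosen, consisting of pairs $(a,f)$ where $a\in C([0,T])$ satisfies $|a(t)|\le 2A$ and $f\in C^{1,0}([0,T]\times(0,\infty))$ satisfies the two weighted bounds
\[
\sup_{0<\omega\le 1}\omega^{-\delta/2}\bigl|\omega^{7/6}f(t,\omega)-a(t)\bigr|\le 2C,\qquad \sup_{\omega\ge 1}\omega^{1/2+\rho}|f(t,\omega)|\le 2C,
\]
with $a(t)$ identified from $f$ by $a(t)=\lim_{\omega\to 0^+}\omega^{7/6}f(t,\omega)$. The initial pair $(A,f_0)$ lies in this set by the hypotheses on $f_0$. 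The whole point of $X_T$ is to tolerate the borderline singularity $\omega^{-7/6}$ that is explicitly forbidden in the spaces used for Theorem \ref{localExBounded}.

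Second I would reformulate (\ref{E1})--(\ref{E2}) through a Duhamel representation analogous to (\ref{Z3E5}) and view the problem as a fixed point on $X_T$. The structural input that makes the argument work is that the Kolmogorov--Zakharov profile $f_s(\omega)=\omega^{-7/6}$ is a pointwise zero of the right-hand side of (\ref{E1}): the Zakharov conformal transformation organises the four contributions into a cancellation that leaves a \emph{constant} particle flux $J_n[G_0]$ at every $\omega>0$. Writing $f=a(t)f_s+r$ and expanding the trilinear collision integral $Q[f,f,f]$ into its eight components, the pure $a^3$ contribution vanishes identically; every remaining mixed term carries at least one factor $r=O(\omega^{-7/6+\delta/2})$ near zero or a decaying factor at infinity. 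Splitting the domain of integration into the regions $D_I,\dots,D_{IV}$ exactly as in the proof of Theorem \ref{localExBounded}, one obtains cubic Lipschitz bounds on the resulting Duhamel map $\mathcal{T}$, and a standard contraction on $X_T$ produces the solution $f$ for $T$ small enough.

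To obtain the flux identity (\ref{KZ3}) under the additional hypothesis $\rho>1$, I would test (\ref{E1}) against $\sqrt{\omega}\,\chi_\varepsilon(\omega)$, where $\chi_\varepsilon$ is a smooth cutoff vanishing on $[0,\varepsilon]$ and equal to one on $[2\varepsilon,\infty)$, and pass to the limit $\varepsilon\to 0^+$. The bound $\rho>1$ ensures $\int f(t,\omega)\sqrt{\omega}\,d\omega<\infty$, so that the left-hand side of (\ref{KZ3}) is well defined and differentiable in $t$, while the flux representation (\ref{KZP2})--(\ref{KZP6}) rewrites the right-hand side as $-J_n[f(t,\cdot)](\varepsilon)+o(1)$. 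Using the near-origin asymptotics $f(t,\omega)=a(t)\omega^{-7/6}(1+O(\omega^{\delta/2}))$ proved in the first step, together with the cubic homogeneity of $J_n$ in $g$ and the scale invariance of $G_0(\omega)=\omega^{-2/3}$ (which makes $J_n[a(t)G_0](\varepsilon)=a(t)^3 J_n[G_0](1)$ independent of $\varepsilon$), the boundary flux converges to $(a(t))^3 J_n[G_0](1)$ as $\varepsilon\to 0^+$, which is exactly (\ref{KZ3}).

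The hard step is the trilinear estimate on $Q[f,f,f]-a^3 Q[f_s,f_s,f_s]$ in the weighted norm of $X_T$. The subtraction of the vanishing term is \emph{essential}, because $Q[f_s,f_s,f_s]$ is a delicate cancellation of conditionally convergent pieces, and one has to carry that cancellation through each of the four subregions $D_I,\dots,D_{IV}$ to show that the mixed terms are absolutely convergent with the $\omega^{\delta/2}$ gain required by $X_T$. Once this cubic estimate is secured, the rest of the fixed-point scheme and the derivation of (\ref{KZ3}) are fairly direct adaptations of the tools already developed in Theorem \ref{localExBounded} and Section \ref{fluxes}.
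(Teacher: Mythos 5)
Your proposal sets up a plausible functional framework and correctly identifies the structural fact that the Kolmogorov--Zakharov profile is a pointwise zero of the collision operator, but it misses the central difficulty that forces the paper to invoke the Wiener--Hopf machinery of \cite{EMV1}, \cite{EMV2}: a contraction argument based on the Duhamel representation (\ref{Z3E5}) does \emph{not} close near $\omega = 0$. If $g\sim a\,\omega^{-2/3}$ near the origin, then by scaling $A_g(\omega)\sim c\,a^2\,\omega^{-1/3}$, so the natural relaxation time at frequency $\omega$ is $\sim\omega^{1/3}\to 0$. In the Duhamel integral $\int_0^t e^{-\int_s^t A_g\,d\xi}\,\mathcal{O}[g]\,ds$, the factor $\int_0^t e^{-c a^2\omega^{-1/3}(t-s)}\,ds$ saturates to $\sim\omega^{1/3}/(c a^2)$ whenever $T\gg\omega^{1/3}$, so for $\omega$ small the map has already reached its quasi-static limit and the Lipschitz constant of $\mathcal{T}$ in the weighted norm of $X_T$ is $O(1)$ as $T\to 0^+$, not $O(T)$. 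The smallness of $T$ gives no contraction in the region that matters. The analogous argument in Theorem \ref{localExBounded} works only because there $g\lesssim\sqrt{\omega}$ near the origin, which keeps $A_g$ bounded and the relaxation time bounded below; that structure is exactly what is absent for KZ-type data.

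This is precisely why the paper's actual proof (and the one in \cite{EMV2} it adapts) linearizes (\ref{E1})--(\ref{E2}) around $\bar f(\omega)=\omega^{-7/6}$ and computes the \emph{fundamental solution of the full linearized operator} via Wiener--Hopf methods, rather than only pulling the loss term into an exponential. The linearized operator around the power law is, in Mellin/log variables, a convolution-type operator on a half-line whose spectrum touches zero (reflecting that the KZ amplitude is a free constant), so one must diagonalize it and track the resulting secular growth of $a(t)$; a naive splitting into gain and loss, each individually convergent but each of order $\omega^{-1}$, cannot see these cancellations. Your "hard step" paragraph gestures at this, but the claim that once the trilinear estimate is in hand the fixed-point scheme is a direct adaptation of Theorem \ref{localExBounded} is not correct for the reason above. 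By contrast, the final part of your proposal -- testing against $\sqrt{\omega}\,\chi_\varepsilon$, using the homogeneity of $J_n$ and the asymptotics $f(t,\omega)=a(t)\omega^{-7/6}(1+O(\omega^{\delta/2}))$ to pass to the boundary flux, and concluding (\ref{KZ3}) for $\rho>1$ -- is essentially the same computation the paper carries out when it proves (\ref{KZP7}) and Theorem \ref{WeakNI}, and that portion of the argument is sound.
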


\begin{proof}
The Proof of Theorem \ref{fluxSol} is similar to the Proof of Theorem 2.1 in 
\cite{EMV2}. Its main idea is to linearize (\ref{E1}), (\ref{E2}%
) around the power law $\bar{f}\left( \omega \right) =\omega ^{-\frac{7}{6}%
}. $ The fundamental solution associated to this linearized problem can be
computed explicitly using Wiener-Hopf methods and their properties can be
described with great detail (cf. \cite{EMV1}). The quadratic terms in
Nordheim's equation (\ref{E5}), (\ref{E6}) are lower order terms. Their
contribution must be examined in detail for large values of $\omega ,$ since
their effect is the dominant one in that region. This detailed analysis of
the effect of the quadratic terms in (\ref{E5}) has been made in \cite{EMV2}%
, but in the analysis of (\ref{E1}), (\ref{E2}) we do not need to estimate
the effect of the quadratic terms. This allows to assume initial data $f_{0}$
with less stringent decay conditions, since the contributions of the cubic
terms for $\omega \rightarrow \infty $ can be estimated easily.
\end{proof}

\begin{remark}
Notice that the fluxes given in (\ref{KZ3}) cannot be prescribed for those
solutions, but they arise naturally as a consequence of the evolution of the
equation.
\end{remark}

\begin{remark}
The Kolmogorov-Zakharov \index{Kolmogorov-Zakharov}solutions $f_{s}(\omega )=K\omega ^{-7/6}$ are solutions of (\ref{E1}), (\ref{E2}) in the sense of
Theorem \ref{fluxSol} with initial datum  $f_{s}(\omega )=K\omega ^{-7/6}$. They are defined for arbitrarily large values of 
$T.$
\end{remark}
Our next goal is to make precise how the solutions obtained in
Theorem \ref{fluxSol} can be set in the framework of weak solutions defined
in Sections \ref{weakinteracting} and  \ref{weaknoninteracting}.We first remark that in both concepts of weak solutions
there (cf. Definitions \ref{weakSolution}, \ref{weakSolutionNI}) the
resulting weak solutions must satisfy $\partial _{t}\left( \int_{\left[
0,\infty \right) }g\left( t,d\omega \right) \right) =0$ if the initial mass
of the solutions is finite, as it can be readily seen using the test
function $\varphi =1.$ It is mathematically simpler to work with this type
of mass conserving weak solutions. However, due to (\ref{KZ3}) it would be
impossible to have weak solutions of (\ref{Z2E2a}) in the sense of
Definitions \ref{weakSolution} or \ref{weakSolutionNI} unless an additional
measure is added at the origin. Therefore, in our setting it is natural to
state that the Kolmogorov-Zakharov \index{Kolmogorov-Zakharov}solutions are:%
\begin{equation}
g_{KZ}\left( t,d\omega \right) =-J_{n}\left[ G_{0}\right] \left( 1\right)
K^{3}t\delta _{\omega =0}+\frac{Kd\omega }{\omega ^{\frac{2}{3}}}
\label{KZ1}
\end{equation}
It will be proved later that $J_{n}\left[ G_{0}\right] \left( 1\right) <0$, 
therefore, the mass at $\omega =0$ increases.

In general, given any $f\left( t,\omega \right) $ which solves (\ref{E1}), (%
\ref{E2})~and has the properties in Theorem \ref{fluxSol} we define:%
\begin{eqnarray}
g\left( t,d\omega \right) =m\left( t\right) \delta _{\omega =0}+\sqrt{\omega 
}f\left( t,\omega \right) d\omega \   \label{KZ4}\\
m\left( t\right) =-J_{n}\left[ G_{0}\right] \left( 1\right)
\int_{0}^{t}\left( a\left( s\right) \right) ^{3}ds  \label{KZ5}
\end{eqnarray}

We then have the following result:

\begin{theorem}
\label{WeakNI}Let $\sigma =0.$ The measure $g\left( t,\cdot \right) $
defined by means of (\ref{KZ4}), (\ref{KZ5}) solves (\ref{Z2E2a}) in the
sense of Definition \ref{weakSolutionNI}.
\end{theorem}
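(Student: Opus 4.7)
The plan is to substitute the ansatz $g(t,d\omega)=m(t)\delta_{0}+\sqrt{\omega}f(t,\omega)d\omega$ directly into Definition \ref{weakSolutionNI}, reduce the identity to a pointwise-in-$t$ relation, and then verify it by splitting the test function into a piece vanishing at $\omega=0$ (handled by the symmetrization of Proposition \ref{atractiveness}) and a piece that is constant at the origin (handled by the flux formula (\ref{KZ3}) and the definition (\ref{KZ5}) of $m(t)$). Fix $\varphi\in C_{0}^{2}([0,T)\times[0,\infty))$. Since the triple integral in (\ref{Z2E1NI}) is over $(0,\infty)^{3}$, the Dirac mass never enters the collision term; substituting the ansatz collapses it to
\[
\mathcal{R}_{t}[\psi]:=\iiint_{(0,\infty)^{3}}f_{1}f_{2}f_{3}\,\Phi\bigl[\psi_{3}+\psi_{4}-\psi_{1}-\psi_{2}\bigr]d\omega_{1}d\omega_{2}d\omega_{3},
\]
with $\omega_{4}=\omega_{1}+\omega_{2}-\omega_{3}$. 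Using $m(0)=0$ and applying Newton--Leibniz in $t$ to both $m(t)\varphi(t,0)$ and $\int\sqrt{\omega}f\varphi\,d\omega$ reduces (\ref{Z2E1NI}) to the pointwise identity
\[
\dot{m}(t)\varphi(t,0)+\int_{0}^{\infty}\sqrt{\omega}\,\partial_{t}f(t,\omega)\,\varphi(t,\omega)\,d\omega=\mathcal{R}_{t}[\varphi(t,\cdot)]\qquad(\star)
\]
for a.e.\ $t$ and every $C_{0}^{2}$ test function $\varphi$. By linearity I decompose $\varphi(t,\cdot)=\varphi(t,0)\chi+\hat{\varphi}(t,\cdot)$, with $\chi\in C_{0}^{2}([0,\infty))$ a fixed cutoff satisfying $\chi(0)=1$ and $\hat{\varphi}(t,0)=0$, and verify $(\star)$ for each piece.

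For the piece $\hat{\varphi}$ with $\hat{\varphi}(t,0)=0$, the bound $|\hat{\varphi}(t,\omega)|\leq C\omega$ near the origin precisely compensates the $\omega^{-7/6}$ singularity of $f$ given by Theorem \ref{fluxSol}: a scaling check shows that each of the four summands obtained by expanding $\Xi=(f_{1}+f_{2})f_{3}f_{4}-(f_{3}+f_{4})f_{1}f_{2}$ inside $\iiint\Phi\Xi\,\hat{\varphi}(\omega_{1})d\omega_{1}d\omega_{3}d\omega_{4}$ has integrand scaling like $\omega^{-2/3}$ near zero and is therefore absolutely integrable. The permutation invariance of $\Phi$ together with the change of variables $(\omega_{1},\omega_{2})\leftrightarrow(\omega_{3},\omega_{4})$ used in the Proof of Proposition \ref{atractiveness} then apply rigorously to give $\int\sqrt{\omega}\,\partial_{t}f\,\hat{\varphi}\,d\omega=\mathcal{R}_{t}[\hat{\varphi}]$; combined with $\hat{\varphi}(t,0)=0$, this is $(\star)$ for $\hat{\varphi}$.

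For the piece proportional to $\chi$ the same symmetrization fails, because the individual summands $\int\Phi f_{1}f_{3}f_{4}\chi(\omega_{1})d\omega_{1}d\omega_{3}d\omega_{4}$ now develop a logarithmic divergence at the origin. I restore absolute convergence by inserting a second cutoff $\zeta_{\eta}\in C^{\infty}$, vanishing on $[0,\eta/2]$ and equal to $1$ on $[\eta,\infty)$. The previous step applied to $\chi\zeta_{\eta}$ yields
\[
\int\sqrt{\omega}\,\partial_{t}f\,\chi\zeta_{\eta}\,d\omega=\mathcal{R}_{t}[\chi\zeta_{\eta}]=\mathcal{R}_{t}[\chi]-\mathcal{R}_{t}[\chi(1-\zeta_{\eta})].
\]
As $\eta\to 0$ the left-hand side converges to $\int\sqrt{\omega}\,\partial_{t}f\,\chi\,d\omega$ by dominated convergence, while the remaining term is evaluated by rescaling $\omega_{i}=\eta u_{i}$: the exponents $-7/2$ from $f_{1}f_{2}f_{3}$, $+1/2$ from $\Phi$ and $+3$ from the volume element cancel to produce an $O(1)$ limit which, using the asymptotics $f(t,\omega)\sim a(t)\omega^{-7/6}$ near zero from Theorem \ref{fluxSol}, equals the same integral with $f$ replaced by the self-similar KZ profile $a(t)\omega^{-7/6}$. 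Unfolding this pure KZ-type integral against the flux formulas (\ref{KZP2})--(\ref{KZP6}) identifies it with $\dot{m}(t)\chi(0)$ via (\ref{KZ5}), closing $(\star)$ for $\chi$ and hence for every $\varphi$.

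The main obstacle is this last identification: the rescaled integral arising from $\lim_{\eta\to 0}\mathcal{R}_{t}[\chi(1-\zeta_{\eta})]$ is a pure KZ-type expression whose value must coincide with $-J_{n}[G_{0}](1)(a(t))^{3}\chi(0)$, and matching it to the asymmetric flux formulas (\ref{KZP2})--(\ref{KZP6}) without circularly invoking Theorem \ref{WeakNI} itself requires carefully unfolding the symmetric collision form $\mathcal{R}_{t}$ into a flux-divergence form while exploiting the scale invariance of $G_{0}$. This is exactly the point at which the algebraic cancellation driving the symmetrization of Proposition \ref{atractiveness} breaks down for test functions that do not vanish at zero, and the breakdown is precisely compensated by the transfer of mass from the regular part of $g$ into the condensate encoded in $\dot{m}(t)$.
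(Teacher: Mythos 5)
Your strategy matches the paper's in its essential structure: isolate the near--origin singular contribution by cutting off the test function, identify the boundary term that survives the limit with a KZ flux, and match it against the derivative of $m(t)$ prescribed by (\ref{KZ5}). Your decomposition (fixed profile $\chi$ with $\chi(0)=1$, plus a second cutoff $\zeta_\eta$) differs only cosmetically from the paper's one-parameter cutoff $\phi_\varepsilon(\omega)=(1-\omega/\varepsilon)_+\varphi(\omega)$; the latter is slightly leaner, since $\phi_\varepsilon'$ is essentially $-\varepsilon^{-1}\chi_{(0,\varepsilon)}\varphi$ plus a bounded correction, so Lemma \ref{grad} directly produces an average of the particle flux $J_n[g]$ over $(0,\varepsilon)$ whose limit is $\lim_{\omega\to 0}J_n[g](\omega)$, without needing to argue that $\mathcal{R}_t[\chi]$ is absolutely convergent or to perform a separate inner rescaling.

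The ``main obstacle'' you flag at the end is not actually a gap, and closing it does not require any circularity. The tool you are missing is Lemma \ref{grad}, which the paper proves \emph{before} and independently of Theorem \ref{WeakNI}: it rewrites the symmetric collision functional $\mathcal{R}_t[\psi]$ as the flux-divergence $\int_{(0,\infty)}J_n[g](\omega)\psi'(\omega)\,d\omega$ for test functions supported in $(0,\infty)$, by Fubini on the characteristic-function decomposition. Applying this with $\psi=\zeta$ (whose derivative is supported in $[1/2,1]$, so all integrals are absolutely convergent) and using the scale invariance $J_n[G_0](\omega)\equiv J_n[G_0](1)$ gives
\[
\iiint_{(0,\infty)^3}(u_1u_2u_3)^{-7/6}\Phi\bigl[\zeta(u_3)+\zeta(u_4)-\zeta(u_1)-\zeta(u_2)\bigr]\,du
= J_n[G_0](1)\int_0^\infty\zeta'(\omega)\,d\omega = J_n[G_0](1),
\]
so your rescaled limit $\lim_{\eta\to 0}\mathcal{R}_t[\chi(1-\zeta_\eta)]$ equals $-(a(t))^3J_n[G_0](1)=\dot m(t)\chi(0)$ by (\ref{KZ5}), exactly as needed. (The passage from the true $f$ to its leading profile $a(t)\omega^{-7/6}$ in the inner region is the content of the asymptotics (\ref{KZP7}), which the paper also records before this theorem.) Two additional small points: the claim that the unsymmetrized summands develop only a \emph{logarithmic} divergence for $\chi(0)\neq 0$ is correct and is the reason the double cutoff is sufficient; and the dominated-convergence step on $\int\sqrt\omega\,\partial_t f\,\chi\zeta_\eta$ requires $\sqrt\omega\,\partial_t f$ to be locally integrable, which follows from the $C^{1,0}$ regularity and the stated $\omega^{-7/6}$-type bounds of Theorem \ref{fluxSol} but is worth stating explicitly.
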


The following result has some independent interest, because it states in
which precise sense the Kolmogorov-Zakharov \index{Kolmogorov-Zakharov}solutions  
solve (\ref{Z2E2a}).

\begin{corollary}
Let $\sigma =0.$ The measure $g_{KZ}\left( t,d\omega \right) $ given by (\ref
{KZ1}) solves (\ref{Z2E2a}) in the sense of Definition \ref%
{weakSolutionNI}.
\end{corollary}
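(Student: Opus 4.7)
The plan is to deduce this corollary directly from Theorem \ref{WeakNI} by identifying the Kolmogorov--Zakharov measure $g_{KZ}$ as a special case of the construction \eqref{KZ4}--\eqref{KZ5}. First I would invoke the Remark preceding the corollary, which asserts that the stationary power law $f_s(\omega)=K\omega^{-7/6}$ is itself a solution of \eqref{E1}--\eqref{E2} in the sense of Theorem \ref{fluxSol}, valid on any time interval $[0,T]$. For this one checks the hypotheses of Theorem \ref{fluxSol} against $f_0=f_s$: the bound $|\omega^{7/6}f_0(\omega)-A|+|\omega^{13/6}f_0'(\omega)+7A/6|\le C\omega^\delta$ holds identically (with $A=K$ and both expressions vanishing), and the large-$\omega$ bound $|\omega^{1/2+\rho}f_s(\omega)|\le C$ is verified by choosing, for instance, $\rho=2/3>1/2$, so that $\omega^{1/2+\rho}f_s(\omega)=K$.

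Next, since $f_s$ does not depend on $t$, the amplitude function provided by Theorem \ref{fluxSol} is the constant $a(t)\equiv K$. Plugging $f=f_s$ and $a(t)=K$ into the definitions \eqref{KZ4}--\eqref{KZ5} one obtains
\[
m(t)=-J_n[G_0](1)\int_0^tK^3\,ds=-J_n[G_0](1)K^3t,\qquad \sqrt{\omega}\,f_s(\omega)\,d\omega=\frac{K\,d\omega}{\omega^{2/3}},
\]
so the measure \eqref{KZ4} coincides term-by-term with $g_{KZ}(t,d\omega)$ defined in \eqref{KZ1}. Applying Theorem \ref{WeakNI} to this particular pair $(f,a)=(f_s,K)$ then shows that $g_{KZ}$ is a weak solution of \eqref{Z2E2a} with $\sigma=0$ in the sense of Definition \ref{weakSolutionNI}, which is the claim.

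The only nontrivial point, and hence the expected obstacle, is justifying that the amplitude $a(t)$ associated with the stationary initial datum $f_s$ is genuinely the constant $K$ rather than merely satisfying $|a(t)|\le 2K$. This however is forced by uniqueness in Theorem \ref{fluxSol}: because $f_s$ itself is an admissible solution on any interval $[0,T]$ that coincides with its own initial data, it must agree with the solution produced by the theorem, and consequently the asymptotic coefficient $a(t)=\lim_{\omega\to 0^+}\omega^{7/6}f_s(t,\omega)$ equals $K$ for every $t$. Once this identification is made, no further computation is needed: the interaction term in \eqref{Z2E1NI} is integrated over $(0,\infty)^3$, so the Dirac mass at the origin carried by $g_{KZ}$ does not participate, and the time-derivative balance required by Definition \ref{weakSolutionNI} is exactly the flux identity \eqref{KZ3} rewritten via \eqref{KZ5}, which is the content of Theorem \ref{WeakNI}.
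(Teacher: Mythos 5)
Your proof is correct and follows the exact route the paper intends: the corollary is stated without a separate proof precisely because it is meant to be read as the specialization of Theorem \ref{WeakNI} to the pair $(f,a)=(f_{s},K)$, with the Remark preceding the corollary supplying the admissibility of $f_{s}$ as a solution in the sense of Theorem \ref{fluxSol}. One small simplification: you do not actually need a uniqueness argument to conclude that $a(t)\equiv K$. Theorem \ref{fluxSol} is not stated with uniqueness, but none is required: the Remark already asserts that the stationary pair $(f_{s},K)$ satisfies the conclusions of Theorem \ref{fluxSol}, and once $f=f_{s}$ is taken as the solution, the asymptotic bound $\left\vert \omega ^{7/6}f\left( t,\omega \right) -a\left( t\right) \right\vert \leq 2C\omega ^{\delta /2}$ forces $a(t)=\lim_{\omega\to 0^{+}}\omega^{7/6}f_{s}(\omega)=K$ directly, without any appeal to unicity.
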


We also have the following results which suggests that the correct
definition of weak solutions for solutions with fluxes towards the origin is
Definition \ref{weakSolutionNI}.

\begin{theorem}
\label{WeakI}Let $\sigma =0.$ The measure $g\left( t,\cdot \right) $ defined
by means of (\ref{KZ4}), (\ref{KZ5}) does not satisfy Definition \ref%
{weakSolutionNI}.
\end{theorem}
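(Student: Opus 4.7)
The identities (\ref{Z2E1}) and (\ref{Z2E1NI}) share the same left-hand side and differ on the right only through the domain of the triple collision integral, $[0,\infty)^3$ versus $(0,\infty)^3$. Theorem \ref{WeakNI} already gives that $g$ satisfies (\ref{Z2E1NI}), so the problem reduces to exhibiting a test function $\varphi\in C_0^2([0,T)\times [0,\infty))$ for which the ``boundary excess''
\[
E(\varphi,t_*)=\int_0^{t_*}\iiint_{[0,\infty)^3\setminus (0,\infty)^3}\Delta_{\varphi,0}\bigl(\omega_1,\omega_2,\omega_3\bigr)\,g(t,d\omega_1)\,g(t,d\omega_2)\,g(t,d\omega_3)\,dt
\]
is nonzero; here $\Delta_{\varphi,0}$ is the continuous function of (\ref{Z2E6}) from Lemma \ref{Cont}. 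Once this is achieved, identity (\ref{Z2E1}) cannot simultaneously hold for $g$.

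I would decompose the excess as follows. Write $g=m(t)\delta_0+\sqrt{\omega}f(t,\omega)d\omega$ and expand $g_1\otimes g_2\otimes g_3$. The terms carrying two or three factors of $\delta_0$ disappear under $\Delta_{\varphi,0}$: on every stratum where at least two of the $\omega_i$ equal zero, both $\Phi_0$ and the bracket $H_\varphi$ in (\ref{Z2E6}) are identically zero, so $\Delta_{\varphi,0}\equiv 0$ there by continuity, killing the $m^2$ and $m^3$ contributions. Only single-$\delta_0$ strata survive. Computing $\Delta_{\varphi,0}$ on each face by the limits used in the proof of Lemma \ref{Cont} gives
\[
\Delta_{\varphi,0}(0,\omega_2,\omega_3)=\frac{\chi_{\{\omega_3<\omega_2\}}}{\sqrt{\omega_2\omega_3}}\bigl[\varphi(\omega_2-\omega_3)+\varphi(\omega_3)-\varphi(0)-\varphi(\omega_2)\bigr],
\]
\[
\Delta_{\varphi,0}(\omega_1,\omega_2,0)=\frac{1}{\sqrt{\omega_1\omega_2}}\bigl[\varphi(\omega_1+\omega_2)+\varphi(0)-\varphi(\omega_1)-\varphi(\omega_2)\bigr],
\]
together with $\Delta_{\varphi,0}(\omega_1,0,\omega_3)=\Delta_{\varphi,0}(0,\omega_1,\omega_3)$ by the $\omega_1\leftrightarrow\omega_2$ symmetry. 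Substituting $g^{\mathrm{reg}}=\sqrt{\omega}f\,d\omega$ and symmetrizing (using $\varphi(\min\{a,b\})+\varphi(\max\{a,b\})=\varphi(a)+\varphi(b)$) collapses the three contributions into
\[
E(\varphi,t_*)=\int_0^{t_*}m(t)\iint_{(0,\infty)^2}f(t,a)f(t,b)\,H^1_\varphi(0,a,b)\,da\,db\,dt,
\]
with $H^1_\varphi(0,a,b)=\varphi(|a-b|)+\varphi(a+b)-2\varphi(\max\{a,b\})$ the quantity from Lemma \ref{strictConvex}.

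To make $E(\varphi,t_*)$ strictly positive I would choose $\varphi(\omega)=\omega^2\chi(\omega)$, where $\chi\in C_0^\infty([0,\infty))$ equals one on $[0,R]$ for a large $R$. A direct computation gives $H^1_\varphi(0,a,b)=2\min\{a,b\}^2$ whenever $a+b\le R$, which is nonnegative and strictly positive off the diagonal. Using the asymptotics $f(t,\omega)\sim a(t)\omega^{-7/6}$ as $\omega\to 0^+$ from Theorem \ref{fluxSol}, the inner integral is absolutely convergent and strictly positive. Since $m(t)=-J_n[G_0](1)\int_0^t a(s)^3\,ds>0$ on $(0,T]$ (using $J_n[G_0](1)<0$ as asserted after (\ref{KZ1}) and the fact that $a(s)$ stays close to $A>0$ near $s=0$ by Theorem \ref{fluxSol}), one obtains $E(\varphi,t_*)>0$, ruling out (\ref{Z2E1}) for $g$.

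The main obstacle is the boundary evaluation itself: the factor $1/\sqrt{\omega_1\omega_2\omega_3}$ is singular before being combined with $\Phi_0$, and one must pass carefully to the limit on each codimension-one face while verifying that the $\min$-structure of $\Phi_0$ forces $\Delta_{\varphi,0}$ to vanish identically on the edges and corner where two or three of the $\omega_i$ are zero simultaneously. This is essentially the case analysis encoded in the proof of Lemma \ref{Cont}, but the bookkeeping of the various strata requires attention.
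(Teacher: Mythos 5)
Your reduction is correct and matches the paper's: you decompose the excess $E(\varphi,t_*)$ as the boundary contribution $\iiint_{[0,\infty)^3\setminus(0,\infty)^3}$, kill the codimension-$\ge 2$ strata by invoking the vanishing of $\Delta_{\varphi,0}$ along the lines $\Gamma_{i,j}$ (exactly as the paper does via Lemma \ref{Cont}), evaluate the three face contributions, and collapse them--after using $\varphi(a)+\varphi(b)=\varphi(\min)+\varphi(\max)$--into
\[
E(\varphi,t_*)=\int_0^{t_*}m(t)\iint_{(0,\infty)^2}f(t,a)f(t,b)\bigl[\varphi(|a-b|)+\varphi(a+b)-2\varphi(\max\{a,b\})\bigr]\,da\,db\,dt,
\]
which is precisely the expression (\ref{OpGs}) in the paper. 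The face formulas, the combinatorial factor of $2$ for the two symmetric faces, and the observation that $m(t)>0$ via $J_n[G_0](1)<0$ and $a(s)>0$ near $s=0$ are all correct.

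The gap is in the final strict-positivity step. You choose $\varphi(\omega)=\omega^2\chi(\omega)$ and compute $H^1_\varphi(0,a,b)=2\min\{a,b\}^2\ge 0$ on $\{a+b\le R\}$, but this is a convexity identity and $\varphi$ is \emph{not} convex on the transition region of the cutoff $\chi$. Where $\chi$ bends down, $\varphi''<0$ and $H^1_\varphi$ can be strictly negative (for instance $a,b\in[R/2,R]$ with $a+b\ge 2R$ gives $H^1_\varphi\le |a-b|^2-2\max^2<0$). The negative contribution coming from $a,b$ both in the cutoff zone is of order $R^{3-2\rho}$, and the positive contribution coming from $a,b$ of order $R$ is of the \emph{same} order $R^{3-2\rho}$; for $\rho\in(\tfrac12,\tfrac32]$ these terms do not vanish as $R\to\infty$ and the sign of their sum is not determined by the general structure. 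So the assertion that the inner integral is ``absolutely convergent and strictly positive'' is not justified as written, and this is exactly the step on which the whole proof rests.

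The paper sidesteps this by a different tactic. For the pure power law $g=\omega^{-2/3}$ it computes the boundary functional explicitly (two integrations by parts, Fubini, and a rescaling), obtaining $\int_0^\infty\varphi(\xi)F(\xi)\,d\xi$ with $F(\xi)=c_*\xi^{-4/3}$ and $c_*\approx 0.33\ne 0$, a numerical fact that cannot be read off from convexity alone. For general $g$ it then uses rescaled test functions $\varphi(\omega)=\psi(\omega/\varepsilon)$ with $\psi$ compactly supported in $(0,\infty)$: the leading contribution scales like $m(t)K^2\varepsilon^{-1/3}$, coming entirely from the region $\omega\sim\varepsilon$ where $g\sim K\omega^{-2/3}$, and the cutoff and tail errors are shown to be of strictly lower order as $\varepsilon\to 0$. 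The small-scale choice of test function is what makes the argument robust with respect to the cutoff -- it localizes the computation to the power-law regime and avoids the problematic comparison of $O(R^{3-2\rho})$ quantities altogether. If you want to keep a positivity flavour, you would have to at least rescale $\varphi$ to small support (so the $K\omega^{-2/3}$ asymptotics apply uniformly) and then invoke the sign of the explicit $c_*$, which is effectively the paper's argument.
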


In order to prove Theorems \ref{WeakNI}, \ref{WeakI} it is convenient to
prove the following Lemma.
\begin{lemma}
\label{grad}Suppose $\rho <-\frac{1}{2},$ $g\in \mathcal{M}_{+}\left( %
\left[ 0,\infty \right) :\left( 1+\omega \right) ^{\rho }\right) ,\ \varphi
\in C_{0}^{2}\left( \left[ 0,\infty \right) \right) $. Then:%
\begin{eqnarray}
&&\iiint_{\left( 0,\infty \right) ^{3}}\frac{\Phi g_{1}g_{2}g_{3}}{%
\sqrt{\omega _{1}\omega _{2}\omega _{3}}}\left[ \varphi \left( \omega
_{1}+\omega _{2}-\omega _{3}\right) +\varphi \left( \omega _{3}\right)
- \label{KZP1}\right.\\
&&\hskip 2cm  \left.-\varphi \left( \omega _{1}\right) -\varphi \left( \omega _{2}\right) \right]
d\omega _{1}d\omega _{2}d\omega _{3} =\int_{\left( 0,\infty \right) }J_{n}%
\left[ g\right] \left( \omega \right) \varphi ^{\prime }\left( \omega
\right) d\omega\nonumber
\end{eqnarray}%
where $J_{n}\left[ g\right] $ is as in (\ref{KZP2}).
\end{lemma}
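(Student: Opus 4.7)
The plan is to reduce the identity to a Fubini argument. Writing $\omega_4=\omega_1+\omega_2-\omega_3$ and using the fundamental theorem of calculus on $\varphi\in C_0^2([0,\infty))$, I would first observe that for any $\omega_i\geq 0$, $\varphi(\omega_i)=\varphi(0)+\int_0^\infty \varphi'(s)\mathbf{1}_{\{s<\omega_i\}}\,ds$. Since the signed combination of four test-function values has $1+1-1-1=0$, the $\varphi(0)$ contributions cancel and one obtains the pointwise identity
\begin{equation*}
\varphi(\omega_3)+\varphi(\omega_4)-\varphi(\omega_1)-\varphi(\omega_2)=\int_0^\infty \varphi'(s)\bigl[\mathbf{1}_{\{\omega_3>s\}}+\mathbf{1}_{\{\omega_4>s\}}-\mathbf{1}_{\{\omega_1>s\}}-\mathbf{1}_{\{\omega_2>s\}}\bigr]\,ds.
\end{equation*}

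Next, I would invoke Fubini to interchange the $s$-integration with the integrations in $(\omega_1,\omega_2,\omega_3)$. The key algebraic step is the trivial but crucial binary identity, valid for any $\alpha,\beta,\gamma,\delta\in\{0,1\}$,
\begin{equation*}
\gamma+\delta-\alpha-\beta=(1-\beta)\gamma+(1-\alpha)\delta-\beta(1-\gamma)-\alpha(1-\delta),
\end{equation*}
applied with $\alpha=\mathbf{1}_{\{\omega_1>s\}}$, $\beta=\mathbf{1}_{\{\omega_2>s\}}$, $\gamma=\mathbf{1}_{\{\omega_3>s\}}$, $\delta=\mathbf{1}_{\{\omega_4>s\}}$. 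Each of the four resulting pieces, once integrated against $Q[g](\omega_1,\omega_2,\omega_3)=\Phi g_1g_2g_3/\sqrt{\omega_1\omega_2\omega_3}$ over $(0,\infty)^3$, matches exactly one of the four expressions defining the flux: the $(1-\beta)\gamma$ term gives $J_{n,1}$ (with $\omega_2<s$, $\omega_3>s$, $\omega_1$ free); the $-\beta(1-\gamma)$ term gives $J_{n,3}$; the $(1-\alpha)\delta$ term gives $J_{n,2}$ after rewriting $\omega_4>s$ as $\omega_3<\omega_1+\omega_2-s$ (which forces $\omega_1+\omega_2>s$, whence $\omega_2>s-\omega_1$); and the $-\alpha(1-\delta)$ term gives $J_{n,4}$ after rewriting $\omega_4<s$ as $\omega_3\in(\omega_1+\omega_2-s,\omega_1+\omega_2)$. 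Summing delivers the right-hand side $\int_0^\infty J_n[g](s)\varphi'(s)\,ds$.

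The main obstacle is justifying the application of Fubini, because $Q$ has coordinate singularities near the boundary of $(0,\infty)^3$ and $g$ is merely a measure in $\mathcal{M}_+([0,\infty):(1+\omega)^\rho)$ with $\rho<-\tfrac12$. I would control the integrand by splitting the domain into dyadic shells and using the estimate $\Phi\leq\min\{\sqrt{\omega_j}\}_{j=1}^4$ together with $\sup_{R}(1+R)^{-\rho}R^{-1}\int_{R/2}^R g(d\omega)<\infty$, along the lines of the convergence arguments in Lemma \ref{Cont} and Lemma \ref{LQg}. For large $\omega$ the bound $\rho<-\tfrac12$ ensures the tail is integrable (after symmetrizing to assign the $1/\sqrt{\omega_i}$ factors to the largest variables), while near the boundary the factor $\Phi$ provides the $\sqrt{\omega_-}$ needed to absorb one of the denominators. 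With absolute convergence secured, the interchange of order of integration is legitimate and the identity follows directly from the pointwise decomposition and the matching of domains described above.
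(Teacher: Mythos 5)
Your argument is correct, and it is essentially the paper's proof in a slightly more algebraic packaging. The paper first groups the four-term bracket into $K_1=\iiint Q\left[\varphi(\omega_3)-\varphi(\omega_2)\right]$ and $K_2=\iiint Q\left[\varphi(\omega_1+\omega_2-\omega_3)-\varphi(\omega_1)\right]$, and then for each group splits on the ordering of the two arguments before writing the difference as an integral of $\varphi'$ and applying Fubini; your indicator identity $\gamma+\delta-\alpha-\beta=(1-\beta)\gamma+(1-\alpha)\delta-\beta(1-\gamma)-\alpha(1-\delta)$ is precisely the same pairing ($\omega_3$ with $\omega_2$, $\omega_4$ with $\omega_1$) and the same ordering split, just carried out uniformly after a single application of the fundamental theorem of calculus; the domain-by-domain matching with $J_{n,1},\dots,J_{n,4}$ you describe coincides with what the paper obtains. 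One small point worth recording, which you implicitly rely on, is that the formula $\varphi(\omega_4)=\varphi(0)+\int_0^\infty\varphi'(s)\mathbf{1}_{\{s<\omega_4\}}\,ds$ is only meaningful for $\omega_4\geq 0$, but the kernel $\Phi$ (hence $Q$) vanishes when $\omega_4=\omega_1+\omega_2-\omega_3<0$, so this region contributes nothing. Your discussion of the Fubini justification via $\Phi\leq\min\{\sqrt{\omega_j}\}$ and the decay built into $\mathcal{M}_+\left(\left[0,\infty\right):(1+\omega)^\rho\right)$ with $\rho<-\tfrac12$ is slightly more explicit than the paper, which simply invokes Fubini; that additional care is welcome but not a new idea.
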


\begin{proof}
We rewrite the left hand side of (\ref{KZP1}) as:%
\begin{eqnarray*}
&&\iiint_{\left( 0,\infty \right) ^{3}}Q\left[ g\right] \left[
\varphi \left( \omega _{1}+\omega _{2}-\omega _{3}\right) +\varphi \left(
\omega _{3}\right) -\varphi \left( \omega _{1}\right) -\varphi \left( \omega
_{2}\right) \right] d\omega _{1}d\omega _{2}d\omega _{3}= \\
&&\hskip 1.5cm =K_{1}+K_{2}\\
&&\hskip 1cmK_{1} =\iiint_{\left( 0,\infty \right) ^{3}}Q\left[ g\right] \left[
\varphi \left( \omega _{3}\right) -\varphi \left( \omega _{2}\right) \right]
d\omega _{1}d\omega _{2}d\omega _{3} \\
&&\hskip 1cmK_{2} =\iiint_{\left( 0,\infty \right) ^{3}}Q\left[ g\right] \left[
\varphi \left( \omega _{1}+\omega _{2}-\omega _{3}\right) -\varphi \left(
\omega _{1}\right) \right] d\omega _{1}d\omega _{2}d\omega _{3}
\end{eqnarray*}

We now use the fact that $\iiint_{\left\{ \omega _{2}=\omega
_{3}\right\} }Q\left[ g\right] \left[ \varphi \left( \omega _{3}\right)
-\varphi \left( \omega _{2}\right) \right] =0$ to obtain:%
\[
K_{1}=\iiint_{\left\{ \omega _{2}<\omega _{3}\right\} }\left[ \cdot
\cdot \cdot \right] +\iiint_{\left\{ \omega _{2}>\omega _{3}\right\}
}\left[ \cdot \cdot \cdot \right] 
\]

Using then that $\varphi \left( \omega _{3}\right) -\varphi \left( \omega
_{2}\right) =\int_{\omega _{2}}^{\omega _{3}}\varphi ^{\prime }\left( \omega
\right) d\omega $ for $\omega _{2}<\omega _{3}$ and $\varphi \left( \omega
_{3}\right) -\varphi \left( \omega _{2}\right) =-\int_{\omega _{3}}^{\omega
_{2}}\varphi ^{\prime }\left( \omega \right) d\omega $ for $\omega
_{3}<\omega _{2}$ we obtain, applying Fubini's Theorem:%
\[
K_{1}=\int_{\left( 0,\infty \right) }J_{n,1}\left[ g\right] \left( \omega
\right) \varphi ^{\prime }\left( \omega \right) d\omega -\int_{\left(
0,\infty \right) }J_{n,3}\left[ g\right] \left( \omega \right) \varphi
^{\prime }\left( \omega \right) d\omega 
\]

On the other hand:%
\[
K_{2}=\iiint_{\left\{ \omega _{2}<\omega _{3}\right\} }\left[ \cdot
\cdot \cdot \right] +\iiint_{\left\{ \omega _{2}>\omega _{3}\right\}
}\left[ \cdot \cdot \cdot \right] 
\]

Using then $\varphi \left( \omega _{1}+\omega _{2}-\omega _{3}\right)
-\varphi \left( \omega _{1}\right) =\int_{\omega _{1}}^{\omega _{1}+\omega
_{2}-\omega _{3}}\varphi ^{\prime }\left( \omega \right) d\omega $ for $%
\omega _{2}>\omega _{3}$ and $\varphi \left( \omega _{1}+\omega _{2}-\omega
_{3}\right) -\varphi \left( \omega _{1}\right) =\int_{\omega _{1}+\omega
_{2}-\omega _{3}}^{\omega _{1}}\varphi ^{\prime }\left( \omega \right)
d\omega $ for $\omega _{3}>\omega _{2}$ we obtain, arguing in a similar
manner:%
\[
K_{2}=\int_{\left( 0,\infty \right) }J_{n,2}\left[ g\right] \left( \omega
\right) \varphi ^{\prime }\left( \omega \right) d\omega -\int_{\left(
0,\infty \right) }J_{n,4}\left[ g\right] \left( \omega \right) \varphi
^{\prime }\left( \omega \right) d\omega 
\]%
whence the result follows.
\end{proof}
\begin{proof}[Proof of Theorem \protect\ref{WeakNI}]
By assumption $f$ solves (\ref{E1}), (\ref{E2}). Multiplying (\ref{E1}) by $%
\sqrt{\omega _{1}}\tilde{\varphi}\left( \omega _{1}\right) $ with $\tilde{%
\varphi}\in C_{0}^{1}\left( \left[ 0,\infty \right) \right) ,\ $ where and $%
\tilde{\varphi}\left( 0\right) =0$ we obtain, after some changes of
variables:%
\begin{eqnarray}
&&\partial _{t}\left( \int g\tilde{\varphi}\right) =\iiint_{\left(
0,\infty \right) ^{3}}Q\left[ g\right] \left[ \tilde{\varphi}\left( \omega
_{1}+\omega _{2}-\omega _{3}\right) +\tilde{\varphi}\left( \omega
_{3}\right)-\right. \label{KZP8} \\
&&\hskip 6cm \left.  -\tilde{\varphi}\left( \omega _{1}\right) -\tilde{\varphi}\left(
\omega _{2}\right) \right] d\omega _{1}d\omega _{2}d\omega _{3} \nonumber
\end{eqnarray}%
where $g$ is as in (\ref{KZ4}), (\ref{KZ5}).

Given $\varphi \in C_{0}^{2}\left( \left[ 0,\infty \right) \right) $ we
split it as $\varphi =\phi _{\varepsilon }+\tilde{\varphi}_{\varepsilon }$
where $\phi _{\varepsilon },\tilde{\varphi}_{\varepsilon }\in
C_{0}^{1}\left( \left[ 0,\infty \right) \right) ,$ $\phi _{\varepsilon
}\left( \omega \right) =\left( 1-\frac{\omega }{\varepsilon }\right)
_{+}\varphi ,\ \tilde{\varphi}_{\varepsilon }\left( 0\right) =0.$ Then:%
\begin{eqnarray*}
&& \hskip -0.5cm \iiint_{\left( 0,\infty \right) ^{3}}Q\left[ g\right] \left[
\varphi \left( \omega _{1}+\omega _{2}-\omega _{3}\right) +\varphi \left(
\omega _{3}\right) -\varphi \left( \omega _{1}\right) -\varphi \left( \omega
_{2}\right) \right] d\omega _{1}d\omega _{2}d\omega _{3} \\
&&\hskip 1cm  =K_{3}+K_{4} \\
&&\hskip -0.5cm K_{3} =\iiint_{\left( 0,\infty \right) ^{3}}Q\left[ g\right] \left[
\tilde{\varphi}_{\varepsilon }\left( \omega _{1}+\omega _{2}-\omega
_{3}\right) +\tilde{\varphi}_{\varepsilon }\left( \omega _{3}\right) -\tilde{%
\varphi}_{\varepsilon }\left( \omega _{1}\right) -\tilde{\varphi}%
_{\varepsilon }\left( \omega _{2}\right) \right] d\omega _{1}d\omega
_{2}d\omega _{3} \\
&&\hskip -0.5cm K_{4} =\iiint_{\left( 0,\infty \right) ^{3}}Q\left[ g\right] \left[
\phi _{\varepsilon }\left( \omega _{1}+\omega _{2}-\omega _{3}\right) +\phi
_{\varepsilon }\left( \omega _{3}\right) -\phi _{\varepsilon }\left( \omega
_{1}\right) -\phi _{\varepsilon }\left( \omega _{2}\right) \right] d\omega
_{1}d\omega _{2}d\omega _{3}
\end{eqnarray*}

We now remark that, the asymptotics of $f$ as $\omega \rightarrow 0,$ stated
in Theorem \ref{fluxSol} implies:%
\begin{equation}
\lim_{\omega \rightarrow 0}\left[ J_{n}\left[ g\right] \left( \omega \right) %
\right] =\left( a\left( t\right) \right) ^{3}J_{n}\left[ G_{0}\right] \left(
1\right) ,\ \text{with }G_{0}\left( \omega \right) =\left( \omega \right) ^{-%
\frac{2}{3}}  \label{KZP7}
\end{equation}

The proof of (\ref{KZP7}) just requires to see that the asymptotics of $J_{n}%
\left[ g\right] \left( \omega \right) $ depends only on the local behaviour
of $g$ as $\omega \rightarrow 0.$ The arguments requires for the proof are
rather similar to the ones in the computations of the fluxes in \cite{Spohn}.

Applying Lemma \ref{grad} to compute $K_{4}$ we obtain:%
\[
K_{4}=\int_{\left( 0,\infty \right) }J_{n}\left[ g\right] \left( \omega
\right) \phi _{\varepsilon }^{\prime }\left( \omega \right) d\omega 
\]%
and taking the limit $\varepsilon \rightarrow 0$ we obtain:%
\begin{equation}
K_{4}=\varphi \left( 0\right) \lim_{\omega \rightarrow 0}\left[ J_{n}\left[ g%
\right] \left( \omega \right) \right] =-\varphi \left( 0\right) \left(
a\left( t\right) \right) ^{3}J_{n}\left[ G_{0}\right] \left( 1\right)
\label{Z1}
\end{equation}

On the other hand, we compute $\partial _{t}\left( \int_{\left[ 0,\infty
\right) }\varphi g\left( t,d\omega \right) \right) $ as:%
\begin{equation}
\partial _{t}\left( \int_{\left[ 0,\infty \right) }\varphi g\left( t,d\omega
\right) \right) =\partial _{t}\left( \int_{\left[ 0,\infty \right) }\phi
_{\varepsilon }\left( \omega \right) g\left( t,d\omega \right) \right)
+\partial _{t}\left( \int_{\left[ 0,\infty \right) }\tilde{\varphi}\left(
\omega \right) g\left( t,d\omega \right) \right)  \label{KZP9}
\end{equation}

We compute the difference $\partial _{t}\left( \int_{\left[ 0,\infty \right)
}\varphi g\left( t,d\omega \right) \right) -\left( K_{3}+K_{4}\right) .$
Using (\ref{KZP8}), (\ref{KZP9}) we obtain that this difference is:%
\[
\partial _{t}\left( \int_{\left[ 0,\infty \right) }\phi _{\varepsilon
}\left( \omega \right) g\left( t,d\omega \right) \right) -K_{4} 
\]

The integrated version of this equation is:%
\[
\int_{\left[ 0,\infty \right) }\phi _{\varepsilon }\left( \omega \right)
g\left( t,d\omega \right) -\int_{\left[ 0,\infty \right) }\phi _{\varepsilon
}\left( \omega \right) g_{0}\left( d\omega \right) -\int_{0}^{t}K_{4}ds 
\]

Taking the limit $\varepsilon \rightarrow 0$ and using (\ref{Z1}) we obtain:%
\[
\left[ m\left( t\right) +\int_{0}^{t}\left( a\left( s\right) \right)
^{3}J_{n}\left[ G_{0}\right] \left( 1\right) ds\right] \varphi \left(
0\right) 
\]

Using (\ref{KZ5}) we obtain the cancellation of this quantity, and the
result follows.
\end{proof}

The Proof of Theorem \ref{WeakI} is now elementary.
\begin{proof}[Proof of Theorem \protect\ref{WeakI}]
In order to prove that $g$ is a solution of (\ref{Z2E2a}) in the sense of
Definition \ref{weakSolutionNI} we need to compute $\partial _{t}\left(
\int_{\left[ 0,\infty \right) }\varphi g\left( t,d\omega \right) \right)
-\iiint_{\left[ 0,\infty \right) ^{3}}\left[ \cdot \cdot \cdot %
\right] .$ Since, as we have seen in the Proof of Theorem \ref{WeakNI}, $\partial
_{t}\left( \int_{\left[ 0,\infty \right) }\varphi g\left( t,d\omega \right)
\right) -\iiint_{\left( 0,\infty \right) ^{3}}\left[ \cdot \cdot
\cdot \right] =0$,  we have:
\begin{equation}
\partial _{t}\left( \int_{\left[ 0,\infty \right) }\varphi g\left( t,d\omega
\right) \right) -\iiint_{\left[ 0,\infty \right) ^{3}}\left[ \cdot
\cdot \cdot \right] =-m(t) \iiint_{\mathcal{Z}}\left[ \cdot \cdot \cdot %
\right] \ \   \label{SolI}
\end{equation}%
where $\mathcal{Z}=\left( \left\{ 0\right\} \times \left[ 0,\infty \right)
^{2}\right) \cup \left( \left[ 0,\infty \right) \times \left\{ 0\right\}
\times \left[ 0,\infty \right) \right) \cup \left( \left[ 0,\infty \right)
^{2}\times \left\{ 0\right\} \right) .$ We now use the fact that Lemma \ref%
{Cont} implies that the lines $\Gamma _{i,j}=\left\{ \omega _{i}=\omega
_{j}=0\right\} $ do not contribute to the integral. We can then replace the
set  $\mathcal{Z}$ by $\mathcal{Y=}\left( \left\{ 0\right\}
\times \left( 0,\infty \right) ^{2}\right) \cup \left( \left( 0,\infty
\right) \times \left\{ 0\right\} \times \left( 0,\infty \right) \right) \cup
\left( \left( 0,\infty \right) ^{2}\times \left\{ 0\right\} \right) .$ Then
the integral in right-hand side of (\ref{SolI}) becomes: 
\begin{eqnarray*}
\iint_{\left( 0,\infty \right) ^{2}\cap \left\{
\omega _{2}>\omega _{3}\right\} }\!\!\frac{g_{2}g_{3}}{\sqrt{\omega _{2}\omega
_{3}}}\left[ \varphi \left( \omega _{2}-\omega _{3}\right) +\varphi \left(
\omega _{3}\right) -\varphi \left( 0\right) -\varphi \left( \omega
_{2}\right) \right] d\omega _{2}d\omega _{3}+ \\
+ \iint_{\left( 0,\infty \right) ^{2}\cap \left\{
\omega _{1}>\omega _{3}\right\} }\!\!\frac{g_{1}g_{3}}{\sqrt{\omega _{1}\omega
_{3}}}\left[ \varphi \left( \omega _{1}-\omega _{3}\right) +\varphi \left(
\omega _{3}\right) -\varphi \left( \omega _{1}\right) -\varphi \left(
0\right) \right] d\omega _{1}d\omega _{3}+ \\
+\iint_{\left( 0,\infty \right) ^{2}}\!\!\frac{g_{1}g_{2}%
}{\sqrt{\omega _{1}\omega _{2}}}\left[ \varphi \left( \omega _{1}+\omega
_{2}\right) +\varphi \left( 0\right) -\varphi \left( \omega _{1}\right)
-\varphi \left( \omega _{2}\right) \right] d\omega _{1}d\omega _{2}
\end{eqnarray*}

We will assume that $\varphi $ is compactly supported in $\left( 0,\infty
\right) $ in order to have convergence of the integrals. Therefore $\varphi
\left( 0\right) =0.$ Relabelling $\omega _{3}$ to $\omega _{2}$ in the
second integral we obtain: 
\begin{eqnarray*}
&&2 \iint_{\left( 0,\infty \right) ^{2}\cap \left\{
\omega _{1}>\omega _{2}\right\} }\frac{g_{1}g_{2}}{\sqrt{\omega _{1}\omega
_{2}}}\left[ \varphi \left( \omega _{1}-\omega _{2}\right) +\varphi \left(
\omega _{2}\right) -\varphi \left( \omega _{1}\right) \right] d\omega
_{1}d\omega _{2}+ \\
&&+ \iint_{\left( 0,\infty \right) ^{2}}\frac{g_{1}g_{2}%
}{\sqrt{\omega _{1}\omega _{2}}}\left[ \varphi \left( \omega _{1}+\omega
_{2}\right) -\varphi \left( \omega _{1}\right) -\varphi \left( \omega
_{2}\right) \right] d\omega _{1}d\omega _{2}
\end{eqnarray*}

A symmetrization argument and rearrangement of the different terms yields:%
\begin{eqnarray*}
 \iint_{\left( 0,\infty \right) ^{2}}\frac{g_{1}g_{2}}{%
\sqrt{\omega _{1}\omega _{2}}}\left[ \varphi \left( \omega _{1}+\omega
_{2}\right) +\varphi \left( \left\vert \omega _{1}-\omega _{2}\right\vert
\right) +\varphi \left( \min \left\{ \omega _{1},\omega _{2}\right\} \right)-\right.\\
\left.-\varphi \left( \max \left\{ \omega _{1},\omega _{2}\right\} \right)
-\varphi \left( \omega _{1}\right) -\varphi \left( \omega _{2}\right) \right]
d\omega _{1}d\omega _{2}
\end{eqnarray*}

Since the integrand is symmetric under the transformation $\omega
_{1}\longleftrightarrow \omega _{2}$ we can transform this integral in:%
\begin{eqnarray}
&&\iint _{\substack{( 0,\infty ) ^{2}\\ \left\{
\omega _{1}>\omega _{2}\right\}} }\frac{g_{1}g_{2}}{\sqrt{\omega _{1}\omega
_{2}}}\left[ \varphi \left( \omega _{1}+\omega _{2}\right) +\varphi \left(
\omega _{1}-\omega _{2}\right) -2\varphi \left( \omega _{1}\right) \right]
d\omega _{1}d\omega _{2}  \label{OpGs}
\end{eqnarray}

Suppose first that $g\left( \omega \right) =\omega ^{-\frac{2}{3}}.$ In this
case this integral reduces to:%
\[
2\int_{0}^{\infty }\frac{d\omega _{1}}{\left( \omega
_{1}\right) ^{\frac{7}{6}}}\int_{0}^{\omega _{1}}\frac{d\omega _{2}}{\left(
\omega _{2}\right) ^{\frac{7}{6}}}\left[ \varphi \left( \omega _{1}+\omega
_{2}\right) +\varphi \left( \omega _{1}-\omega _{2}\right) -2\varphi \left(
\omega _{1}\right) \right] 
\]

Integrating by parts in the integral in $\omega _{2}$ we obtain:%
\begin{eqnarray*}
&&-12\int_{0}^{\infty }\frac{d\omega _{1}}{\left( \omega
_{1}\right) ^{\frac{4}{3}}}\left[ \varphi \left( 2\omega _{1}\right)
+\varphi \left( 0\right) -2\varphi \left( \omega _{1}\right) \right]+\\
&&\hskip 1cm +12 \int_{0}^{\infty }\frac{d\omega _{1}}{\left( \omega
_{1}\right) ^{\frac{7}{6}}}\int_{0}^{\omega _{1}}d\omega _{2}\left( \omega
_{2}\right) ^{-\frac{1}{6}}\left[ \varphi ^{\prime }\left( \omega
_{1}+\omega _{2}\right) -\varphi ^{\prime }\left( \omega _{1}-\omega
_{2}\right) \right] 
\end{eqnarray*}

Applying Fubini in the second integral this integral becomes:%
\begin{eqnarray*}
&&-12 \int_{0}^{\infty }\frac{d\omega _{1}}{\left( \omega
_{1}\right) ^{\frac{4}{3}}}\left[ \varphi \left( 2\omega _{1}\right)
+\varphi \left( 0\right) -2\varphi \left( \omega _{1}\right) \right]+\\
&&\hskip 1cm 
+12\int_{0}^{\infty }\left( \omega _{2}\right) ^{-\frac{1}{6%
}}d\omega _{2}\int_{\omega _{2}}^{\infty }\left[ \varphi ^{\prime }\left(
\omega _{1}+\omega _{2}\right) -\varphi ^{\prime }\left( \omega _{1}-\omega
_{2}\right) \right] \frac{d\omega _{1}}{\left( \omega _{1}\right) ^{\frac{7}{%
6}}}
\end{eqnarray*}
and integrating by parts in the integral with respect to $\omega _{1}$
the integral becomes:
\begin{eqnarray*}
&&-24 \int_{0}^{\infty }\frac{d\omega _{1}}{\left( \omega
_{1}\right) ^{\frac{4}{3}}}\left[ \varphi \left( 2\omega _{1}\right)
-\varphi \left( \omega _{1}\right) \right] +\\
&&+14\int_{0}^{\infty }\left( \omega _{2}\right) ^{-\frac{1%
}{6}}d\omega _{2}\int_{\omega _{2}}^{\infty }\left[ \varphi \left( \omega
_{1}+\omega _{2}\right) -\varphi \left( \omega _{1}-\omega _{2}\right) %
\right] \frac{d\omega _{1}}{\left( \omega _{1}\right) ^{\frac{13}{6}}}
\end{eqnarray*}

Using the changes of variables $\xi =2\omega _{1}$ in the integral
containing $\varphi \left( 2\omega _{1}\right) ,$ $\xi =\omega _{1}+\omega
_{2}$ in the integral containing $\varphi \left( \omega _{1}+\omega
_{2}\right) $ and $\xi =\omega _{1}-\omega _{2}$ in the integral containing $%
\varphi \left( \omega _{1}-\omega _{2}\right) $ we obtain, applying Fubini
in the second term:%
\begin{align*}
&24 \left( 2^{\frac{1}{3}}-1\right) \int_{0}^{\infty }\frac{%
\varphi \left( \xi \right) d\xi }{\left( \xi \right) ^{\frac{4}{3}}}%
-14\int_{0}^{\infty }\varphi \left( \xi \right) d\xi
\int_{0}^{\frac{\xi }{2}}\left( \omega _{2}\right) ^{-\frac{1}{6}}\left( \xi
-\omega _{2}\right) ^{-\frac{13}{6}}d\omega _{2}+ \\
& +14\int_{0}^{\infty }\varphi \left( \xi \right) d\xi
\int_{0}^{\infty }\left( \omega _{2}\right) ^{-\frac{1}{6}}\left( \xi
+\omega _{2}\right) ^{-\frac{13}{6}}d\omega _{2} =\int_{0}^{\infty }\varphi \left( \xi \right) F\left( \xi
\right) d\xi 
\end{align*}%
\begin{eqnarray*}
&&F\left( \xi \right)  =  \frac{24\left( 2^{\frac{1}{3}}-1\right) }{\left( \xi
\right) ^{\frac{4}{3}}}-14\int_{0}^{\frac{\xi }{2}}\left( \omega _{2}\right)
^{-\frac{1}{6}}\left( \xi -\omega _{2}\right) ^{-\frac{13}{6}}d\omega
_{2}+\\
&&\hskip 6cm +14\int_{0}^{\infty }\left( \omega _{2}\right) ^{-\frac{1}{6}}\left( \xi
+\omega _{2}\right) ^{-\frac{13}{6}}d\omega _{2}
\end{eqnarray*}
A rescaling argument yields:%
\begin{equation}
F\left( \xi \right) =\frac{c_{\ast }}{\left( \xi \right) ^{\frac{4}{3}}}\ 
\label{Fsize}
\end{equation}%
where:%
\[c_{\ast }=24\left( 2^{\frac{1}{3}}-1\right) -14\int_{0}^{\frac{1}{2}}
\left(x\right) ^{-\frac{1}{6}}\left( 1-x\right) ^{-\frac{13}{6}}dx+14\int_{0}^{\infty }
\left( x\right) ^{-\frac{1}{6}}\left( 1+x\right) ^{-\frac{13}{6}}dx\]

Using the change of variables $y=(1+x)^{-1}$ in the last integral we can
transform it in a Beta function. The second integral can be written in terms
of the incomplete Beta function $B_{z}\left( \frac{5}{6},-\frac{7}{6}\right)$
(cf. \cite{AS}). Then: 
\[
c_{\ast }=24\left( 2^{\frac{1}{3}}-1\right) -14B_{z=\frac{1}{2}}\left( \frac{%
5}{6},-\frac{7}{6}\right) +14B\left( \frac{4}{3},\frac{5}{6}\right) 
\]
and its numerical value is $c_{\ast }=0.32964...\neq 0$.
Therefore, if $g=\omega ^{-\frac{2}{3}}$ we obtain that the right-hand side
of (\ref{SolI}) defines a nonzero functional. In particular there exist
functions $\varphi $ for which the integral $\iiint_{\mathcal{Z}}%
\left[ \cdot \cdot \cdot \right] \ $\ is different from zero. 

On the other hand, for arbitrary functions $g$ obtained as $g=\sqrt{\omega }f
$ with $f$ as in Theorem \ref{fluxSol} we obtain a similar result due to the
fact that $g\left( \omega \right) $ is asymptotically close to $K\omega ^{-%
\frac{2}{3}}$ as $\omega \rightarrow 0.$ Indeed, we can consider test
functions $\varphi \left( \omega \right) =\psi \left( \frac{\omega }{%
\varepsilon }\right) ,$ with $\varepsilon >0$ small and $\psi $ compactly
supported in $\left( 0,\infty \right) $. We can approximate $g$ by means of $%
K\omega ^{-\frac{2}{3}}$ with an error of order $\delta _{1}\omega ^{-\frac{2%
}{3}}$ with $\delta _{1}$ small, if $0<\omega <\delta _{2}.$ Due to (\ref%
{Fsize}) we obtain that the contribution to (\ref{OpGs}) of the leading term 
$K\omega ^{-\frac{2}{3}}$ is of order $m\left( t\right) K^{2}\varepsilon ^{-%
\frac{1}{3}}.$ The error term due to the remainder $\delta _{1}\omega ^{-%
\frac{2}{3}}$ in the region $0<\omega <\delta _{2}$ can be estimated using (%
\ref{OpGs}) by:%
\[
m\left( t\right) \delta _{1}\int_{0}^{\delta _{2}}\frac{d\omega _{1}}{\left(
\omega _{1}\right) ^{\frac{7}{6}}}\int_{0}^{\omega _{1}}\frac{d\omega _{2}}{%
\left( \omega _{2}\right) ^{\frac{7}{6}}}\left\vert \varphi \left( \omega
_{1}+\omega _{2}\right) +\varphi \left( \omega _{1}-\omega _{2}\right)
-2\varphi \left( \omega _{1}\right) \right\vert 
\]

Using the form of the function $\varphi $ and a rescaling argument we estimate this term
by $m(t)\delta _{1}\varepsilon ^{-\frac{1}{3}}.$ This
contribution is small compared with the leading term. On the other hand, the
regions where some of the variables $\omega _{1},\ \omega _{2}$ are larger
than $\delta _{2}$ can be estimated, due to the fact that the support of $%
\varphi $ has size $\varepsilon ,$ as well as $\omega _{1}>\omega _{2},$\ by:%
\[
m\left( t\right) \iint_{\left( 0,\infty \right) ^{2}\cap \left\{ \omega
_{1}>\omega _{2}\geq \frac{\delta }{2}\right\} }\frac{g_{1}g_{2}}{\sqrt{%
\omega _{1}\omega _{2}}}\left\vert \varphi \left( \omega _{1}-\omega
_{2}\right) \right\vert d\omega _{1}d\omega _{2}
\]

Since by assumption $g\left( \omega \right) \leq \frac{C}{\left( 1+\omega
\right) ^{\rho }}$ for $\omega \geq 1,$ with $\rho >\frac{1}{2}$ and the
support of $\varphi $ has size $\varepsilon ,$ we can estimate this term as:%
\[
C_{\delta _{2}}m\left( t\right) \varepsilon \iint_{\left( 0,\infty
\right) ^{2}\cap \left\{ \omega _{1}>\omega _{2}\geq \frac{\delta _{2}}{2}%
\right\} }\frac{d\omega }{\left( 1+\omega \right) ^{2\rho +1}}\leq C_{\delta
_{2}}m\left( t\right) \varepsilon 
\]

Therefore, the contribution of this term is negligible if $\varepsilon $ is
small enough. This concludes the proof of the result.
\end{proof}

\begin{remark}\index{condensate}
If we had not added the mass at $\omega =0$ in (\ref{KZ1}) or (%
\ref{KZ4}) the resulting measures $g$ would not be a weak solution of (\ref%
{Z2E2a}) in the sense of neither Definition \ref{weakSolution} or \ref%
{weakSolutionNI}. Due to the absence of condensate in $g$ both definitions
would be equivalent and then it is enough to check that Definition \ref%
{weakSolutionNI} fails. This follows from the fact that the term\ $\partial
_{t}\left( \int_{\left[ 0,\infty \right) }\varphi g\left( t,d\omega \right)
\right) -\left( K_{3}+K_{4}\right) $ computed in the Proof of Theorem \ref%
{WeakNI} yields $\varphi \left( 0\right) \int_{0}^{t}\left( a\left( s\right)
\right) ^{3}J_{n}\left[ G_{0}\right] \left( 1\right) ds\neq 0.$
\end{remark}

\begin{remark}
We have proved that the choice of mass at $\omega =0$ in (\ref{KZ4}) gives a
measure $g$ which does not solve (\ref{Z2E2a}) in the sense of Definition %
\ref{weakSolution}. It is natural to ask if other choice of the mass at $%
\omega =0$ could give such a solution. In the mass conserving case this
cannot happen because the choice of the mass made in (\ref{KZ4}) at $\omega
=0$ is the only one compatible with mass conservation for the measure $g.$
\end{remark}

The existence of weak solutions with interacting condensate has been considered in \cite{Lu3} for the Nordheim equation. \index{Nordheim}

\subsubsection{Negativity of the fluxes.}

We can now prove that $J_{n}\left[ G_{0}\right] \left( 1\right) 
$, the constant  that characterizes the fluxes from $\left\{ \omega >0\right\} $ to $%
\left\{ \omega =0\right\} $, is strictly negative. This has been shown in 
\cite{DNPZ} using a representation formula for the fluxes inspired in the
computations of Zakharov which give the exponents characterizing the
stationary power law solutions of the weak turbulence \index{weak turbulence}equations. In \cite%
{Spohn} the negativity of this constant has been obtained computing it
numerically. We prove here that the negativity of this constant is a
consequence of the Monotonicity Formula. We remark that the choice of signs
in \cite{DNPZ} is the reverse of the one used in \cite{Spohn} as well as in
this paper. It is assumed in \cite{DNPZ} that fluxes tranporting particles
from larger values of $\omega $ to smaller values of $\omega $ are positive.

\begin{theorem}
The constant $J_{n}\left[ G_{0}\right] \left( 1\right) $ has the following
representation formula:%
\begin{eqnarray}
&&J_{n}\left[ G_{0}\right] \left( 1\right) = -\frac{1}{3}\iiint_{\left(
0,\infty \right) ^{3}}\,\frac{d\omega _{1}d\omega _{2}d\omega _{3}}{\left(
\omega _{1}\omega _{2}\omega _{3}\right) ^{\frac{7}{6}}}\times \label{signo}\\
&&\hskip 2cm \times \left[ \sqrt{\omega
_{-}}H_{\varphi }^{1}\left( \omega _{1},\omega _{2},\omega _{3}\right) +%
\sqrt{\left( \omega _{0}+\omega _{-}-\omega _{+}\right) _{+}}H_{\varphi
}^{2}\left( \omega _{1},\omega _{2},\omega _{3}\right) \right] \ 
\nonumber
\end{eqnarray}%
where the functions $H_{\varphi }^{1}$,\ $H_{\varphi }^{2}$ are as in Lemma %
\ref{strictConvex} with $\varphi \left( \omega \right) =\left( 1-\omega
\right) _{+}.$ Moreover $J_{n}\left[ G_{0}\right] \left( 1\right) <0.$
\end{theorem}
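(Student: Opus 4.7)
The plan is to combine Lemma \ref{grad} with Lemma \ref{strictConvex}, applied to the scale-invariant profile $G_0(\omega)=\omega^{-2/3}$ and the convex test function $\varphi(\omega)=(1-\omega)_+$. First I would observe that $J_n[G_0]$ is constant in $\omega$: under the rescaling $\omega_i\mapsto\lambda\omega_i$ (with $\omega\mapsto\lambda\omega$), the integrand $Q[G_0]=\Phi\,(\omega_1\omega_2\omega_3)^{-7/6}$ scales as $\lambda^{-3}$ (the density $(\omega_1\omega_2\omega_3)^{-7/6}$ contributing $\lambda^{-7/2}$ and $\Phi$ contributing $\lambda^{1/2}$), which is exactly compensated by the Jacobian $\lambda^3$ from the three-dimensional domains in (\ref{KZP3})--(\ref{KZP6}). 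Hence $J_n[G_0](\omega)\equiv J_n[G_0](1)$.

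Next I apply Lemma \ref{grad} with $g=G_0$ and $\varphi(\omega)=(1-\omega)_+$. Since $\varphi'=-\chi_{(0,1)}$, its right-hand side is $-\int_0^1 J_n[G_0](\omega)\,d\omega=-J_n[G_0](1)$ by the scale invariance just established. For the left-hand side I use that the totally symmetric density $(\omega_1\omega_2\omega_3)^{-7/6}$ allows each of the six permutations appearing in the definition of $\mathcal{G}_{0,\varphi}$ (from Proposition \ref{atractiveness}) to be reduced, by a change of variables, to the same canonical integral; the prefactor $1/6$ is thus absorbed and one obtains $\iiint_{(0,\infty)^3}(\omega_1\omega_2\omega_3)^{-7/6}\Phi H_\varphi\,d\omega=\iiint_{(0,\infty)^3}(\omega_1\omega_2\omega_3)^{-7/6}\mathcal{G}_{0,\varphi}\,d\omega$. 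Expanding $\mathcal{G}_{0,\varphi}$ via Lemma \ref{strictConvex} immediately produces (\ref{signo}).

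For the sign, note that $\varphi=(1-\omega)_+$ is convex, so $-\varphi$ is concave, and Lemma \ref{strictConvex} applied to $-\varphi$ gives $H_{-\varphi}^1,H_{-\varphi}^2\leq 0$, i.e., $H_\varphi^1,H_\varphi^2\geq 0$. The integrand of (\ref{signo}) is therefore nonnegative, and so $J_n[G_0](1)\leq 0$. For strict negativity I would exhibit an open set on which $H_\varphi^1$ is strictly positive. For $(\omega_+,\omega_0,\omega_-)$ near $(0.9,0.8,0.5)$, one has $\omega_++\omega_--\omega_0=0.6$ and $\omega_++\omega_0-\omega_-=1.2$, so $H_\varphi^1=\varphi(0.6)+\varphi(1.2)-2\varphi(0.9)=0.4+0-0.2=0.2>0$; this arises because the computation straddles the kink of $\varphi$ at $\omega=1$, outside of which $\varphi$ is affine and $H_\varphi^1$ vanishes identically. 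A neighborhood of this triple has positive Lebesgue measure and the prefactor $\sqrt{\omega_-}$ is strictly positive there, so the integral is strictly positive.

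The main technical obstacle is the invocation of Lemma \ref{grad} in a setting where neither $G_0\in\mathcal{M}_+([0,\infty):(1+\omega)^\rho)$ for any admissible $\rho$, nor $\varphi\in C_0^2([0,\infty))$. To justify the identity I would mollify $\varphi$ to $\varphi_\varepsilon\in C_0^2$ with $\varphi_\varepsilon'\to-\chi_{(0,1)}$ in $L^1$, truncate $G_0$ to $G_0\chi_{[1/N,N]}$, and pass to the limit $\varepsilon\to 0,\ N\to\infty$; the absolute convergence of all the triple integrals involved, as well as of $J_n[G_0]$ itself, rests on the same bounds already implicit in Theorem \ref{fluxSol} and in the well-posedness of the KZ fluxes (\ref{KZP3})--(\ref{KZP6}), a point that makes the limit admissible.
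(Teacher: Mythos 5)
Your proof is correct and follows essentially the same route as the paper's: apply Lemma \ref{grad} with $g=G_0$ and $\varphi(\omega)=(1-\omega)_+$, use scale invariance to reduce the right-hand side to $-J_n[G_0](1)$, symmetrize via Proposition \ref{atractiveness} and expand $\mathcal G_{0,\varphi}$ via Lemma \ref{strictConvex} to get (\ref{signo}), and conclude the strict sign from convexity of $\varphi$ together with strict positivity of $H^1_\varphi$ on an open set. The one point on which you go beyond the paper is the explicit acknowledgment that $G_0\notin\mathcal M_+([0,\infty):(1+\omega)^\rho)$ (because of the singularity at $\omega=0$) and $\varphi\notin C_0^2$, so Lemma \ref{grad} does not apply verbatim; the truncation-and-mollification scheme you sketch is the right way to close that gap, which the paper leaves implicit.
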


\begin{proof}
We take as starting point (\ref{KZP1}) with $g=G_{0}$. In this case $J_{n}%
\left[ g\right] \left( \omega \right) $ is constant. Using the test function 
$\varphi \left( \omega \right) =\left( 1-\omega \right) _{+}$ the right-hand
side of (\ref{KZP1}) reduces to $-J_{n}\left[ G_{0}\right] \left( 1\right) .$
On the other hand, using Proposition \ref{atractiveness} and Lemma \ref%
{strictConvex} we can rewrite the left-hand side of (\ref{KZP1}) as the
right-hand side of (\ref{signo}) with the reverse sign. Due to Lemma \ref%
{strictConvex} we have $H_{\varphi }^{1}\geq 0$,\ $H_{\varphi }^{2}\geq 0.$
Moreover, these functions are strictly positive at least in some compact
subsets of $\left( 0,\infty \right) ^{3},$ whence the result follows.
\end{proof}

\subsubsection{Energy fluxes.}

We can obtain formulas for the energy fluxes analogous to (\ref{KZP1}). This
formula will allow us to prove, using elementary dimensional analysis
arguments, that for the Kolmogorov-Zakharov \index{Kolmogorov-Zakharov}solutions the fluxes of energy vanish.

\begin{lemma}
\label{EnLemma}Let Suppose that $\rho <-\frac{1}{2},$ $g\in \mathcal{M}%
_{+}\left( \left[ 0,\infty \right) :\left( 1+\omega \right) ^{\rho }\right)
,\ \varphi \in C_{0}^{2}\left( \left[ 0,\infty \right) \right) $. Let $%
\varphi \left( \omega \right) =\omega \psi \left( \omega \right) .$ Then:%
\begin{eqnarray}
&&\iiint_{\left( 0,\infty \right) ^{3}}\frac{\Phi g_{1}g_{2}g_{3}}{%
\sqrt{\omega _{1}\omega _{2}\omega _{3}}}\left[ \varphi \left( \omega
_{1}+\omega _{2}-\omega _{3}\right) +\varphi \left( \omega _{3}\right)
-\varphi \left( \omega _{1}\right) -\varphi \left( \omega _{2}\right) \right]
d\omega _{1}d\omega _{2}d\omega _{3}\label{EnFlux}\\
&&\hskip 2cm =\int_{\left( 0,\infty \right) }J_{e}%
\left[ g\right] \left( \omega \right) \varphi ^{\prime }\left( \omega
\right) d\omega  \nonumber
\end{eqnarray}%
where: 
\begin{eqnarray}
&&J_{e}\left[ g\right] \left( \omega \right) =\iiint_{\left( 0,\infty
\right) ^{3}}\frac{\Phi g_{1}g_{2}g_{3}}{\sqrt{\omega _{1}\omega _{2}\omega
_{3}}}F\left( \omega _{1},\omega _{2},\omega _{3};\omega \right) d\omega
_{1}d\omega _{2}d\omega _{3}\   \label{JE}\\
&&F\left( \omega _{1},\omega _{2},\omega _{3};\omega \right) =\left( \omega
_{1}+\omega _{2}-\omega _{3}\right) \Omega \left( \omega _{1},\omega ,\omega
_{1}+\omega _{2}-\omega _{3}\right) +\nonumber\\
&&\hskip 5cm +\omega _{3}\Omega \left( \omega
_{1},\omega ,\omega _{3}\right) -\omega _{2}\Omega \left( \omega _{1},\omega
,\omega _{3}\right) \nonumber
\end{eqnarray}
where the function $\Omega \left( \xi ,\zeta ,\eta \right) $ is defined as
follows:%
\begin{eqnarray*}
\Omega \left( \xi ,\zeta ,\eta \right) &=&1\text{ if }\xi <\zeta <\eta \\
\Omega \left( \xi ,\zeta ,\eta \right) &=&-1\text{ if }\eta <\zeta <\xi \\
\Omega \left( \xi ,\zeta ,\eta \right) &=&0\text{ \ otherwise }
\end{eqnarray*}
\end{lemma}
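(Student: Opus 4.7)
The plan is to follow the scheme of Lemma~\ref{grad} and specialize to the test function $\varphi(\omega)=\omega\psi(\omega)$ at the end. First I would split the bracket on the left-hand side of~(\ref{EnFlux}) as $[\varphi(\omega_4)-\varphi(\omega_1)]+[\varphi(\omega_3)-\varphi(\omega_2)]$, with $\omega_4=\omega_1+\omega_2-\omega_3$. The fundamental theorem of calculus turns each bracket into $\int_{a}^{b}\varphi'(y)\,dy=\int_{0}^{\infty}\varphi'(y)\,\Omega(a,y,b)\,dy$ for the appropriate endpoints. Fubini's theorem then converts the triple collision integral into an outer integral over $\omega$ of $\varphi'(\omega)$ against an inner triple integral, and that inner triple integral is the candidate $J_e[g](\omega)$. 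The convergence needed to apply Fubini is guaranteed by the hypothesis $\rho<-1/2$, the bound $\Phi\le\sqrt{\min_j\omega_j}$, and the compact support of $\varphi$, exactly as in Lemma~\ref{grad}.

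To rewrite the flux obtained in this way as the compact expression $F$ stated in~(\ref{JE}), one uses the specific structure $\varphi(\omega)=\omega\psi(\omega)$ together with the energy conservation identity $\omega_1+\omega_2=\omega_3+\omega_4$. Concretely, one performs an integration by parts in $y$ inside the inner integrals, exploiting $\varphi(y)=y\psi(y)$ to transfer the factor of $y$ onto the indicator functions and absorb it as a weight on the corresponding $\Omega$. The symmetry of the collision kernel $\Phi\,g_1g_2g_3/\sqrt{\omega_1\omega_2\omega_3}$ under the swap $\omega_1\leftrightarrow\omega_2$ is then used to consolidate the resulting $\Omega$-terms so that each has $\omega_1$ as its first argument, producing the three contributions of $F$ with weights $\omega_1+\omega_2-\omega_3$, $\omega_3$, and $-\omega_2$ as in the statement.

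The main obstacle is this last bookkeeping step: one must keep careful track of signs and integration regions when inserting the $\omega_j$-weights into the $\Omega$'s, and one must use energy conservation to collapse the four symmetric $\Omega$-contributions naturally arising from the Fubini step into the three asymmetric ones appearing in $F$. The integrability of every intermediate expression rests on the estimates already used in the proofs of Lemma~\ref{grad} and Proposition~\ref{atractiveness}, so no new analytic input is needed beyond the algebraic rearrangement that produces the specific form of $F$.
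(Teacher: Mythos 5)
Your proposal captures the right skeleton — decompose the collision bracket via the fundamental theorem of calculus, Fubini, and the energy conservation identity $\omega_1+\omega_2=\omega_3+\omega_4$ — but the crucial step, converting from an identity in $\varphi'$ to one in $\psi'$, is misidentified. The paper does not pass through the particle-flux form of Lemma~\ref{grad} at all: it writes down a direct pointwise identity expressing the bracket in terms of $\psi'$ from the outset,
\begin{align*}
&\varphi(\omega_1+\omega_2-\omega_3)+\varphi(\omega_3)-\varphi(\omega_1)-\varphi(\omega_2)\\
&\qquad=(\omega_1+\omega_2-\omega_3)\int_{\omega_1}^{\omega_1+\omega_2-\omega_3}\psi'\;+\;\omega_3\int_{\omega_1}^{\omega_3}\psi'\;-\;\omega_2\int_{\omega_1}^{\omega_2}\psi',
\end{align*}
obtained by writing each $\omega_j\psi(\omega_j)$ as $\omega_j[\psi(\omega_j)-\psi(\omega_1)]+\omega_j\psi(\omega_1)$ and noting that the leftover $\psi(\omega_1)$-coefficients cancel because $\omega_4+\omega_3-\omega_2-\omega_1=0$. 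All three inner integrals are anchored at $\omega_1$ — not split as $[\varphi(\omega_4)-\varphi(\omega_1)]+[\varphi(\omega_3)-\varphi(\omega_2)]$ as in Lemma~\ref{grad} — and Fubini then gives~(\ref{JE}) in one step, with no symmetrization of the kernel involved.

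The mechanism you sketch, integration by parts in $y$ inside the inner integrals to ``transfer the factor of $y$ onto the $\Omega$'s,'' does not achieve this: for the inner integral $\int\varphi'(y)\Omega(a,y,b)\,dy$ a direct integration by parts is circular, and integrating by parts at the outer level only yields $J_e(\omega)=\omega J_n(\omega)-\int_0^\omega J_n$, which expresses the energy flux through an antiderivative of the particle flux and still requires the algebraic anchoring identity above to reach the closed form $F$. The $\omega_1\leftrightarrow\omega_2$ symmetry of the kernel plays no role in the paper's argument. Separately, the statement as printed contains two transcription slips that a correct derivation should surface: the pairing in (\ref{EnFlux}) should be against $\psi'$, not $\varphi'$, and the last term of $F$ in (\ref{JE}) should read $-\omega_2\,\Omega(\omega_1,\omega,\omega_2)$ rather than $-\omega_2\,\Omega(\omega_1,\omega,\omega_3)$; the same $\omega_3\to\omega_2$ slip appears in the upper limit of the final integral of the paper's printed proof.
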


\begin{proof}
Using the definition of $\varphi $ we can write:%
\begin{eqnarray*}
&&\varphi \left( \omega _{1}+\omega _{2}-\omega _{3}\right) +\varphi \left(
\omega _{3}\right) -\varphi \left( \omega _{1}\right) -\varphi \left( \omega
_{2}\right) = \\
&&=\left( \omega _{1}+\omega _{2}-\omega _{3}\right) \int_{\omega
_{1}}^{\omega _{1}+\omega _{2}-\omega _{3}}\psi ^{\prime }\left( \omega
\right) d\omega +\omega _{3}\int_{\omega _{1}}^{\omega _{3}}\psi ^{\prime
}\left( \omega \right) d\omega -\omega _{2}\int_{\omega _{1}}^{\omega
_{3}}\psi ^{\prime }\left( \omega \right) d\omega
\end{eqnarray*}%
where some of the integrals must be understood with a negative sign if the
limits of integration are not ordered. The Lemma then follows by
Fubini's Theorem.
\end{proof}

\subsubsection{Kolmogorov-Zakharov solutions and dimensional considerations.}\index{Kolmogorov-Zakharov}

The weak formulation    in Definition \ref{weakSolutionNI}
combined with (\ref{KZP1}), (\ref{EnFlux}) imply the following equations in
the sense of distributions:%
\begin{eqnarray}
\partial _{t}\left( g\right) +\partial _{\omega }\left( J_{n}\right) &=&0\ \
,\ \ \omega >0  \label{T1} \\
\partial _{t}\left( \omega g\right) +\partial _{\omega }\left( J_{e}\right)
&=&0\ \ ,\ \ \omega >0  \label{T2}
\end{eqnarray}

These equations describe the transport of mass and energy in the set $%
\left\{ \omega >0\right\}$. The Kolmogorov-Zakharov \index{Kolmogorov-Zakharov}solutions arise
naturally from (\ref{T1}), (\ref{T2}). Indeed,  
$g_{s}(\omega ) = K\omega^{-2/3}$ is a solution of (\ref{T1})
in which $J_{n}=-c_{0}=J_{n}\left[ G_{0}\right] \left( 1\right) K^{3}.$ On
the other hand, it would be natural to define a second Kolmogorov-Zakharov solution using (%
\ref{T2}), and more precisely, assuming that $g$ is a power law yielding $%
J_{e}=constant.$ Using the rescaling properties of $J_{e}$ (cf. (\ref{JE}))
this would suggest $g_{s}\left( \omega \right) =c_1\,\omega^{-1}$  for some constant $c_1>0$. This
solution is usually assumed to be an admissible Kolmogorov-Zakharov \index{Kolmogorov-Zakharov}solution in the physical
literature. It would be associated to a transport of energy from low values
of $\omega $ to higher values. However, since many of the integrals defining
the fluxes are non-convergent we decided not to discuss it. It is not clear
in which sense the solution $g_{s}(\omega )=c_1\, \omega ^{-1}$
is a solution of (\ref{Z2E2a}) and for that reason we prefer not to pursue
its analysis for the moment.

A remarkable fact of the Kolmogorov-Zakharov \index{Kolmogorov-Zakharov}solution 
$g_{s}$ is that its energy
fluxes vanish for any value of $\omega .$ This can be seen in \cite{DNPZ}
using a suitable representation formula for the energy fluxes. We provide a
different proof of this fact here that only requires dimensional analysis
considerations. Due to the homogeneity of the functional $J_{e}$ with
respect to $g$ it is enough to prove that the energy fluxes vanish for $%
g=G_{0},$ with $G_{0}\left( \omega \right) =\omega ^{-\frac{2}{3}}.$

\begin{proposition}
Let $J_{e}\left[ g\right] $ be as in (\ref{JE}). Then $J_{e}\left[ G_{0}%
\right] \left( \omega \right) =0\ $for any $\omega >0.$
\end{proposition}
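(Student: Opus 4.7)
The plan is to combine two observations that together force $J_{e}[G_{0}]\equiv 0$: $J_{e}[G_{0}]$ is positively homogeneous of degree one in $\omega$, and $J_{e}[G_{0}]$ is constant in $\omega$. Only the zero function satisfies both.

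For the homogeneity, I would perform the substitution $\omega_{i}=\lambda\tilde\omega_{i}$ ($i=1,2,3$) in the triple integral (\ref{JE}) defining $J_{e}[G_{0}](\omega)$, while simultaneously relabelling the external parameter as $\omega=\lambda\tilde\omega$. Accumulating the powers of $\lambda$ --- namely $\lambda^{-2}$ from $G_{0}(\lambda x)=\lambda^{-2/3}G_{0}(x)$ applied three times, $\lambda^{1/2}$ from the homogeneity of $\Phi$ in the four frequencies, $\lambda^{-3/2}$ from the factor $(\omega_{1}\omega_{2}\omega_{3})^{-1/2}$, $\lambda^{0}$ from the scale-invariant indicator $\Omega$ (provided $\omega$ is rescaled too), $\lambda^{1}$ from the linearity of $F$ in its frequency arguments and in $\omega$, and $\lambda^{3}$ from the measure $d\omega_{1}d\omega_{2}d\omega_{3}$ --- one gets a total factor of $\lambda$, so
\[
J_{e}[G_{0}](\lambda\omega)=\lambda\,J_{e}[G_{0}](\omega),\qquad\lambda>0,
\]
and therefore $J_{e}[G_{0}](\omega)=c_{e}\,\omega$ for some real constant $c_{e}$.

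For the constancy, I would invoke the classical Zakharov observation that the profile $G_{0}$ makes the symmetrized collision integral vanish: for every $\varphi\in C_{0}^{2}([0,\infty))$ the left-hand side of (\ref{EnFlux}) evaluated at $g=G_{0}$ is zero. This is proved by the Zakharov symmetrization (permuting the roles of the four frequencies in each of the four terms of $\Delta\varphi$) combined with the scale invariance of the integrand $Q[G_{0}]=\Phi/(\omega_{1}\omega_{2}\omega_{3})^{7/6}$ and the identity $(\omega_{1}+\omega_{2}-\omega_{3})+\omega_{3}-\omega_{1}-\omega_{2}=0$, which makes the exponent $-7/6$ precisely that for which the four contributions cancel. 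Restricting then to test functions $\varphi(\omega)=\omega\psi(\omega)$ with $\psi\in C_{0}^{\infty}((0,\infty))$ and applying Lemma~\ref{EnLemma} gives $\int_{0}^{\infty}J_{e}[G_{0}](\omega)\varphi'(\omega)\,d\omega=0$. Writing $\varphi'=\psi+\omega\psi'$, integrating by parts the contribution of $\omega\psi'$, and using the arbitrariness of $\psi$ forces $\omega\,(J_{e}[G_{0}])'(\omega)=0$ on $(0,\infty)$, i.e.\ $J_{e}[G_{0}]$ is constant. Combining this with the homogeneity then forces $c_{e}=0$, and the conclusion $J_{e}[G_{0}]\equiv 0$ follows.

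The main technical obstacle is the rigorous justification of the Zakharov cancellation at the profile $G_{0}$, since $G_{0}$ has infinite mass and infinite energy. One must check absolute convergence of the triple integrals at both ends of the spectrum: the decay $G_{0}(\omega)/\sqrt{\omega}=\omega^{-7/6}$ handles large $\omega$, while the compact support of $\psi$ (and hence of $\varphi$) away from the origin, together with the vanishing of $\Phi$ on the boundary of the collision domain and the bounds in Lemma~\ref{Cont}, controls the behaviour near $\omega=0$. Once absolute convergence is secured on each of the four pieces of the symmetrized integrand, the standard change-of-variables calculation produces the cancellation term by term and completes the argument.
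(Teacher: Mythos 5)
Your overall structure --- establish that $J_{e}[G_{0}]$ is homogeneous of degree one, establish that it is constant, and conclude that the only function with both properties is zero --- is exactly the paper's; the power-counting giving degree one is correct. Where the routes diverge is the constancy step. The paper does not invoke any Zakharov cancellation here: it takes as given that $g_{KZ}$ of (\ref{KZ1}) is a weak solution in the sense of Definition~\ref{weakSolutionNI} (Theorem~\ref{WeakNI}), observes that for $\varphi=\omega\psi$ with $\psi\in C_{0}^{\infty}((0,\infty))$ the Dirac mass at the origin contributes nothing since $\varphi(0)=0$, and then reads off $\int J_{e}[G_{0}]\varphi'=0$ from stationarity via Lemma~\ref{EnLemma}. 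Your Zakharov-symmetrization argument lands in the same place, but it is heavier than necessary; in fact one can sidestep both routes by comparing the two flux representations of the same collision integral: Lemma~\ref{grad} and Lemma~\ref{EnLemma} give $\int J_{e}[G_{0}]\varphi'=\int J_{n}[G_{0}]\varphi'$, and $J_{n}[G_{0}]$ is constant by degree-zero homogeneity, so $\int J_{n}[G_{0}]\varphi'=J_{n}[G_{0}](1)\bigl(\varphi(\infty)-\varphi(0)\bigr)=0$ once $\varphi(0)=0$. No conformal Zakharov transformation, and no appeal to Theorem~\ref{WeakNI}, is required.

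One assertion in your write-up is false as stated and should be corrected even though it does not sink the proof: the left-hand side of (\ref{EnFlux}) evaluated at $g=G_{0}$ does \emph{not} vanish for every $\varphi\in C_{0}^{2}([0,\infty))$. By Lemma~\ref{grad} and the constancy of $J_{n}[G_{0}]$ it equals $-J_{n}[G_{0}](1)\,\varphi(0)$, which is nonzero whenever $\varphi(0)\neq 0$; this residual term is precisely the nonzero particle flux toward the origin that necessitates the growing Dirac mass in (\ref{KZ1}) and is the content of Theorem~\ref{WeakI}. Because you immediately restrict to $\varphi=\omega\psi$ with $\psi$ compactly supported in $(0,\infty)$ --- so $\varphi(0)=0$ --- your subsequent steps are unaffected, but the quantifier ``for every $\varphi$'' must be replaced by ``for every $\varphi$ supported away from the origin.''
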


\begin{proof}
Definition \ref{weakSolutionNI} combined with Lemma \ref{EnLemma} imply that 
$G_{0}$ is a distributional solution of the equation:%
\begin{equation}
\partial _{t}\left( \omega G_{0}\right) +\partial _{\omega }\left( J_{e} 
\left[ G_{0}\right] \right) =0\ \ ,\ \ \omega >0  \label{T3}
\end{equation}

Since $G_{0}$ is a power law we have, due to the homogeneity properties of $%
J_{e}$ that $J_{e}\left[ G_{0}\right] \left( \omega \right) =c_{0}\omega $
for a suitable constant $c_{0}>0.$ Since $G_{0}$ is stationary we obtain
from (\ref{T3}) that $\partial _{\omega }\left( J_{e}\left[ G_{0}\right]
\right) =0.$ Integrating in the interval $\left[ 1,2\right] $ we obtain $%
J_{e}\left[ G_{0}\right] \left( 2\right) =J_{e}\left[ G_{0}\right] \left(
1\right)$, whence $c_{0}=0$ and the result follows. To make this argument
fully rigorous, the weak formulation of (\ref{T3})  with
suitable test functions must be used,  followed by a passage to the limit.
Since this argument is classical and elementary we skip it. \end{proof}
\subsubsection{On more general concepts of weak solutions of (\protect\ref%
{Z2E2a}).}
We have defined two concepts of weak solutions of (\ref{Z2E2a}), namely
Definitions \ref{weakSolution} and \ref{weakSolutionNI}. The main difference
between the two Definitions is the form in which the particles in the
condensate interact with the remaining particles of the system. \index{condensate}A
consequence of this is that for measures $g$ without condensate both
definitions are identical. In the case of Definition \ref{weakSolution} the
particles in the condensate interact with the remaining ones with an
interaction that is obtained taking the limit of the interactions with
particles with small size $\omega <<1.$ On the contrary, in the case of
Definition \ref{weakSolutionNI} it is assumed that the particles in the
condensate do not interact at all with the remaining particles of the
system. 

It is natural to ask if it would be possible to introduce more general types
of interactions between the condensate and the rest of the system in order
to define more general concepts of weak solution of (\ref{Z2E2a}). \index{condensate}The
answer is affirmative. The following definition shows how to introduce in
the system a rather large class of interactions.

\begin{definition}
\label{weakSolutionGen}Suppose  $\alpha ,\beta \in C\left( \left[
0,\infty \right) ^{2}\right) $ are nonnegative functions such that $\alpha
\left( \omega _{1},\omega _{2}\right) =\alpha \left( \omega _{2},\omega
_{1}\right) $ for $\left( \omega _{1},\omega _{2}\right) \in \left[ 0,\infty
\right) ^{2}.$ Let $\rho <-\frac{1}{2}.$ We say that $g\in C\left( %
\left[ 0,T\right) :\mathcal{M}_{+}\left( \left[ 0,\infty \right) :\left(
1+\omega \right) ^{\rho }\right) \right) $ is a weak solution of (\ref{Z2E2a}%
) with initial datum $g_{0}\in \mathcal{M}_{+}\left( \left[ 0,\infty \right)
:\left( 1+\omega \right) ^{\rho }\right) $ and condensate interaction $%
\left( \alpha ,\beta \right) $ if the following identity holds for any test
function $\varphi \in C_{0}^{2}\left( \left[ 0,T\right) \times \left[
0,\infty \right) \right) :$

\begin{eqnarray*}
&& \int_{\left[ 0,\infty \right) }g\left( t_{\ast },\omega \right) \varphi
\left( t_{\ast },\omega \right) d\omega -\int_{\left[ 0,\infty \right)
}g_{0}\varphi \left( 0,\omega \right) d\omega =\int_{0}^{t_{\ast }}\int_{\left[ 0,\infty \right) }g\partial _{t}\varphi
d\omega dt+ \\
&& +\int_{0}^{t_{\ast }}\iiint_{\left( 0,\infty \right) ^{3}}\frac{%
g_{1}g_{2}g_{3}\Phi }{\sqrt{\omega _{1}\omega _{2}\omega _{3}}}\left[
\varphi \left( \omega _{1}+\omega _{2}-\omega _{3}\right) +\varphi \left(
\omega _{3}\right) -\varphi \left( \omega _{1}\right) -\right.\\
&&\left.\hskip 7cm-\varphi \left( \omega
_{2}\right) \right] d\omega _{1}d\omega _{2}d\omega _{3}dt+ \\
&& +\int_{0}^{t_{\ast }}\iiint_{\substack{\left\{ 0\right\} \times \left(
0,\infty \right) ^{2}\\ \left\{ \omega _{2}>\omega _{3}\right\} }}\frac{%
\beta \left( \omega _{2},\omega _{3}\right) g_{1}g_{2}g_{3}}{\sqrt{\omega
_{2}\omega _{3}}}\left[ \varphi \left( \omega _{2}-\omega _{3}\right)
+\varphi \left( \omega _{3}\right) -\varphi \left( 0\right) -\right.\\
&&\left.\hskip 7cm-\varphi \left(
\omega _{2}\right) \right] d\omega _{1}d\omega _{2}d\omega _{3}dt+ \\
&& +\int_{0}^{t_{\ast }}\iiint_{\substack{\left\{ 0\right\} \times \left(
0,\infty \right) ^{2}\\ \left\{ \omega _{1}>\omega _{3}\right\} }}\frac{%
\beta \left( \omega _{1},\omega _{3}\right) g_{1}g_{2}g_{3}}{\sqrt{\omega
_{1}\omega _{3}}}\left[ \varphi \left( \omega _{1}-\omega _{3}\right)
+\varphi \left( \omega _{3}\right) -\varphi \left( \omega _{1}\right)-\right.\\
&&\left.\hskip 7cm
-\varphi \left( 0\right) \right] d\omega _{1}d\omega _{2}d\omega _{3}dt+ \\
&& +\int_{0}^{t_{\ast }}\iiint_{\left\{ 0\right\} \times \left(
0,\infty \right) ^{2}}\frac{\alpha \left( \omega _{1},\omega _{2}\right)
g_{1}g_{2}g_{3}}{\sqrt{\omega _{1}\omega _{2}}}\left[ \varphi \left( \omega
_{1}+\omega _{2}\right) +\varphi \left( 0\right) -\varphi \left( \omega
_{1}\right) -\right.\\
&&\left.\hskip 7cm-\varphi \left( \omega _{2}\right) \right] d\omega _{1}d\omega
_{2}d\omega _{3}dt
\end{eqnarray*}%
for any $t_{\ast }\in \left[ 0,T\right) .$
\end{definition}

\begin{remark}
The function $\alpha $ describes the probability rate for the collision of
two particles with energy $\left( \omega _{1},\omega _{2}\right) $ to
produce a particle with energy $\omega _{3}=0$ and other with $\omega
_{4}=\left( \omega _{1}+\omega _{2}\right) .$ The function $\beta $
describes the probability rate for the collision of one particle with energy 
$\omega _{1}>0$ and a particle in the condensate with energy $\omega _{2}=0$
to produce a particle with energy $\omega _{3}<\omega _{1}$ and other with
energy $\omega _{4}=\left( \omega _{1}-\omega _{3}\right) .$ Notice that
Definition \ref{weakSolutionGen} reduces to Definition \ref{weakSolution} if 
$\alpha =\beta =1$ and to Definition \ref{weakSolutionNI} if $\alpha =\beta
=0.$
\end{remark}

\section{Qualitative behaviors of the solutions.}

In this chapter we study several properties of the solutions whose existence has been proved in Chapter 2. We are interested first in the behavior as $t\to +\infty$ of the global weak solutions with interacting condensate of (\ref{S2E1}), and that is the content of Theorem  \ref{Asympt} below.  We consider next, in the Corollary \ref{AsEnergy} and Proposition \ref{coarsening}, the evolution  of the energy density of the particles.  Then, in the case where the Dirac mass towards which the weak solution  converges is located at the origin, we consider whether it is formed in finite time or it is only asymptotically  achieved. That is the object of Theorem \ref{AsympOsc} and Theorem \ref{weakCond}. We also prove  blow up \index{blow up} of a family of initially bounded solutions.

\subsection{Weak solutions with interacting condensate as $t\to +\infty$.} \index{interacting condensate}

In order to describe the long time asymptotics of the weak solutions of
(\ref{S2E1}) we need some notation that allows us to classify the initial data
$g_{in}.$

We recall that the support of a (nonnegative) Radon measure $\mu$ is defined
as follows:%
\begin{equation}
\operatorname*{supp}\left(  \mu\right)  =\left[  0,\infty\right)
\setminus\bigcup\left\{  \mathcal{U}\text{ open in }\mathbb{R}\text{ :\ }%
\mu\left(  \mathcal{U}\right)  =0\right\}  \ \label{S1}%
\end{equation}
where we assume that $\mu\left(  -\infty,0\right)  =0.$ Notice that
$x\in\operatorname*{supp}\left(  \mu\right)  $ iff for any $\rho>0$ we have
$\mu\left(  B_{\rho}\left(  x\right)  \right)  >0.$

\begin{definition}
\label{Aext}Given a set $A\subset\left[  0,\infty\right)  $, we define an
extended set $A^{\ast}\subset\left(  0,\infty\right)  $ as:%
\[
A^{\ast}=\bigcup_{n=1}^{\infty}A_{n}%
\]
where we define the sets $A_{n}$ inductively by means of:%
\[
A_{1}=A,\ \ \ A_{n+1}=\left(  A_{n}+A_{n}-A_{n}\right)  \cap\left(
0,\infty\right)  \ \ ,\ \ n=1,2,3,...
\]
\end{definition}

It is important to notice that by definition $0\notin A^{\ast}.$ We then have
the following result.

\begin{theorem}
\label{Asympt}Let $\rho<-1$. Suppose $g\in C\left(  \left[
0,\infty\right)\!:\mathcal{M}_{+}\left(  \left[  0,\infty\right)\!:\left(
1+\omega\right)  ^{\rho}\right)  \right)  $ is a weak solution of (\ref{S2E1})
in the sense of Definition \ref{weakSolution} with initial datum $g_{in}%
\in\mathcal{M}_{+}\left(  \left[  0,\infty\right)  :\left(  1+\omega\right)
^{\rho}\right)  .$ Let $m=\int g_{in}d\omega$. Suppose that $m>0.$ Let
$A=\operatorname*{supp}\left(  g_{in}\right)  $ and $A^{\ast}$ as in
Definition \ref{Aext}. We define $R_{\ast}=\inf\left(  A^{\ast}\right)  .$
Then:%
\begin{equation}
\lim_{t\rightarrow\infty}\int g\left(  t,d\omega\right)  \varphi\left(
\omega\right)  =m\varphi\left(  R_{\ast}\right)  \ \label{K3E6}%
\end{equation}
for any test function $\varphi\in C_{0}\left(  \left[  0,\infty\right)
\right)  $.
\end{theorem}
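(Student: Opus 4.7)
My plan is to combine the monotonicity/H-theorem structure encoded in Lemma \ref{Convex} with the classification of equilibria in Theorem \ref{StatIsot} and the tightness bounds of Lemma \ref{cotInf}. First, I would establish compactness: by estimate \eqref{S1E6} the family $\{g(t,\cdot)\}_{t\ge 0}$ is uniformly tight in $\mathcal{M}_+([0,\infty))$, and the weak formulation \eqref{Z2E1} together with the uniform bound from Lemma \ref{Cont} makes the maps $t\mapsto \int\varphi\,g(t,d\omega)$ uniformly Lipschitz for every fixed $\varphi\in C_0^2$. Therefore any sequence $t_n\to\infty$ admits a weakly convergent subsequence with limit $g_\infty\in\mathcal{M}_+$, and tightness gives $g_\infty([0,\infty))=m$.

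Next I would upgrade subsequential convergence to full convergence by exploiting Lemma \ref{Convex}. For every convex $\varphi\in C_0^2$ the real-valued map $t\mapsto\int\varphi\,g(t,d\omega)$ is non-decreasing and bounded, hence converges; for concave $\varphi$ it is non-increasing and also converges. Since on any bounded interval every $C^2$ function is a difference of two convex $C^2$ functions, $\lim_{t\to\infty}\int\varphi\,g(t,d\omega)$ exists for all $\varphi\in C_0^2$, pinning down a unique weak limit $g_\infty$. To show $g_\infty$ is itself a weak solution that is time-independent, set $g_n(s,\cdot)=g(t_n+s,\cdot)$ on $s\in[0,1]$: the monotonicity of $\int\varphi\, g_n(s,d\omega)$ for convex $\varphi$ squeezes this quantity between its endpoint values, both of which tend to $\int\varphi\,g_\infty$, so $g_n(s,\cdot)\rightharpoonup g_\infty$ uniformly in $s$. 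Passing to the limit in \eqref{Z2E1} (the cubic integrand is well defined in the limit by Lemma \ref{Cont}), $g_\infty$ is a stationary weak solution with mass $m$, hence by Theorem \ref{StatIsot} we have $g_\infty=m\delta_R$ for some $R\ge 0$.

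It remains to identify $R=R_\ast$. For the lower bound $R\ge R_\ast$, I would prove that $\operatorname{supp} g(t,\cdot)\subseteq\overline{A^\ast}$ for all $t\ge 0$: approximate $g_{in}$ by a sum of Dirac masses supported in $A$ and use the structure of the gain term in \eqref{Z2E1}, where any newly populated frequency has the form $\omega_1+\omega_2-\omega_3$ with the three arguments in the support of $g(t,\cdot)$; the set $A^\ast$ is by construction closed under this operation. Consequently $R\in\overline{A^\ast}$, so $R\ge\inf\overline{A^\ast}=R_\ast$.

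The main obstacle is the reverse inequality $R\le R_\ast$. My approach is to fix $a>R_\ast$, choose $\omega_\ast\in A^\ast\cap(0,a)$ (possible by the definition of $R_\ast$), and use the convex test function $\varphi_a(\omega)=(a-\omega)_+$, for which $\int\varphi_a\, g(t,d\omega)$ is non-decreasing by Lemma \ref{Convex}. It suffices to prove the strict positivity
\begin{equation*}
\int \varphi_a\, g(\bar t, d\omega) > 0 \quad \text{for some } \bar t > 0,
\end{equation*}
for then $m(a-R)_+\ge\int\varphi_a\,g_\infty=\lim_t \int\varphi_a\,g(t)\ge\int\varphi_a\,g(\bar t,\cdot)>0$, whence $R<a$, and letting $a\downarrow R_\ast$ gives $R\le R_\ast$. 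To establish this strict positivity, I would argue by induction on the least $n$ such that $\omega_\ast\in A_n$ (in Definition \ref{Aext}). The base case $n=1$ is immediate since then $\omega_\ast<a$ lies in $\operatorname{supp} g_{in}$, so $\int\varphi_a\, g_{in}>0$. For the inductive step, if mass has accumulated in a neighborhood of $A_{n-1}\cap(0,a)$ at some time, then Proposition \ref{atractiveness} and Lemma \ref{strictConvex} show that $\mathcal{G}_{0,\varphi_a}$ is strictly positive on triples $(\omega_1,\omega_2,\omega_3)$ with $\omega_1+\omega_2-\omega_3=\omega_\ast<a$ and $\omega_i\notin\operatorname{supp}\varphi_a$; this yields a strictly positive rate of growth for $\int\varphi_a\, g(t,d\omega)$, which therefore becomes positive at a later time. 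This inductive propagation, which is the most delicate part, follows the spirit of the arguments used for the Nordheim equation in \cite{EV1}.
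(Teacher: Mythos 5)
Your proposal takes a genuinely different route from the paper. The paper does not extract a stationary limit and invoke Theorem \ref{StatIsot}; instead, for $R_\ast = 0$ it argues directly by contradiction using a bounded concave test function and the monotonicity formula, and for $R_\ast > 0$ it establishes $\operatorname{supp} g(t,\cdot)\subset\overline{A^\ast}$ (Lemma \ref{gSupport}) and then derives an ODE-type inequality $\frac{d}{dt}\int_{\{R_\ast\}}g \gtrsim (1-\int_{\{R_\ast\}}g)^2$ which forces the mass to accumulate at $R_\ast$. Your alternative --- full weak convergence, identification of the limit as a stationary weak solution, and then Theorem \ref{StatIsot} --- is structurally cleaner and more in the spirit of H-theorem arguments. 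The difference-of-convex idea in Step 2 does work, but it needs one non-obvious detail: for $\varphi\in C_0^2$ supported in $[0,L]$ you cannot take arbitrary convex splittings (the pieces would grow at infinity and for $-2\le\rho<-1$ the integrals $\int\psi_i\,g$ could diverge). You have to take, e.g., $\psi_2(\omega)=\tfrac{C}{2}(L-\omega)_+^2$ with $C=\|\varphi''\|_\infty$ and $\psi_1=\varphi+\psi_2$; both pieces are then continuous, convex, compactly supported, and bounded, so Lemma \ref{Convex} applies directly and the monotone convergence of $t\mapsto\int\psi_i\,g(t)$ follows from $\int g(t)=m$. With that correction, your Steps 1--3 are sound; in particular, the passage to the limit in the cubic term of \eqref{Z2E1} needs only the boundedness and continuity of $\Delta_{\varphi,0}$ from Lemma \ref{Cont} together with the uniform tightness from Lemma \ref{cotInf}, and does not hide any circularity.

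The real gaps are in Steps 4 and 5. For $R\ge R_\ast$ when $R_\ast>0$, the support preservation $\operatorname{supp} g(t,\cdot)\subseteq\overline{A^\ast}$ is a delicate quantitative statement --- it is precisely Lemma \ref{gSupport} in the paper, whose proof requires a careful iteration of Gronwall-type estimates $m_k(t)\le(4C_0 t)^{[\frac{k+1}{2}]}/[\frac{k+1}{2}]!$ on the mass leaking into shrinking neighborhoods of $A^\ast$, with explicit control of the singular factor $1/\sqrt{3^k\delta}$. Your sketch (``approximate $g_{in}$ by a sum of Dirac masses and use the structure of the gain term'') does not reach this: there is no stability estimate that would transfer the support property from the approximating initial data to the actual solution, and the reason mass cannot leak into $(0,R_\ast)$ in finite time is genuinely nontrivial because the collision kernel is singular near $\omega=0$. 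Similarly, the inductive propagation in Step 5 --- strict positivity of mass near any $\omega_\ast\in A^\ast$ --- is exactly Lemma \ref{Diff}; your argument via strict positivity of $\mathcal{G}_{0,\varphi_a}$ on suitable triples is the right physical picture, but one still has to establish that $g(t)$ has quantitatively controlled mass on a neighborhood of each intermediate generating triple in $A_{n-1}^3$, and the bound $C_2$ in the paper's \eqref{K3E4} can blow up as these points approach the origin. Citing Lemmas \ref{Diff} and \ref{gSupport} directly would close both gaps and make your route a complete and arguably more transparent alternative proof of the theorem; as written, the inductive propagation and support preservation steps are heuristic placeholders for those lemmas rather than proofs.
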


\begin{remark}
Theorem \ref{Asympt} just states that $g\left(  t,\cdot\right)
\rightharpoonup m\delta_{R_{\ast}}$ as $t\rightarrow\infty$ in the weak
topology for measures in $\mathcal{M}_{+}\left(  \left[  0,\infty\right)
:\left(  1+\omega\right)  ^{\rho}\right)  .$ Notice that since $m>0$ we have
that $A\neq\emptyset$ and $R_{\ast}$ is well defined and it satisfies $0\leq
R_{\ast}<\infty.$
\end{remark}

\begin{remark}
The convergence towards a single Dirac mass at the origin  containing the total number of particles has been suggested in several papers of the physical literature. In particular in \cite{DNPZ}, \cite{LY1}, \cite{P}, \cite{Zbook}.
\end{remark}
We first prove the following auxiliary result which will be used to prove some
type of diffusive properties for (\ref{S2E1}).

\begin{lemma}
\label{Diff}Suppose that the assumptions of Theorem \ref{Asympt} hold. Then,
for any $x\in A^{\ast},$ any $t^{\ast}>0$ and any $r>0,$ we have $\int
_{B_{r}\left(  x\right)  }g\left(  t^{\ast},d\omega\right)  >0.$
\end{lemma}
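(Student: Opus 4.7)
The argument proceeds by induction on the integer $n$ such that $x\in A_n$, following the inductive construction $A^{\ast}=\bigcup_{n\geq 1}A_n$ of Definition~\ref{Aext}.

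For the \textbf{base case} $x\in A_1=\operatorname{supp}(g_{in})$, fix $r,t^{\ast}>0$. If $x=0$, the convex test function $(L-\omega)_+$ with $0<L<r$ yields, via Lemma~\ref{Convex},
\[
\int g(t^{\ast})(L-\omega)_+\,d\omega\geq\int g_{in}(L-\omega)_+\,d\omega>0,
\]
whence $g(t^{\ast},B_r(0))>0$. If $x>0$, pick $\varphi\in C_c^2((0,\infty))$ non-negative with $\operatorname{supp}\varphi\subset B_r(x)$ and $\varphi\equiv 1$ on $B_{r/2}(x)$, so that $\mu_0:=\int g_{in}\varphi\geq g_{in}(B_{r/2}(x))>0$. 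Splitting the collision integral in the weak formulation~(\ref{Z2E1}) into the non-negative gain $L_+(t)$ and the loss
\[
L_-(t)=2\iiint\frac{g_1g_2g_3\,\Phi\,\varphi(\omega_1)}{\sqrt{\omega_1\omega_2\omega_3}}\,d\omega_1d\omega_2d\omega_3,
\]
the key estimate $L_-(t)\leq C\int g(t)\varphi$ described below, together with $L_+\geq 0$ and Gronwall's inequality, yields $\int g(t^{\ast})\varphi\geq e^{-Ct^{\ast}}\mu_0>0$, hence $g(t^{\ast},B_r(x))>0$.

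For the \textbf{inductive step}, let $x\in A_{n+1}$ with $x=a+b-c$, $a,b,c\in A_n$, $x>0$; fix $t^{\ast}>0$, $r>0$, and choose $\rho>0$ with $3\rho<r$ so small that $B_\rho(a),B_\rho(b),B_\rho(c)$ are pairwise disjoint compact subsets of $(0,\infty)$. The inductive hypothesis gives $g(s,B_\rho(y))>0$ for every $s>0$ and $y\in\{a,b,c\}$; combined with the Lipschitz continuity in $s$ of $s\mapsto\int g(s)\psi$ for any $\psi\in C_c^2$ (a consequence of the uniform boundedness in Lemma~\ref{Cont} of the integrand in the weak formulation), this furnishes $s_0\in(0,t^{\ast})$, $\delta>0$ and $c_0>0$ with $g(s,B_\rho(y))\geq c_0$ on $[s_0,s_0+\delta]$ for all three $y$. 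Take $\varphi\in C_c^2((0,\infty))$ non-negative, supported in $B_r(x)$, with $\varphi\equiv 1$ on $B_{3\rho}(x)$. In the weak formulation, the gain contribution from $(\omega_1,\omega_2,\omega_3)\in B_\rho(a)\times B_\rho(b)\times B_\rho(c)$ puts $\omega_1+\omega_2-\omega_3\in B_{3\rho}(x)$, so $\varphi(\omega_1+\omega_2-\omega_3)\equiv 1$ there; since $\Phi/\sqrt{\omega_1\omega_2\omega_3}$ is bounded below by a positive constant on this compact set, $L_+(s)\geq\kappa c_0^3$ throughout $[s_0,s_0+\delta]$ for some $\kappa>0$. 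The loss obeys the base-case estimate, so
\[
\frac{d}{ds}\int g(s)\varphi\geq\kappa c_0^3-C\int g(s)\varphi,\qquad s\in[s_0,s_0+\delta],
\]
forcing $g(s_0+\delta,B_r(x))>0$; the base-case Gronwall argument applied on $[s_0+\delta,t^{\ast}]$ then propagates positivity to time $t^{\ast}$.

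The \textbf{main obstacle} is the uniform-in-$t$ loss bound $L_-(t)\leq C\int g(t)\varphi$. The factor $1/\sqrt{\omega_2\omega_3}$ is the source of concern, because $g(t)$ may concentrate near $\omega=0$ over time (precisely the condensation phenomenon analysed in the subsequent sections), which would make integrals like $\int g(t)/\sqrt{\omega}\,d\omega$ diverge. One exploits that $\operatorname{supp}\varphi$ is a compact subset of $(0,\infty)$ bounded away from $0$, so $1/\sqrt{\omega_1}$ is harmless, then applies $\Phi\leq\min(\sqrt{\omega_2},\sqrt{\omega_3})$ and splits the $(\omega_2,\omega_3)$-domain according to which of the two variables is smaller, following the same strategy used to prove the continuity of $A_\sigma$ in Lemma~\ref{LAg}. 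An alternative route is to first establish the lemma for bounded mild solutions (Theorem~\ref{localExBounded}), where the Duhamel formula~(\ref{Z3E5}) gives the pointwise positivity $g(t,\omega)\geq g_{in}(\omega)\exp(-\int_0^t A_\sigma(s,\omega)\,ds)$ directly, and then to pass to the measure-valued setting through the regularization procedure of Theorem~\ref{globalWeakSol}.
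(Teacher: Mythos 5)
Your proposal is correct and follows essentially the same idea as the paper's proof: a differential inequality for the mass in a ball $B_r(x)$ with a cubic gain term coming from the three ``parent'' balls (centred at $a,b,c$ with $x=a+b-c$) and a loss term controlled linearly by the mass in $B_r(x)$, allowing a Gronwall-type propagation of positivity. The organizational difference is that the paper works with the whole tree of points $\mathcal{F}_N(x)$ at once and partitions $[0,t^{\ast}]$ into $N$ subintervals $[\frac{(n-1)t^{\ast}}{N},t^{\ast}]$, establishing positivity at level $n$ on the $n$-th subinterval, whereas you run a cleaner induction on the depth $n$ with $x\in A_n$, invoking the hypothesis at an arbitrary intermediate time $s_0\in(0,t^{\ast})$ and using continuity in $t$ to obtain a uniform lower bound on a short interval; both versions work and yours is arguably tidier to state. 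One small clarification on what you label the ``main obstacle'': the loss bound $L_-(t)\leq C(t)\int g(t)\varphi$ is in fact automatic here, because $\Phi\leq\min(\sqrt{\omega_2},\sqrt{\omega_3})$ gives $\Phi/\sqrt{\omega_2\omega_3}\leq 1/\sqrt{\max(\omega_2,\omega_3)}$, and the dyadic control built into the space $\mathcal{M}_{+}\left(\left[0,\infty\right):\left(1+\omega\right)^{\rho}\right)$ with $\rho<-\frac12$ guarantees $\int g(t,d\omega)/\sqrt{\omega}<\infty$; the only residual subtlety, present equally in the paper's proof (where the constant $C_2$ is asserted finite), is the uniform-in-$t$ control of this quantity on $[0,t^{\ast}]$, which follows from the bounds used in the construction of the weak solution. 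Your remarks on the $x=0$ case and the alternative route through mild solutions are harmless but not needed: $A^{\ast}\subset(0,\infty)$ by the convention in Definition~\ref{Aext}, and pairwise disjointness of the balls $B_\rho(a),B_\rho(b),B_\rho(c)$ is not actually required for the gain estimate.
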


\begin{proof}
Let us consider any ball $B\subset\left(  0,\infty\right)  .$ Taking a
sequence of test functions $\varphi_{n}$ in (\ref{Z2E1}) converging pointwise
to the characteristic function of $\bar{B}$ we obtain the following:%
\begin{align}
&  \int_{\bar{B}}g\left(  t,d\omega\right)  -\int_{\bar{B}}g_{0}\left(
d\omega\right) \nonumber  =\int_{0}^{t}\iiint_{(\left[  0,\infty\right) )^3 }\frac{g_{1}g_{2}g_{3}\Phi}{\sqrt{\omega
_{1}\omega_{2}\omega_{3}}}\times \\
&\hskip 1.5cm \times\left[  \chi_{\bar{B}}\left(  \omega_{1}+\omega
_{2}-\omega_{3}\right)  +\chi_{\bar{B}}\left(  \omega_{3}\right)  -\chi
_{\bar{B}}\left(  \omega_{1}\right)  -\chi_{\bar{B}}\left(  \omega_{2}\right)
\right]  d\omega_{1}d\omega_{2}d\omega_{3}ds\nonumber
\end{align}
for any $t>0.$ Notice that this implies that the function $t\rightarrow
\int_{\bar{B}}g\left(  t,d\omega\right)  $ is Lipschitz continuous and:%
\begin{eqnarray}
&&\partial_{t}\left(  \int_{\bar{B}}g\left(  t,d\omega\right)  \right)
=\iiint_{(\left[  0,\infty\right) )^3 }\frac{g_{1}g_{2}g_{3}\Phi}{\sqrt{\omega_{1}\omega_{2}%
\omega_{3}}}\times \label{K3E1}\\
&&\hskip 1.5cm \times\left[  \chi_{\bar{B}}\left(  \omega_{1}+\omega_{2}-\omega
_{3}\right)  +\chi_{\bar{B}}\left(  \omega_{3}\right)  -\chi_{\bar{B}}\left(
\omega_{1}\right)  -\chi_{\bar{B}}\left(  \omega_{2}\right)  \right]
d\omega_{1}d\omega_{2}d\omega_{3}\nonumber 
\end{eqnarray}
$a.e.$ $t\geq0.$ Since $x\in A^{\ast}$ we have $x\in A_{N}$ for some $N\geq1.$
The definition of the sets $A_{n}$ implies the existence of a family of
points
\[
\mathcal{F}_{N}\left(  x\right)  =\left\{  x_{\mathbf{k}_{N-n}}^{(
n)}\!\!\in A_{n}; \mathbf{k}_{N-n}=\left(  k_{1},k_{2}, \cdots, k_{N-n}\right), k_{j}\in\left\{  1,2,3\right\}, n=1, \cdots, N\right\}
\]
where $x^{\left(  N\right)  }=x.$ Notice that this particular element has an
empty family of indexes $k_{j}.$ The family of points in $\mathcal{F}%
_{N}\left(  x\right)  $ satisfies:%
\[
x_{\mathbf{k}_{N-n}}^{\left(  n\right)  }=x_{\left(  \mathbf{k}_{N-n}%
,1\right)  }^{\left(  n-1\right)  }+x_{\left(  \mathbf{k}_{N-n},2\right)
}^{\left(  n-1\right)  }-x_{\left(  \mathbf{k}_{N-n},3\right)  }^{\left(
n-1\right)  }%
\]
for any $x_{\mathbf{k}_{N-n}}^{\left(  n\right)  }\in\mathcal{F}_{N}\left(
x\right)  $. Notice that the family $\mathcal{F}_{N}\left(  x\right)  $ is not
necessarily unique, but its existence is guaranteed by the definition of the
sets $A_{n}$ and any such a family could be used in the proof. By continuity
we can find a set or radii $r_{n}>0$ such that $r_{N}=r$ and:%
\begin{equation}
B_{r_{n-1}}\left(  x_{\left(  \mathbf{k}_{N-n},1\right)  }^{\left(
n-1\right)  }\right)  +B_{r_{n-1}}\left(  x_{\left(  \mathbf{k}_{N-n}%
,2\right)  }^{\left(  n-1\right)  }\right)  -B_{r_{n-1}}\left(  x_{\left(
\mathbf{k}_{N-n},3\right)  }^{\left(  n-1\right)  }\right)  \subset B_{r_{n}%
}\left(  x_{\mathbf{k}_{N-n}}^{\left(  n\right)  }\right)  \ \label{K3E2}%
\end{equation}
for any $x_{\mathbf{k}_{N-n}}^{\left(  n\right)  }\in\mathcal{F}_{N}\left(
x\right)  .$ Moreover, we choose the numbers $\left\{  r_{n}\right\}  $ in
order to have:%
\begin{equation}
B_{r_{n}}\left(  x_{\mathbf{k}_{N-n}}^{\left(  n\right)  }\right)
\subset\left(  0,\infty\right)  \text{ for any }x_{\mathbf{k}_{N-n}}^{\left(
n\right)  }\in\mathcal{F}_{N}\left(  x\right)  \label{K3E3}%
\end{equation}

Notice that (\ref{K3E2}) implies:%
\begin{equation}
\chi_{B_{r_{n}}\left(  x_{\mathbf{k}_{N-n}}^{\left(  n\right)  }\right)
}\left(  \omega_{1}+\omega_{2}-\omega_{3}\right)  \geq\prod_{\ell=1}^{3}%
\chi_{B_{r_{n-1}}\left(  x_{\left(  \mathbf{k}_{N-n},\ell\right)  }^{\left(
n-1\right)  }\right)  }\left(  \omega_{\ell}\right)  \ \label{K3E3a}%
\end{equation}
for $n=1,2,....,N.$ We understand that the right-hand side of (\ref{K3E3a}) is
zero if $n=1.$

Therefore, applying (\ref{K3E1}) with $\bar{B}=B_{r_{n}}\left(  x_{\mathbf{k}%
_{N-n}}^{\left(  n\right)  }\right)  ,$ using that $\chi_{\bar{B}}\geq0,$ as
well as (\ref{K3E3}) we obtain:%
\begin{eqnarray}
&&\partial_{t}\left(  \int_{B_{r_{n}}\left(  x_{\mathbf{k}_{N-n}}^{\left(
n\right)  }\right)  }g\left(  t,d\omega\right)  \right)\geq \label{K3E4}\\
&& \geq C_{1}\prod
_{\ell=1}^{3}\int_{\chi_{B_{r_{n-1}}\left(  x_{\left(  \mathbf{k}_{N-n}%
,\ell\right)  }^{\left(  n-1\right)  }\right)  }}g\left(  t,d\omega\right)-C_{2}\int_{B_{r_{n}}\left(  x_{\mathbf{k}_{N-n}}^{\left(  n\right)  }\right)
}g\left(  t,d\omega\right)  \nonumber
\end{eqnarray}
where $C_{1}>0,\ C_{2}>0$ depend on the family $\mathcal{F}_{N}\left(
x\right)  $ and $n=1,...N.$ Notice that $C_{1}$ could become very small if
some of the points in $\mathcal{F}_{N}\left(  x\right)  $ becomes large and
$C_{2}$ could become very large if some of the points in $\mathcal{F}%
_{N}\left(  x\right)  $ approaches zero. However, since the family
$\mathcal{F}_{N}\left(  x\right)  $ is finite, both constants are
finite.\ Notice also that the constant $C_{2}$ depends also in $\int g\left(
t,d\omega\right)  =\int g_{0}\left(  d\omega\right)  .$

We now apply (\ref{K3E4}) iteratively, starting with $n=1.$ By assumption:
$$\min_{x_{\mathbf{k}_{N-1}}^{\left(  1\right)  }}\left(  \int_{B_{r_{n}%
}\left(  x_{\mathbf{k}_{N-1}}^{\left(  1\right)  }\right)  }g_{0}\left(
d\omega\right)  \right)  >0.$$ Then:%
\begin{equation}
\min_{0\leq t\leq t^{\ast}}\min_{x_{\mathbf{k}_{N-1}}^{\left(  1\right)  }%
}\left(  \int_{B_{r_{n}}\left(  x_{\mathbf{k}_{N-1}}^{\left(  1\right)
}\right)  }g\left(  t,d\omega\right)  \right)  \geq c_{1}>0\ \label{K3E5}%
\end{equation}
where $c_{0}$ depends in $\mathcal{F}_{N}\left(  x\right)  $ and $t^{\ast}.$
Using (\ref{K3E5}) in (\ref{K3E4}) with $n=2$ we obtain:%
\[
\min_{\frac{t^{\ast}}{N}\leq t\leq t^{\ast}}\min_{x_{\mathbf{k}_{N-1}%
}^{\left(  2\right)  }}\left(  \int_{B_{r_{n}}\left(  x_{\mathbf{k}_{N-1}%
}^{\left(  1\right)  }\right)  }g\left(  t,d\omega\right)  \right)  \geq
c_{2}>0
\]

Iterating, and using the nonnegativity of $g,$ we obtain:%
\[
\min_{\frac{\left(  n-1\right)  t^{\ast}}{N}\leq t\leq t^{\ast}}%
\min_{x_{\mathbf{k}_{N-1}}^{\left(  n\right)  }}\left(  \int_{B_{r_{n}}\left(
x_{\mathbf{k}_{N-1}}^{\left(  1\right)  }\right)  }g\left(  t,d\omega\right)
\right)  \geq c_{n}>0
\]
for $n=1,...N,$ whence the result follows.
\end{proof}

The following Lemma will be used to prove that the dynamics of $g$ can be
reduced to a discrete problem if $R_{\ast}>0.$

\begin{lemma}
\label{SetAstar}Let $\rho<-1,$ $g_{in}\in\mathcal{M}_{+}\left(  \left[
0,\infty\right)  :\left(  1+\omega\right)  ^{\rho}\right)  .$ Define $m=\int
g_{in}d\omega.$ Suppose that $m>0.$ Let $A=\operatorname*{supp}\left(
g_{in}\right)  $ and $A^{\ast}$ as in Definition \ref{Aext} and let be
$R_{\ast}=\inf\left(  A^{\ast}\right)  .$ If $R_{\ast}>0$, there exists a
finite set of positive real numbers $\left\{  D_{k}\right\}  _{k=1}^{L}$ such
that%
\begin{equation}
A^{\ast}=\left\{  R_{\ast}+\sum_{k=1}^{L}n_{k}D_{k}:n_{k}\in\mathbb{N}_{\ast
}\right\}  \label{K5E2}%
\end{equation}

Moreover, for any $j,k\in\left\{  1,2,...,L\right\}  $ the quotients
$\frac{D_{j}}{D_{k}}$ are rational numbers and we have:%
\begin{equation}
\min\left\{  D_{k}\right\}  _{k=1}^{L}\geq R_{\ast} \label{K5E6}%
\end{equation}

\end{lemma}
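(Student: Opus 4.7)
The crux of the proof is establishing a separation property for $A^*$: any two distinct elements of $A^*$ differ by at least $R_*$. Given $x<y$ in $A^*$, I would show by induction on $n$ that $nx-(n-1)y\in A^*$ for every integer $n\geq 1$ such that $nx-(n-1)y>0$. Since $A^*$ is stable under the operation $(a,b,c)\mapsto a+b-c$ whenever the result is positive (any three elements lie in a common $A_N$ by monotonicity of the sequence $A_n$, and $A_N+A_N-A_N$ intersected with $(0,\infty)$ is $A_{N+1}$), the inductive step follows from the decomposition $nx-(n-1)y=[(n-1)x-(n-2)y]+x-y$. Choosing $n$ maximal for which $nx-(n-1)y=x-(n-1)(y-x)$ is strictly positive produces an element of $A^*$ lying in the half-open interval $(0,y-x]$; since every element of $A^*$ is at least $R_*$, this forces $y-x\geq R_*$.

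The separation property makes $A^*$ a closed discrete subset of $[R_*,\infty)$, so its infimum $R_*$ is attained and belongs to $A^*$. Translating, I set $B:=A^*-R_*\subset[0,\infty)$ and transfer the closure properties of $A^*$ to $B$, using $R_*$ as a pivot in the ternary operation. Explicitly, $0\in B$, and $B$ is closed under addition since $(x+R_*)+(y+R_*)-R_*=x+y+R_*>0$ lies in $A^*$; similarly, $B$ is closed under positive subtraction since for $y\leq x$ one has $(x+R_*)+R_*-(y+R_*)=x-y+R_*>0$ in $A^*$. Thus $B$ is a discrete additive submonoid of $[0,\infty)$ that is also stable under taking nonnegative differences.

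If $B=\{0\}$, then $A^*=\{R_*\}$ and the conclusion holds with the empty family of generators ($L=0$). Otherwise, put $D:=\min(B\setminus\{0\})>0$. A Euclidean-algorithm argument based on positive subtraction shows $B=D\cdot\mathbb{Z}_{\geq 0}$: starting from any $x\in B\setminus\{0\}$, repeatedly subtracting $D$ produces a strictly decreasing finite sequence $x,x-D,x-2D,\dots$ in $B$ terminating at some $r\in B\cap[0,D)$, and minimality of $D$ forces $r=0$. Therefore $A^*=R_*+D\cdot\mathbb{Z}_{\geq 0}$, which is the desired representation with $L=1$ and $D_1=D$. The bound $D_1\geq R_*$ is immediate from the separation property applied to the two smallest elements $R_*$ and $R_*+D_1$ of $A^*$, and the rationality of the ratios $D_j/D_k$ is trivial.

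The main technical obstacle is the inductive construction in the first paragraph, where one must carefully track the maximal $n$ for which $nx-(n-1)y>0$ (roughly $n<1+x/(y-x)$) and handle separately the borderline case when $x/(y-x)$ is an integer, in which the maximal admissible $n$ produces the element $y-x$ itself rather than something strictly smaller. Once this $R_*$-separation of $A^*$ is in hand, the remainder of the argument is a clean application of the classification of discrete additive submonoids of $[0,\infty)$, which in fact yields the sharper conclusion that a single generator $D_1$ always suffices.
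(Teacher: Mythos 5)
Your proof is correct, and it takes a genuinely different route from the paper while giving a sharper conclusion. The paper starts from an arbitrary pair $x<y$ in $A$, sets $D_1 = y-x$, and iteratively adjoins arithmetic progressions $Q_2, Q_3, \ldots$ whose step sizes $D_k$ satisfy $R_*\leq D_{k}\leq D_{k-1}/2$; the lower bound $R_*$ forces termination after finitely many steps, and $A^*$ is identified with a possibly redundant finite union of translated progressions. Your approach instead isolates the key separation property up front (any two distinct points of $A^*$ are at distance at least $R_*$, via the descending sequence $nx-(n-1)y\in A^*$), deduces that $R_*\in A^*$, and then uses $R_*$ as a pivot in the ternary operation $(a,b,c)\mapsto a+b-c$ to show the translate $B=A^*-R_*$ is a discrete additive submonoid of $[0,\infty)$ stable under nonnegative differences. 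The Euclidean argument then forces $B=D\mathbb{Z}_{\geq 0}$, so $A^*=R_*+D\mathbb{Z}_{\geq 0}$. This is cleaner and strictly stronger than the paper's statement: it shows one can always take $L=1$ (or $L=0$ if $A^*=\{R_*\}$), whereas the paper's iteration only delivers a finite generating set whose rational ratios and lower bound $\min D_k\geq R_*$ must be tracked separately. The one point worth making explicit in your write-up is the monotonicity $A_n\cap(0,\infty)\subseteq A_{n+1}$ (coming from $x=x+x-x$), which underlies the claim that any three elements of $A^*$ lie in a common $A_N$ and hence that $A^*$ is closed under the positive ternary operation; you invoke it in passing but it carries the whole argument.
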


\begin{proof}
Suppose that $x,y\in A,\ y>x.$ Let $D_{1}\equiv\left(  y-x\right)  .$ Then
$Q_{1}\equiv\left[  x+D_{1}\mathbb{Z}\right]  \cap\mathbb{R}_{+}\subset
A^{\ast}.$ This is just a consequence of the definition of $A^{\ast}.$ Notice
that $R_{\ast}\leq\min Q_{1}\leq D_{1}.$ Suppose that $\left[  A^{\ast
}\setminus Q_{1}\right]  \neq\emptyset.$ We choose $z_{2}\in\left[  A^{\ast
}\setminus Q_{1}\right]  .$ Since $R_{\ast}>0$ we have $z_{2}>0.$ Moreover,
$D_{2}\equiv dist\left(  z_{2},Q_{1}\right)  <D_{1}.$ Then
$Q_{2}\equiv\left(  Q_{1}\cup\left[  \min Q_{1}+D_{2}\mathbb{Z}\right]
\cup\left[  z_{2}+D_{2}\mathbb{Z}\right]  \right)  \cap\mathbb{R}_{+}\subset
A^{\ast}.$ Notice that if $D_{2}>\frac{D_{1}}{2}$ we have $z_{2}\leq
\frac{D_{1}}{2}.$ Since $\min Q_{2}\leq D_{2}$ it follows that $\min Q_{2}%
\leq\frac{D_{1}}{2}.$ Indeed, if $D_{2}\leq\frac{D_{1}}{2}$ this follows
immediately. Otherwise we have that $R_{\ast}\leq\min Q_{2}\leq z_{2}\leq
\frac{D_{1}}{2},$ since $z_{2}\in Q_{2}.$ We then define sets $Q_{k}$ in an
iterative manner. More precisely, as long as $\left[  A^{\ast}\setminus
Q_{k-1}\right]  \neq\emptyset,$ $k=2,3,...$ we can choose $z_{k}\in\left[
A^{\ast}\setminus Q_{k-1}\right]  .$ We have $D_{k}\equiv dist
\left(  z_{k},Q_{k-1}\right)  <D_{k-1}$ and we then define:%
\begin{equation}
Q_{k}\equiv\left(  Q_{k-1}\cup\left[  \min Q_{k-1}+D_{k}\mathbb{Z}\right]
\cup\left[  z_{k}+D_{k}\mathbb{Z}\right]  \right)  \cap\mathbb{R}_{+}\subset
A^{\ast} \label{K5E4}%
\end{equation}

Arguing as in the case $k=2$ we obtain $R_{\ast}\leq D_{k}\leq\frac{D_{k-1}%
}{2}.$ Then $D_{k}$ decreases exponentially in $k$ and since $R_{\ast}>0$ the
process must stop after a finite number of steps, say $M\geq1.$ More
precisely, there exists $1\leq M<\infty$ such that $A^{\ast}\subset Q_{L}.$
Otherwise, if the iteration procedure could be iterated for arbitrarily large
values of $k$ we would arrive at a contradiction. On the other hand, since
(\ref{K5E4}) holds for $k=M$ we have $Q_{M}\subset A^{\ast}$ whence
$Q_{M}=A^{\ast}.$ Then $R_{\ast}=\min\left(  Q_{M}\right)  \in A^{\ast}.$
Moreover, we have proved the existence of points $x_{k},y_{k}\in A^{\ast}$
such that $\left(  y_{k}-x_{k}\right)  =D_{k},\ k=1,...,M.$ Then, using the
definition of $A^{\ast}$ we obtain:%
\begin{equation}
\mathcal{U}_{M}=\left\{  R_{\ast}+\sum_{k=1}^{M}n_{k}D_{k}:n_{k}\in
\mathbb{N}_{\ast}\right\}  \subset A^{\ast}\ \label{K5E5}%
\end{equation}
where:%
\[
\mathbb{N}_{\ast}=\left\{  0,1,2,3,...\right\}
\]

If the inclusion in (\ref{K5E5}) is strict we can find $z\in A\setminus
\mathcal{U}_{L}.$ We must have $z>R_{\ast}$ since, otherwise there would exist
two points in $A^{\ast}$ at a distance smaller than $R_{\ast}$ and this is
impossible as seen above. Otherwise we introduce additional distances
$D_{j},\ j=M+1,....$ and extend iteratively the sets $\mathcal{U}_{M}$ to
$\mathcal{U}_{M+1},\ \mathcal{U}_{M+2},...$ including in the set also the
points $\left\{  nD_{M+1}:n\in\mathbb{N}_{\ast}\right\}  ,\ \left\{
nD_{M+2}:n\in\mathbb{N}_{\ast}\right\}  ,\ ...$ respectively. Since
$D_{k+1}\leq\frac{D_{k}}{2}$ the process must stop in a finite number of
steps. Therefore we obtain $\mathcal{U}_{L}=A^{\ast}$ for some $L.$ This gives
(\ref{K5E2}).

In order to prove that $\frac{D_{j}}{D_{k}}$ are rational numbers for any
$j,k$ we just notice that, if this quotient is irrational for any pair $j,k$
we would have
\begin{equation}
dist\left(  \left\{  R_{\ast}+nD_{j}:n\in\mathbb{N}_{\ast
}\right\}  ,\left\{  R_{\ast}+nD_{k}:n\in\mathbb{N}_{\ast}\right\}  \right)
=0 \label{Z5E1a}%
\end{equation}
whence $R_{\ast}=0$ and the resulting contradiction yields the result. We just
remark that the property  (\ref{Z5E1a}) follows from the well known fact that the set of
points $\left\{  m\alpha\ (\operatorname{mod}\ 1):m\in\mathbb{N}_{\ast
}\right\}  $ is dense in the interval $\left[  0,1\right]  $ for any irrational $\alpha$
(cf.  \cite{Arnold}).
\end{proof}
The following result about the set $A^*$ follows easily.

\begin{corollary}
Let $\rho<-1,$ $g_{in}\in\mathcal{M}_{+}\left(  \left[  0,\infty\right)
:\left(  1+\omega\right)  ^{\rho}\right)  .$ Define $m=\int g_{in}d\omega.$
Suppose that $m>0.$ Let $A=\operatorname*{supp}\left(  g_{in}\right)  $ and
$A^{\ast}$ as in Definition \ref{Aext}. If $R_{\ast}>0,$ the set $A^{\ast}$
has the form (\ref{K5E2}) in Lemma \ref{SetAstar}. If $R_{\ast}=0$ we have
$\overline{A^{\ast}}=\left[  0,\infty\right)  .$
\end{corollary}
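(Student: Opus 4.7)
The first assertion is just a restatement of Lemma \ref{SetAstar}, so nothing new is required there. For the case $R_{\ast}=0$, note that $0\in \overline{A^{\ast}}$ follows immediately from $\inf A^{\ast}=0$, so the whole content lies in showing that every $y>0$ belongs to $\overline{A^{\ast}}$.

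My plan is to revisit the iterative construction in the proof of Lemma \ref{SetAstar}. That procedure can only halt when some $Q_{M}$ exhausts $A^{\ast}$, in which case $A^{\ast}$ has the form (\ref{K5E2}) and hence $R_{\ast}=\inf A^{\ast}>0$. Since we are assuming $R_{\ast}=0$, the iteration never terminates, and so produces an infinite sequence of positive distances $D_{k}$ with $D_{k+1}\leq D_{k}/2$, together with pairs $a_{k},b_{k}\in A^{\ast}$ satisfying $b_{k}-a_{k}=D_{k}$. In particular $D_{k}\to 0$.

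Next I would establish a propagation lemma: if $a,b\in A^{\ast}$ and $D=b-a>0$, then $\{a+nD:n\in\mathbb{Z}\}\cap(0,\infty)\subset A^{\ast}$. Both $a$ and $b$ lie in some common $A_{m}$, and the identities $a+(n+1)D=(a+nD)+b-a$ and $a+(n-1)D=(a+nD)+a-b$, combined with Definition \ref{Aext}, allow an induction that places $a+nD$ in $A_{m+|n|+1}$ as long as $a+nD$ remains positive.

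Finally, given $y>0$ and $\delta>0$, choose $k$ with $D_{k}<\min(y,\delta)$. Writing $a_{k}=qD_{k}+r$ with $q\in\mathbb{N}_{\ast}$ and $r\in(0,D_{k}]$, the propagation lemma applied to the pair $(a_{k},b_{k})$ yields $r\in A^{\ast}$ and in fact $\{r+nD_{k}:n\in\mathbb{N}_{\ast}\}\subset A^{\ast}$. This arithmetic progression has spacing $D_{k}<\delta$ and first term $r\leq D_{k}<y$, so it must contain a point within distance $D_{k}<\delta$ of $y$. Hence $y\in\overline{A^{\ast}}$, which together with $0\in\overline{A^{\ast}}$ gives $\overline{A^{\ast}}=[0,\infty)$. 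The propagation lemma is the only mildly delicate step; once it is set, everything else is bookkeeping around Lemma \ref{SetAstar}.
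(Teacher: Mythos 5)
Your argument is correct, and its heart --- the propagation (arithmetic-progression) observation --- is the same one the paper uses. Both proofs reduce the $R_{\ast}=0$ case to producing, for each $\delta>0$, a pair $a<b$ in $A^{\ast}$ with $b-a<\delta$, and then sweeping out an arithmetic progression of spacing $b-a$ through $A^{\ast}$ so as to land within $\delta$ of any prescribed $y>0$.

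Where you differ is in how you obtain the small pair. You rerun the iterative construction from the proof of Lemma \ref{SetAstar}, argue that it cannot terminate when $R_{\ast}=0$, and extract pairs $(a_{k},b_{k})$ realizing the distances $D_{k}\to 0$. The paper does something shorter and more robust: since $R_{\ast}=\inf A^{\ast}=0$ and $0\notin A^{\ast}$, the set $A^{\ast}\cap(0,\varepsilon)$ cannot be a singleton or empty for any $\varepsilon>0$ (either would force $\inf A^{\ast}>0$), so one simply picks $0<x_{1}<x_{2}<\varepsilon$ directly. That route avoids leaning on the assertion in Lemma \ref{SetAstar}'s proof that each $D_{k}$ is realized as a difference of two points of $A^{\ast}$, which is stated there rather tersely and under the hypothesis $R_{\ast}>0$; your proposal inherits that dependence and would benefit from spelling out why the pairs $(a_{k},b_{k})$ exist when $R_{\ast}=0$. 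With the direct choice of $x_{1},x_{2}$, the second point $x_{1}$ is already in $(0,\varepsilon)$, so you do not even need the negative-shift part of your propagation lemma or the decomposition $a_{k}=qD_{k}+r$. Still, your version is a valid (if slightly longer) path to the same conclusion.
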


\begin{proof}
If $R_{\ast}>0$ we can apply Lemma \ref{SetAstar}. If $R_{\ast}=0$ we have the
following due to the definition of $A^{\ast}.$ Given any $\varepsilon>0,$
there exist $x_{1},x_{2}\in A^{\ast}$ with $0<x_{1}<x_{2}<\varepsilon.$ This
implies that $x_{1}+\ell\left(  x_{2}-x_{1}\right)  \in A^{\ast}$ for any
$\ell=0,1,2,3,...$ and since $\left(  x_{2}-x_{1}\right)  <\varepsilon$ we
then obtain that $A^{\ast}$ is dense in $\left[  0,\infty\right)  .$
\end{proof}

The following result combined with Lemma \ref{Diff} will be used to
characterize the support of the measures $g\left(  t,\cdot\right)  .$ The main
difficulty in the proof of Lemma \ref{gSupport} is due to the fact that
equation (\ref{Z2E2a}) is singular at $\omega=0.$ Therefore we need to obtain
detailed estimates for the measure of $g$ supported in regions with $\omega$
small, because tiny amounts of the measure of $g$ arriving to that region
could have a huge effect in the dynamics of $g.$

\begin{lemma}
\label{gSupport}Suppose that the assumptions of Theorem \ref{Asympt} hold and
$R_{\ast}>0$. Then $\operatorname*{supp}\left(  g\left(  t,.\right)  \right)
\subset\overline{A^{\ast}}$ for any $t\geq0.$
\end{lemma}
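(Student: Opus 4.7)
The conclusion $\operatorname*{supp}\left(g\left(t,\cdot\right)\right)\subset\overline{A^{\ast}}$ is equivalent to showing that $\int g\left(t,d\omega\right)\varphi\left(\omega\right)=0$ for every nonnegative $\varphi\in C_{0}^{2}\left(\left[0,\infty\right)\right)$ with $\operatorname*{supp}\left(\varphi\right)\cap\overline{A^{\ast}}=\emptyset$. When $R_{\ast}>0$, Lemma \ref{SetAstar} shows that $A^{\ast}=\left\{R_{\ast}+\sum_{k}n_{k}D_{k}:n_{k}\in\mathbb{N}_{\ast}\right\}$ with all ratios $D_{k}/D_{j}$ rational, so $A^{\ast}$ is a discrete subset of $\left[R_{\ast},\infty\right)$ and $\overline{A^{\ast}}=A^{\ast}$. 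The decisive algebraic fact, immediate from Definition \ref{Aext} and the nesting $A_{n}\subset A_{n+1}$ (which holds for $n\geq 1$ since $A\subset\left(0,\infty\right)$ whenever $R_{\ast}>0$), is the closure property: if $a_{1},a_{2},a_{3}\in A^{\ast}$ and $a_{4}:=a_{1}+a_{2}-a_{3}>0$, then $a_{4}\in A^{\ast}$.

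The plan is a Gronwall argument on $u(t):=\int g\left(t,d\omega\right)\varphi\left(\omega\right)$. Since $\operatorname*{supp}\left(g_{in}\right)\subset A\subset A^{\ast}$ and $\varphi$ vanishes on $A^{\ast}$, we have $u(0)=0$, and the weak formulation (\ref{Z2E1}) yields
\[
u\left(t_{\ast}\right)=\int_{0}^{t_{\ast}}\iiint\frac{g_{1}g_{2}g_{3}\,\Phi}{\sqrt{\omega_{1}\omega_{2}\omega_{3}}}\bigl[\varphi_{4}+\varphi_{3}-\varphi_{1}-\varphi_{2}\bigr]\,d\omega_{1}d\omega_{2}d\omega_{3}\,ds.
\]
The crucial observation is that the integrand vanishes identically on $\left(A^{\ast}\right)^{3}$: the three loss terms $\varphi\left(\omega_{i}\right)$ are zero since $\omega_{i}\in A^{\ast}$; and the gain term $\varphi\left(\omega_{4}\right)$ is also zero, because either $\omega_{4}\leq 0$, in which case $\Phi=0$ by the paper's convention that $\Phi=0$ on $\mathbb{R}^{4}\setminus\left[0,\infty\right)^{4}$, or $\omega_{4}>0$, in which case the closure property forces $\omega_{4}\in A^{\ast}$ and so $\varphi\left(\omega_{4}\right)=0$. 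Consequently only triples $\left(\omega_{1},\omega_{2},\omega_{3}\right)$ with at least one coordinate in $\left[0,\infty\right)\setminus A^{\ast}$ contribute.

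The remaining task is to estimate this residual integral by $Cu(t)$ so that Gronwall yields $u\equiv 0$. Bounding $\Phi\leq\sqrt{\omega_{i}}$ for appropriate $i$ leads to inequalities of the form $\iiint Q\varphi_{3}\leq C\bigl(\int g/\sqrt{\omega}\bigr)^{2}u(t)$ and, after the change of variables $\omega_{3}\to\omega_{4}$, an analogous bound for the $\varphi_{4}$ contribution, each localized to the factor $g_{i}$ associated to $\omega_{i}\in A^{\ast c}$. The principal obstacle is controlling $\int g/\sqrt{\omega}$, which is a priori infinite if $g$ develops a Dirac mass at $\omega=0$; this forces the support statement for $\left[0,R_{\ast}\right)$ to be established in parallel with the rest, by running the Gronwall inequality on a family $\left\{\varphi_{k}\right\}$ whose supports exhaust the compact subsets of $\left[0,\infty\right)\setminus A^{\ast}$ (including compact subsets of $\left(0,R_{\ast}\right)$) and invoking Lemma \ref{cotInf} to dispose of the mass at infinity. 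A cleaner alternative is to pass through the regularized problem with $\sigma>0$: the kernel $\Phi_{\sigma}/\sqrt{\omega_{1}\omega_{2}\omega_{3}}$ is then bounded, so Picard iteration for the mild formulation (\ref{Z3E3}) is contractive, and the closed convex subset of $\mathcal{M}_{+}\!\left(\left[0,\infty\right):\left(1+\omega\right)^{\rho}\right)$ consisting of measures supported on $A^{\ast}$ is preserved by the Duhamel map (again by the closure property); this produces a unique mild (hence weak) solution with the required support property, and passing $\sigma\to 0$ within the construction of Theorem \ref{globalWeakSol} transfers the inclusion to the weak solution at $\sigma=0$.
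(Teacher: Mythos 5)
Your opening observations are correct and match the paper's starting point: when $R_{\ast}>0$ Lemma \ref{SetAstar} gives $\overline{A^{\ast}}=A^{\ast}$ discrete, the closure of $A^{\ast}$ under $(\omega_1,\omega_2,\omega_3)\mapsto\omega_1+\omega_2-\omega_3$ (intersected with $(0,\infty)$) makes the integrand vanish on $(A^{\ast})^3$, and $u(0)=0$. But the step where you bound the residual integral by $Cu(t)$ is where a genuine gap appears, and neither of your two repairs closes it.

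First, the single-function Gronwall inequality $u'(t)\leq Cu(t)$ is simply false in the form you state. If exactly one coordinate, say $\omega_2$, lies off $A^{\ast}$ while $\omega_1,\omega_3\in A^{\ast}$ carry mass of order one, the gain term $\varphi(\omega_1+\omega_2-\omega_3)$ can be nonzero with coefficient proportional to $g(\{\omega_2\})$ for a point $\omega_2$ that need not lie in $\operatorname{supp}\varphi$; so the contribution is controlled by the total mass of $g$ off $A^{\ast}$, not by $u(t)=\int g\varphi$. You acknowledge this by passing to a family $\{\varphi_k\}$, but you do not say what relates the different $u_k(t)=\int g\varphi_k$. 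The paper's proof is precisely the resolution of this coupling: it stratifies $[0,\infty)\setminus A^{\ast}$ into shells $\mathcal{U}_k$ at distance $\approx 3^{k}\delta$ from $A^{\ast}$, and the collision rule forces any mass entering $\mathcal{U}_{k}$ to have passed through $\mathcal{U}_{k-1}$ or $\mathcal{U}_{k-2}$, yielding the hierarchical inequalities (\ref{K6E1}). The crucial quantitative feature is the factorial gain $m_k(t)\leq\frac{(4C_0 t)^{[(k+1)/2]}}{([(k+1)/2])!}$ proved by the bootstrap in (\ref{K7E1})--(\ref{K7E4}): this is what makes $m_{k_0}(t)\to 0$ as $\delta\to 0$ and hence proves the mass off $A^{\ast}$ is zero, while a straight Gronwall estimate would only give exponential growth and no contradiction. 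The factorial structure is also exactly what tames the singular factor $1/\sqrt{3^k\delta}$ appearing in (\ref{Z6E9}) when all three arguments are in the innermost shell; without it the $1/\sqrt{\omega}$ singularity near $\omega=0$ is not controlled, which the paper explicitly flags as the core difficulty.

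Second, your alternative via the regularized $\sigma>0$ problem proves the wrong statement. It would construct \emph{some} weak solution supported on $A^{\ast}$; but Lemma \ref{gSupport} must hold for an \emph{arbitrary} weak solution $g$ in the sense of Definition \ref{weakSolution} (that is what Theorem \ref{Asympt} needs), and uniqueness of such weak solutions is explicitly listed as an open problem in Section 5. Without uniqueness, the support property of the constructed solution says nothing about the given $g$.
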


\begin{proof}
We will assume, without loss of generality, that $\int g\left(  t,d\omega
\right)  =1.$ We fix $\delta>0$ small, in particular $\delta<\frac{R_{\ast}%
}{8}$. We define $N=N\left(  \delta\right)  $ as the smallest positive integer
such that $3^{N}\delta>\frac{3R_{\ast}}{4}.$ Notice that $3^{N-1}\delta
\in\left[  \frac{R_{\ast}}{4},\frac{3R_{\ast}}{4}\right)  .$ We define the
following sets:%
\begin{align*}
Z_{k}  &  =A_{\ast}+B_{3^{k}\delta}\left(  0\right)
\ \ ,\ \ k=-1,0,1,2,3,....\left(  N-1\right) \\
Z_{N}  &  =Z_{N+1}=\left[  0,\infty\right)
\end{align*}%
\[
\mathcal{U}_{0}=Z_{0}\ \ ,\ \ \mathcal{U}_{k}=Z_{k}\setminus Z_{k-1}%
\ \ ,\ \ k=1,...,N
\]

It is relevant to notice that, using the definition of $N$ as well as
(\ref{K5E6}) and the invariance of $A_{\ast}$ under the transformations
$\left(  \omega_{1},\omega_{2},\omega_{3}\right)  \in A_{\ast}^{3}%
\rightarrow\left(  \omega_{1}+\omega_{2}-\omega_{3}\right)  ,$ we have:%
\begin{equation}
Z_{k-1}+Z_{k-1}-Z_{k-1}=Z_{k}\ \ ,\ \ k=1,2,3,...\left(  N-1\right)
\ \label{Z6E1}%
\end{equation}

We now define a set of nonnegative test functions $\varphi_{k}\in C_{0}%
^{1}\left(  \left[  0,\infty\right)  \right)  ,\ k=0,1,...,N$ as follows. We
will assume that $0\leq\varphi_{k}\leq1,\ k=0,1,...,N$ and we assume that:%
$$
\varphi_{0}\left(  \omega\right) =1\ \ \text{if}\ \ \omega\in
Z_{0} ;\ \ \varphi_{0}\left(  \omega\right)  =0\ \ \text{if\ \ }%
\omega\notin Z_{1}\ 
$$
and, for  $k=1,2,3,...N$:
$$
\varphi_{k}\left(  \omega\right)     =1\ \ \text{if}\ \ \omega\in
Z_{k}\setminus Z_{k-1}=\mathcal{U}_{k}\ \ ;\ \ \varphi_{k}\left(
\omega\right)  =0\ \ \text{if\ \ }\omega\in\left(  \left[  0,\infty\right)
\setminus Z_{k+1}\right)  \cup Z_{k-2}
$$

Moreover, we choose the functions $\varphi_{k}$ satisfying the inequalities:%
\begin{equation}
\left\vert \varphi_{k}^{\prime}\right\vert \leq\frac{C}{3^{k}\delta
}\ \ ,\ \ \left\vert \varphi_{k}^{\prime\prime}\right\vert \leq\frac
{C}{\left(  3^{k}\delta\right)  ^{2}}\ \ ,\ k=0,1,2,3,...N\ \ \label{Z6E4}%
\end{equation}

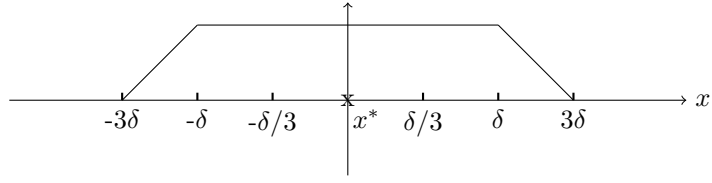
\begin{figure}
\begin{tikzpicture}
\draw[->] (-4.5, 0) -- (4.5, 0) node[right]{$x$};
\draw[->] (-0,-1) -- (-0,1.3) node[above]{};
\draw(0, 0) node {x};
\draw(0.25, 0) node [below]{$x^*$};
%\draw[thick] (-5, 0) node[below]{-$3^3 \delta $} -- (-5,0.1);
%\draw[thick] (-4, 0) node[below]{-$3^2 \delta $} -- (-4,0.1);
\draw[thick] (-3, 0) node[below]{-$3\delta $} -- (-3,0.1);
\draw[thick] (-1, 0) node[below]{-$\delta /3$} -- (-1,0.1);
\draw[thick] (-2, 0) node[below]{-$\delta $} -- (-2, 0.1);
\draw[thick] (1, 0) node[below]{$\delta /3$} -- (1,0.1);
\draw[thick] (2, 0) node[below]{$\delta $} -- (2, 0.1);
\draw[thick] (3, 0) node[below]{$3\delta  $} -- (3,0.1);
\draw(-2, 1)--(2,1);
\draw(2,1)--(3, 0);
\draw(-3,0)--(-2, 1);
\end{tikzpicture}
\caption{The function $\varphi_0$.}
\end{figure}

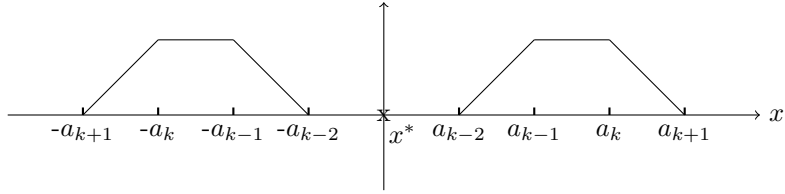
\begin{figure}
\begin{tikzpicture}
\draw[->] (-5,0) -- (5,0) node[right]{$x$};
\draw[->] (-0,-1) -- (-0,1.5) node[above]{$$};
\draw(0, 0) node {x};
\draw(0.25, 0) node [below]{$x^*$};
\draw[thick] (-4, 0) node[below]{ -$a _{ k+1 }$} -- (-4,0.1);
\draw[thick] (-3, 0) node[below]{-$a_k$} -- (-3,0.1);
\draw[thick] (-1, 0) node[below]{-$a _{ k-2 }$} -- (-1,0.1);
\draw[thick] (-2, 0) node[below]{-$a _{ k-1 } $} -- (-2, 0.1);
\draw[thick] (1, 0) node[below]{$a _{ k-2 } $} -- (1,0.1);
\draw[thick] (2, 0) node[below]{$a _{ k-1 }$} -- (2, 0.1);
\draw[thick] (3, 0) node[below]{$a _{ k }$} -- (3,0.1);
\draw[thick] (4, 0) node[below]{$a _{ k+1 }$} -- (4,0.1);
\draw(1, 0)--(2, 1);
\draw(2, 1)--(3, 1);
\draw(3, 1)--(4, 0);
\draw(-4, 0)--(-3, 1);
\draw(-3, 1)--(-2, 1);
\draw(-2, 1)--(-1, 0);
\end{tikzpicture}
\caption{The function $\varphi_k$, $k\ge 1$ where $a_\ell=x^*+3^\ell \delta $.}
\end{figure}

\begin{figure}
\begin{tikzpicture}
\draw[->] (-5.3,0) -- (5.3,0) node[right]{$x$};
\draw[->] (-0,-1) -- (-0,1.3) node[above]{$$};
\draw(0, 0) node {x};
\draw(0.25, 0) node [below]{$x^*$};
\draw[thick] (-5, 0) node[below]{-$a_3$} -- (-5,0.1);
\draw[thick] (-4, 0) node[below]{-$a_2$} -- (-4,0.1);
\draw[thick] (-3, 0) node[below]{-$a_1$} -- (-3,0.1);
\draw[thick] (-1, 0) node[below]{-$a _{ -1 }$} -- (-1,0.1);
\draw[thick] (-2, 0) node[below]{-$a_0$} -- (-2, 0.1);
\draw[thick] (1, 0) node[below]{$a _{ -1 } $} -- (1,0.1);
\draw[thick] (2, 0) node[below]{$a_0 $} -- (2, 0.1);
\draw[thick] (3, 0) node[below]{$a_1 $} -- (3,0.1);
\draw[thick] (4, 0) node[below]{$a_2 $} -- (4,0.1);
\draw[thick] (5, 0) node[below]{$a_3$} -- (5,0.1);
\draw(1, 0)--(2, 1);
\draw(2, 1)--(3, 1);
\draw(3, 1)--(4, 0);
\draw(-4, 0)--(-3, 1);
\draw(-3, 1)--(-2, 1);
\draw(-2, 1)--(-1, 0);
\draw[dotted](-2, 1)--(2,1);
\draw[dotted](2,1)--(3, 0);
\draw[dotted](-3,0)--(-2, 1);
\draw[dashed, thick](-2, 0)--(-3, 1);
\draw[dashed, thick](-3, 1)--(-4, 1);
\draw[dashed, thick](-4, 1)--(-5, 0);
\draw[dashed, thick](2, 0)--(3, 1);
\draw[dashed, thick](3, 1)--(4, 1);
\draw[dashed, thick](4, 1)--(5, 0);
\end{tikzpicture}
\caption{The functions $\varphi_0$ (dotted), $\varphi_1$ and $\varphi_2$ (dashed).}
\end{figure}
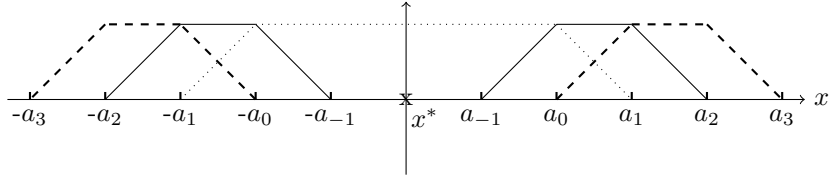
(See figures 1, 2 and 3.) Using (\ref{Z2E1})\ we obtain:%
\begin{eqnarray}
&&\partial_{t}\left(  \int g\left(  t,d\omega\right)  \varphi_{k}\right)
=\sum_{j_{1}=0}^{N}\sum_{j_{2}=0}^{N}\sum_{j_{3}=0}^{N}\int_{\mathcal{U}%
_{j_{1}}}\int_{\mathcal{U}_{j_{2}}}\int_{\mathcal{U}_{j_{3}}}\frac{g_{1}%
g_{2}g_{3}\Phi}{\sqrt{\omega_{1}\omega_{2}\omega_{3}}}\times  \label{Z6E1a}\\
&&\hskip 1.5cm \times \left[  \varphi
_{k}\left(  \omega_{1}+\omega_{2}-\omega_{3}\right)  +\varphi_{k}\left(
\omega_{3}\right)  -\varphi_{k}\left(  \omega_{1}\right)  -\varphi_{k}\left(
\omega_{2}\right)  \right]  d\omega_{1}d\omega_{2}d\omega_{3}\nonumber 
\end{eqnarray}
$a.e.\ t\geq0.$ We now proceed to estimate the different terms on the right of
(\ref{Z6E1a}). We will estimate in a different manner the terms containing at
least two among the indexes $j_{1},j_{2},j_{3}$ equal to $N$ and all the
others choices of indexes. Let us denote the set of indexes $\left(
j_{1},j_{2},j_{3}\right)  $ with at least two values equal to $N$ as
$\mathcal{S}.$ If $\left(  j_{1},j_{2},j_{3}\right)  \notin\mathcal{S}$ we
have that at least two among the values $\omega_{1},\ \omega_{2},\ \omega_{3}$
remain at a distance of $\omega=0$ larger than $\frac{R_{\ast}}{4}.$ Therefore
$\frac{\Phi}{\sqrt{\omega_{1}\omega_{2}\omega_{3}}}\leq C$, with $C>0$
independent on $\delta.$ Then:%
\begin{eqnarray*}
&&\sum_{\left(  j_{1},j_{2},j_{3}\right)  \notin\mathcal{S}}\int_{\mathcal{U}%
_{j_{1}}}\int_{\mathcal{U}_{j_{2}}}\int_{\mathcal{U}_{j_{3}}}\left[
\cdot\cdot\cdot\right]  \leq \\
&&\hskip 1.5cm \leq C\sum_{\left(  j_{1},j_{2},j_{3}\right)
\notin\mathcal{S}}\int_{\mathcal{U}_{j_{1}}}\int_{\mathcal{U}_{j_{2}}}%
\int_{\mathcal{U}_{j_{3}}}g_{1}g_{2}g_{3}\left[  \varphi_{k}\left(  \omega
_{1}+\omega_{2}-\omega_{3}\right)  +\varphi_{k}\left(  \omega_{3}\right)
\right]
\end{eqnarray*}

Using the definition of the functions $\varphi_{k}$ as well as the finiteness
of the measure $g$ we obtain:%
\begin{eqnarray}
&&\sum_{\left(  j_{1},j_{2},j_{3}\right)  \notin\mathcal{S}}\int_{\mathcal{U}%
_{j_{1}}}\int_{\mathcal{U}_{j_{2}}}\int_{\mathcal{U}_{j_{3}}}\left[
\cdot\cdot\cdot\right]  \leq C\int g\left(  t,d\omega\right)  \varphi
_{k}+\label{Z6E2}\\
&&\hskip 1.5 cm +C\sum_{\left(  j_{1},j_{2},j_{3}\right)  \notin\mathcal{S}}%
\int_{\mathcal{U}_{j_{1}}}\int_{\mathcal{U}_{j_{2}}}\int_{\mathcal{U}_{j_{3}}%
}g_{1}g_{2}g_{3}\varphi_{k}\left(  \omega_{1}+\omega_{2}-\omega_{3}\right)\nonumber
\end{eqnarray}

Due to the definition of the test functions $\varphi_{k}$ as well as the
property (\ref{Z6E1}) and the fact that $\ \mathcal{U}_{k}\subset Z_{k}$ it
follows that, if $\max\left\{  j_{1},j_{2},j_{3}\right\}  \leq\left(
k-2\right)  ,$ we have $\varphi_{k}\left(  \omega_{1}+\omega_{2}-\omega
_{3}\right)  =0$ if $\omega_{\ell}\in\mathcal{U}_{\ell}\ ,\ \ell\in\left\{
j_{1},j_{2},j_{3}\right\}  .$ Using that $\varphi_{k}\left(  \omega
_{1}+\omega_{2}-\omega_{3}\right)  \leq1,$ as well as the finiteness of the
mass of $g$ as well as the fact that $\varphi_{\ell}\left(  \omega\right)  =1$
for $\omega\in\mathcal{U}_{\ell}$ it then follows that the last term in
(\ref{Z6E2}) can be estimated as $C\sum_{\ell=\left(  k-2\right)  _{+}}%
^{N}\int g\left(  t,d\omega\right)  \varphi_{\ell},$ with $C$ depends on the
total mass of $g$. Therefore it follows from (\ref{Z6E2}) that:%
\begin{equation}
\sum_{\left(  j_{1},j_{2},j_{3}\right)  \notin\mathcal{S}}\int_{\mathcal{U}%
_{j_{1}}}\int_{\mathcal{U}_{j_{2}}}\int_{\mathcal{U}_{j_{3}}}\left[
\cdot\cdot\cdot\right]  \leq C\sum_{\ell=\left(  k-2\right)  _{+}}^{N}\int
g\left(  t,d\omega\right)  \varphi_{\ell} \label{Z6E3}%
\end{equation}

We now estimate the contribution to the sum in (\ref{Z6E1a}) of the indexes
satisfying  $\left(  j_{1},j_{2},j_{3}\right)  \in\mathcal{S}.$ Suppose first that
$\left(  j_{1},j_{2},j_{3}\right)  \neq\left(  N,N,N\right)  .$ We can then
estimate one of the quotients $\frac{1}{\sqrt{\omega_{k}}},\ k=1,2,3$ by a
constant independent of $\delta$ since $\max\left\{  \omega_{1},\omega
_{2},\omega_{3}\right\}  \geq\frac{R_{\ast}}{4}.$ If $j_{3}<N$ we obtain a
zero contribution to the integrals if $\max\left\{  \omega_{1},\omega
_{2}\right\}  <\frac{R_{\ast}}{8},$ since then $\left(  \omega_{1}+\omega
_{2}-\omega_{3}\right)  <0.~$Therefore, it is enough to consider the case in
which $\max\left\{  \omega_{1},\omega_{2}\right\}  >\frac{R_{\ast}}{8}.$
However, in this case, using the definition of $\Phi$ we obtain:%
\[
\frac{\Phi}{\sqrt{\omega_{1}\omega_{2}\omega_{3}}}\leq C
\]
with $C$ independent of $\delta$ (but depending on $R_{\ast}$). We then
obtain, arguing as in the estimate for the terms with the indexes in the
complement of $\mathcal{S}:$%
\begin{eqnarray}
&&\underset{j_{3}<N}{\sum_{\left(  j_{1},j_{2},j_{3}\right)  \in\mathcal{S}\setminus\left\{
\left(  N,N,N\right)  \right\}}}\int_{\mathcal{U}_{j_{1}}}%
\int_{\mathcal{U}_{j_{2}}}\int_{\mathcal{U}_{j_{3}}}\left[  \cdot\cdot
\cdot\right]  \leq  \label{Z6E6}\\
&&\hskip 1.5cm \leq C\underset{j_{3}<N}{\sum_{\left(  j_{1},j_{2},j_{3}\right)  \in\mathcal{S}}}
\int_{\mathcal{U}_{j_{1}}}\int_{\mathcal{U}_{j_{2}}}%
\int_{\mathcal{U}_{j_{3}}}g_{1}g_{2}g_{3}
d\omega_{1}d\omega_{2}d\omega_{3} C\left(  \int g\left(  t,d\omega\right)  \varphi_{N}\right)  ^{2} \nonumber
\end{eqnarray}

Suppose now that $j_{3}=N.$ Since $\left(  j_{1},j_{2},j_{3}\right)
\in\mathcal{S}\setminus\left\{  \left(  N,N,N\right)  \right\}  $ we have that
exactly one of the indexes $j_{1}$ or $j_{2}$ is equal to $N.$ We can assume,
without loss of generality that $j_{2}=N.$ Then:%
\begin{eqnarray}
&&\underset{j_{3}=N}{\sum_{\left(  j_{1},j_{2},j_{3}\right)  \in\mathcal{S}\setminus\left\{
\left(  N,N,N\right)  \right\}}}\int_{\mathcal{U}_{j_{1}}}%
\int_{\mathcal{U}_{j_{2}}}\int_{\mathcal{U}_{j_{3}}}\left[  \cdot\cdot
\cdot\right]  \leq C\sum_{j_{1}\neq N}\int_{\mathcal{U}_{j_{1}}}%
\int_{\mathcal{U}_{N}}\int_{\mathcal{U}_{N}}\frac{g_{1}g_{2}g_{3}\Phi}%
{\sqrt{\omega_{2}\omega_{3}}}\times \label{Z6E5} \\
&&\hskip 1.5cm \times\left[  \left\vert \varphi_{k}\left(  \omega
_{1}+\omega_{2}-\omega_{3}\right)  -\varphi_{k}\left(  \omega_{1}\right)
\right\vert +\left\vert \varphi_{k}\left(  \omega_{3}\right)  -\varphi
_{k}\left(  \omega_{2}\right)  \right\vert \right]  d\omega_{1}d\omega
_{2}d\omega_{3} \nonumber
\end{eqnarray}

Using (\ref{Z6E4}) we obtain the inequality:%
\[
\left\vert \varphi_{k}\left(  \omega_{1}+\omega_{2}-\omega_{3}\right)
-\varphi_{k}\left(  \omega_{1}\right)  \right\vert +\left\vert \varphi
_{k}\left(  \omega_{3}\right)  -\varphi_{k}\left(  \omega_{2}\right)
\right\vert \leq\frac{C}{3^{k}\delta}\min\left\{  \omega_{2}+\omega_{3}%
,3^{k}\delta\right\}
\]

Then, the right-hand side of (\ref{Z6E5}) can be estimated by:%
\begin{equation}
\frac{C}{\sqrt{3^{k}\delta}}\sum_{j_{1}\neq N}\int_{\mathcal{U}_{j_{1}}}%
\int_{\mathcal{U}_{N}}\int_{\mathcal{U}_{N}}g_{1}g_{2}g_{3}d\omega_{1}%
d\omega_{2}d\omega_{3}\leq\frac{C}{\sqrt{3^{k}\delta}}\left(  \int g\left(
t,d\omega\right)  \varphi_{N}\right)  ^{2} \label{Z6E7}%
\end{equation}

We finally estimate the case $\left(  j_{1},j_{2},j_{3}\right)  =\left(
N,N,N\right)  .$ We derive from (\ref{Z6E4}) the estimate:
\begin{eqnarray}
&&\left\vert \varphi_{k}\left(  \omega_{1}+\omega_{2}-\omega_{3}\right)
\!+\!\varphi_{k}\left(  \omega_{3}\right) \! - \!\varphi_{k}\left(  \omega_{1}\right)
\!-\!\varphi_{k}\left(  \omega_{2}\right)  \right\vert \leq\frac{C\min\left\{
\omega_{+}\omega_{0},\left(  3^{k}\delta\right)  ^{2}\right\}  }{\left(
3^{k}\delta\right)  ^{2}}\ \label{Z6E8}%
\end{eqnarray}
where the functions $\omega_{0},\ \omega_{+}$ are as in Definition \ref{aux}.
Notice that deriving (\ref{Z6E8}) we first substract an affine function from
$\varphi_{k}$ whose contribution to the left-hand side of (\ref{Z6E8})
vanishes. The resulting function to be estimated takes the value zero at
$\omega=0$ as well as its first derivative. Notice that for $k\leq\left(
N-2\right)  $ we should have $\max\left\{  \omega_{1},\omega_{2}\right\}
\geq\frac{R_{\ast}}{8}$ in order to avoid a vanishing integral. In that case
we can argue as in the derivation of (\ref{Z6E7}) to obtain:%
\begin{align}
 \left\vert \iiint_{(\mathcal{U}_{N})^3}%
g_{1}g_{2}g_{3}\Delta_{\varphi_k,0}\left(  \omega_{1},\omega_{2},\omega_{3}\right)
d\omega_{1}d\omega_{2}%
d\omega_{3}\right\vert  \leq\frac{C}{\sqrt{3^{k}\delta}}\left(  \int g\left(  t,d\omega\right)
\varphi_{N}\right)  ^{2}\ \ \label{Z6E9}%
\end{align}
if $k=0,1,...,\left(  N-2\right)$, with  $\Delta_{\varphi_k,0}\left(  \omega_{1},\omega_{2},\omega_{3}\right)$ as defined in
(\ref{Z2E2}), (\ref{Z2E6}). If $k>\left(  N-2\right)  $ we use
(\ref{Z6E8}) which combined with the definition of $N$ implies that $\frac
{1}{3^{k}\delta}$ is bounded and of order one. Then $\sqrt{3^{k}\delta}$ and
$\left(  3^{k}\delta\right)  ^{2}$ are comparable, whence we obtain again the
estimate (\ref{Z6E9}).

Combining (\ref{Z6E1a}), (\ref{Z6E3}), (\ref{Z6E6}), (\ref{Z6E7}),
(\ref{Z6E9}) we arrive at:%
\begin{equation}
\partial_{t}m_{k}\left(  t\right)  \leq C_{0}\left[  \sum_{\ell=\left(
k-2\right)  _{+}}^{N}m_{\ell}\left(  t\right)  +\frac{1}{\sqrt{3^{k}\delta}%
}\left(  m_{N}\left(  t\right)  \right)  ^{2}\right]  \ \ ,\ \ a.e.\ t\geq
0\ \label{K6E1}%
\end{equation}
for $k=0,1,2,..,N,$ where:%
\begin{equation}
m_{k}\left(  t\right)  =\int g\left(  t,d\omega\right)  \varphi_{k}%
\ \label{K6E3}%
\end{equation}
and where $C_{0}$ is a positive constant. We notice also that the definition
of the test functions $\varphi_{k}$ implies:%
\begin{equation}
m_{k}\left(  0\right)  =\delta_{k,0} \label{K6E2}%
\end{equation}

Since the support of the functions $\varphi_{k}$ overlap at most a finite
number of times, and the total mass of $g_{in}$ has been normalized to one, we
have:%
\begin{equation}
\sum_{k=0}^{N}m_{k}\left(  t\right)  \leq3\ \ ,\ \ t\geq0 \label{K6E4}%
\end{equation}
We have also, since each $m_{k}\left(  t\right)  $ is smaller than the total
mass of $g\left(  t,\cdot\right)  :$%
\begin{equation}
m_{k}\left(  t\right)  \leq1\ \ ,\ \ t\geq0\ \ ,\ \ k=0,1,2,..,N \label{K6E6}%
\end{equation}

We first derive some upper estimates for the asymptotics of $m_{k}\left(
t\right)  $ as $t\rightarrow0.$ Using (\ref{K6E1}) and (\ref{K6E4}) we obtain:%
\begin{equation}
m_{k}\left(  t\right)  \leq C_{0}\left(  3+\frac{1}{\sqrt{3^{k}\delta}%
}\right)  t\ \ ,\ \ k=1,2,..,N \label{K6E5}%
\end{equation}

Combining (\ref{K6E6}) and (\ref{K6E5}) we obtain the following estimate:%
\begin{equation}
\max\left\{  m_{1}\left(  t\right)  ,m_{2}\left(  t\right)  \right\}  \leq
C_{0}t+O_{\delta}\left(  t^{2}\right)  \label{K6E7}%
\end{equation}
where from now on we denote as $O_{\delta}\left(  f\left(  t\right)  \right)
$ a function $g\left(  t\right)  $, perhaps depending on $\delta,$ such that
$\lim_{t\rightarrow0}\frac{g\left(  t\right)  }{f\left(  t\right)  }=0.$ We
emphasize the fact that this convergence is not uniform in $\delta$ in
general. Using again (\ref{K6E5}), (\ref{K6E6}) for $k\geq3$ we obtain:%
\begin{equation}
m_{k}\left(  t\right)  =O_{\delta}\left(  t^{2}\right)  \ \ ,\ \ k=3,...,N
\label{K6E8}%
\end{equation}

Combining now (\ref{K6E5}), (\ref{K6E7}), (\ref{K6E8}) we arrive at:%
\[
\max\left\{  m_{3}\left(  t\right)  ,m_{4}\left(  t\right)  \right\}
\leq\frac{2C_{0}^{2}t^{2}}{2!}+O_{\delta}\left(  t^{3}\right)
\]%
\[
m_{k}\left(  t\right)  =O_{\delta}\left(  t^{3}\right)  \ \ ,\ \ k=5,...,N
\]

Iterating the argument we obtain:%
\begin{equation}
\max\left\{  m_{2\ell-1}\left(  t\right)  ,m_{2\ell}\left(  t\right)
\right\}  \leq\frac{1}{2}\frac{\left(  2C_{0}t\right)  ^{\ell}}{\ell
!}+O_{\delta}\left(  t^{\ell+1}\right)  \ \ ,\ \ \ell=1,2,...\left[
\frac{N+1}{2}\right]  \ \label{K6E9}%
\end{equation}
where we will assume that $m_{N+1}\left(  t\right)  \equiv0.$ This term
formally appears in (\ref{K6E9}) if $N$ is an odd number.

Our next goal is to prove the following estimate:%
\begin{equation}
\max\left\{  m_{2\ell-1}\left(  t\right)  ,m_{2\ell}\left(  t\right)
\right\}  \leq\frac{\left(  4C_{0}t\right)  ^{\ell}}{\ell!}\ \ ,\ \ \ell
=1,2,...\left[  \frac{N+1}{2}\right]  \label{K7E1}%
\end{equation}
for $0\leq t\leq T\ $, with $T<\min\left\{  \frac{e\sqrt{R_{\ast}}}%
{12C_{0}},\frac{1}{16eC_{0}}\right\}  $. Notice that due to (\ref{K6E9}) we
have that (\ref{K7E1}) holds for $0\leq t\leq t_{\delta},$ with $t_{\delta
}>0.$ We define
\[
t^{\ast}=\sup\left\{  0\leq\bar{t}\leq T:\text{(\ref{K7E1}) holds in }0\leq
t\leq\bar{t}\right\}
\]

We already know that $t^{\ast}\geq t_{\delta}.$ Our goal is to prove that
$t^{\ast}=T.$ Suppose that $t^{\ast}<T.$ Using (\ref{K6E1}) we obtain:%
\[
\partial_{t}m_{k}\left(  t\right)  \leq C_{0}\left[  \sum_{\ell=\left(
k-2\right)  _{+}}^{k-1}m_{\ell}\left(  t\right)  +\sum_{\ell=k}^{N}m_{\ell
}\left(  t\right)  +\frac{1}{\sqrt{3^{k}\delta}}\left(  m_{N}\left(  t\right)
\right)  ^{2}\right]  \ ,\ \ k=1,2,\cdots, N.
\]
Using (\ref{K7E1}), which holds for $0\leq t\leq t^{\ast}$, we arrive at:%
\[
\partial_{t}m_{k}\left(  t\right)  \leq C_{0}\left[  \sum_{\ell=\left(
k-2\right)  _{+}}^{k-1}m_{\ell}\left(  t\right)  +2\sum_{\ell=\left[
\frac{k+1}{2}\right]  }^{\left[  \frac{N+1}{2}\right]  }\frac{\left(
4C_{0}t\right)  ^{\ell}}{\ell!}+\frac{1}{\sqrt{3^{k}\delta}}\left(
\frac{\left(  4C_{0}t\right)  ^{\left[  \frac{N}{2}\right]  }}{\left(  \left[
\frac{N}{2}\right]  \right)  !}\right)  ^{2}\right].
\]
for $k=1, 2, \cdots, N$. Notice that:%
\begin{eqnarray*}
\sum_{\ell=\left[  \frac{k+1}{2}\right]  }^{\left[  \frac{N+1}{2}\right]
}\frac{\left(  4C_{0}t\right)  ^{\ell}}{\ell!}&=&\frac{\left(  4C_{0}t\right)
^{\left[  \frac{k+1}{2}\right]  }}{\left(  \left[  \frac{k+1}{2}\right]
\right)  !}\sum_{m=0}^{N}\frac{\left(  \left[  \frac{k+1}{2}\right]  \right)
!\left(  4C_{0}t\right)  ^{m}}{\left(  \left[  \frac{k+1}{2}\right]
+m\right)  !}\\
& <&
\frac{\left(
4C_{0}t\right)  ^{\left[  \frac{k+1}{2}\right]  }}{\left(  \left[  \frac
{k+1}{2}\right]  \right)  !}\exp\left(  4C_{0}t\right)
\end{eqnarray*}
Assuming that $t\leq T<\frac{1}{16eC_{0}}<\frac{1}{4C_{0}}$ we obtain that
this sum can be estimated by $\frac{\left(  4C_{0}t\right)  ^{\left[
\frac{k+1}{2}\right]  }}{\left(  \left[  \frac{k+1}{2}\right]  \right)  !}e$.

On the other hand, using that $3^{N-1}\delta\in\left[  \frac{R_{\ast}}%
{4},\frac{3R_{\ast}}{4}\right)  $ we obtain: 
\begin{align*}
&\frac{1}{\sqrt{3^{k}\delta}}\left(  \frac{\left(  4C_{0}t\right)  ^{\left[
\frac{N+1}{2}\right]  }}{\left(  \left[  \frac{N+1}{2}\right]  \right)
!}\right)  ^{2} \leq\frac{2}{\sqrt{R_{\ast}}}\frac{\sqrt{3^{N-k-1}}\left(  4C_{0}t\right)
^{2\left[  \frac{N+1}{2}\right]  }}{\left(  \left(  \left[  \frac{N+1}%
{2}\right]  \right)  !\right)  ^{2}}\\
&  =2\frac{\left(  4C_{0}t\right)  ^{\left[  \frac{k+1}{2}\right]  }}{\left(
\left[  \frac{k+1}{2}\right]  \right)  !}e \times \\
&  \hskip 1cm \times \left[  \frac{1}{e\sqrt{R_{\ast}}}\left(  \sqrt{3}\right)
^{N-2\left[  \frac{N+1}{2}\right]  }\left(  \sqrt{3}\right)  ^{2\left[
\frac{N+1}{2}\right]  }\left(  3\right)  ^{-\frac{k+1}{2}+\left[  \frac
{k+1}{2}\right]  }\left(  3\right)  ^{-\left[  \frac{k+1}{2}\right]  }\times\right. \\
&\hskip 6cm \left. \times \left(
4C_{0}t\right)  ^{2\left[  \frac{N+1}{2}\right]  -\left[  \frac{k+1}%
{2}\right]  }\frac{\left(  \left[  \frac{k+1}{2}\right]  \right)  !}{\left(
\left(  \left[  \frac{N+1}{2}\right]  \right)  !\right)  ^{2}}\right]
\end{align*}
Using that $\sqrt{3}<3,$ $\frac{\left(  \left[  \frac{k+1}{2}\right]  \right)
!}{\left(  \left(  \left[  \frac{N+1}{2}\right]  \right)  !\right)  ^{2}}<1,$
$-\frac{k+1}{2}+\left[  \frac{k+1}{2}\right]  \leq0,\ N-2\left[  \frac{N+1}%
{2}\right]  \leq0,$ we can estimate the term between brackets as as:%
\[
\frac{1}{e\sqrt{R_{\ast}}}\left(  12C_{0}t\right)  ^{2\left[  \frac{N+1}%
{2}\right]  -\left[  \frac{k+1}{2}\right]  }%
\]
Since $2\left[  \frac{N+1}{2}\right]  -\left[  \frac{k+1}{2}\right]  \geq1,$
it then follows that, since $t\leq T\leq\min\left\{  \frac{1}{12C_{0}}%
,\frac{e\sqrt{R_{\ast}}}{12C_{0}}\right\}  $ that this term is smaller than
one. Then:%
\[
\frac{1}{\sqrt{3^{k}\delta}}\left(  \frac{\left(  4C_{0}t\right)  ^{\left[
\frac{N+1}{2}\right]  }}{\left(  \left[  \frac{N+1}{2}\right]  \right)
!}\right)  ^{2}\leq2\frac{\left(  4C_{0}t\right)  ^{\left[  \frac{k+1}%
{2}\right]  }}{\left(  \left[  \frac{k+1}{2}\right]  \right)  !}e
\]
and%
\begin{equation}
\partial_{t}m_{k}\left(  t\right)  \leq C_{0}\left[  \sum_{\ell=\left(
k-2\right)  _{+}}^{k-1}m_{\ell}\left(  t\right)  +4e\frac{\left(
4C_{0}t\right)  ^{\left[  \frac{k+1}{2}\right]  }}{\left(  \left[  \frac
{k+1}{2}\right]  \right)  !}\right]  \ ,\ \ k=1,2,..,N \label{K7E2}%
\end{equation}

We now derive estimates for the terms $m_{k}\left(  t\right)  $ iteratively,
taking as starting point the fact that $m_{0}\left(  t\right)  \leq1$ as well
as (\ref{K6E2}). Arguing by induction in $\ell$ it follows that, for $0\leq
t\leq t^{\ast}$ we have:
\begin{equation}
\max\left\{  m_{2\ell-1}\left(  t\right)  ,m_{2\ell}\left(  t\right)
\right\}  \leq\frac{3}{4}\frac{\left(  4C_{0}t\right)  ^{\ell}}{\ell
!}\ \ ,\ \ \ell=1,2,...\left[  \frac{N+1}{2}\right]  \label{K7E3}%
\end{equation}

Indeed, if $\ell=1$, using (\ref{K7E2}) as well as the fact that
\begin{equation}
16eC_{0}t\leq1\text{ for }0\leq t\leq T\ \label{K7E3c}%
\end{equation}
 we obtain
\begin{equation}
\partial_{t}m_{1}\left(  t\right)  \leq C_{0}\left[  1+16eC_{0}t\right]
<3C_{0}\ \label{K7E3a}%
\end{equation}
whence $m_{1}\left(  t\right)  \leq3C_{0}t.$ The definition of $T$ then
implies that $3C_{0}t\leq1$ for $0\leq t\leq T.$ Then, using again
(\ref{K7E2}) as well as (\ref{K7E3c}): $\ $%
\begin{equation}
\partial_{t}m_{2}\left(  t\right)  \leq C_{0}\left[  2+16eC_{0}t\right]
\leq3C_{0} \label{K7E3b}%
\end{equation}

Integrating (\ref{K7E3a}), (\ref{K7E3b}) we obtain (\ref{K7E3}) for $\ell=1.$
Suppose now that $1<\ell\leq\left[  \frac{N+1}{2}\right]  .$ Then, using the
induction hypothesis and (\ref{K7E2}) we obtain: 
\[
\partial_{t}m_{2\ell-1}\left(  t\right)  \leq C_{0}\left[
2\frac{\left(  4C_{0}t\right)  ^{\ell-1}}{\left(  \ell-1\right)  !}%
+4e\frac{\left(  4C_{0}t\right)  ^{\ell}}{\left(  \ell\right)  !}\right]
\]

Using again (\ref{K7E3c}) as well as the fact that $\left(  \ell\right)
!>\left(  \ell-1\right)  !$ we obtain $\partial_{t}m_{2\ell-1}\left(
t\right)  \leq3C_{0}\frac{\left(  4C_{0}t\right)  ^{\ell-1}}{\left(
\ell-1\right)  !}$ whence:%
\begin{equation}
m_{2\ell-1}\left(  t\right)  \leq\frac{3}{4}\frac{\left(  4C_{0}t\right)
^{\ell}}{\ell!} \label{K7E4}%
\end{equation}

Using again (\ref{K7E2}), combined with the induction hypothesis (\ref{K7E3})
as well as (\ref{K7E4}) and the fact that $\frac{\left(  4C_{0}t\right)
^{\ell}}{\ell!}\leq\frac{\left(  4C_{0}t\right)  ^{\ell-1}}{\left(
\ell-1\right)  !}$ if $\frac{4C_{0}t}{\ell!}<\frac{1}{\left(  \ell-1\right)
!}$ for $0\leq t\leq T,$ we obtain:%
\[
\partial_{t}m_{2\ell}\left(  t\right)  \leq C_{0}\left[  2\frac{3}{4}%
\frac{\left(  4C_{0}t\right)  ^{\ell-1}}{\left(  \ell-1\right)  !}%
+4e\frac{\left(  4C_{0}t\right)  ^{\ell}}{\left(  \ell\right)  !}\right]
\]
whence, using once more (\ref{K7E3c}) we obtain: $\partial_{t}m_{2\ell}\left(
t\right)  \leq3C_{0}\frac{\left(  4C_{0}t\right)  ^{\ell-1}}{\left(
\ell-1\right)  !}$ thus $m_{2\ell}\left(  t\right)  \leq\frac{3}{4}%
\frac{\left(  4C_{0}t\right)  ^{\ell}}{\ell!}.$ This concludes the proof of
(\ref{K7E3}) for $0\leq t\leq t_{\ast}$. Then the inequality (\ref{K7E1}) can
be obtained, due to the continuity of the functions $m_{k}\left(  t\right)  $
to some interval $0\leq t\leq t_{\ast}+\varepsilon_{0}$ with $\varepsilon
_{0}>0,$ but this contradicts the definition of $t_{\ast}$ and implies that
$t^{\ast}=T.$

In order to conclude the Proof of the Lemma we notice that the definition of
$m_{k}\left(  t\right)  $ and (\ref{K7E1}) yield:%
\[
\int_{\mathcal{U}_{k}}g\left(  t,d\omega\right)  \leq\int g\left(
t,d\omega\right)  \varphi_{k}=m_{k}\left(  t\right)  \leq\frac{\left(
4C_{0}t\right)  ^{\left[  \frac{k+1}{2}\right]  }}{\left(  \left[  \frac
{k+1}{2}\right]  \right)  !}%
\]
for $k=1,...,N.$ We now choose $k_{0}=\frac{N}{2}.$ Then $3^{k_{0}}\delta\leq
C\sqrt{\delta}.$ Moreover $k_{0}\rightarrow\infty$ as $\delta\rightarrow0.$
Then:%
\[
\int_{\bigcup_{k\geq k_{0}}\mathcal{U}_{k}}g\left(  t,d\omega\right)  \leq
C\frac{\left(  4C_{0}t\right)  ^{\left[  \frac{k_{0}+1}{2}\right]  }}{\left(
\left[  \frac{k_{0}+1}{2}\right]  \right)  !}\rightarrow0\text{ as }%
\delta\rightarrow0
\]

Classical continuity results for Radon measures then imply that:%
\[
\int_{\left[  0,\infty\right)  \setminus A_{\ast}}g\left(  t,d\omega\right)
=0\ \ ,\ \ 0\leq t\leq t_{\ast}%
\]
and the Lemma follows iterating the argument in time intervals $t_{\ast
}\leq t\leq2t_{\ast},...$
\end{proof}
\begin{proof}
[Proof of the Theorem \ref{Asympt}]We assume without loss of generality that
$m=\int g_{0}\left(  d\omega\right)  =1.$ Let $\varphi\in C_{0}\left(  \left[
0,\infty\right)  \right)  .$ Suppose first that $R_{\ast}=0.$ We define
$\bar{A}=A\setminus\left\{  0\right\}  .$ Let $\bar{R}_{\ast}=\inf\left(
\bar{A}^{\ast}\right)  ,$ where we will use the notation $\inf\left(
\emptyset\right)  =\infty.$ We will consider separately the cases $\bar
{R}_{\ast}=0$ and $\bar{R}_{\ast}>0.$ Suppose first that $\bar{R}_{\ast}=0.$
Our goal is to show (\ref{K3E6}). Let $\varepsilon>0$ arbitrarily small. Since
$\bar{R}_{\ast}=0$ there exist points $z\in\bar{A}^{\ast}$ with $z$
arbitrarily small and $B_{r}\left(  z\right)  \subset\left(  0,\varepsilon
^{2}\right]  $ for some $r>0$. Then, Lemma \ref{Diff} with $t^{\ast}=1$ yields
$c_{0}=\int_{\left(  0,\varepsilon^{2}\right]  }g\left(  1,d\omega\right)
\geq\int_{B_{r}\left(  z\right)  }g\left(  1,d\omega\right)  >0.$ Let
$\eta=\frac{1}{2}$ and $R=\varepsilon^{2}.$ Then Lemma \ref{cotInf} implies
that:%
\begin{equation}
\int_{\left[  0,2\varepsilon^{2}\right]  }g\left(  t,d\omega\right)  \geq
\frac{c_{0}}{2}\ \ ,\ \ t\geq1 \label{K3E7}%
\end{equation}

Suppose now that $R_{\ast}=0$ and $\bar{R}_{\ast}>0.$ Then $\int_{\left(
0,\bar{R}_{\ast}\right)  }g_{0}\left(  d\omega\right)  =0.$ However, since
$R_{\ast}=0$ we have $\int_{\left[  0,\bar{R}_{\ast}\right)  }g_{0}\left(
d\omega\right)  >0,$ whence $\int_{\left\{  0\right\}  }g_{0}\left(
d\omega\right)  >0.$ Then $\int_{\left[  0,\varepsilon^{2}\right]  }%
g_{0}\left(  d\omega\right)  \geq c_{0}.$ Applying then Lemma \ref{cotInf}
with $R=\varepsilon^{2}$ and $\eta=\frac{1}{2}$ we then obtain (\ref{K3E7}).
Therefore we always have (\ref{K3E7}) if $R_{\ast}=0.$

We now have the following alternative. Either there exists $\bar{t}\geq1$ such
that $\int_{\left[  0,4\varepsilon^{2}\right)  }g\left(  \bar{t}%
,d\omega\right)  \geq1-\frac{\varepsilon}{3},$ or otherwise we have
\begin{equation}
\int_{\left[  0,4\varepsilon^{2}\right]  }g\left(  t,d\omega\right)
\leq1-\frac{\varepsilon}{3}\text{ \ \ for any \ }t\geq1\ \label{K4E2}%
\end{equation}
Our aim is to show that second case implies a contradiction. Suppose that
(\ref{K4E2}) takes place. We choose now $R$ sufficiently large to have
$\int_{\left[  0,R\right]  }g_{0}\left(  d\omega\right)  \geq1-\frac
{\varepsilon}{12}.$ Applying Lemma \ref{cotInf} it then follows that
\begin{equation}
\int_{\left[  0,\frac{12R}{\varepsilon}\right]  }g\left(  t,d\omega\right)
\geq1-\frac{\varepsilon}{6}\ \ ,\ \ t\geq0 \label{K4E3}%
\end{equation}
Combining (\ref{K4E2}), (\ref{K4E3}):%
\begin{equation}
\int_{\left(  4\varepsilon^{2},\frac{12R}{\varepsilon}\right]  }g\left(
t,d\omega\right)  \geq\frac{\varepsilon}{6}\ \ ,\ \ t\geq1 \label{K3E8}%
\end{equation}

Let $\varphi\in C^{2}\left(  \left[  0,\infty\right)  \right)  $ a test
function satisfying the following properties:%
\begin{eqnarray*}
&&\text{for any }\omega\in\left[  0,\infty\right):\,\,\,\varphi\left(  \omega\right)  \geq0,\ \ \varphi^{\prime}\left(
\omega\right)  >0,\ \ \varphi^{\prime\prime}\left(  \omega\right)
<0;\\
&&\lim_{\omega\rightarrow\infty}\varphi\left(  \omega\right)  =1\ \ ,\ \ \varphi
\left(  0\right)  =0
\end{eqnarray*}
Let the function $H_{\varphi}^{1}$ as in Lemma \ref{strictConvex}. Suppose
that $\omega_{-}\in\left[  0,2\varepsilon^{2}\right]  $ and $\omega_{0}%
,\omega_{+}\in\left(  4\varepsilon^{2},\frac{12R}{\varepsilon}\right]  .$ Then
$\left(  \omega_{+}-\omega_{-}\right)  \geq\left(  \omega_{0}-\omega
_{-}\right)  \geq2\varepsilon^{2}.$ Using the strict concavity of $\varphi$ in
bounded regions as well as Taylor Theorem we obtain that for such values of
$\left(  \omega_{1},\omega_{2},\omega_{3}\right)  :$
\[
H_{\varphi}^{1}\left(  \omega_{1},\omega_{2},\omega_{3}\right)  \leq
-\kappa\varepsilon^{2}%
\]
where $\kappa>0$ depends on the values of the second derivative of $\varphi$
in $\left[  0,\frac{12R}{\varepsilon}\right]  .$ (Therefore it depends on
$\varepsilon$). Using Proposition \ref{atractiveness} as well as Lemma
\ref{strictConvex} it then follows that:%
\[
\partial_{t}\left(  \int\varphi\left(  \omega\right)  g\left(  t,d\omega
\right)  \right)  \leq-\bar{\kappa}\varepsilon^{2}\left(  \int_{\left(
4\varepsilon^{2},\frac{12R}{\varepsilon}\right]  }g\left(  t,d\omega\right)
\right)  ^{2}\int_{\left[  0,2\varepsilon^{2}\right]  }g\left(  t,d\omega
\right)  \ \ ,\ \ t\geq1
\]

Using (\ref{K3E7}) and (\ref{K3E8}) we then obtain:%
\[
\partial_{t}\left(  \int\varphi\left(  \omega\right)  g\left(  t,d\omega
\right)  \right)  \leq-\frac{\bar{\kappa}c_{0}}{72}\varepsilon^{4}%
\ \ ,\ \ t\geq1
\]
but this contradicts the boundedness of $\int\varphi\left(  \omega\right)
g\left(  t,d\omega\right)  .$ Therefore (\ref{K4E2}) cannot be satisfied and
we then obtain that there exists $\bar{t}=\bar{t}\left(  \varepsilon\right)
\geq1$ such that $\int_{\left[  0,4\varepsilon^{2}\right)  }g\left(  \bar
{t},d\omega\right)  \geq1-\frac{\varepsilon}{3}.$

We then apply again Lemma \ref{cotInf} with $\eta=\frac{\varepsilon}{3}$ to
prove that 
$$\int_{\left[  0,\frac{12\varepsilon^{2}}{\varepsilon}\right)
}g\left(  t,d\omega\right)  =\int_{\left[  0,12\varepsilon^{2}\right)
}g\left(  t,d\omega\right)  \geq1-\frac{2\varepsilon}{3}$$
 for any $t\geq
\bar{t}.$ Then $\int_{\left[  12\varepsilon^{2},\infty\right)  }g\left(
t,d\omega\right)  <\frac{2\varepsilon}{3}.$ Using the continuity of $\varphi$
we then obtain that:%
\[
\left\vert \int g\left(  t,d\omega\right)  \varphi\left(  \omega\right)
-\varphi\left(  0\right)  \right\vert \leq\sup_{\omega\in\left[
0,12\varepsilon^{2}\right]  }\left\vert \varphi\left(  \omega\right)
-\varphi\left(  0\right)  \right\vert +\frac{4\varepsilon}{3}\sup_{\omega
}\left\vert \varphi\left(  \omega\right)  \right\vert \ \ \text{if\ }t\geq
\bar{t}\left(  \varepsilon\right)
\]

Since $\varepsilon$ can be made arbitrarily small we obtain (\ref{K3E6}) if
$R_{\ast}=0.$

Let us assume now that $R_{\ast}>0.$ In this case, due to Lemma \ref{SetAstar}
the set $A^{\ast}$ has the form (\ref{K5E2}). Lemma \ref{Diff} with $t^{\ast
}=1$ implies that $\int_{\left\{  R_{\ast}\right\}  }g\left(  1,d\omega
\right)  >0.$ Moreover, Lemma \ref{gSupport} implies
that\ $\operatorname*{supp}\left(  g\left(  t,.\right)  \right)  \subset
A^{\ast}$ for any $t\geq0.$ We define the test function $\varphi\left(
\omega\right)  =\frac{\left(  \bar{\omega}-\omega\right)  _{+}}{\left(
\bar{\omega}-R_{\ast}\right)  }$ with $\bar{\omega}=\frac{1}{2}\min
_{k=1,...L}\left\{  R_{\ast}+D_{k}\right\}  .$ Since $\varphi$ is convex,
Lemma \ref{Convex} implies that $\partial_{t}\left(  \int_{\left\{  R_{\ast
}\right\}  }g\left(  t,d\omega\right)  \right)  =\partial_{t}\left(  \int
g\left(  t,d\omega\right)  \varphi\left(  \omega\right)  \right)  \geq0.$
Moreover, using the form of the function $\mathcal{G}_{0,\varphi}$ in Lemma
\ref{strictConvex} as well as (\ref{S2E5}) and the fact that $H_{\varphi}%
^{2}\geq0$ we obtain:%
\begin{equation}
\frac{d}{dt}\left(  \int_{\left\{  R_{\ast}\right\}  }g\left(  t,d\omega
\right)  \right)  \geq\frac{1}{3}\int_{0}^{\infty}\int_{0}^{\infty}\int
_{0}^{\infty}\frac{g_{1}g_{2}g_{3}}{\sqrt{\omega_{0}\omega_{+}}}H_{\varphi
}^{1}\left(  \omega_{1},\omega_{2},\omega_{3}\right)  d\omega_{1}d\omega
_{2}d\omega_{3} \label{K7E5}%
\end{equation}

Notice that we can restrict the integration in the right of (\ref{K7E5}) to
the set $\omega_{+}\in A^{\ast}\setminus\left\{  R_{\ast}\right\}  ,$ due to
the fact that $\omega_{+}\in\left\{  R_{\ast}\right\}  $ implies $\omega
_{+}=\omega_{0}=\omega_{-}=R_{\ast}$ and $H_{\varphi}^{1}=0.$ Our choice of
the function $\varphi$ implies that $\varphi\left(  \omega_{+}+\omega
_{0}-\omega_{-}\right)  =\varphi\left(  \omega_{+}\right)  =0$ if $\omega
_{+}\in A^{\ast}\setminus\left\{  R_{\ast}\right\}  .$ Then, using the form of
$H_{\varphi}^{1}$ we obtain:%
\[
\frac{d}{dt}\left(  \int_{\left\{  R_{\ast}\right\}  }g\left(  t,d\omega
\right)  \right)  \geq\frac{1}{3}\iiint\limits_{\left\{  \omega_{+}\in A^{\ast
}\setminus\left\{  R_{\ast}\right\}  \right\}  }\frac{g_{1}g_{2}g_{3}}%
{\sqrt{\omega_{0}\omega_{+}}}\varphi\left(  \omega_{+}+\omega_{-}-\omega
_{0}\right)  d\omega_{1}d\omega_{2}d\omega_{3}%
\]
and since $\varphi\geq0$ we obtain:%
\begin{align*}
\frac{d}{dt}\left(  \int_{\left\{  R_{\ast}\right\}  }g\left(  t,d\omega
\right)  \right)   &  \geq
\iiint_{\left\{\substack{ \omega_{+},\omega_{0}\in A^{\ast
}\setminus\left\{  R_{\ast}\right\} \\ \omega_{-}=R_{\ast}}\right\}  }
\frac{g_{1}g_{2}g_{3}}{\sqrt{\omega_{0}\omega_{+}}%
}\varphi\left(  \omega_{+}+\omega_{-}-\omega_{0}\right)  d\omega_{1}%
d\omega_{2}d\omega_{3}\\
&  =\frac{1}{3}\iiint_{\left\{\substack{ \omega_{+},\omega_{0}\in A^{\ast
}\setminus\left\{  R_{\ast}\right\} \\ \omega_{-}=R_{\ast}}\right\}  }%
\frac{g_{1}g_{2}g_{3}}{\sqrt{\omega_{0}\omega_{+}}}d\omega_{1}d\omega
_{2}d\omega_{3}\\
&  =K_{1}\int_{\left\{  R_{\ast}\right\}  }g\left(  t,d\omega\right)  \left[
\int_{A^{\ast}\setminus\left\{  R_{\ast}\right\}  }g\left(  t,d\omega\right)
\right]  ^{2}%
\end{align*}
where $K_{1}>0$ contains combinatorial factor whose precise value is not
relevant. Due to the monotonicity of $\int_{\left\{  R_{\ast}\right\}
}g\left(  t,d\omega\right)  $ we have:%
\begin{eqnarray*}
\frac{d}{dt}\left(  \int_{\left\{  R_{\ast}\right\}  }g\left(  t,d\omega
\right)  \right) &\geq & K_{1}\int_{\left\{  R_{\ast}\right\}  }g\left(
1,d\omega\right)  \left[  \int_{A^{\ast}\setminus\left\{  R_{\ast}\right\}
}g\left(  t,d\omega\right)  \right]  ^{2}\\
&\geq &K_{2}\left[  \int_{A^{\ast
}\setminus\left\{  R_{\ast}\right\}  }g\left(  t,d\omega\right)  \right]
^{2}=K_{2}\left[  1-\int_{\left\{  R_{\ast}\right\}  }g\left(  t,d\omega
\right)  \right]  ^{2}%
\end{eqnarray*}
for $t\geq1.$ Then $\int_{\left\{  R_{\ast}\right\}  }g\left(  t,d\omega
\right)  \rightarrow1$ as $t\rightarrow\infty$ and the Theorem follows.
\end{proof}

The results in this Section provide detailed information on the asymptotics of
the weak solutions for arbitrary initial data $g_{in}.$ It is important to
remark that, that for any $R_{\ast}>0$ there exist initial data $g_{in}$ such
that the corresponding solution $g\left(  t,\cdot\right)  $ converges to
$m\delta_{R_{\ast}}$ as $t\rightarrow\infty.$ Indeed, any initial distribution
$g_{in}$ supported in a set $A$ such that $A^{\ast}$ is one of the sets
(\ref{K5E2}) yields this asymptotics for $g\left(  t,\cdot\right)  .$

It is interesting to remark that the aggregation of the particles towards
$\omega=0$ does not imply that the energy of the system becomes concentrated
in the region where $\omega$ is close to zero. Indeed, the particles with
$\omega=0$ have zero energy. Since $g\chi_{\left(  0,\infty\right)  }\left(
\omega\right)  \rightharpoonup0$ as $t\rightarrow\infty$ the only alternative
left, due to the conservation of the energy, is the flux of the remaining
energy towards large values of $\omega.$ The fact that the energy tends to
move towards large values of $k$ has been noticed repeatedly in the physical
literature (cf. for instance \cite{DNPZ}, \cite{P}). The precise result that
we prove is the following:

\begin{corollary}
\label{AsEnergy}Let $\rho<-2$ and $g\in C\left(  \left[
0,\infty\right)  :\mathcal{M}_{+}\left(  \left[  0,\infty\right)  :\left(
1+\omega\right)  ^{\rho}\right)  \right)  $ be a weak solution of (\ref{S2E1})
in the sense of Definition \ref{weakSolution} with initial datum $g_{in}%
\in\mathcal{M}_{+}\left(  \left[  0,\infty\right)  :\left(  1+\omega\right)
^{\rho}\right)  .$ Let $m=\int g_{in}d\omega$ and $e=\int\omega g_{in}%
d\omega.$ Suppose that $m>0$ and let $R_{\ast}$ as in Theorem \ref{Asympt}.
Suppose that $e>mR_{\ast}.$ Then, there exists an increasing function
$R\left(  t\right)  $ such that $\lim_{t\rightarrow\infty}R\left(  t\right)
=\infty$ and:%
\[
\lim_{t\rightarrow\infty}\int_{R\left(  t\right)  }^{\infty}\omega g\left(
t,d\omega\right)  =\left(  e-mR_{\ast}\right)
\]

\end{corollary}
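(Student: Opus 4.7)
\textbf{Proof plan for Corollary \ref{AsEnergy}.} The plan is to combine the weak convergence $g(t,\cdot)\rightharpoonup m\delta_{R_{\ast}}$ given by Theorem \ref{Asympt} with the conservation of energy in order to locate precisely where the ``missing'' energy $e-mR_{\ast}$ is transported. First I would record that, since $\rho<-2$, the conservation identity $\int\omega\,g(t,d\omega)=\int\omega\,g_{in}(d\omega)=e$ holds for all $t\geq 0$ (this is (\ref{K2}) in the remark following Theorem \ref{globalWeakSol}, obtained by testing against a sequence of cutoffs of the function $\omega\mapsto\omega$). Thus the total energy is a conserved quantity at the level of weak solutions.

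Next I would show that for every fixed $L>R_{\ast}$,
\begin{equation*}
\lim_{t\to\infty}\int_{[0,L]}\omega\,g(t,d\omega)=mR_{\ast}.
\end{equation*}
To do so, choose smooth cutoffs $\psi_L,\tilde\psi_L\in C_0([0,\infty))$, $0\le\psi_L\le\chi_{[0,L]}\le\tilde\psi_L$, with $\psi_L\equiv 1$ on a neighbourhood of $R_{\ast}$ and $\tilde\psi_L\equiv 1$ on $[0,L]$. Then $\omega\psi_L(\omega),\,\omega\tilde\psi_L(\omega)$ are admissible test functions in Theorem \ref{Asympt}, giving $\int\omega\psi_L(\omega)g(t,d\omega)\to mR_{\ast}$ and $\int\omega\tilde\psi_L(\omega)g(t,d\omega)\to mR_{\ast}$; sandwiching $\int_{[0,L]}\omega\,g(t,d\omega)$ between these two quantities yields the claim. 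Combined with energy conservation this gives
\begin{equation*}
\lim_{t\to\infty}\int_{(L,\infty)}\omega\,g(t,d\omega)=e-mR_{\ast}\qquad\text{for every fixed }L>R_{\ast}.
\end{equation*}

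Finally I would promote this family of pointwise-in-$L$ limits to a single moving threshold by a standard diagonal extraction. Take any sequence $L_n\nearrow\infty$ with $L_1>R_{\ast}$; by the previous step there exist times $T_n\nearrow\infty$ such that
\begin{equation*}
\left|\int_{(L_n,\infty)}\omega\,g(t,d\omega)-(e-mR_{\ast})\right|<\frac{1}{n}\qquad\text{for all }t\ge T_n.
\end{equation*}
Defining $R(t)=L_n$ on $[T_n,T_{n+1})$ (and interpolating linearly between consecutive plateaus to obtain strict monotonicity if desired) one obtains an increasing function with $R(t)\to\infty$ and $\int_{R(t)}^{\infty}\omega\,g(t,d\omega)\to e-mR_{\ast}$, as required.

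The only real technical step is the sandwich argument in the middle paragraph, and this would be the main point to execute carefully: one must check that the test functions $\omega\psi_L,\omega\tilde\psi_L$ indeed lie in $C_0([0,\infty))$ (so that Theorem \ref{Asympt} applies directly) and that the hypothesis $e>mR_{\ast}$ is used only to guarantee that the limit quantity $e-mR_{\ast}$ is strictly positive, so that the asserted behaviour is non-trivial. No further compactness or monotonicity beyond what Theorem \ref{Asympt} and the conservation law (\ref{K2}) provide is needed.
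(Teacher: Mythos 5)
Your argument is correct and is essentially the one the paper has in mind: the paper's own proof is a one-line remark that the corollary ``is just a consequence of Theorem \ref{Asympt} as well as the conservation of energy for these distributions (cf.\ (\ref{K2})).'' Your sandwich with $\omega\psi_L$ and $\omega\tilde\psi_L$ correctly converts the weak convergence $g(t,\cdot)\rightharpoonup m\delta_{R_*}$ into $\int_{[0,L]}\omega\,g(t,d\omega)\to mR_*$ for each fixed $L>R_*$, and the diagonal selection of the moving threshold $R(t)$ is a standard finish; the plateau construction already suffices, and if one insists on strict monotonicity the interpolation step is also justified because $L\mapsto\int_{(L,\infty)}\omega\,g(t,d\omega)$ is monotone, so controlling the two endpoints $L_n,L_{n+1}$ controls the whole interval.
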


\begin{remark}
This Corollary just means that all the excess of energy which does not
accumulate at the point $\omega=R_{\ast}$ escapes to infinity. Notice that the
inequality $e>mR_{\ast}$ is satisfied for any initial distribution $g_{in}$
except the ones given by $g_{in}=m\delta_{R_{\ast}}.$
\end{remark}

\begin{proof}
It is just a consequence of Theorem \ref{Asympt} as well as the conservation
of energy for these distributions (cf. (\ref{K2})).
\end{proof}

\subsection{Energy transfer towards large values of $k$.}
\label{Transfer}
Notice that Theorem \ref{Asympt} implies that the mass of the distribution $g$
tends to concentrate towards the smallest value of $\omega$ compatible with
the collision mechanism. Suppose that $R_{\ast}=\inf\left(  A^{\ast}\right)  $
On the other hand, if we assume that the total energy of the initial
distribution is bounded, i.e. $\int g_{0}\left(  \omega\right)  \omega
d\omega<\infty,$ we would have, due to the conservation of the energy that
$MR_{\ast}<\int g_{0}\left(  \omega\right)  \omega d\omega,$ with $M=\int
g_{0}\left(  \omega\right)  d\omega.$ Then:%
\[
\int\left(  g\left(  t,\omega\right)  -M\delta_{R_{\ast}}\right)  \omega
d\omega\rightarrow\left[  \int g_{0}\left(  \omega\right)  \omega
d\omega-MR_{\ast}\right]
\]
as $t\rightarrow\infty.$ Therefore, since $\int_{\left(  R_{\ast}%
,\infty\right)  }g\left(  t,\omega\right)  \rightharpoonup0$ as $t\rightarrow
\infty,$ it follows that a fraction of the energy of the system should move
towards large values of $\omega.$ Actually, if $R_{\ast}=0,$ all the energy of
the system moves towards large values of $\omega$ as $t\rightarrow\infty.$

It turns out that the rate of transfer of the energy towards larger values of
$k$ is given by the interaction between particles of a given size, say $R,$
with smaller particles. This is made more precise in the following result
which provides a characteristic time scale for the transfer of particles of
size $R$ towards larger values. It is worth to remark that for specific
initial data such a transfer of particles could be nonexistent. For instance,
if $g_{0}\left(  \omega\right)  =M\delta\left(  \omega-R_{0}\right)  ,$ the
transfer of energy towards larger modes does not take place. 

\begin{proposition}
\label{coarsening}Suppose that $g$ is a weak solution of (\ref{S2E1}) in the
sense of Definition \ref{weakSolution}. We define the test function
$\varphi_{R}\left(  \omega\right)  =RQ\left(  \frac{\omega}{R}\right)  ,$ with
$Q\left(  s\right)  =s$ for $0\leq s\leq\frac{1}{2},\ Q^{\prime}%
\geq0,\ Q\left(  s\right)  =1$ for $s\geq\frac{3}{2},\ Q\in C^{2}\left(
\left[  0,\infty\right)  \right)  .$ There exists a constant $c_{0}>0$
independent of $R$ such that, if $\int g_{0}d\omega=1,\ \int g_{0}\omega
d\omega=1,$ $\int g_{0}\varphi_{R}d\omega\geq\frac{1}{2}$ and $\int g\left(
T,\omega\right)  \varphi_{R}d\omega \geq\frac{1}{4}$ we have $T\geq c_{0}%
R^{2}.$
\end{proposition}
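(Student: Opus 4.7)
The function $F(t)=\int g(t,\omega)\varphi_R(\omega)\,d\omega$ is non-increasing in $t$, since $\varphi_R=RQ(\cdot/R)$ is concave ($Q''\le 0$ on $[1/2,3/2]$ and $Q$ is linear or constant elsewhere), and Lemma~\ref{Convex} applied to $-\varphi_R$ yields $F'\le 0$. Interpreting the hypothesis as $F(T)\le\tfrac{1}{4}$ (an apparent typo, since the stated inequality would render the conclusion vacuous for $T=0$), the claim $T\ge c_0R^2$ reduces to the pointwise estimate $|F'(t)|\le C/R^2$ uniformly in $t$; indeed $\tfrac{1}{4}\le F(0)-F(T)\le CT/R^2$ then yields $T\ge R^2/(4C)=:c_0 R^2$.

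Using Proposition~\ref{atractiveness} and the explicit decomposition of Lemma~\ref{strictConvex}, write
$$-F'(t)=\tfrac{1}{3}\iiint\frac{g_1g_2g_3}{\sqrt{\omega_1\omega_2\omega_3}}\Bigl[\sqrt{\omega_-}(-H^1_{\varphi_R})+\sqrt{(\omega_0+\omega_--\omega_+)_+}(-H^2_{\varphi_R})\Bigr]\,d\omega_1d\omega_2d\omega_3,$$
with both summands nonnegative. Since $\|\varphi_R''\|_\infty\le C/R$ with $\operatorname{supp}\varphi_R''\subset[R/2,3R/2]$, standard second-difference inequalities give $(-H^1_{\varphi_R})\le(C/R)(\omega_0-\omega_-)^2$ and $(-H^2_{\varphi_R})\le(C/R)(\omega_+-\omega_0)(\omega_+-\omega_-)$. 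Combined with $\omega_0-\omega_-\le\omega_0\le\omega_+$, the identity $\sqrt{\omega_-}/\sqrt{\omega_1\omega_2\omega_3}=1/\sqrt{\omega_0\omega_+}$, and $\sqrt{(\omega_0+\omega_--\omega_+)_+}\le\sqrt{\omega_-}$, the $H^1$-integrand is pointwise $\le(C/R)\omega_0$. For the $H^2$-term the constraint $\omega_0+\omega_--\omega_+\ge 0$ (with $\omega_-\le\omega_0$) forces $\omega_0\ge\omega_+/2$, which yields the analogous bound $(C/R)\omega_+$ for that integrand.

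The extra factor of $1/R$ beyond $\|\varphi_R''\|_\infty$ is extracted from the support restrictions inherited from $\operatorname{supp}\varphi_R''$: $(-H^1_{\varphi_R})\ne 0$ requires $[\omega_+-h,\omega_++h]\cap[R/2,3R/2]\ne\emptyset$, which since $h=\omega_0-\omega_-\le\omega_+$ forces $\omega_+\ge R/4$; analogously $(-H^2_{\varphi_R})\ne 0$ forces $\omega_+\ge R/2$ and hence $\omega_0\ge R/4$. Using the symmetry of $\mathcal{G}_{0,\varphi_R}$ to order $\omega_1\le\omega_2\le\omega_3$ (with multiplicative factor $6$) and Markov's inequality $\int_{R/4}^\infty g\le(4/R)\int\omega g=4/R$ (via the energy conservation $\int\omega g(t,d\omega)=\int\omega g_0=1$), one obtains
$$\iiint g_1g_2g_3\,\omega_0\,\mathbf{1}_{\omega_+\ge R/4}\le 6\Bigl(\int g\Bigr)\Bigl(\int\omega g\Bigr)\Bigl(\int_{R/4}^\infty g\Bigr)\le\frac{24}{R},$$
and the $H^2$-contribution is controlled identically. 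Multiplying the two factors of $1/R$ yields $|F'(t)|\le C/R^2$ and the proposition follows. The principal technical subtlety is arranging the decomposition so the pointwise bound on the integrand involves only $\omega_0$ (whose triple integral against $g_1g_2g_3$ is controlled by $\int g\cdot\int\omega g=1$) rather than a higher moment such as $\omega_+^{3/2}/\sqrt{\omega_0}$, for which the hypotheses afford no a priori control; this is precisely what the explicit decomposition in Lemma~\ref{strictConvex} makes possible.
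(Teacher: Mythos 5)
Your proof is correct and takes a genuinely different route from the paper's own argument. The paper works directly with the weak formulation (the analogue of (\ref{Z2E1a}) with $\Delta_{\varphi_R,0}$), splits the triple integral over the eight regions determined by whether each $\omega_i$ lies in $I_R=[0,R]$ or its complement, and estimates each piece by hand, arriving at the Riccati-type bound $\partial_t F \geq -\frac{C}{R^2}F^2$ with $F(t)=\int g(t,\cdot)\varphi_R\,d\omega$; integrating that inequality already gives $T\gtrsim R^2$ from $F(0)\ge\tfrac12$, $F(T)\le\tfrac14$, with no explicit uniform bound on $F$ required. You instead pass through the symmetrized representation $\mathcal{G}_{0,\varphi}=\tfrac13[\sqrt{\omega_-}H^1_\varphi+\sqrt{(\omega_0+\omega_--\omega_+)_+}H^2_\varphi]$ of Lemma~\ref{strictConvex}, control $-H^1,-H^2$ by second-difference estimates with $\|\varphi_R''\|_\infty\lesssim 1/R$, and extract the second factor of $1/R$ from the support restriction $\omega_+\gtrsim R$ together with Markov's inequality on the energy. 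Both deliver the coercive estimate; yours is arguably cleaner because the constraints $\omega_-\le\omega_0\le\omega_+$ and $\omega_0+\omega_-\ge\omega_+$ are built into the decomposition, whereas the paper must re-derive the relevant cancellations in each of the eight sub-integrals. Two observations: (i) your proof invokes energy conservation $\int\omega\,g(t,d\omega)=\int\omega\,g_0\,d\omega=1$, which the paper's proof avoids (its final bound involves only $F$ and $\int g$); one can dispense with it in your framework too, by noting $\varphi_R(\omega)\ge R/4$ for $\omega\ge R/4$, so $\int_{R/4}^\infty g\le (4/R)F(t)$, which recovers the Riccati form $-F'\lesssim F^2/R^2$ after rebalancing the $\omega_0$-factor into a $\varphi_R(\omega_0)$-factor via the constraint $\omega_0\le\omega_+$; (ii) you correctly identified that the inequality $\int g(T,\cdot)\varphi_R\,d\omega\ge\tfrac14$ in the statement must be read with $\le$ in place of $\ge$ — since $F$ is non-increasing by concavity of $\varphi_R$ and Lemma~\ref{Convex}, the statement as printed is vacuous at $T=0$ — a reading consistent both with the paper's own computation and with the remark following the proposition about the time needed to \emph{reduce} the quantity $\int g\varphi_R$ by a definite amount.
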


\begin{remark}
Given any initial configuration $g_{0}$ such that $\int g_{0}d\omega=M$, 
$\int g_{0}\omega d\omega=E,$ a simple rescaling argument implies that, if $\int
g_{0}\varphi_{R}d\omega\geq\frac{E}{2}$ and $\int g\left(  T,\omega\right)
\varphi_{R}d\omega\geq\frac{E}{4}$ we have $T\geq c_{0}\frac{R^{2}}{ME}$.
\end{remark}

\begin{proof}
We apply (\ref{Z2E1}) with test function $\varphi=\varphi_{R},$ $t_{\ast}%
\in\left[  0,T\right]  $ and $\sigma=0.$ We then obtain:
\begin{equation}
\partial_{t}\left(  \int_{\left[  0,\infty\right)  }g\left(  t,\omega\right)
\varphi\left(  \omega\right)  d\omega\right)  =\iiint_{(\left[  0,\infty\right)^3}
g_{1}g_{2}g_{3}\, \Delta _{ \varphi _k, 0 }(\omega _1, \omega _2, \omega _3)  d\omega_{1}d\omega_{2}d\omega_{3}\ \label{Z2E1a}%
\end{equation}
$a.e.$ $t\in\left[  0,T\right]$, with  $\Delta _{ \varphi _k, 0 }(\omega _1, \omega _2, \omega _3)$  as defined in
(\ref{Z2E2}), (\ref{Z2E6}).

Let us write $I_{R}=\left[  0,R\right]  $
and $I_{R}^{c}=\mathbb{R}\setminus\left[  0,R\right]  .$ We then split the
domains of integration in the last integral of (\ref{Z2E1a}) as follows:%
\begin{align*}
\iiint_{(\left[  0,\infty\right)^3}  &  =\int_{I_{R}}\int_{I_{R}}\int_{I_{R}}+\int_{I_{R}}%
\int_{I_{R}}\int_{I_{R}^{c}}+\int_{I_{R}}\int_{I_{R}^{c}}\int_{I_{R}}%
+\int_{I_{R}^{c}}\int_{I_{R}}\int_{I_{R}}+\\
&  +\int_{I_{R}}\int_{I_{R}^{c}}\int_{I_{R}^{c}}+\int_{I_{R}^{c}}\int_{I_{R}%
}\int_{I_{R}^{c}}+\int_{I_{R}^{c}}\int_{I_{R}^{c}}\int_{I_{R}}+\int_{I_{R}%
^{c}}\int_{I_{R}^{c}}\int_{I_{R}^{c}}\\
&  \equiv\sum_{k=1}^{8}J_{k}%
\end{align*}

We now claim that $J_{k}\geq-\frac{K}{R^{2}}\left(  \int g\varphi_{R}%
d\omega\right)  ^{2}$ for $k=1,...,8.$ Indeed, in the case of $J_{1}$ we
notice that the integrand vanishes unless $\max\left\{  \omega_{1},\omega
_{2}\right\}  \geq\frac{R}{2}.$ A symmetrization argument allows to reduce the
domain of integration to the set $\left\{  \omega_{1}\geq\omega_{2}\right\}
.$ Then $\omega_{1}\geq\frac{R}{2}.$ We then split the domain of integration
in the variables $\left(  \omega_{2},\omega_{3}\right)  $ in the sets
$I_{\frac{R}{8}}\times I_{\frac{R}{8}}$ and its complement. Then:%

\begin{eqnarray*}
J_{1}&=&2\int_{I_{R}\setminus I_{\frac{R}{2}}}d\omega_{1}
\iint_{
I_{\frac{R}{8}}\times I_{\frac{R}{8}} }d\omega_{2}d\omega_{3}%
+2\int_{I_{R}\setminus I_{\frac{R}{2}}}\!\!\!d\omega_{1}\hskip -0.7cm \iint\limits_{ \substack{\null \hskip 0.8 cm
\left\{ \omega_{1}\geq \omega_{2}\right\}\setminus \left(
I_{\frac{R}{8}}\times I_{\frac{R}{8}}\right)  ^{c}   } }\hskip -0.4cm d\omega_{2}d\omega_{3}\\
&=&J_{1,1}+J_{1,2}
\end{eqnarray*}

Using the definition of $\varphi_{R}$ and using also that in the domain of
integration of $J_{1,1}$ we have $\left(  \omega_{1}+\omega_{2}-\omega
_{3}\right)  \geq\max\left\{  \omega_{2},\omega_{3}\right\}  $ we obtain:%
\begin{eqnarray*}
&&J_{1,1}\geq\int_{I_{R}\setminus I_{\frac{R}{2}}}d\omega_{1}\iint
_{I_{\frac{R}{8}}\times I_{\frac{R}{8}}}d\omega_{2}d\omega_{3}\frac{g_{1}%
g_{2}g_{3}\min\left\{  \sqrt{\omega_{2}},\sqrt{\omega_{3}}\right\}  }%
{\sqrt{\omega_{1}\omega_{2}\omega_{3}}}\times \\
&& \hskip 4cm \times \left[  \omega_{3}+\varphi_{R}\left(
\omega_{1}+\omega_{2}-\omega_{3}\right)  -\omega_{2}-\varphi_{R}\left(
\omega_{1}\right)  \right]
\end{eqnarray*}

Due to the symmetry of the integral:%
\begin{equation}
\iint_{I_{\frac{R}{8}}\times I_{\frac{R}{8}}}d\omega_{2}d\omega_{3}%
\frac{g_{2}g_{3}\min\left\{  \sqrt{\omega_{2}},\sqrt{\omega_{3}}\right\}
}{\sqrt{\omega_{2}\omega_{3}}}\left[  \omega_{3}-\omega_{2}\right]  =0
\label{K1E1}%
\end{equation}

Therefore, using Taylor's expansion we obtain:%
\begin{eqnarray}
&&J_{1,1}\geq\int_{I_{R}\setminus I_{\frac{R}{2}}}d\omega_{1}\iint
_{I_{\frac{R}{8}}\times I_{\frac{R}{8}}}d\omega_{2}d\omega_{3}\frac{g_{1}%
g_{2}g_{3}\min\left\{  \sqrt{\omega_{2}},\sqrt{\omega_{3}}\right\}  }%
{\sqrt{\omega_{1}\omega_{2}\omega_{3}}}\times  \label{K1E2}\\
&&\hskip 4cm \times\left[  \frac{\partial\varphi_{R}%
}{\partial\omega}\left(  \omega_{1}\right)   \left(  \omega_{2}-\omega
_{3}\right)  -\frac{C}{R}\left(  \omega_{2}-\omega_{3}\right)  ^{2}\right]\nonumber
\end{eqnarray}
for some $C>0$ independent of $R$ perhaps changing from line to line$.$ Using
(\ref{K1E1}) we obtain the cancellation of the integral of the first term
between brackets in (\ref{K1E2}). Then:%
\[
J_{1,1}\geq-\frac{C}{R^{\frac{3}{2}}}\int_{I_{R}\setminus I_{\frac{R}{2}}%
}d\omega_{1}\iint_{\left(  I_{\frac{R}{8}}\times I_{\frac{R}{8}}\right)
\cap\left\{  \omega_{2}\geq\omega_{3}\right\}  }d\omega_{2}d\omega_{3}%
g_{1}g_{2}g_{3}\left(  \omega_{2}\right)  ^{\frac{3}{2}}%
\]
whence, using the definition of $\varphi_{R}:$%
\begin{equation}
J_{1,1}\geq-\frac{C}{R^{2}}\left(  \int g\varphi_{R}d\omega\right)
^{2}\left(  \int gd\omega\right)  \geq-\frac{C}{R^{2}}\left(  \int
g\varphi_{R}d\omega\right)  ^{2} \label{K1E3}%
\end{equation}

On the other hand, in the integrand of $J_{1,2},$ either $\omega_{2}$ or
$\omega_{3}$ are larger than $\frac{R}{8}.$ Then:%
\begin{equation}
J_{1,2}\geq-\frac{C}{R^{2}}\left(  \int g\varphi_{R}d\omega\right)  ^{2}
\label{K1E4}%
\end{equation}

Combining (\ref{K1E3}), (\ref{K1E4}) we obtain:%
\begin{equation}
J_{1}\geq-\frac{C}{R^{2}}\left(  \int g\varphi_{R}d\omega\right)  ^{2}
\label{K1E5}%
\end{equation}

In order to estimate $J_{2}$ we use the fact that the for the integrand to be
different from zero we need $\max\left\{  \omega_{1},\omega_{2}\right\}
\geq\frac{R}{2}.$ We can assume that, say $\omega_{1}\geq\omega_{2}.$ We then
use that $\Phi\leq\sqrt{\omega_{2}}$ as well as the properties of $\varphi
_{R}$ to obtain:%
\begin{equation}
J_{2}\geq-\frac{C}{R^{2}}\left(  \int g\varphi_{R}d\omega\right)  ^{2}
\label{K1E6}%
\end{equation}

The terms $J_{3},\ J_{4}$ can be estimated in a similar manner. In order to
estimate $J_{4}$ we split the integral as:%
\[
J_{4}=\int_{I_{R}^{c}}d\omega_{1}\iint_{I_{\frac{R}{8}}\times
I_{\frac{R}{8}}}d\omega_{2}d\omega_{3}+\int_{I_{R}\setminus
I_{\frac{R}{2}}}d\omega_{1}\iint_{\left(  I_{\frac{R}{8}}\times I_{\frac
{R}{8}}\right)  ^{c}}d\omega_{2}d\omega_{3}=J_{4,1}+J_{4,2}%
\]

Notice that, using again the symmetry $\omega_{2}\longleftrightarrow\omega
_{3}$ as well as Taylor, we obtain:
\begin{align*}
&  \iint_{ I_{\frac{R}{8}}\times I_{\frac{R}{8}}}
d\omega_{2}d\omega_{3}\frac{g_{2}g_{3}\min\left\{  \sqrt{\omega_{2}}%
,\sqrt{\omega_{3}}\right\}  }{\sqrt{\omega_{2}\omega_{3}}}\left[  \omega
_{3}+\varphi_{R}\left(  \omega_{1}+\omega_{2}-\omega_{3}\right)  -\omega
_{2}-\varphi_{R}\left(  \omega_{1}\right)  \right] \\
&  \geq\iint_{ I_{\frac{R}{8}}\times I_{\frac{R}{8}}
}d\omega_{2}d\omega_{3}\frac{g_{2}g_{3}\min\left\{  \sqrt{\omega_{2}}%
,\sqrt{\omega_{3}}\right\}  }{\sqrt{\omega_{2}\omega_{3}}}\left[
\frac{\partial\varphi_{R}}{\partial\omega}\left(  \omega_{1}\right)  \left(
\omega_{2}-\omega_{3}\right)  -\frac{C}{R}\left(  \omega_{2}-\omega
_{3}\right)  ^{2}\right] \\
&  =-\frac{C}{R}\iint_{ I_{\frac{R}{8}}\times I_{\frac{R}{8}%
}}d\omega_{2}d\omega_{3}\frac{g_{2}g_{3}\min\left\{  \sqrt{\omega
_{2}},\sqrt{\omega_{3}}\right\}  }{\sqrt{\omega_{2}\omega_{3}}}\left(
\omega_{2}-\omega_{3}\right)  ^{2}\geq-\frac{C}{R^{\frac{1}{2}}}\int
g\varphi_{R}d\omega
\end{align*}

Using now that $\omega_{1}\geq\frac{R}{2}$ in the integral $J_{4,1}$ we obtain
$J_{4,1}\geq-\frac{C}{R^{2}}\left(  \int g\varphi_{R}d\omega\right)  ^{2}.$ On
the other hand, $J_{4,2}$ can be estimated as $J_{1,2}.$ Then:%
\begin{equation}
J_{3}+J_{4}\geq-\frac{C}{R^{2}}\left(  \int g\varphi_{R}d\omega\right)  ^{2}
\label{K1E7}%
\end{equation}

Finally, we notice that in the integrals $J_{5},\ J_{6},\ J_{7},\ J_{8}$ there
are at least two integration variables larger than $R.$ Then:%
\begin{equation}
J_{5}+J_{6}+J_{7}+J_{8}\geq-\frac{C}{R^{2}}\left(  \int g\varphi_{R}%
d\omega\right)  ^{2} \label{K1E8}%
\end{equation}

Combining (\ref{K1E5}), (\ref{K1E6}), (\ref{K1E7}), (\ref{K1E8}) and applying
(\ref{Z2E1a}) we obtain:%
\[
\partial_{t}\left(  \int_{\left[  0,\infty\right)  }\varphi\left(
\omega\right)  g\left(  t,d\omega\right)  \right)  \geq-\frac{C}{R^{2}}\left(
\int\varphi_{R}\left(  \omega\right)  g\left(  s,d\omega\right)  \right)
^{2}ds\ \ ,\ \ a.e.\ \ t\in\left[  0,T\right]
\]
whence the result follows.
\end{proof}

Notice that Proposition \ref{coarsening} can be understood as an upper
estimate for the rate of transfer of energy towards higher values of $\omega.$
Indeed, the assumption $\int g_{0}\varphi_{R}d\omega\geq\frac{1}{2}$ in
Proposition \ref{coarsening} means that a significant fraction of the energy
of the system is in the region $\omega\leq R.$ In order to reduce this amount
of energy by a significant amount we need at least times of order $R^{2}$.

\subsection{Detailed  asymptotic behaviour of weak solutions.}

The following result gives a more detailed information on the form in which
$g\left(  t,\cdot\right)  $ approaches to $m\delta_{0}$ as $t\rightarrow
\infty$ if $R_{\ast}=0$ in Theorem \ref{Asympt}.

\begin{theorem}
\label{AsympOsc}Let $\rho,\ g,\ m,\ R^{\ast}$ as in Theorem \ref{Asympt}.
Suppose that $R^{\ast}=0.$ Then, one of the following alternatives hold:

(i) There exists $t^{\ast}\geq0,\ t^{\ast}<\infty$ such that $\int_{\left\{
0\right\}  }g\left(  t,d\omega\right)  >0$ for $\ a.e.\ t\geq t^{\ast}.$
Moreover, for $a.e.\ t_{1},t_{2}$ such that $t_{1}\leq t_{2}$ we have
$\int_{\left\{  0\right\}  }g\left(  t_{1},d\omega\right)  \leq\int_{\left\{
0\right\}  }g\left(  t_{2},d\omega\right)  $.

(ii) There exist an unbounded set of times $A\subset\left[  0,\infty\right)  $
such that for any $t\in A$ there exist $\Omega\left(  t\right)  ,\ \eta\left(
t\right)  \ $such that:%
\begin{eqnarray}
&&\int_{\left(  \Omega\left(  t\right)  \left(  1-\eta\left(  t\right)  \right)
,\Omega\left(  t\right)  \left(  1+\eta\left(  t\right)  \right)  \right)
}g\left(  t,d\omega\right)  \geq m\left(  1-\eta\left(  t\right)  \right)
\ \ \text{for}\ \ t\in A\label{L1E2}\\
&&\int_{\left\{  0\right\}  }g\left(
t, d\omega\right)  =0\ \ \text{for}\ t\geq0  \label{L1E2M}
\end{eqnarray}
with
\begin{equation}
\lim_{t\rightarrow\infty}\ _{A}\eta\left(  t\right)  =0,\ \ \lim
_{t\rightarrow\infty}\ _{A}\Omega\left(  t\right)  =0. \label{L1E3}%
\end{equation}
where, for a function $f:A\rightarrow\mathbb{R}$ we will say that
$\lim_{t\rightarrow\infty}\ _{A}f\left(  t\right)  =0$ iff for any $\delta>0$
there exists $L$ sufficiently large such that, for any $t\in A\cap\left\{
t>L\right\}  $ we have $\left\vert f\left(  t\right)  \right\vert \leq\delta.$

The density of the set $A$ is one, in the sense that:%
\begin{equation}
\lim_{T\rightarrow\infty}\frac{\left\vert A\cap\left[  0,T\right]  \right\vert
}{T}=1 \label{L1E1}%
\end{equation}

In the case (ii) the function $\Omega\left(  t\right)  $ defined in $A$ has an
additional property, namely that, for any $\varepsilon>0,$ there exists
$T_{0}$ large such that, if $t\geq T_{0}$ we have $\Omega\left(  s\right)
\leq\Omega\left(  t\right)  \left(  1+\varepsilon\right)  $ for any $s\geq
t,\ $with $s,t\in A.$
\end{theorem}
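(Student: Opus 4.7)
The strategy is to first establish the dichotomy through a monotonicity property for the mass at $\omega=0$, and then, in the non-condensate case, to extract the pulsating behaviour from a strictly concave Lyapunov functional. For each $K>0$ take the convex test function $\varphi_K(\omega)=(1-K\omega)_+$ (smoothed at $\omega=1/K$ if needed). By Lemma~\ref{Convex}, $t\mapsto\int g(t,d\omega)\,\varphi_K$ is nondecreasing a.e.; letting $K\to\infty$ and applying dominated convergence ($\varphi_K\downarrow\chi_{\{0\}}$ pointwise, bounded by $1$, $g(t,\cdot)$ a finite measure) gives that $t\mapsto\int_{\{0\}} g(t,d\omega)$ itself is nondecreasing. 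Hence either this function vanishes for every $t\ge 0$, to be treated below, or it becomes strictly positive at some finite $t^{*}$ and remains so thereafter with the required monotonicity, which is alternative~(i).

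\textbf{Step 2 (Lyapunov functional).} Assume now $\int_{\{0\}} g(t,d\omega)=0$ for every $t\ge 0$, so (\ref{L1E2M}) is automatic. Pick $\varphi\in C^{2}([0,\infty))$ bounded and strictly concave with $\varphi(0)=0$, $\varphi'(0)>0$, e.g.\ $\varphi(\omega)=\omega/(1+\omega)$, and set $\Psi(t):=\int g(t,d\omega)\,\varphi$. Since $-\varphi$ is convex, Lemma~\ref{Convex} gives $\Psi$ nonincreasing, and Theorem~\ref{Asympt} forces $\Psi(t)\to m\varphi(0)=0$; thus $\int_{0}^{\infty}|\Psi'(t)|\,dt\le\Psi(0)<\infty$. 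Proposition~\ref{atractiveness} together with Lemma~\ref{strictConvex} yields
\[
-\Psi'(t)=-\iiint \frac{g_{1} g_{2} g_{3}}{\sqrt{\omega_{1}\omega_{2}\omega_{3}}}\,\mathcal{G}_{0,\varphi}\,d\omega_{1}d\omega_{2}d\omega_{3}\ge 0,
\]
with integrand strictly positive on triples where $\omega_{-},\omega_{0},\omega_{+}$ are not all equal, by strict concavity of $\varphi$ and the sign properties of $H^{1}_{\varphi}, H^{2}_{\varphi}$.

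\textbf{Step 3 (concentration and diagonal argument).} Define $\Omega(t):=\inf\{R\ge 0:\int_{[0,R]} g(t,d\omega)\ge m/2\}$; by Theorem~\ref{Asympt}, $\Omega(t)\to 0$. The key quantitative claim is: if $g(t)$ fails to be $\eta$-concentrated around $\Omega(t)$, that is, if there exists a scale $\Omega'$ with $|\Omega'-\Omega(t)|/\Omega(t)\ge\eta$ carrying $g$-mass at least $\eta m$, then $-\Psi'(t)\ge c(\eta)>0$. One establishes this by integrating over a product region of the type $B_{\delta\Omega(t)}(\Omega(t))^{2}\times B_{\delta\Omega(t)}(\Omega')$ (or a permutation of the variables), using Lemma~\ref{strictConvex} together with the strict concavity of $\varphi$ to bound $\mathcal{G}_{0,\varphi}$ below by a uniform multiple of $\Omega(t)$; an appropriate rescaling absorbs the singular weight $(\omega_{1}\omega_{2}\omega_{3})^{-1/2}$. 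Combined with $\int_{0}^{\infty}|\Psi'|\,dt<\infty$ this gives $|A_{\eta}^{c}|<\infty$ for every $\eta>0$, where $A_{\eta}$ denotes the set of $\eta$-concentrated times. Choosing $T_{n}\nearrow\infty$ with $|A_{1/n}^{c}\cap(T_{n},\infty)|\le 2^{-n}$, the set
\[
A:=[0,T_{1}]\cup\bigcap_{n\ge 1}\bigl(A_{1/n}\cup[0,T_{n}]\bigr),\qquad \eta(t):=1/n\ \text{on}\ A\cap(T_{n},T_{n+1}],
\]
satisfies $\eta(t)\to 0$ on $A$ and $|A^{c}\cap[0,T]|\le T_{1}+1$ for all $T$, yielding density one and (\ref{L1E2}), (\ref{L1E3}), (\ref{L1E1}).

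\textbf{Step 4 (quasi-monotonicity of $\Omega$; main obstacle).} Apply Lemma~\ref{Convex} to the convex test function $\varphi_{R}(\omega)=(R-\omega)_{+}$: the quantity $\int g(t,d\omega)\,\varphi_{R}$ is nondecreasing in $t$. Fix $\varepsilon>0$ and take $T_{0}$ so large that $\eta(t)\ll\varepsilon$ for $t\ge T_{0}$, $t\in A$. With $R=\Omega(t)(1+\varepsilon)$, the concentration of $g(t)$ at $\Omega(t)$ yields $\int g(t)\varphi_{R}\ge m\Omega(t)\varepsilon(1-o(1))$; if some $s\ge t$ with $s\in A$ satisfied $\Omega(s)>R$, the concentration of $g(s)$ at $\Omega(s)$ would force $\int g(s)\varphi_{R}=o(m\Omega(t))$, contradicting monotonicity. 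The \emph{main obstacle} lies in the quantitative lower bound of Step~3: although $\mathcal{G}_{0,\varphi}$ is pointwise strictly negative at non-degenerate triples, its magnitude and that of the singular weight must be balanced uniformly as $\Omega(t)\downarrow 0$ and as $g(t)$ develops non-trivial multi-scale structure, which requires a careful rescaling and a case analysis distinguishing whether the second scale $\Omega'$ is larger or smaller than $\Omega(t)$.
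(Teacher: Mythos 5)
Your Steps 1, 2 and 4 are essentially correct and parallel the paper's own argument (the paper also establishes monotonicity of $\int_{\{0\}} g(t,d\omega)$ by convex test functions, obtains bounded variation of a concave Lyapunov functional through the monotonicity formula, and proves the quasi-monotonicity of $\Omega(t)$ with test functions of the form $(1-\omega/(\Omega(s)(1+\varepsilon)))_+$). The difficulty lies in Step~3, where there is a genuine gap.

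The quantitative claim ``if $g(t)$ fails to be $\eta$-concentrated then $-\Psi'(t)\ge c(\eta)>0$'' is false for the fixed test function $\varphi(\omega)=\omega/(1+\omega)$. Since $\varphi''$ is bounded near zero, Lemma~\ref{strictConvex} gives, on the triple product region you describe where all $\omega_i\approx\Omega(t)$ and $|\omega_0-\omega_-|\approx\eta\,\Omega(t)$,
\[
|\mathcal{G}_{0,\varphi}|\approx \tfrac13\sqrt{\omega_-}\,|\varphi''(\omega_+)|\,(\omega_0-\omega_-)^2\approx C\,\eta^2\,\Omega(t)^{5/2},
\]
and multiplying by the weight $(\omega_1\omega_2\omega_3)^{-1/2}\approx\Omega(t)^{-3/2}$ leaves a factor $\eta^2\,\Omega(t)$, not $\eta^2$. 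Hence the best lower bound one can extract this way is $-\Psi'(t)\gtrsim c(\eta)\,\Omega(t)$, which degenerates as $\Omega(t)\downarrow 0$. Combined with $\int_0^\infty |\Psi'|\,dt<\infty$, this gives only $\int_{A_\eta^c}\Omega(t)\,dt<\infty$, which does not imply $|A_\eta^c|<\infty$ since $\Omega(t)\to 0$. Your ``appropriate rescaling absorbs the singular weight'' does not rescue this: the weight $\Omega^{-3/2}$ is fully used already and there is a net leftover factor of $\Omega(t)$.

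The paper resolves this by replacing the fixed $\varphi$ with the $R$-dependent family $\varphi_R(\omega)=\psi(\omega/R)$, $\psi(s)=s^\theta$ on $(0,1)$, $0<\theta<1$, whose second derivative is \emph{unbounded} near the origin. This produces the uniform estimate of Lemma~\ref{estProd}, $\int_0^T\!\int_{\mathcal{S}_{R,\rho}}\prod g_m\le C\,MR$ with bound linear in $R$ and \emph{independent of $T$}, which is then played against the lower bound of Lemma~\ref{Compl}. A second ingredient you omit entirely is the combinatorial Lemma~\ref{altresc}: failure of concentration does not directly exhibit a single competitor scale $\Omega'$ carrying mass $\ge\eta m$ --- the excess mass may be spread over infinitely many dyadic scales --- and Lemma~\ref{altresc} is precisely the tool that extracts from such a configuration two ``well-separated'' sets each carrying a definite mass fraction, so that Lemma~\ref{Compl} applies. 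Both of these --- the fractional-power rescaled test functions and the combinatorial decomposition --- would have to be supplied to close Step~3.
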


\begin{remark}
Theorem \ref{AsympOsc} states that, either a condensate \index{condensate} appears in finite
time, i.e. the quantity $\int_{\left\{  0\right\}  }g\left(  t,dx\right)  $ becomes
positive at some finite $t,$ or alternatively during most of the times
$g\left(  t,\cdot\right)  $ can be
approximated by means of a of Dirac mass at a distances $R\left(  t\right)  $
of the origin. Notice that we can reformulate (ii) using the rescaled
measures:%
\[
G\left(  t,\cdot\right)  =\frac{1}{\Omega\left(  t\right)  }g\left(
t,\frac{\cdot}{\Omega\left(  t\right)  }\right)
\]
Then, the alternative (ii) in Theorem \ref{AsympOsc} implies that:%
\[
\sup_{t\in A\cap\left\{  t>L\right\}  } dist _{\ast}\left(
G\left(  t,\cdot\right)  ,m\delta_{1}\right)  \rightarrow0\ \ \text{as\ \ }%
L\rightarrow\infty\ .
\]

\end{remark}

\begin{remark}
\label{setB}The function $\Omega\left(  t\right)  $ is ``almost-monotone", in
the sense that for $\Omega\left(  t_{2}\right)  $ is smaller than
$\Omega\left(  t_{1}\right)  ,$ plus some small error which can be made
arbitrarily small, if $t_{2}\geq t_{1}$ with $t_{1}$ large. Nevertheless, it
is important to take into account that alternative (ii) in Theorem
\ref{AsympOsc} does not imply that $G\left(  t,\cdot\right)  $ can be
approximated as $m\delta_{1}$ for any large $t.$ In Chapter 4 we
will construct a class of weak solutions for which the alternative (ii) of
Theorem \ref{AsympOsc} holds. One of the properties of those particular
solutions is the existence, for each of them of an unbounded set
$B\subset\left[  0,\infty\right)  ,$ and a positive constant $c_{0}$ such that
for any $t\in B$ and any $\Omega>0$ we have $dist _{\ast
}\left(  \frac{1}{\Omega}g\left(  t,\frac{\cdot}{\Omega}\right)  ,m\delta
_{1}\right)  \geq c_{0}>0.$ Seemingly, such a set $B$ exists for any weak
solution of (\ref{S2E1}) in the sense of Definition \ref{weakSolution} for
which the alternative (ii) in Theorem \ref{AsympOsc} is satisfied. However, we
will not prove this result in this paper with that degree of generality. The
main difficulty proving the existence of these sets $B$ for arbitrary
solutions is to control the displacement of the mass of $g$ for distributions
that are close to a Dirac mass.
\end{remark}

\begin{remark}
We can choose initial values $g_{in}$ such that alternative (i) in Theorem
\ref{AsympOsc} holds (cf. Theorem \ref{weakCond}), and also initial data
$g_{in}$ for which alternative (ii) is satisfied (cf. Theorem \ref{globOsc}).
\end{remark}

\begin{proof}
[Proof of Theorem \texttt{\ref{AsympOsc}} ] 
Some of the methods required to prove Theorem \ref{AsympOsc} have been
introduced in \cite{EV1} in order to study singularity formation for the
Nordheim   \index{Nordheim} equation. The auxiliary results needed to prove
Theorem \texttt{\ref{AsympOsc}} have been included in Chapter 6.

Let $g\in C\left(
\left[  0,\infty\right)  :\mathcal{M}_{+}\left(  \left[  0,\infty\right)
:\left(  1+\omega\right)  ^{\rho}\right)  \right)  $ be a weak solution of
(\ref{S2E1}) in the sense of Definition \ref{weakSolution}. Suppose that for
any $t\geq0$:
\begin{equation}
\int_{\left\{  0\right\}  }g\left(  t,d\omega\right)  =0. \label{T2E2}%
\end{equation}

Given $a>0,$ $\delta>0,R>0$ arbitrarily small, let us denote as $\mathcal{A}%
_{a,R,\delta}\subset\left[  0,\infty\right)  $ the set of times $t\geq0$ with
the property that there exists an interval of the form $\mathcal{I}_{k}\left(
b,R\right)  $ with $b=1+a$ such that:%
\begin{equation}
\int_{\mathcal{I}_{k}\left(  b,R\right)  }g\left(  t,d\omega\right)
d\epsilon\geq\left(  1-\delta\right)  \int_{\left(  0,R\right]  }g\left(
t,d\omega\right)  d\epsilon\ \label{T2E5}%
\end{equation}
where $k=k\left(  t\right)  .$ We now claim that the Lebesgue's measure of the
complement of the set $\mathcal{A}_{a,R,\delta}$ satisfies:%
\begin{equation}
\lim_{T\rightarrow\infty}\frac{\left\vert \left[  0,T\right]  \setminus
\mathcal{A}_{a,R,\delta}\right\vert }{T}=0\ \label{T2E3}%
\end{equation}
for some $C\left(  a,\delta\right)  $ depending on $a,\ \delta$ and the total
mass of $g,$ $n_{0},$ but not on $R.$ Indeed, combining Lemmas \ref{estProd}
and \ref{Compl} as well as the fact that Lemma \ref{altresc} implies that only
the alternatives (i) and (ii) can take place we would have:
\begin{equation}
\nu\int_{\left[  0,T\right]  \setminus\mathcal{A}_{a,R,\delta}\left(
T\right)  }dt\left(  \int_{\left(  0,R\right]  }g\left(  t,d\omega\right)
\right)  ^{3}\leq\frac{2Bb^{\frac{7}{2}}MR}{\rho^{2}\left(  \sqrt{b}-1\right)
^{2}} \label{T2E4}%
\end{equation}

By assumption $R^{\ast}=0.$ Theorem \ref{Asympt}, as well as (\ref{T2E2})
imply that, for any fixed $R>0$ we have:%
\[
\int_{\left(  0,R\right]  }g\left(  t,d\omega\right)  \geq\frac{M}%
{2}>0\ \ \text{with\ \ }M=\int g_{in}\left(  d\epsilon\right)
\]
if $t\geq t^{\ast}$, $t^{\ast}$ is sufficiently large. \ Combining
(\ref{T2E4}) with this inequality we obtain:%
\[
\frac{1}{T}\int_{\left[  t^{\ast},T\right]  \setminus\mathcal{A}_{a,R,\delta
}\left(  T\right)  }dt\leq\frac{C\left(  a,\delta,R,M\right)  }{T}%
\]
whence:%
\[
\frac{1}{T}\int_{\left[  0,T\right]  \setminus\mathcal{A}_{a,R,\delta}\left(
T\right)  }dt\leq\frac{C\left(  a,\delta,R,M\right)  +t^{\ast}}{T}%
\]
and taking the limit $T\rightarrow\infty,$ we obtain (\ref{T2E3}). We can then
construct the set $A$ as follows. We choose decreasing sequences $\left\{
a_{n}\right\}  ,\ \left\{  \delta_{n}\right\}  ,\ \left\{  R_{n}\right\}  $
converging to zero. The definition of the sets $\mathcal{A}_{a_{n}%
,R_{n},\delta_{n}}$ then implies that $\mathcal{A}_{a_{n+1},R_{n+1}%
,\delta_{n+1}}\subset\mathcal{A}_{a_{n},R_{n},\delta_{n}}.$ We then choose
$T_{n}$ sufficiently large to have
\[
\frac{\left\vert \left[  0,T\right]  \setminus\mathcal{A}_{a_{n},R_{n}%
,\delta_{n}}\right\vert }{T}\leq\frac{1}{n}\ \ \text{for }T\geq T_{n}%
\]

Actually, we will assume an stronger condition on the sequence $\left\{
T_{n}\right\}  ,$ namely:%
\[
\frac{\left\vert \mathcal{A}_{a_{n},R_{n},\delta_{n}}\cap\left[
T_{n-1},T\right]  \right\vert }{T}\geq1-\frac{1}{n}\ \ \text{for\ \ }T\geq
T_{n+1}%
\]

Notice that, due to (\ref{T2E3}) this is possible assuming that the sequence
$T_{n}$ increases fast enough in order to have $\frac{T_{n}-T_{n-1}}{T_{n+1}}$
sufficiently small, say smaller than $2^{-n}$. We then define sets
$\mathcal{B}_{n},\ A$ as:%
\[
\mathcal{B}_{n}=\left[  0,T_{n}\right]  \cup\mathcal{A}_{a_{n},R_{n}%
,\delta_{n}}\ \ ,\ \ A=\bigcap_{n}\mathcal{B}_{n}%
\]

Then, if $T_{\ell+1}\leq T$ we obtain $\mathcal{B}_{\ell+1}\cap\left[
T_{\ell-1},T\right)  \subset\mathcal{B}_{\ell-1}\cap\left[  T_{\ell
-1},T\right)  .$ On the other hand, $\mathcal{B}_{n}\cap\left[  0,T\right)
=\left[  0,T\right)  $ if $n<\ell-1.$ We then write:%
\[
A\cap\left[  0,T\right]  =\left[  A\cap\left[  0,T_{\ell-1}\right]  \right]
\cup\left[  A\cap\left[  T_{\ell-1},T\right]  \right]
\]
and:%
\begin{equation}
\frac{\left\vert A\cap\left[  0,T_{\ell-1}\right]  \right\vert }{T}\leq
\frac{T_{\ell-1}}{T}\leq2\cdot2^{-\ell+1}=4\cdot2^{-\ell} \label{F1}%
\end{equation}%
\[
A\cap\left[  T_{\ell-1},T\right]  =\mathcal{B}_{\ell-1}\cap\left[  T_{\ell
-1},T\right]
\]
whence:%
\begin{equation}
\frac{\left\vert A\cap\left[  T_{\ell-1},T\right]  \right\vert }{T}%
=\frac{\left\vert \mathcal{B}_{\ell-1}\cap\left[  T_{\ell-1},T\right]
\right\vert }{T}=\frac{\left\vert \mathcal{A}_{a_{\ell},R_{\ell},\delta_{\ell
}}\cap\left[  T_{\ell-1},T\right]  \right\vert }{T}\geq1-\frac{1}{n}
\label{F2}%
\end{equation}
Combining (\ref{F1}), (\ref{F2}) we obtain (\ref{L1E1}).

The definition of $A$ implies the existence of $\Omega\left(  t\right)  $ with
the properties stated in the Theorem. To conclude the description of the
solutions given in the Theorem in the second case, it only remains to prove
that for any $\varepsilon>0,$ there exists $T_{0}$ large such that
$\Omega\left(  s\right)  \leq\Omega\left(  t\right)  \left(  1+\varepsilon
\right)  $ if $s\geq t\geq T_{0}.$ To this end we use the convex test function
$\varphi\left(  \omega\right)  =\left(  1-\frac{\omega}{\Omega\left(
s\right)  \left(  1+\varepsilon\right)  }\right)  _{+}$. Lemma \ref{Convex}
implies that the function $t\rightarrow\int\left(  1-\frac{\omega}%
{\Omega\left(  s\right)  \left(  1+\varepsilon\right)  }\right)  _{+}g\left(
t,d\omega\right)  $ is increasing. On the other hand, (\ref{L1E2}), (\ref{L1E2M}), 
(\ref{L1E3}) imply that, assuming that $s\in A$ is large enough, we have
$\int\left(  1-\frac{\omega}{\Omega\left(  s\right)  \left(  1+\varepsilon
\right)  }\right)  _{+}g\left(  t,d\omega\right)  \geq\frac{m\varepsilon
}{1+\varepsilon}-\theta,$ where $\theta>0$ can be made arbitrarily small if
$s$ is large enough. Indeed, this is due to the fact that $g\left(
t,\cdot\right)  $ can be approximated as $m\delta_{\Omega\left(  s\right)  }$
if $s\in A$ is sufficiently large and both the dispersion of the mass around
$\omega=\Omega\left(  s\right)  $ and the amount of mass which is not close to
this point can be made arbitrarily small for $t\geq T_{0}$ with $T_{0}$
depending on $\varepsilon.$ On the other hand, if $\Omega\left(  t\right)
\geq\Omega\left(  s\right)  \left(  1+\varepsilon\right)  $ we would obtain
$\int\left(  1-\frac{\omega}{\Omega\left(  s\right)  \left(  1+\varepsilon
\right)  }\right)  _{+}g\left(  t,d\omega\right)  =0.$ This would imply a
contradiction.

In order to conclude the Proof of the Theorem it only remains to show that
the function $t\rightarrow$ $\int_{\left\{  0\right\}  }g\left(  \bar
{t},d\omega\right)  $ is increasing. To this end we construct a family of
\ convex test functions depending on two positive parameters $M,a.$ The
functions of the family which will be denotes as $\varphi_{M,a}\left(
\omega\right)  $ satisfy $\varphi_{M,a}\left(  0\right)  =M,$ are increasing
in $M,$ decreasing in $\omega$ and such that the limit $\lim_{M\rightarrow
\infty}\varphi_{M,a}\left(  \omega\right)  =\varphi_{\infty,a}\left(
\omega\right)  $ is finite for any $\omega>0.$ We will assume also that the
family $\left\{  \varphi_{\infty,a}\left(  \omega\right)  :a>0\right\}  $ is
increasing in $a$ and it satisfies $\lim_{a\rightarrow0^{+}}\varphi
_{M,a}\left(  \omega\right)  =0$ for any $\omega>0.$ Using Lemma \ref{Convex}
we obtain $\int\varphi_{M,a}\left(  \omega\right)  g\left(  t,d\omega\right)
\geq\int\varphi_{M,a}\left(  \omega\right)  g\left(  \bar{t},d\omega\right)  $
$a.e.\ t\geq\bar{t}$ whence:%
\begin{equation}
\int\varphi_{M,a}\left(  \omega\right)  g\left(  t,d\omega\right)  \geq
\varphi_{M,a}\left(  0\right)  \int g\left(  \bar{t},d\omega\right)  =M\int
g\left(  \bar{t},d\omega\right)  \ ,\ \ a.e.\ t\geq\bar{t} \label{L1E5}%
\end{equation}

Taking the limit $a\rightarrow0^{+}$ we obtain, for any $M>0$ fixed:%
\[
\lim_{a\rightarrow0^{+}}\int_{\left(  0,\infty\right)  }\varphi_{M,a}\left(
\omega\right)  g\left(  t,d\omega\right)  =0
\]
whence, using the fact that 
$$\int\varphi_{M,a}\left(  \omega\right)  g\left(
t,d\omega\right)  =\int_{\left\{  0\right\}  }\varphi_{M,a}\left(  0\right)
g\left(  t,d\omega\right)  +\int_{\left(  0,\infty\right)  }\varphi
_{M,a}\left(  \omega\right)  g\left(  t,d\omega\right)  $$
 and taking the limit $a\rightarrow0^{+}$ it follows, using (\ref{L1E5}),  as well as that
$\varphi_{M,a}\left(  0\right)  =M$ that:

\[
M\int_{\left\{  0\right\}  }g\left(  t,d\omega\right)  \geq M\int g\left(
\bar{t},d\omega\right)  \ \ \ ,\ \ a.e.\ t\geq\bar{t}%
\]
whence the result follows.
\end{proof}

\subsection{Finite time condensation. }\index{condensation}

\begin{theorem}
\label{weakCond}For  $\rho<-1$, there exist 
$g_{in}\in\mathcal{M}_{+}\left(  \left[  0,\infty\right)  :\left(
1+\omega\right)  ^{\rho}\right)  $ such that for any $g\in C\left(
\left[  0,\infty\right)  :\mathcal{M}_{+}\left(  \left[  0,\infty\right)
:\left(  1+\omega\right)  ^{\rho}\right)  \right)$, weak solution of
(\ref{S2E1}) in the sense of Definition \ref{weakSolution}, with initial data $g _{ in }$,   the alternative (i) in
Theorem \ref{AsympOsc} holds. 
\end{theorem}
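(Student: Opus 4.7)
The plan is to invoke the finite-time $L^\infty$ blow-up result stated in Theorem \ref{BU}. Take $f_0 \in L_+^\infty(\mathbb{R}_+;\sqrt{\omega}(1+\omega)^{\rho-\frac{1}{2}})$ such that the bounded mild solution $f$ provided by Theorem \ref{localExBounded} has finite maximal existence time $T_{\max}<\infty$ with $\limsup_{t\to T_{\max}^-}\|f(t,\cdot)\|_{L^\infty}=\infty$. Set $g_{in}=\sqrt{\omega}\,f_0$, which lies in $\mathcal{M}_+([0,\infty):(1+\omega)^\rho)$ for $\rho<-1$, and let $g$ be any weak solution of (\ref{S2E1}) with initial datum $g_{in}$. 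On the interval $[0,T_{\max})$ the uniqueness of bounded mild solutions (Theorem \ref{localExBounded}) combined with Proposition \ref{relSolutions} forces $g(t,\omega)=\sqrt{\omega}\,f(t,\omega)$; beyond $T_{\max}$ existence is supplied by Theorem \ref{globalWeakSol}.

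The key step is to show that $\int_{\{0\}}g(T_{\max},d\omega)>0$. First I would argue that the $L^\infty$ concentration must occur at $\omega=0$. Indeed, if the blow-up were localized at some $\omega_{\ast}>0$, one could restrict the equation to a neighbourhood of $\omega_{\ast}$ bounded away from $0$, in which the collision kernel $\Phi/\sqrt{\omega_1\omega_2\omega_3}$ is bounded; a Gronwall-type argument of the mild-solution kind used in Theorem \ref{localExBounded} would then preclude $L^\infty$ blow-up in that neighbourhood, contradicting Theorem \ref{BU}. Therefore the concentration occurs near $\omega=0$. Quantitatively, for the self-similar profiles of the second kind underlying the construction of blow-up solutions (cf. \cite{JPR}, \cite{LLPR}, \cite{SK1}, \cite{SK2}), a positive fraction $m_{\ast}$ of the total mass is forced into $[0,R(t)]$ with $R(t)\to 0$ as $t\to T_{\max}^-$.

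Combining this concentration estimate with the monotonicity of the truncated mass via the convex test function $\varphi_R(\omega)=(1-\omega/R)_+$ and Lemma \ref{Convex}, I would obtain $\int_{[0,R]}g(t,d\omega)\geq m_{\ast}$ for every fixed $R>0$ and every $t$ sufficiently close to $T_{\max}$. The uniform tightness furnished by Lemma \ref{cotInf}, together with the continuity of $t\mapsto g(t,\cdot)$ in $\mathcal{M}_+([0,\infty):(1+\omega)^\rho)$, allows one to pass to the weak limit as $t\to T_{\max}^-$ and identify in $g(T_{\max},\cdot)$ a Dirac mass at the origin of weight at least $m_{\ast}>0$. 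The monotonicity argument from the final part of the proof of Theorem \ref{AsympOsc} (using the family $\varphi_{M,a}$ and Lemma \ref{Convex}) then yields $\int_{\{0\}}g(t,d\omega)\geq m_{\ast}>0$ for all $t\geq T_{\max}$, i.e.\ alternative (i) with $t^{\ast}=T_{\max}$.

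The principal obstacle is the rigorous extraction of a strictly positive Dirac mass at $\omega=0$ from the $L^\infty$-divergent mild solution: although the measures $g(t,\cdot)$ remain tight and uniformly bounded in $\mathcal{M}_+((1+\omega)^\rho)$, quantifying the weight of the limiting atom at $\{0\}$ requires precise control on the self-similar concentration rate of the blow-up profile produced by Theorem \ref{BU}. A second technical issue is ensuring that \emph{every} weak solution with initial datum $g_{in}$ coincides with the mild solution on $[0,T_{\max})$; for this one must exploit that bounded mild solutions are weak solutions (Proposition \ref{relSolutions}) and invoke uniqueness within the bounded class, supplemented by an argument ruling out weak solutions that instantaneously develop singular behaviour inconsistent with $g_{in}$ being in $L^\infty_+$.
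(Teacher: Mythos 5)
Your route goes through Theorem \ref{BU} ($L^\infty$ blow-up of the mild solution) and then tries to read off condensation, but the paper's proof does not go this way: it cites \cite{EV1}, Theorem~10.5 (and \cite{Lu3}) to directly establish condensation for the equation, adapting the Nordheim estimates so that the monotonicity and lower bounds apply to any weak solution. Your proposal has two gaps that are not merely technical.

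First, divergence of $\|f(t,\cdot)\|_{L^\infty}$ does not imply that $g=\sqrt{\omega}\,f$ develops an atom at $\omega=0$. The Kolmogorov--Zakharov profile $f_s(\omega)=K\omega^{-7/6}$ is unbounded near $0$ yet $g_s=K\omega^{-2/3}$ has no Dirac mass there; more generally, any profile with $f(t,\omega)\sim a(t)\,\omega^{-\alpha}$ for $\alpha\in(1,3/2)$ is $L^\infty$-divergent while $g(t,\cdot)$ remains atom-free. So after $T_{\max}$ you cannot conclude $\int_{\{0\}}g>0$ from Theorem~\ref{BU} alone: there is no statement in the paper converting $L^\infty$-blow-up into a positive-weight atom. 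The paper deliberately separates these two phenomena: Theorem~\ref{BU} gives $L^\infty$ blow-up, while the subsequent unnumbered theorem (with hypotheses \eqref{Z1E8}--\eqref{Z1E9}) gives a \emph{separate} construction of a weak solution with $\sup_{T_\ast<t\leq T_0}\int_{\{0\}}g(t,d\omega)>0$, by the method of \cite{EV1} Theorem~10.5. Your argument conflates the two.

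Second, the claim that any weak solution with the bounded initial datum $g_{in}=\sqrt{\omega}\,f_0$ must coincide with the mild solution on $[0,T_{\max})$ is not supported by anything in the paper. Proposition~\ref{relSolutions} gives only the inclusion (bounded mild) $\subseteq$ (measured mild) $\subseteq$ (weak with interacting condensate); it does not say that weak solutions with bounded initial data are unique or must be mild. In fact, uniqueness of weak solutions in the sense of Definition~\ref{weakSolution} is listed in Section~5 as an open problem. Without this, the quantifier ``for any weak solution'' in Theorem~\ref{weakCond} cannot be reached via the mild-solution blow-up; one needs an argument (as in the cited \cite{EV1} Theorem~10.5, based on the monotonicity formula in Proposition~\ref{atractiveness} and the uniform mass-concentration estimates from Lemma~\ref{cotInf}) that operates directly at the level of the weak formulation and hence constrains every weak solution simultaneously.
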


\begin{proof}
The existence of initial data $g_{in}\in\mathcal{M}_{+}\left(  \left[
0,\infty\right)  :\left(  1+\omega\right)  ^{\rho}\right)  $ for which there
exist $\bar{t}\geq0$ such that $\int_{\left\{  0\right\}  }g\left(  \bar
{t},d\omega\right)  >0$ can be proved as in \cite{EV1}, Theorem 10.5. It would
be possible to prove this result also using the arguments in \cite{Lu3}. The
main difference among the results in \cite{EV1} and \cite{Lu3} concerning the
formation of condensate \index{condensate} is that the initial data required in \cite{Lu3} must
be assumed to behave like a suitable power law near the origin. On the
contrary, in the case of the initial data considered in \cite{EV1} it is
possible to assume that the initial function $f$ is bounded. Moreover the
methods in \cite{Lu3}yield instantaneous condensation,\index{condensation} in the sense that the
solutions constructed there have $\int_{\left\{  0\right\}  }g\left(  \bar
{t},d\omega\right)  >0$ for values of $\bar{t}$ arbitrarily small, due to the
singular character of the initial data. On the contrary, the methods in
\cite{EV1} allow to obtain solutions satisfying $\int_{\left\{  0\right\}
}g\left(  t,d\omega\right)  =0$ in some interval $0\leq t<t^{\ast}$ and
$\int_{\left\{  0\right\}  }g\left(  \bar{t},d\omega\right)  $ for values
$\bar{t}>t^{\ast}$ arbitrarily close to $t^{\ast}$. \end{proof}

\subsection{Finite time blow up \index{blow up}  of bounded mild solutions.}
It is worth to remark that it is possible to have blow-up in finite time in
the same manner starting from initially bounded solutions, as it happens for the 
 Nordheim  \index{Nordheim} equation.
\begin{theorem}
\label{BU}Let $M>0,\ E>0,$ $\nu>0,\ \rho<-2$ be given. There exist 
$r=r(M, E, \nu)  > 0, K^{\ast}=K^{\ast}(M,E,\nu)  >0, T_{0}
=T_{0}\left(  M,E\right)$ and $\ \theta_{\ast}>0$
independent on $M, E, \nu$,  such that for any $g_{in}\in L^{\infty}\left(
\mathbb{R}^{+};\sqrt{\omega}\left(  1+\omega\right)  ^{\rho-\frac{1}{2}%
}\right)  $ satisfying$\ $%
\begin{align}
\int_{\mathbb{R}^{+}}g_{in}\left(  d\omega\right)   &  =M\ ,\ \ \int
_{\mathbb{R}^{+}}\omega g_{in}\left(  d\omega\right)  =E,\label{C1}\\
\int_{0}^{R}g_{in}\left(  d\omega\right)   &  \geq\nu R^{\frac{3}{2}%
}\ \ \text{for }0<R\leq r\ \ ,\ \ \int_{0}^{r}g_{in}\left(  d\omega\right)
\geq K^{\ast}\left(  r\right)  ^{\theta_{\ast}} \label{C2}%
\end{align}
there exists a unique mild solution $g\in C\left(  \left[  0,T_{\max}\right)
;L^{\infty}\left(  \mathbb{R}^{+};\sqrt{\omega}\left(  1+\omega\right)
^{\rho-\frac{1}{2}}\right)  \right)  $ of (\ref{S2E1}) defined for a maximal
existence time $T_{\max}<T_{0}$ and such that $f\left(  t,\omega\right)
=\frac{g\left(  t,\omega\right)  }{\sqrt{\omega}}$ satisfies:%
\[
\lim\sup_{t\rightarrow T_{\max}^{-}}\left\Vert g\left(  \cdot,t\right)
\right\Vert _{L^{\infty}\left(  \mathbb{R}^{+}\right)  }=\infty.
\]
\end{theorem}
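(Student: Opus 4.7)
The plan is to adapt the blow-up strategy of \cite{EV1} developed for the Nordheim equation to the purely cubic kernel of (\ref{S2E1}); indeed, the absence of the quadratic Bose--Einstein contributions $f_3f_4-f_1f_2$ simplifies the analysis considerably. The argument proceeds by contradiction: suppose a bounded mild solution exists on $[0,T_{0}]$ with $T_{0}=T_{0}(M,E)$ to be chosen.

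\emph{Step 1 (quantitative concentration inequality).} For the convex test function $\varphi_{R}(\omega)=(R-\omega)_{+}$, Proposition \ref{atractiveness} combined with Lemma \ref{strictConvex} yields
\[
\frac{d}{dt}\!\int g\,\varphi_{R}\,d\omega \;\geq\; \iiint_{\mathcal{D}_{R}}\frac{g_{1}g_{2}g_{3}}{\sqrt{\omega_{1}\omega_{2}\omega_{3}}}\,\mathcal{G}_{0,\varphi_{R}}\,d\omega_{1}d\omega_{2}d\omega_{3},
\]
with $\mathcal{G}_{0,\varphi_{R}}\geq 0$ and, on a diagonal neighborhood $\mathcal{D}_{R}$ of $\{\omega_{1}=\omega_{2}=\omega_{3}\}\cap (0,R)^{3}$, bounded below by a quantitative amount $\gtrsim R^{1/2}$. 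One therefore extracts a bound of the form $\frac{d}{dt}\int g\,\varphi_{R}\,d\omega\geq c_{0}\sqrt{R}(N_{R/2}(t))^{3}$, where $N_{R}(t):=\int_{[0,R]}g(t,d\omega)$. The $L^{\infty}$ bound on $f=g/\sqrt{\omega}$ gives the complementary pointwise control $N_{R}(t)\leq\tfrac{2}{3}\|f(t)\|_{L^{\infty}}R^{3/2}$, so any unbounded growth of $N_{R}(t)/R^{3/2}$ forces $\|f(t)\|_{L^{\infty}}$ to blow up.

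\emph{Step 2 (scale invariance and zoom-in).} Equation (\ref{S2E1}) is invariant under $g(t,\omega)\mapsto h(\tau,\xi):=\lambda\,g(\lambda^{1/2}\tau,\lambda\xi)$ for any $\lambda>0$, since $\Phi$ is homogeneous of degree $1/2$ and the right-hand side is cubic with three factors $1/\sqrt{\omega}$ and two integrations. This transformation preserves total mass; energy scales as $E/\lambda$. The hypothesis $\int_{0}^{R}g_{in}(d\omega)\geq\nu R^{3/2}$ for $R\leq r$ is \emph{scale-invariant}, so it is inherited verbatim by $h(0,\cdot)$ at the unit scale when $\lambda=1/r$. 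The additional hypothesis $\int_{0}^{r}g_{in}(d\omega)\geq K^{*}r^{\theta^{*}}$ with critical exponent $\theta^{*}\in(1,3/2)$ chosen to match the self-similar blow-up profile of the second kind (as in the formal analysis of \cite{JPR}, \cite{SK1}, \cite{SK2}) becomes $\int_{0}^{1}h(0,d\xi)\geq K^{*}r^{\theta^{*}-1}$; hence as $\lambda=1/r\to\infty$ this lower bound strengthens without bound, so that the rescaled datum lies well inside the concentration regime of Step 1.

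\emph{Step 3 (iteration and contradiction).} Step 1 applied to the rescaled solution $h$, together with the local well-posedness of Theorem \ref{localExBounded}, implies that the rescaled $L^{\infty}$ norm $\|h(\tau,\cdot)/\sqrt{\xi}\|_{L^{\infty}}$ grows by a definite multiplicative factor on a time window $\tau_{0}$ depending only on the unit-scale parameters. Translating back, $\|f(t,\cdot)\|_{L^{\infty}}$ grows by the same factor on a time interval of length $\tau_{0}\lambda^{-1/2}$. Iterating with a sequence $\lambda_{k}=1/r_{k}\nearrow\infty$ chosen so that $\sum_{k}\tau_{0}\lambda_{k}^{-1/2}<T_{0}$ produces a sequence $t_{k}\nearrow t_{\infty}<T_{0}$ along which $\|f(t_{k},\cdot)\|_{L^{\infty}}\to\infty$, which contradicts boundedness on $[0,T_{0}]$ and forces $T_{\max}\leq t_{\infty}<T_{0}$ via Theorem \ref{localExBounded}.

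The main obstacle is Step 2: one must pin down $\theta^{*}$ so that the concentration hypothesis (\ref{C2}) is not merely scale-invariant but genuinely scale-amplifying, so that the iteration does not lose strength at each step, and one must simultaneously control the tail of the measure (ruling out mass escape under the zoom-in) using conservation of mass and energy (guaranteed by $\rho<-2$, see (\ref{K1})--(\ref{K2})) and the tightness estimate of Lemma \ref{cotInf}. All of these ingredients appear in \cite{EV1} and carry over with simplifications due to the absence of the quadratic terms, so that only cubic combinations of moments need to be tracked.
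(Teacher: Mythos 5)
The paper does not give a direct proof of Theorem~\ref{BU}; it states that the argument is ``similar to the Proof of Theorems 2.3 and 10.5 in \cite{EV1}'', which is a direct iteration based on the monotonicity formula and the tightness estimates, \emph{not} a rescaling argument. Your Steps 1 and 3 correctly capture the two main ingredients of that route (the quantitative concentration inequality obtained from Proposition~\ref{atractiveness} and Lemma~\ref{strictConvex}, and the bootstrap on $N_R(t)/R^{3/2}$ against the bound $N_R(t)\leq\frac{2}{3}\|f(t)\|_{L^\infty}R^{3/2}$), but Step~2 is where your proposal both deviates from the paper and contains a genuine error.

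First, the scaling symmetry you quote is incorrect. For the mass-preserving rescaling $g(t,\omega)=\lambda^{\alpha}\tilde g(\lambda^{\beta}t,\lambda\omega)$ one must have $\alpha=1$ (mass conservation). Counting exponents on the right-hand side of (\ref{S2E1}) --- cubic in $g$ gives $\lambda^{3\alpha}$, three factors of $1/\sqrt{\omega}$ give $\lambda^{3/2}$, $\Phi$ is homogeneous of degree $1/2$ so gives $\lambda^{-1/2}$, two integrations give $\lambda^{-2}$ --- one finds the balance $\alpha+\beta=3\alpha-1$, i.e.\ $\beta=2\alpha-1=1$. The correct invariance is therefore
\[
h(\tau,\xi)=\lambda\,g(\lambda\tau,\lambda\xi),
\]
with time and frequency scaling by the \emph{same} factor, not $h(\tau,\xi)=\lambda\,g(\lambda^{1/2}\tau,\lambda\xi)$. (Your own count of the factors is right; only the balance relation is transcribed wrong.) This would change all the exponents in the claimed iteration, in particular the convergence test for $\sum_k\tau_0\lambda_k^{-1/2}$.

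Second, even if you correct the scaling, the claimed ``scale amplification'' of hypothesis (\ref{C2}) does not come out as written. With the corrected scaling and $\lambda=1/r$, one has $\int_0^1 h(0,d\xi)=\int_0^{1/r}g_{in}(d\omega)\approx M$, so one is looking at the \emph{tail}, not at the region where (\ref{C2}) provides information; zooming in to $\{\omega\lesssim r\}$ would instead require $\lambda=r<1$, but then $\int_0^1h(0,d\xi)=\int_0^r g_{in}\geq K^*r^{\theta^*}$ \emph{decreases} as $r\to 0$, so the concentration regime does not strengthen without bound as you assert. Finally, and structurally, a scaling-based zoom argument is not what the paper intends: the Nordheim equation (\ref{E5})--(\ref{E6}), to which the paper's proof of Theorem~\ref{BU} is explicitly referred, has both quadratic and cubic nonlinearities and hence carries \emph{no} nontrivial scaling symmetry at all, so the argument in \cite{EV1} Theorem~2.3 must be, and is, a direct iterative estimate on $\int_0^R g$ using the monotonicity formula and Lemma~\ref{cotInf} rather than a rescaling of the solution. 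Your Step~1 together with the estimate of Lemma~\ref{estProd} and Lemma~\ref{cotInf} is the right core; Step~2, as formulated, is a dead end.
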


\begin{theorem}
Let $M>0, E>0, \nu>0, \rho<-2$ be given. There exist $r=r(M, E, \nu)
>0, K^{\ast}=K^{\ast} (M, E, \nu)  > 0, T_{0}=T_{0}(M, E) $ and 
$\theta_{\ast} > 0$ independent on
$M,\ E, \nu$, such that for any $g_{in}\in L^{\infty}\left(  \mathbb{R}%
^{+};\sqrt{\omega}\left(  1+\omega\right)  ^{\rho-\frac{1}{2}}\right)  $
satisfying$\ $%
\begin{align}
\int_{\mathbb{R}^{+}}g_{in}\left(  d\omega\right)   &  =M\ ,\ \ \int
_{\mathbb{R}^{+}}\omega g_{in}\left(  d\omega\right)  =E\label{Z1E8}\\
\int_{0}^{R}g_{in}\left(  d\omega\right)   &  \geq\nu R^{\frac{3}{2}%
}\ \ \text{for }0<R\leq r\ \ ,\ \ \int_{0}^{r}g_{in}\left(  d\omega\right)
\geq K^{\ast}\left(  r\right)  ^{\theta_{\ast}} \label{Z1E9}%
\end{align}
there exists a weak solution $g\in C\left(  \left[  0,\infty\right)
:\mathcal{M}_{+}\left(  \left[  0,\infty\right)  :\left(  1+\omega\right)
^{\rho}\right)  \right)  $ of (\ref{S2E1}) such that there exists $T_{\ast}>0$
such that the following holds:%
\begin{equation}
\sup_{0\leq t\leq T_{\ast}}\left\Vert f\left(  t,\cdot\right)  \right\Vert
_{L^{\infty}\left(  \mathbb{R}^{+}\right)  }<\infty\ \ ,\ \ \sup_{T_{\ast
}<t\leq T_{0}}\int_{\left\{  0\right\}  }g\left(  t,d\omega\right)
>0\ \ \label{Z2E1b}%
\end{equation}
where $g=\sqrt{\omega}f.$
\end{theorem}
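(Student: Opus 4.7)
The strategy is to combine the finite-time blow-up for bounded mild solutions provided by Theorem \ref{BU} with the construction of a global weak solution through the $\sigma$-regularisation used in Theorem \ref{globalWeakSol}, and then to argue by contradiction that the blow-up of the $L^\infty$ norm must be accompanied by the appearance of a Dirac mass at the origin.

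First, I would apply Theorem \ref{BU} to the initial datum $g_{in}$ satisfying (\ref{Z1E8})–(\ref{Z1E9}) to produce a bounded mild solution $\tilde g\in C([0,T_{\max});L_+^\infty(\mathbb{R}^+;\sqrt{\omega}(1+\omega)^{\rho-1/2}))$ with $T_{\max}<T_0$ and $\limsup_{t\to T_{\max}^-}\|f(t,\cdot)\|_{L^\infty}=\infty$, where $f=\tilde g/\sqrt{\omega}$. By Proposition \ref{relSolutions}, $\tilde g$ is also a weak solution of (\ref{S2E1}) in the sense of Definition \ref{weakSolution} on every subinterval $[0,T']$ with $T'<T_{\max}$. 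I would then take any $T_\ast\in(0,T_{\max})$; this provides the first bound in (\ref{Z2E1b}).

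Second, I would extend $\tilde g$ to a global weak solution on $[0,\infty)$. The conservation identities (\ref{K1}), (\ref{K2}) together with (\ref{Z1E8}) guarantee that, for each $T'<T_{\max}$, the measure $\tilde g(T',\cdot)$ lies in $\mathcal{M}_+([0,\infty):(1+\omega)^\rho)$ with uniform control. Picking a sequence $t_n\nearrow T_{\max}$ and using each $\tilde g(t_n,\cdot)$ as initial datum for Theorem \ref{globalWeakSol} yields weak solutions $g_n$ on $[t_n,\infty)$ in the sense of Definition \ref{weakSolution}. The tightness estimates of Lemma \ref{cotInf} (valid also in the limit $\sigma=0$) combined with the uniform Lipschitz control in time coming from Lemma \ref{Cont} and the weak formulation (\ref{Z2E1}) give compactness in $C([0,\infty):\mathcal{M}_+([0,\infty):(1+\omega)^\rho))$; a diagonal extraction then produces a weak solution $g$ on $[0,\infty)$ coinciding with $\tilde g$ on $[0,T_{\max})$.

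Third, and this is the main obstacle, I would show that $\sup_{T_\ast<t\le T_0}\int_{\{0\}}g(t,d\omega)>0$. Suppose for contradiction that $\int_{\{0\}}g(t,d\omega)=0$ for every $t\in[0,T_0]$. Then $g$ solves (\ref{S2E1}) without interaction through the origin, and the absolutely continuous part $f(t,\omega)=g(t,\omega)/\sqrt{\omega}$ satisfies (\ref{E1}) pointwise in $\omega>0$. The cubic estimates (\ref{K8E1})–(\ref{K8E3}) used in the Proof of Theorem \ref{localExBounded}, applied to the mild reformulation (\ref{Z3E5}) along characteristics, would then provide a uniform $L^\infty$ estimate for $f(t,\cdot)$ on $[0,T_0]$ depending only on $M$, $E$, $\nu$, $\rho$ and on the initial $L^\infty$ norm. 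Invoking uniqueness of bounded mild solutions (Theorem \ref{localExBounded}) on $[0,T_{\max})$, the extension $g$ would then yield $\|f(t,\cdot)\|_{L^\infty}$ uniformly bounded on a neighbourhood of $T_{\max}$, contradicting $\limsup_{t\to T_{\max}^-}\|f(t,\cdot)\|_{L^\infty}=\infty$. The hardest step is precisely this absence-of-condensate implies boundedness statement; it will require a careful comparison of the weak-formulation identity (\ref{Z2E1}) with the mild identity (\ref{Z3E5}) in a neighbourhood of the blow-up time, analogous to the arguments developed in \cite{EV1} for the Nordheim equation.
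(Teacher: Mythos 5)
Your first two steps -- invoking Theorem \ref{BU} for the bounded mild solution up to $T_{\max}<T_0$, and patching this onto a global weak solution built by the $\sigma$-regularisation -- are consistent with the approach the paper (via its reference to \cite{EV1}, Theorems 2.3 and 10.5) intends. The problem is in your third step, which is precisely the one you flag as the hardest, and it does not close.

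The claim that ``absence of a condensate implies a uniform $L^\infty$ bound on $[0,T_0]$'' is not established and, as stated, cannot be extracted from the ingredients you cite. First, absence of an atom at $\omega=0$ does not make $g(t,\cdot)$ absolutely continuous, so one cannot pass to a pointwise function $f=g/\sqrt\omega$ satisfying (\ref{E1}); the measure could concentrate singularly near the origin (or elsewhere) without producing $\int_{\{0\}}g>0$, and the $L^\infty$ norm of any density could blow up without any Dirac mass ever forming. Second, the estimates (\ref{K8E1})--(\ref{K8E3}) are of the form $\|\mathcal O_\sigma[g]\|\le C\|g\|^3$; fed into the Duhamel formula (\ref{Z3E5}) they produce only a local-in-time bound governed by the ODE $y'\le Cy^3$, which itself blows up in finite time. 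They do not yield an $L^\infty$ bound ``depending only on $M,E,\nu,\rho$ and the initial $L^\infty$ norm'' uniformly up to $T_0$; indeed if such a bound existed, Theorem \ref{BU} itself would be false. In effect the contrapositive you are trying to invoke (``blow-up implies condensation'') is the content of the theorem, so the argument is circular unless the implication is proved by an independent quantitative mechanism.

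What is actually needed in this step is a constructive, quantitative argument of the type used in \cite{EV1}, Theorem 10.5: use the hypotheses (\ref{Z1E9}) (the lower bounds $\int_0^R g_{in}\ge\nu R^{3/2}$ and $\int_0^r g_{in}\ge K^*r^{\theta_*}$) together with the blow-up of the bounded mild solution at $T_{\max}$ and the tightness/monotonicity estimates (Lemma \ref{cotInf}, Proposition \ref{atractiveness}, Lemma \ref{Convex}) to show that, as $t\uparrow T_{\max}$, a fixed positive amount of mass is confined to $[0,\varepsilon]$ for every $\varepsilon>0$, uniformly in the regularisation parameter $\sigma$. Passing to the limit $\sigma\to 0$ and using the closedness-under-weak-limits of the upper semicontinuity of $\mu\mapsto\mu(\{0\})$ for measures whose mass near the origin is controlled then forces $\int_{\{0\}}g(t,d\omega)>0$ for $t$ slightly beyond $T_{\max}$, which furnishes (\ref{Z2E1b}) with $T_*$ taken just below $T_{\max}$. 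Your proposal skips exactly this quantitative concentration argument; the contradiction scheme you substitute for it does not carry the load.
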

The Proof of these Theorems is similar to the Proof of Theorems 2.3 and 10.5
in \cite{EV1} respectively.

\section{Solutions without condensation: Pulsating behavior}\index{condensation} \index{pulsating}

\subsection{Statement of  the result}

Our next goal is to prove the existence of a class of solutions of
(\ref{S2E1}) for which the alternative (ii) in Theorem \ref{AsympOsc} holds.
Moreover, the obtained solutions exhibit the behaviour described in Remark
\ref{setB}, namely the existence of an unbounded set of times $B$ in which
$g\left(  t,\cdot\right)  $ cannot be approximated by means of a Dirac mass in
the form indicated in (\ref{L1E2}). This type of pulsating behavior cannot take 
place for the solutions of the Nordheim equation, as it follows from the results in 
\cite{EV1}, \cite{EV2}. \index{pulsating}

\begin{theorem}
\label{globOsc}There exist a class of weak solutions with interacting condensate of (\ref{S2E1}), 
 in the sense of Definition \ref{weakSolution}, with $g\in
C\left(  \left[  0,\infty\right)  :\mathcal{M}_{+}\left(  \left[
0,\infty\right)  :\left(  1+\omega\right)  ^{\rho}\right)  \right)$, such that the alternative (ii) in
Theorem \ref{AsympOsc} holds. Such solutions have the property $\int_{\left\{
0\right\}  }g\left(  t,dx\right)  =0$ for any $t\geq0.$ Moreover, there exists
a constant $c_1>0$ independent of $g$ such that, 
\begin{equation}
\label{S4distance}
\limsup _{ t\to +\infty }\left(\inf _{a >0 }\left(dist _{\ast}\left(  \frac{1}{a}g\left(  t,\frac{\cdot
}{a}\right)  ,m\delta_{1}\right)\right)\right) \geq c_1
\end{equation}
\end{theorem}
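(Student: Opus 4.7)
The plan is to construct the required solutions as measured valued mild solutions in the sense of Definition \ref{MeasMildSol} (with a regularizing $\sigma>0$) and then pass to the limit $\sigma\to 0$ using the same compactness and tightness machinery developed in the proof of Theorem \ref{globalWeakSol} (Lemma \ref{cotInf} plus Arzela-Ascoli). The initial datum will be supported on a discrete geometric sequence of scales $R_{1}>R_{2}>\cdots>0$ with $R_{n+1}=R_{n}/\lambda$ for a fixed ratio $\lambda>1$ to be chosen, together with carefully prescribed masses $\{m_{n}\}$: most of the mass sits initially near $R_{1}$, but at each smaller scale $R_{n}$ there is a small but precisely tuned ``seed'' mass $\varepsilon_{n}$ which will drive the subsequent cascade. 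Since $R_{\ast}=\inf A^{\ast}=0$ by construction and $\int_{\{0\}}g_{in}\,d\omega=0$, we shall be in the setting of Theorem \ref{AsympOsc}.

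The dynamical core of the argument is a self-similar transition lemma. Assume that at some time $\tau_{n}$ the solution satisfies $g(\tau_{n},\cdot)\approx m\delta_{R_{n}}+$ (small seed at $R_{n+1}$) $+$ (exponentially small tails). The scaling of the collision integral (cf. the rescaling used in Proposition \ref{coarsening}) together with Definition \ref{MeasMildSol} and the Duhamel formula (\ref{Z3E3}) allows one to show that on a time interval of length $T_{n}\sim R_{n}^{2}/m^{2}$ the bulk mass migrates from $R_{n}$ onto $R_{n+1}$ through the cubic nonlinearity $\mathcal{O}_{\sigma}[g]$, while a new seed is produced at $R_{n+2}$ by the permitted collisions, exactly reproducing at scale $R_{n+1}$ the hypotheses that held at scale $R_{n}$. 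This self-similar step, iterated, gives a solution defined globally in time with $R_{n}\to 0$ monotonically; in the quiescent windows between transitions $g(t,\cdot)$ is arbitrarily close to $m\delta_{R_{n}}$, which provides the choices $\Omega(t)=R_{n(t)}$ and $\eta(t)\to 0$ demanded by alternative (ii) of Theorem \ref{AsympOsc}, and Lemma \ref{Convex} applied to the convex test function $\varphi(\omega)=(1-K\omega)_{+}$ as in the Proof of Lemma \ref{cotInf} prevents any mass from ever reaching $\{0\}$, so $\int_{\{0\}}g(t,d\omega)=0$ for all $t$.

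To obtain the lower bound (\ref{S4distance}) I fix, inside each transition window $(\tau_{n},\tau_{n}+T_{n})$, an intermediate instant $t_{n}^{\ast}$ at which approximately half of the bulk mass still lies near $R_{n}$ and the other half has already concentrated near $R_{n+1}$. Because $R_{n+1}/R_{n}=1/\lambda$ is a fixed ratio independent of $n$, the genuinely bimodal rescaled measure $\tfrac{1}{a}g(t_{n}^{\ast},\cdot/a)$ is, for every $a>0$, at a distance at least $c_{1}>0$ from $m\delta_{1}$ in the $\mathrm{dist}_{\ast}$ metric of Remark \ref{distWeak}; this is a purely geometric statement about comparing a bimodal measure on $[0,\infty)$ with a Dirac mass, and the constant $c_{1}$ depends only on $\lambda$ and $m$, hence is independent of the particular solution. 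Since $t_{n}^{\ast}\to\infty$ as $n\to\infty$, the $\limsup$ in (\ref{S4distance}) is at least $c_{1}$.

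The main obstacle, and the reason this is described as one of the most technical parts of the paper, is making the transition lemma both quantitative and self-consistent. One has to control simultaneously three competing contributions: (a) the amplification rate of the seed at $R_{n+1}$ (which must be precisely $\lambda$-compatible so that the state at $\tau_{n+1}$ matches the hypotheses for the next iteration), (b) the leakage of mass to scales outside $\{R_{n},R_{n+1}\}$ produced by the full cubic operator $\mathcal{O}_{\sigma}$ (which must be shown to stay small enough so that the bulk mass $m$ is approximately conserved through infinitely many iterations), and (c) the uniformity of all estimates as $\sigma\to 0^{+}$, which is needed to pass to a weak solution with interacting condensate in the sense of Definition \ref{weakSolution} via Proposition \ref{relSolutions}. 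The linearization of the collision operator around a Dirac mass, together with the monotonicity identity (\ref{S2E5}) used with a well-chosen convex test function to lock in the mass near the active scale, is the technical tool that makes these three controls compatible and closes the induction.
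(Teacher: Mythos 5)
Your overall strategy (discrete hierarchy of scales, tuned seed masses, transition lemma, bimodal intermediate time for the lower bound) is in the right spirit, and the final argument for the estimate (\ref{S4distance}) matches the paper's continuity argument. However, there are three genuine gaps that would prevent the proof from closing.

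First, the support of the solution cannot stay on a geometric sequence $\{R_n\}$. The collision operator transfers mass to points of the form $\omega_1+\omega_2-\omega_3$, and for a generic ratio $\lambda>1$ the set $\{R_1/\lambda^{n}\}$ is not closed under this operation; mass would immediately leak onto an uncontrolled dense set. The paper circumvents this by working with the full dyadic grid $\bigcup_{\alpha}\Omega_{\alpha}$ (all positive dyadic rationals), which \emph{is} closed under $\omega_1+\omega_2-\omega_3$, and the initial data is concentrated at the first points $x_{\alpha}(1)=2^{-\alpha}$ of each family but the evolution is allowed to spread across the whole grid. You need this closure property, and it forces $\lambda=2$ and a full grid, not a single scale per level.

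Second, and decisively, you are missing the number-theoretic rigidity lemma that drives the whole construction. The paper proves (via the parity argument around equation (\ref{P1}) in the proof of Lemma \ref{LOp}, reused in Lemma \ref{ODEUpper}) that if $\omega_2,\omega_3,\omega_4\in\bigcup_{\alpha\le N}\Omega_{\alpha}$ then $\omega_3+\omega_4-\omega_2$ cannot lie in $\Omega_{\sigma}$ for any $\sigma>N$. This is what guarantees that the collision operator cannot spontaneously generate mass at finer scales than those already populated, yielding the crucial differential inequality $\partial_t m_{\gamma}\le \tfrac{6m_{\gamma}}{x_{\gamma}(1)}+6M_{\gamma+1}(\cdots)^2$ of Lemma \ref{ODEUpper}. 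Your description of the cascade as \emph{producing} a new seed at $R_{n+2}$ contradicts this mechanism: in the paper's construction all seeds $\varepsilon_{\alpha}$ are placed in the initial datum (equation (\ref{S4E4})), and the evolution can only amplify them, never create them. Without the rigidity lemma you cannot control the leakage to other scales and the induction will not close.

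Third, the plan to regularize with $\sigma>0$ and pass to the limit is not what the paper does and is poorly adapted to the goal. The truncated kernel $\Phi_{\sigma}$ suppresses precisely the interactions at small $\omega$ that drive the cascade, and upon taking $\sigma\to 0$ one would have to show the limit retains the discrete support structure. The paper instead builds the mild solution directly at $\sigma=0$ inside the space $\mathcal{X}_{\theta,\rho^{*}}$, whose decay condition on the family masses makes the operators $A_g$ and $\mathcal{O}[g]$ well defined without any cutoff (Lemmas \ref{LA} and \ref{LOp}); Proposition \ref{closure} shows $\mathcal{K}_{\theta,\rho^{*}}(L)$ is weakly closed, so Schauder applies directly in this class, preserving the discrete support throughout (Theorem \ref{mildCount}). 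That closure result depends on the dyadic structure and the $(2^{\alpha})^{\theta}$-weighted decay of $\mu(\Omega_{\alpha})$, not merely on tightness, and is another ingredient your proposal would need to replicate.
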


\subsection{Proof of the result.}

We now prove Theorem \ref{globOsc}. To this end, we need to introduce some
notation in order to describe the class of measures under consideration.

\subsubsection{Notation.}

Our goal is to construct a weak solution of (\ref{S2E1}) with initial datum
$g_{in}$ which is a finite measure supported in a countable set of points. We
will assume without loss of generality that $\int g_{in}=1.$ It is convenient
to write the support as a disjoint union of countable sets. We define the
following functions defined for $k\in\left\{  1,2,3,...\right\}  :$\
\[
\theta_{\alpha}\left(  k\right)  =k\ \ \text{ if\ }\alpha=0\ \ ,\ \ \text{
}\theta_{\alpha}\left(  k\right)  =2k-1\ \ \text{ if\ }\alpha\neq0
\]

We then set%
\[
x_{\alpha}\left(  k\right)  =2^{-\alpha}\theta_{\alpha}\left(  k\right)
\ \ \ ,\ \ k=1,2,3,...\ \ ,\ \ \alpha=0,1,2,3,....
\]

We introduce also the following disjoint sets of points:%
\begin{equation}
\Omega_{\alpha}=\left\{  x_{\alpha}\left(  k\right)  :k=1,2,...\right\}
\ \ ,\ \ \alpha=0,1,2,3,.... \label{T2E6}%
\end{equation}

We are interested in measures which have the mass $a_{\alpha}\left(
k,t\right)  $ at a given point $x_{\alpha}\left(  k\right)  $ at the time
$t,\ $i.e., $g$ has the form:%
\begin{equation}
g\left(  t,\cdot\right)  =\sum_{\alpha=0}^{\infty}\sum_{k=1}^{\infty}%
a_{\alpha}\left(  k\right)  \delta_{x_{\alpha}\left(  k\right)  }\left(
\cdot\right)  \label{T2E6a}%
\end{equation}

It is convenient to study separately the masses in each of the families. We
will denote as $\mathcal{M}_{+}\left(  \Omega_{\alpha}\right)  $ the family of
finite Radon measures supported in $\Omega_{\alpha}.$ Then:%
\begin{equation}
g=\sum_{\alpha=0}^{\infty}a_{\alpha}\ \ \label{I1a}%
\end{equation}
where:%
\begin{equation}
a_{\alpha}=\sum_{k=1}^{\infty}a_{\alpha}\left(  k\right)  \delta_{x_{\alpha
}\left(  k\right)  }\in\mathcal{M}_{+}\left(  \Omega_{\alpha}\right)
\label{I1b}%
\end{equation}

We will denote as $m_{\alpha}$ the total mass contained in the family
$\Omega_{\alpha}:$%
\begin{equation}
m_{\alpha}=\sum_{k=1}^{\infty}a_{\alpha}\left(  k\right)  =\left\langle
a_{\alpha},1\right\rangle \label{I1c}%
\end{equation}

We define the following auxiliary families of points:%
\begin{equation}
\mathcal{Z}_{\alpha}=\bigcup_{\beta\leq\alpha}\Omega_{\beta} \label{S1E9a}%
\end{equation}

It is relevant to notice that the sets $\mathcal{Z}_{\alpha}$ can be obtained
from $\Omega_{0}$ by means of a rescaling. More precisely:%
\begin{equation}
\mathcal{Z}_{\alpha}=2^{-\alpha}\Omega_{0}\ ,\ \alpha\geq0 \label{S1E9}%
\end{equation}

\subsubsection{Heuristic description of the pulsating behaviour.} \index{pulsating}

The dynamics of the pulsating solutions will be characterized by the existence
of two sequences of time intervals $\left\{  t_{n}\right\}  ,\ \left\{
s_{n}\right\}  $ such that:%
\begin{align}
0  &  =t_{0}<s_{0}<t_{1}<s_{1}<t_{2}<...<t_{n}<s_{n}<t_{n+1}<s_{n+1}%
<...\label{S4E2}\\
\lim_{n\rightarrow\infty}t_{n}  &  =\lim_{n\rightarrow\infty}s_{n}%
=\infty\label{S4E3}%
\end{align}

During the time intervals\ $\left\{  \left[  t_{n},s_{n}\right)
:n=0,1,2,3,...\right\}  $ the solution $g\left(  \cdot,t\right)  $ will be in
the so-called slow dynamics. The main characteristic of the dynamics of the is
that, during these intervals most of the mass of $g$ remains concentrated at
the point $x_{n}\left(  1\right)  ,\ $i.e., the smallest point of the family
$\Omega_{n}.$ Therefore, we would have, approximately:%
\[
g\left(  \cdot,t\right)  \sim\delta\left(  \cdot-x_{n}\left(  1\right)
\right)  \ \ ,\ \ t\in\left[  t_{n},s_{n}\right)
\]

During the time intervals $\left\{  \left[  s_{n},t_{n+1}\right)
:n=0,1,2,3,...\right\}  $ the solution $g\left(  \cdot,t\right)  $ will pass
through the so-called fast dynamics. These dynamics will be characterized by a
fast transfer of most of the mass of $g$ from $\omega=x_{n}\left(  1\right)  $
to $\omega=x_{n+1}\left(  1\right)  .$ This transfer of mass will be
characterized in a first stage by the spreading of the mass of $g$ which at
the beginning of this phase was mostly at $\omega=x_{n}\left(  1\right)  $ to
the particles of the set $\mathcal{Z}_{n+1}$ defined in (\ref{S1E9a}). During
a second stage, this mass is transferred towards the particle $\omega
=x_{n+1}\left(  1\right)  $ in such a way that at the time $t=t_{n+1}$ most of
the mass of $g$ is at such a point.

Something that will play a crucial role is the remainder of the mass which is
not concentrated at the points $x_{n}\left(  1\right)  $ during the slow
phases or at the points of the sets\ $\mathcal{Z}_{n+1}$ during the fast phases.

During the $n-$th fast phase, the main process taking place is the transfer of
the mass from $\omega=x_{n}\left(  1\right)  $ to $\omega=x_{n+1}\left(
1\right)  .$ More precisely, at the time $t=s_{n}$ most of the mass of $g$ is
concentrated at the point $\omega=x_{n}\left(  1\right)  $ and at the time
$t=t_{n+1}$ most of the mass will be concentrated at the point $\omega
=x_{n+1}\left(  1\right)  .$

During the forthcoming slow phase, which takes place in the time interval
$\left[  t_{n+1},s_{n+1}\right)  $ the most relevant feature taking place is
the interaction between the masses placed at the points $\omega=x_{n+1}\left(
1\right)  ,\ \omega=x_{n+2}\left(  1\right)  .$ The mass $a_{n+2}\left(
1\right)  $ whichis much smaller than the mass $a_{n+1}\left(  1\right)  $ can
be described by means of a linear equation, which predicts a slow increase for
$a_{n+2}\left(  1\right)  $. Due to the slow increment of $a_{n+2}\left(
1\right)  $ the mass $a_{n+2}\left(  1\right)  $ becomes comparable to the
mass $a_{n+1}\left(  1\right)  .$ This event marks the beginning of the next
fast dynamics at time $t=s_{n+1}$.

The description of the masses during the fast stages cannot be approximated by
a system of linear equations. On the contrary, it requires to study a
system of nonlinear system of equations which describes the interactions
between particles with many different values for the masses. However, using
Proposition \ref{atractiveness} it is possible to prove that after a
sufficiently large time scale, most of the mass of $g$ is transferred to
$\omega=x_{n+2}\left(  1\right)  .$ The end of this transfer will take place
at the time $t=t_{n+2}$ and this will mark the starting time of a new slow phase.

The original distribution of masses at the points $x_{n}\left(  k\right)  $ at
time $t=0$ will be made in order to guarantee that most of the remainder of
the mass, not contained in $\omega=x_{n+2}\left(  1\right)  ,$ will be at
$\omega=x_{n+3}\left(  1\right)  $ at the time $t=t_{n+2}.$ Therefore, the
process described above would begin again and this iterative procedure would
be repeated for arbitrarily long times.

In order to gain some intuition for the evolution of $g$ we derive a set of
approximated differential equations which describe the most relevant masses of
$g$ during each of the two phases. Suppose that we take an initial
distribution of masses given by:%
\begin{equation}
a_{\alpha}\left(  k\right)  \left(  0\right)  =\varepsilon_{\alpha}%
\delta_{k,1}\ \ ,\ \ \text{with\ \ }\sum_{\alpha=0}^{\infty}\varepsilon
_{\alpha}=1\ \ ,\ \ \varepsilon_{\alpha}>0\ \ \label{S4E4}%
\end{equation}

We will use in this Subsection the notation $<<$ to indicate that the term on
the left is significantly smaller than the one on the right. We will use also
the symbol $\approx$ to indicate that the quantities in both sides of this
symbol are similar.

We first describe the slow phase which takes place in the interval $\left[
t_{n},s_{n}\right)  .$ During this phase most of the mass is concentrated in
the families $\Omega_{n},\ \Omega_{n+1}$ although the mass in the second
family will be much smaller, i. e. $m_{n+1}\left(  t_{n}\right)
<<m_{n}\left(  t_{n}\right)  .$ At the end of the previous fast dynamics phase
we have $m_{n}\left(  t_{n}\right)  \approx a_{n}\left(  1\right)  \left(
t_{n}\right)  \approx1.$ We neglect the mass of $g$ contained in the families
$\Omega_{\alpha}$ with $\alpha\geq\left(  n+2\right)  $ in the derivation of
the approximate equations describing the evolution of $g$ during this slow phase.

It is convenient to rescale the variables $g,\ \omega$ and $t$ in order to
bring the relevant points during this phase to the integers:%
\begin{equation}
g=2^{\left(  n+1\right)  }\bar{g}\ \ ,\ \ \ \ \omega=2^{-\left(  n+1\right)
}\bar{\omega}\ \ ,\ \ t=t_{n}+2^{-\left(  n+1\right)  }\bar{t} \label{S4E5}%
\end{equation}

This transformation brings the set $\Omega_{n}\cup\Omega_{n+1}$ to $\Omega
_{0}$ and it keeps invariant the equation (\ref{S2E1}). The evolution equation
for measures $\bar{g}$ of the form
\[
\bar{g}\left(  \bar{t},\cdot\right)  =\sum_{k=1}^{\infty}\bar{a}_{k}\left(
\bar{t}\right)  \delta_{k}\left(  \cdot\right)
\]
supported in $\Omega_{0}$ is given by the countable set of equations:
\begin{eqnarray}
&&\partial_{\bar{t}}\bar{a}_{n}\left(  \bar{t}\right) =\sum_{k+m-\ell
=n}\frac{\Phi_{k,m;\ell}}{\sqrt{km\ell}}\bar{a}_{k}\bar{a}_{m}\bar{a}_{\ell
}-\left(  \sum_{m,\ell=1}^{\infty}\frac{\left(  2\Phi_{n,m;\ell}-\Phi
_{\ell,m;n}\right)  }{\sqrt{nm\ell}}\bar{a}_{m}\bar{a}_{\ell}\right)  \bar
{a}_{n}, \label{S4E5a} 
\end{eqnarray}
for $n=1,2,\cdots$,  where
\begin{eqnarray*}
&&\Phi_{k,m;\ell}=\min\left\{  \sqrt{k},\sqrt{m},\sqrt{\ell},\sqrt{\left(
k+m-\ell\right)  _{+}}\right\} \\
&&\bar{a}_{2\ell-1}\left(  \bar{t}\right)  =a_{n+1}\left(  \ell\right)  \left(
t\right)  \ \ \ ,\ \ \ \bar{a}_{2\ell}\left(  \bar{t}\right)  =a_{n}\left(
\ell\right)  \left(  t\right)  \ \ \ ,\ \ \ \ell=1,2,...
\end{eqnarray*}

During the slow phase under consideration we can approximate the evolution of
the functions $\left\{  \bar{a}_{k}\right\}  $ using the equations.%
\begin{align}
\partial_{\bar{t}}\bar{a}_{1}  &  =\frac{1}{2}\left(  \bar{a}_{2}\right)
^{2}\bar{a}_{1}+\frac{1}{2\sqrt{3}}\left(  \bar{a}_{2}\right)  ^{2}\bar{a}%
_{3}\nonumber\\
\partial_{\bar{t}}\bar{a}_{2}  &  =-\left(  \bar{a}_{1}+\frac{1}{\sqrt{3}}%
\bar{a}_{3}\right)  \left(  \bar{a}_{2}\right)  ^{2}\nonumber\\
\partial_{\bar{t}}\bar{a}_{3}  &  =\frac{1}{2}\left(  \bar{a}_{2}\right)
^{2}\bar{a}_{1}+\frac{1}{2\sqrt{3}}\left(  \bar{a}_{2}\right)  ^{2}\bar{a}%
_{3}\ \label{S4E7}%
\end{align}
where we have neglected in (\ref{S4E5a}) all the contributions due to the
terms $\bar{a}_{k},\ k\geq4.$ Notice that this approximation is consistent
given the assumed relative size of the different masses.

Suppose that at the beginning of the slow phase (i.e. $\bar{t}=0$) we have
$\bar{a}_{k}=\alpha_{k},\ k=1,2,3.$ By assumption $\left(  \alpha_{1}%
+\alpha_{3}\right)  <<\alpha_{2}.$ As long as $\left(  \bar{a}_{1}+\bar{a}%
_{3}\right)  $ remains small compared with $\bar{a}_{2}$ we can approximate
the second equation in (\ref{S4E7}) as $\partial_{\bar{t}}\bar{a}_{2}=0,$ i.e.
$\bar{a}_{2}=\alpha_{2}.$ Using this approximation we obtain from (\ref{S4E7}):%

\begin{align*}
\bar{a}_{1}\left(  \bar{t}\right)   &  =\left(  \frac{\sqrt{3}\alpha
_{1}+\alpha_{3}}{\sqrt{3}+1}\right)  e^{K\bar{t}}-\frac{\left(  \alpha
_{3}-\alpha_{1}\right)  }{\sqrt{3}+1}\\
\bar{a}_{3}\left(  \bar{t}\right)   &  =\left(  \frac{\sqrt{3}\alpha
_{1}+\alpha_{3}}{\sqrt{3}+1}\right)  e^{K\bar{t}}-\frac{\sqrt{3}\left(
\alpha_{1}-\alpha_{3}\right)  }{\sqrt{3}+1}%
\end{align*}

These equations indicate that as $\bar{t}$ increases, the masses $\bar{a}%
_{1}\left(  \bar{t}\right)  $ and $\bar{a}_{3}\left(  \bar{t}\right)  $ become
similar and simultaneously both of them increases exponentially. If we write
$\varepsilon=$ $\left(  \sqrt{3}\alpha_{1}+\alpha_{3}\right)  $ we obtain that
$\bar{a}_{1}\left(  \bar{t}\right)  ,\ \bar{a}_{3}\left(  \bar{t}\right)  $
become or order one, for times of order $\bar{t}\approx\frac{\log\left(
\frac{1}{\varepsilon}\right)  }{K}.$ This marks the beginning of the next fast
dynamics phase.

The description of the masses $\bar{a}_{k}$ cannot be made using a linear
system of equations. During the fast phase the three masses $\bar{a}_{1},$
$\bar{a}_{2},$ $\bar{a}_{3}$ become comparable. It can then be readily seen
from (\ref{S4E5a}) that the mass of $g$ is distributed by means of an involved
nonlinear dynamics among many of the functions $\left\{  \bar{a}_{k}\right\}
$. It does not seem feasible to describe this dynamics by means of simple
formulas. However, the result in Theorem \ref{Asympt} indicates that after a
time $\bar{t}$ of order one most of the mass of $g$ becomes concentrated at
the value of $\bar{a}_{1}.$

The values of $\varepsilon_{\alpha}$ in (\ref{S4E4}) will be made in order to
ensure that at the end of the fast phase which transfers the mass from
$x_{n}\left(  1\right)  $ to $x_{n+1}\left(  1\right)  $ most of the remainder
mass which is not contained in $x_{n+1}\left(  1\right)  $ is in the family
$\Omega_{n+2}.$ Therefore an new slow phase begins which can be described as
explained above. The process is then repeated infinitely often as
$t\rightarrow\infty.$

It is important to take into account that the previous picture is an
oversimplified description of the evolution of $g.$ The main reason for this
is that during the $n-$th fast dynamics phase, the mass of the families
$\Omega_{\ell}$ with $\ell$ larger than $\left(  n+1\right)  $ is transported
to the points $x_{\ell}\left(  k\right)  $ with $k$ large. Actually the values
of $k$ to which a meaningful fraction of the mass of the family $\Omega_{\ell
}$ is transported is much larger if $\ell>>n.$ The consequence of this is that
it could take very large times to arrive to a mass distribution which allows
to approximate the dynamics of (\ref{S4E5a}) by means of the system of three
equations (\ref{S4E7}).

Another point to take into account is that in the previous heuristic
description was assumed that only two families $\Omega_{n},\ \Omega_{n+1}$ are
relevant at each time. In a strict sense a careful analysis of the evolution
of the mass in all the families $\Omega_{\ell}$ $\ell>\left(  n+1\right)  $
will be needed. A key point in the whole construction is that arguments
similar to the ones in the proof of Theorem \ref{Asympt} in the case $R_{\ast
}>0,$ and more precisely, arguments like the ones in the proof of Lemma
\ref{SetAstar} will imply that the transfer of mass of the families
$\Omega_{n}$ to $\Omega_{\ell}$ with $\ell>n$ can be estimated by the amount
of mass in the family $\Omega_{\ell}.$ From this point of many of the
arguments in the forthcoming pages can be thought as a some kind of continuous
dependence result of Lemma \ref{SetAstar}. Indeed, Lemma \ref{SetAstar} would
imply, for the class of measures considered in this Section, that the transfer
of mass from $\Omega_{n}$ to $\Omega_{\ell}$ with $\ell>n$ vanishes if the mass
at the family $\Omega_{\ell}$ is zero. We will prove now that this transfer is
small if the mass at the family $\Omega_{\ell}$ is small.

\subsubsection{Existence of global solutions in the class of measures supported
in the sets $\bigcup_{\beta}\Omega_{\beta}$.}

Our next goal is to prove the existence of a class of global measured valued
mild solutions of (\ref{Z2E2a}) such that $\mu\left(  \left[  0,\infty\right)
\setminus \bigcup_{\alpha=0}^{\infty}\Omega_{\alpha}\right)  =0$. The
construction follows similar ideas to the proof of existence of mild solutions
in Section \ref{mildConst}. The main difference is that we will work with a
particular class of measures. In order to use these measures, we need some
results of functional analysis.\\

\noindent
\textbf{Functional analysis preliminaries.}\\

We are interested in proving some existence results for equation (\ref{S2E1}) in a
class of measures $\mu\in\mathcal{M}_{+}\left(  \left[  0,\infty\right)
\right)  $ such that $\mu\left(  \left[  0,\infty\right)  \right)  =\mu\left(
\bigcup_{\alpha=0}^{\infty}\Omega_{\alpha}\right)  $ where the sets
$\Omega_{\alpha}$ are as in (\ref{T2E6}). Notice that according to the
definition of support of a measure given in (\ref{S1}), the support of a
measure is a closed set, and since the closure of $\bigcup_{\alpha=0}^{\infty
}\Omega_{\alpha}$ is $\left[  0,\infty\right)  $ the support of $\mu$ is
different from $\bigcup_{\alpha=0}^{\infty}\Omega_{\alpha}$ in general. 
The following Theorem collects several classical results in  \cite{Rudin} and \cite{Br}
 adapted to our particular setting.

\begin{theorem}
\label{measure}
We will denote as the set of Radon measures $\mathcal{M}_{+}\left(  \left[
0,\infty\right)  \right)  $ the set of all the positive, continuous linear
functionals $\Lambda:C_{0}(\left[  0,\infty\right)  )\rightarrow\mathbb{R}$,
where the topology of $C_{0}(\left[  0,\infty\right)  )$ is the topology of
uniform convergence in compact sets. Let us denote the family of Borel sets of
$\left[  0,\infty\right)  $ as $\mathcal{B}$. Then, there exists a unique
Borel measure $\mu$ such that:

a) $\Lambda f=\int_{X}fd\mu$ for every $f\in C_{c}(X)$.

b) $\mu(K)<\infty$ for every compact set $K\subset X$.

c) The relation $\mu(E)=\sup\{\mu(K):K\subset E,\,\,\,K\,\,\text{compact}\}$
holds for every $E\in\mathcal{B}$.

Moreover, the following additional properties hold:

i) For every $E\in\mathcal{B}$ we have:
$
\mu(E)=\inf\{\mu(V):E\subset V,\,\,V\,\text{open}\}.
$

ii) If $E\in\mathcal{B}$, $A\subset E$, and $\mu(E)=0$, then $\mu(A)=0$.

iii) If $E\in\mathcal{B}$, and $\varepsilon>0$, there is a closed set $F$ and
an open set $V$ such that $F\subset E\subset V$ and $\mu(V-F)<\varepsilon$.

iv) Suppose that we endow $\mathcal{M}_{+}\left(  \left[  0,\infty\right)
\right)  $ with the topology generated by the functionals $L_{\varphi
}:\mathcal{M}_{+}\left(  \left[  0,\infty\right)  \right)  \rightarrow
\mathbb{R}$, with $\varphi\in C_{0}(\left[  0,\infty\right)  ).$ Then, for any
$M>0,$ the set $$\left\{  \mu\in\mathcal{M}_{+}\left(  \left[  0,\infty\right)
\right)  :\int\mu\left(  d\omega\right)  \leq M\right\}$$
 is compact.
\end{theorem}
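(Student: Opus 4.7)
The plan is to invoke the classical Riesz representation theorem plus a Banach--Alaoglu argument, essentially transcribing Rudin (Chapter 2 of \emph{Real and Complex Analysis}) and Brezis (Chapter 3) to our setting. Since $C_c([0,\infty))\subset C_0([0,\infty))$ (where I read $C_0$ in the statement as the space of compactly supported continuous functions, consistent with the rest of the paper's notation), a positive continuous linear functional $\Lambda$ on $C_0([0,\infty))$ restricts to a positive linear functional on $C_c([0,\infty))$ which is continuous in the inductive limit topology, because continuity in the topology of uniform convergence on compact sets implies boundedness on each set of functions supported in a fixed compact set.

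First I would construct $\mu$ on open sets $V\subset[0,\infty)$ by the formula
\[
\mu(V)=\sup\bigl\{\Lambda f:\ f\in C_c([0,\infty)),\ 0\le f\le 1,\ \operatorname{supp}(f)\subset V\bigr\},
\]
extend to an outer measure on all subsets by
\[
\mu^{\ast}(E)=\inf\{\mu(V):E\subset V,\ V\text{ open}\},
\]
and then apply Carath\'eodory's construction to obtain a Borel measure $\mu$. Positivity of $\Lambda$ together with Urysohn's lemma for the locally compact Hausdorff space $[0,\infty)$ gives finiteness on compact sets (assertion b)) and the identity $\Lambda f=\int f\,d\mu$ for $f\in C_c([0,\infty))$ (assertion a)). Inner regularity on Borel sets (assertion c)) and outer regularity (assertion i)) are standard consequences of the construction, as is the approximation by closed and open sets (assertion iii)), which comes from combining c) and i). Property ii) is simply the completion of the Borel measure, which can always be absorbed into the definition.

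For assertion iv), the set $K_M=\{\mu\in\mathcal{M}_{+}([0,\infty)):\int\mu(d\omega)\le M\}$ is the positive part of the closed ball of radius $M$ in the dual of $C_0([0,\infty))$ equipped with the supremum norm (or more precisely, in the dual of the separable Banach space $C_0([0,\infty))$ with $\Lambda\mapsto\sup_{\varphi}\Lambda\varphi/\|\varphi\|_\infty$). Since $C_0([0,\infty))$ is separable, the weak--$\ast$ topology restricted to any norm-bounded set is metrizable, and Banach--Alaoglu yields compactness of the closed ball. The positive cone is weak--$\ast$ closed, so $K_M$ is compact. The topology described in the statement (generated by the evaluation functionals $L_\varphi$ with $\varphi\in C_0$) is precisely the weak--$\ast$ topology.

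The main potential subtlety will be to make sure that the topology on $\mathcal{M}_+$ used throughout the paper and the one used in Banach--Alaoglu coincide, and that the space of test functions chosen (either compactly supported continuous functions or continuous functions vanishing at infinity) is treated consistently. Once this identification is made explicit, everything reduces to classical references and no further work is required.
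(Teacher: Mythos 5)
Your proposal is correct and matches the paper's treatment: the paper itself gives no proof but simply remarks that this theorem collects classical results from Rudin and Brezis adapted to the setting at hand. Your spelling out of the Riesz representation construction (a)--(c), (i)--(iii)) and the Banach--Alaoglu/weak-$\ast$ argument for (iv), together with the observation that on mass-bounded sets the vague topology from $C_c$ test functions and the weak-$\ast$ topology from $C_0$ test functions coincide, is exactly the right bookkeeping, so nothing further is needed.
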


We will use extensively in the following Sections the fact that Radon measures
can be characterized in two equivalent ways, namely as functions which assign
values to the sets of the $\sigma-$algebra of Borel sets of $\left[
0,\infty\right)  $ or, alternatively as continuous functionals in
$C_{0}\left(  \left[  0,\infty\right)  \right)  .$ In order to avoid heavy
notation we will use the same letter to denote the measure as set funtion and
linear functional on $C_{0}\left(  \left[  0,\infty\right)  \right)  .$

We now define some functional spaces. From now on we will use the notation
$\mathbb{R}_{+}$ to denote the set $\left[  0,\infty\right)  .$ We remark that
in the measures used until the rest of the paper we will have $\mu\left(
\left\{  0\right\}  \right)  =0.$

\begin{definition}
Given $\theta>0,\ \rho^{\ast}>1,$ we will denote as $\mathcal{X}_{\theta
,\rho^{\ast}}$ the space of nonnegative Radon measures $\mu$ in $\mathcal{M}%
_{+}\left(  \left[  0,\infty\right)  \right)  $ such that $\mu\left(  \left[
0,\infty\right)  \setminus \bigcup_{\alpha=0}^{\infty}\Omega_{\alpha}\right)
=0$, satisfying
\begin{equation}
\left\Vert \mu\right\Vert _{\theta,\rho^{\ast}}\equiv\sup_{\alpha\geq0}\left(
\left(  2^{\alpha}\right)  ^{\theta}\mu\left(  \Omega_{\alpha}\right)
\right)  +\sup_{R\geq1}\left(  \frac{\mu\left(  \left[  \frac{R}{2},R\right]
\right)  }{R^{1-\rho^{\ast}}}\right)  <\infty\label{S7E1}%
\end{equation}

We will endow $\mathcal{X}_{\theta,\rho^{\ast}}$ with the weak topology of
measures, i.e. the topology induced by the functionals $\mu\rightarrow\int
\mu\varphi$ with $\varphi\in C_{0}\left(  \mathbb{R}_{+}\right)  .$
\end{definition}

\begin{remark}
Notice that, (\ref{S7E1}) implies that $\sum_{\alpha}\mu\left(  \Omega
_{\alpha}\right)  <\infty.$ In particular, this implies the following
representation formula for the measures $\mu\in\mathcal{X}_{\theta,\rho^{\ast
}}:$%
\begin{equation}
\mu=\sum_{\alpha=0}^{\infty}a_{\alpha}\ \ ,\ \ a_{\alpha}\left(  \ell\right)
=\mu\left(  \left\{  x_{\alpha}\left(  \ell\right)  \right\}  \right)
,\ \ a_{\alpha}=\sum_{\ell=1}^{\infty}a_{\alpha}\left(  \ell\right)
\delta_{x_{\alpha}\left(  \ell\right)  }\ \label{S8E5}%
\end{equation}
where the convergences of the series are understood in the sense of the weak
topology. Equivalently we can understand these measures as functions defined
in Borel sets. We recall that given a Borel set $B$ of $\left[  0,\infty
\right)  $ we have $\delta_{x_{0}}\left(  B\right)  =1$ if $x_{0}\in B$ and
$\delta_{x_{0}}\left(  B\right)  =0$ if $x_{0}\notin B.$
\end{remark}

We will prove now the following result that will play a crucial role in the following.

\begin{proposition}
\label{closure}Let $0<L<\infty,\ \theta>0,\ \rho^{\ast}>1.$ The sets
$$\mathcal{K}_{\theta,\rho^{\ast}}\left(  L\right)  =\left\{  \mu\in
\mathcal{X}_{\theta,\rho^{\ast}}:\left\Vert \mu\right\Vert _{\theta,\rho
^{\ast}}\leq L<\infty\right\}  \subset\mathcal{M}_{+}\left(  \left[
0,\infty\right)  \right)  $$
 are closed in the weak topology.
\end{proposition}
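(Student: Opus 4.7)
The plan is to take a sequence $(\mu_n) \subset \mathcal{K}_{\theta,\rho^{\ast}}(L)$ with $\mu_n \rightharpoonup \mu$ and verify the three conditions defining $\mathcal{K}_{\theta,\rho^{\ast}}(L)$: that $\mu$ is supported on $\bigcup_\alpha \Omega_\alpha$, and that both suprema in (\ref{S7E1}) are bounded by $L$. The norm bound alone gives $\mu_n([0,\infty)) \leq L\sum_\alpha 2^{-\alpha\theta} < \infty$, so the total masses are uniformly bounded and by Theorem~\ref{measure}(iv) the weak limit $\mu$ is a finite Radon measure.

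First I would establish the support property, which is the main technical point. For each integer $M \geq 0$, decompose $\mu_n = \nu_n^M + \sigma_n^M$, where $\nu_n^M$ is the restriction of $\mu_n$ to the closed discrete set $\mathcal{Z}_M = \bigcup_{\beta \leq M}\Omega_\beta$ (whose minimum spacing is $2^{-M}$) and $\sigma_n^M$ is supported on $\bigcup_{\beta > M}\Omega_\beta$ with total mass bounded by $L \sum_{\beta > M} 2^{-\beta\theta}$, uniformly in $n$. By Theorem~\ref{measure}(iv) and a diagonal extraction I pass to a subsequence (still indexed by $n$) along which $\nu_n^M \rightharpoonup \nu^M$ for every $M$; then $\sigma_n^M \rightharpoonup \mu - \nu^M =: \sigma^M$ and lower semicontinuity of total mass gives $\sigma^M([0,\infty)) \leq L\sum_{\beta>M}2^{-\beta\theta}$. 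Since $[0,\infty) \setminus \mathcal{Z}_M$ is open and $\nu_n^M$ vanishes there, the Portmanteau inequality for open sets forces $\nu^M$ to be supported on $\mathcal{Z}_M \subset \bigcup_\alpha \Omega_\alpha$. Letting $M \to \infty$, $\sigma^M$ tends to zero in total variation so $\nu^M \to \mu$ in total variation, which yields $\mu(A) = \lim_M \nu^M(A)$ for every Borel set $A$; taking $A = [0,\infty) \setminus \bigcup_\alpha \Omega_\alpha$ shows $\mu$ is supported on $\bigcup_\alpha \Omega_\alpha$.

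For the supremum bounds I would use the same total-variation approximation together with Fatou's lemma. Since $\mathcal{Z}_M$ has spacing $2^{-M}$, a continuous test function equal to $1$ at a single $x \in \mathcal{Z}_M$ and supported in a ball of radius less than $2^{-M-1}$ around $x$ only sees the atom of $\nu_n^M$ at $x$, giving $\nu^M(\{x\}) \leq \liminf_n \nu_n^M(\{x\})$. Summing over $x$ and applying Fatou then yields
\[
\nu^M(\Omega_\alpha) \;\leq\; \liminf_n \nu_n^M(\Omega_\alpha) \;\leq\; L\,2^{-\alpha\theta}, \qquad \nu^M([R/2,R]) \;\leq\; \liminf_n \nu_n^M([R/2,R]) \;\leq\; L\,R^{1-\rho^{\ast}}.
\]
Since $\nu^M \to \mu$ in total variation, for each fixed $\alpha$ and $R \geq 1$ one has $\nu^M(\Omega_\alpha) \to \mu(\Omega_\alpha)$ and $\nu^M([R/2,R]) \to \mu([R/2,R])$, producing the required bounds and hence $\mu \in \mathcal{K}_{\theta,\rho^{\ast}}(L)$.

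The hard part is the support step: the complement of $\bigcup_\alpha \Omega_\alpha$ is dense in $[0,\infty)$, so it cannot be detected by continuous test functions and support on a countable set is not in general preserved under weak convergence. The mechanism that saves the argument is the conversion of the quantitative exponential decay $\mu_n(\Omega_\alpha) \leq L 2^{-\alpha\theta}$ into total-variation smallness of the tail $\sigma^M$, which pins the support of $\mu$ to $\bigcup_\alpha \Omega_\alpha$ despite its density; the remaining estimates on $\Omega_\alpha$ and on $[R/2,R]$ then come essentially for free from Fatou's lemma on the discrete pieces.
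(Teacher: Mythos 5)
Your proof is correct, but it is organized around a genuinely different mechanism than the paper's. The paper handles the support property in one shot: it observes that $\mathcal{V}_{\beta}=\mathbb{R}_{+}\setminus\bigcup_{\alpha\leq\beta}\Omega_{\alpha}$ is \emph{open}, so the Portmanteau inequality for open sets under weak convergence gives $\mu(\mathcal{V}_{\beta})\leq\liminf_{m}\mu_{m}(\mathcal{V}_{\beta})\leq CL\,2^{-(\beta+1)\theta}$, and then outer regularity (Theorem~\ref{measure}(i)) and $\mathbb{R}_{+}\setminus\bigcup_{\alpha}\Omega_{\alpha}=\bigcap_{\beta}\mathcal{V}_{\beta}$ immediately yield $\mu(\mathbb{R}_{+}\setminus\bigcup_{\alpha}\Omega_{\alpha})=0$. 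You instead split $\mu_n$ into the discrete restriction $\nu_n^M$ to $\mathcal{Z}_M$ plus a uniformly small tail $\sigma_n^M$, extract a diagonal subsequence, show $\nu^M$ is supported on the closed discrete set $\mathcal{Z}_M$, and then pass $M\to\infty$ in total variation. This is a heavier apparatus for the same conclusion, and both arguments ultimately run on the same quantitative input, the exponential decay $\mu_n(\Omega_\alpha)\leq L\,2^{-\alpha\theta}$; the paper's version simply channels it through a single Portmanteau step rather than a diagonal extraction.

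Where your route actually earns its keep is the norm bound. The paper disposes of $\left\Vert\mu\right\Vert_{\theta,\rho^{\ast}}\leq L$ by ``taking the limit $m\to\infty$'' in the inequalities $\left(2^{\alpha}\right)^{\theta}\mu_{m}(\Omega_{\alpha})+\mu_{m}([R/2,R])/R^{1-\rho^{\ast}}\leq L$, which is elliptical because neither $\Omega_\alpha$ nor $[R/2,R]$ is open and hence weak convergence does not give convergence of their measures for free; one has to use that all $\mu_m$ are carried by the common discrete set $\bigcup_\alpha\Omega_\alpha$ with controlled decay to localize near each atom. Your Fatou-at-the-atom-level argument, using bump functions supported in balls of radius less than $2^{-M-1}$ together with the total-variation convergence $\nu^M\to\mu$, is a clean rigorous rendering of precisely this step. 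One minor presentational point: when you write $\nu^M(\{x\})\leq\liminf_n\nu_n^M(\{x\})$ you actually have equality (the limit $\lim_n\nu_n^M(\{x\})$ exists and equals $\nu^M(\{x\})$ by the same bump-function argument, since $\nu^M$ and $\nu_n^M$ are purely atomic on $\mathcal{Z}_M$), which makes the subsequent passage $\nu^M(\Omega_\alpha)\leq\liminf_n\nu_n^M(\Omega_\alpha)$ via Fatou somewhat more transparent.
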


\begin{proof}
Suppose that $\left\{  \mu_{m}\right\}  $ is a sequence of measures contained
in $\mathcal{K}_{\theta,\rho^{\ast}}\left(  R\right)  $ such that $\mu
_{m}\rightharpoonup\mu$ in the weak topology. We must  prove that
$\mu\left(  \left[  0,\infty\right)  \setminus \bigcup_{\alpha=0}^{\infty
}\Omega_{\alpha}\right)  =0$ as well as the inequality $\left\Vert
\mu\right\Vert _{\theta,\rho^{\ast}}\leq R.$ We recall that for any Borel set
$A$ we have, using point (i) in Theorem (\ref{measure}), 
\[
\mathcal{\mu}\left(  A\right)  =\inf\left\{  \mathcal{\mu}\left(  U\right)
:A\subset U\ ,\ U\text{ open}\right\}.
\]
Given any $\beta\geq0$ we define:%
\[
\mathcal{V}_{_{\beta}}=\left[  \mathbb{R}_{+}\setminus \bigcup_{\alpha
=0}^{\beta}\Omega_{\alpha}\right]
\]
Notice that (cf. (\ref{S1E9a})):%
\[
\mathcal{V}_{\beta}\cap\left[  \bigcup_{\alpha=0}^{\infty}\Omega_{\alpha
}\right]  \subset\mathcal{Z}_{\beta+1}%
\]
Then, using that $\mu_{m}\in\mathcal{K}_{\theta,\rho^{\ast}}\left(  L\right)
$ we obtain
\begin{eqnarray}
\mu_{m}\left(  \mathcal{V}_{\beta}\right)  &=&\mu_{m}\left(  \mathcal{V}_{\beta
}\cap\left[  \bigcup_{\alpha=0}^{\infty}\Omega_{\alpha}\right]  \right)
\leq\mu_{m}\left(  \mathcal{Z}_{\beta+1}\right) \label{T2E7} \\
& \leq  & L\sum_{k=\beta
+1}^{\infty}\left(  2^{-k}\right)  ^{\theta}\leq CL\left(  2^{-\left(
\beta+1\right)  }\right)  ^{\theta}\nonumber
\end{eqnarray}
for some $C>0$ depending only in $\theta.$ Taking the limit $m\rightarrow
\infty$ and using the weak convergence $\mu_{m}\rightharpoonup\mu$ we obtain:%
\begin{equation}
\mu\left(  \mathcal{V}_{\beta}\right)  \leq CL\left(  2^{-\left(
\beta+1\right)  }\right)  ^{\theta} \label{T2E9}%
\end{equation}
Then, using (\ref{T2E9}) as well as the fact that $\left[  \mathbb{R}%
_{+}\setminus \bigcup_{\alpha=0}^{\infty}\Omega_{\alpha}\right]  =\bigcap
_{\beta=0}^{\infty}\mathcal{V}_{\beta}$ we obtain:%
\[
\mu\left(  \mathbb{R}_{+}\setminus \bigcup_{\alpha=0}^{\infty}\Omega_{\alpha
}\right)  \leq\sum_{\beta\geq N}\mu\left(  \mathcal{V}_{\beta,N}\right)  \leq
CL\sum_{\beta\geq N}\left(  2^{-\left(  1+N\right)  }\right)  ^{\theta}\leq
CL2^{-\theta\left(  N+1\right)  }%
\]
for $N\geq1$ arbitrary, with $C>0$ independent on $N.$ Taking the limit
$N\rightarrow\infty$ we obtain:%
\begin{equation}
\mu\left(  \mathbb{R}_{+}\setminus \bigcup_{\alpha=0}^{\infty}\Omega_{\alpha
}\right)  =0 \label{T3E1}%
\end{equation}
On the other hand, the measures $\mu_{m}$ satisfy the inequalities:%
\[
\sup_{\alpha\geq0}\left(  \left(  2^{\alpha}\right)  ^{\theta}\mu_{m}\left(
\Omega_{\alpha}\right)  \right)  +\sup_{R\geq1}\left(  \frac{\mu_{m}\left(
\left[  \frac{R}{2},R\right]  \right)  }{R^{1-\rho^{\ast}}}\right)  \leq L
\]
whence:
\[
\left(  2^{\alpha}\right)  ^{\theta}\mu_{m}\left(  \Omega_{\alpha}\right)
+\frac{\mu_{m}\left(  \left[  \frac{R}{2},R\right]  \right)  }{R^{1-\rho
^{\ast}}}\leq L
\]
for any $\alpha\geq0,\ R\geq1.$ Taking the limit $m\rightarrow\infty,$ and
then $\sup_{\alpha\geq0}$ and $\sup_{R\geq1}$ we obtain that:%
\[
\sup_{\alpha\geq0}\left(  \left(  2^{\alpha}\right)  ^{\theta}\mu\left(
\Omega_{\alpha}\right)  \right)  +\sup_{R\geq1}\left(  \frac{\mu\left(
\left[  \frac{R}{2},R\right]  \right)  }{R^{1-\rho^{\ast}}}\right)  \leq L
\]

This inequality, combined with (\ref{T3E1}) yields $\mu\in\mathcal{K}%
_{\theta,\rho^{\ast}}\left(  L\right)  $ and the Proposition follows.
\end{proof}

\begin{remark}
The previous result encodes in a single functional analysis result one of the
main ideas of the sought-for construction of measured valued solutions of
(\ref{S2E1}). Notice that the class of measures satisfying (\ref{T3E1}) is
dense in $\mathcal{M}_{+}\left(  \left[  0,\infty\right)  \right)  $ in the
weak topology. The reason because the set of measures $\mathcal{K}%
_{\theta,\rho^{\ast}}\left(  L\right)  $ is closed in $\mathcal{M}_{+}\left(
\left[  0,\infty\right)  \right)  ,$ in spite of the fact that the closure of
$\bigcup_{\alpha=0}^{\infty}\Omega_{\alpha}$ is $\left[  0,\infty\right)  ,$
is by the condition $\left\Vert \mu\right\Vert _{\theta,\rho^{\ast}}\leq L$
that yields a fast decay of the amount of mass concentrated in the sets
$\Omega_{\alpha}$ with large values of $\alpha.$
\end{remark}

\noindent
\textbf{Study of some auxiliary operators.}\\

We need to study the properties of the following auxiliary function
$A_{g}\left(  \omega_{1}\right)  .$ It is worth to compare this result with
Lemma \ref{LAg}. Notice that, differently from Lemma \ref{LAg}, we assume that
$\sigma=0.$ The function $A_{g}$ can still be defined in spite of this due to
the decay properties of the measures $g\in\mathcal{X}_{\theta,\rho^{\ast}}$
for small $\omega.$

\begin{lemma}
\label{LA}Suppose that $g\in\mathcal{X}_{\theta,\rho^{\ast}}$ for some
$\theta>\frac{1}{2},\ \rho^{\ast}>1.$ Then $A_{g}\left(  \omega_{1}\right)  $
defined by means of%
\begin{equation}
A_{g}\left(  \omega_{1}\right)  =-\iint\Phi\left[  \frac{2g_{2}g_{3}}%
{\sqrt{\omega_{1}\omega_{2}\omega_{3}}}-\frac{g_{3}g_{4}}{\sqrt{\omega
_{1}\omega_{3}\omega_{4}}}\right]  d\omega_{3}d\omega_{4} \label{S7E3}%
\end{equation}
where $\omega_{2}=\omega_{3}+\omega_{4}-\omega_{1}$ defines a continuous
function in $\left[  0,\infty\right)  .$ Moreover, we have:%
\begin{equation}
A_{g}\left(  \omega_{1}\right)  \geq0\ \ ,\ \ \omega_{1}\in\left[
0,\infty\right)  \label{S7E5}%
\end{equation}

\end{lemma}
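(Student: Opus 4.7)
The plan is to prove Lemma \ref{LA} by adapting the Proof of Lemma \ref{LAg}, replacing the two ingredients that were previously available there, namely the regularization $\sigma > 0$ (which made $\Phi_\sigma/\sqrt{\omega_1\omega_2\omega_3}$ bounded) and the pointwise bound $g(\omega) \leq C\sqrt{\omega}$ near the origin (which gave integrability in the bounded case), by a single new ingredient: the summability estimate which the condition $\theta > 1/2$ provides for atomic measures in $\mathcal{X}_{\theta,\rho^*}$. Concretely, writing $g = \sum_{\alpha,\ell} a_\alpha(\ell) \delta_{x_\alpha(\ell)}$ as in \eqref{S8E5}, and using $x_\alpha(\ell) \geq 2^{-\alpha}$, one obtains
\[
\int_{(0,\infty)} \frac{g(d\omega)}{\sqrt{\omega}} \;\leq\; \sum_{\alpha \geq 0} 2^{\alpha/2}\, m_\alpha \;\leq\; \|g\|_{\theta,\rho^*}\sum_{\alpha \geq 0} 2^{-\alpha(\theta - 1/2)} \;<\; \infty,
\]
while the tail bound from $\rho^* > 1$ together with the definition of the norm gives $\int g(d\omega) < \infty$ and controls the large-$\omega$ behavior.

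First I would verify convergence of the two integrals in \eqref{S7E3} for every fixed $\omega_1 \geq 0$. For the first piece, use $\Phi \leq \sqrt{\omega_2}$ to reduce the integrand to $\frac{2 g_2 g_3}{\sqrt{\omega_1\omega_3}}$, then perform the change of variables $\omega_2 = \omega_3 + \omega_4 - \omega_1$, $d\omega_2 = d\omega_4$, producing a bound of the form $C\bigl(\int g(d\omega)/\sqrt{\omega}\bigr)\bigl(\int g(d\omega)\bigr)/\sqrt{\omega_1}$ away from $\omega_1 = 0$; near $\omega_1 = 0$ use instead $\Phi \leq \sqrt{\omega_1}$ to remove the $1/\sqrt{\omega_1}$ singularity. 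The second piece is handled analogously (using $\Phi \leq \sqrt{\omega_1}$ outright cancels the $1/\sqrt{\omega_1}$), which together with the summability estimate above yields absolute convergence.

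Second, I would establish continuity of $\omega_1 \mapsto A_g(\omega_1)$ on $[0,\infty)$. Since $g$ is purely atomic, both integrals are double series indexed by the atoms; each individual term is continuous in $\omega_1$ (the factor $\Phi$ is continuous by Lemma \ref{Cont}, and the atoms $x_\alpha(\ell)$ are isolated from $0$ for fixed $\alpha$), so it suffices to verify that the double series converges uniformly on compact subsets of $[0,\infty)$. Uniform tail estimates in $\alpha$ follow from the bound $m_\alpha \leq \|g\|_{\theta,\rho^*} 2^{-\alpha\theta}$ combined with the same manipulations used in Step 1, while uniformity at $\omega_1 = 0$ comes from the observation that $\Phi$ is bounded by $\sqrt{\omega_1}$, which makes all contributions to $A_g(\omega_1)$ involving large atoms vanish as $\omega_1 \to 0$.

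Finally, the nonnegativity \eqref{S7E5} is purely algebraic and follows verbatim the symmetrization argument of Lemma \ref{LAg}: splitting $2g_2 g_3 = g_2 g_3 + g_2 g_3$ as in \eqref{S7E4n} and changing variables $\omega_2 = \omega_3 + \omega_4 - \omega_1$ in each summand with the roles of $\omega_3,\omega_4$ exchanged, one rewrites
\[
\iint \Phi \frac{2 g_2 g_3}{\sqrt{\omega_1\omega_2\omega_3}}\, d\omega_3 d\omega_4 \;=\; \iint \frac{g_3 g_4}{\sqrt{\omega_1\omega_3\omega_4}}\, \Psi\, d\omega_3 d\omega_4,
\]
where $\Psi = \Psi_1 + \Psi_2 \geq \Phi$ by the same case analysis as in the proof of Lemma \ref{LAg}. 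Subtracting the second integral in \eqref{S7E3} gives $A_g(\omega_1) \geq 0$. The main obstacle is the careful control of uniform convergence of the series defining $A_g$ near $\omega_1 = 0$, where the singular factor $1/\sqrt{\omega_1}$ has to be absorbed by $\Phi$ simultaneously with keeping track of the atomic decay; this is where the strict inequality $\theta > 1/2$, rather than $\theta = 1/2$, is essential.
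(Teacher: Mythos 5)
Your proposal is correct and takes essentially the same route as the paper: the summability $\int g(d\omega)/\sqrt{\omega}<\infty$ coming from $\theta>\frac12$ replaces both the $\sigma$-cutoff and the $\sqrt\omega$-pointwise bound of Lemma~\ref{LAg}, and the nonnegativity is obtained verbatim from the $\Psi\geq\Phi$ symmetrization. The only (cosmetic) difference is in the continuity step: you organize it as uniform convergence on compacts of the atomic double series (using the uniform majorant $\Phi/\sqrt{\omega_1}\leq 1$), whereas the paper proves continuity on $(0,\infty)$ directly and then establishes $\lim_{\omega_1\to 0}A_g(\omega_1)$ by a $\delta$-splitting of the integration domain into $Q_\delta$ and its complement; these are equivalent. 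Two minor imprecisions in your write-up: the continuity of $\Phi$ is elementary (a $\min$ of square roots) and is not really what Lemma~\ref{Cont} asserts, and the phrase ``contributions involving large atoms vanish as $\omega_1\to 0$'' is loose---what actually makes the argument work is that $\Phi/\sqrt{\omega_1}\leq 1$ supplies an $\omega_1$-independent summable dominant for the tails, not that those tails tend to zero.
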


\begin{proof}
The function $\frac{\Phi}{\sqrt{\omega_{1}}}$ is continuous for $\omega_{1}>0$
and $\left(  \omega_{1},\omega_{2}\right)  \in\mathbb{R}_{+}^{2}.$ Therefore,
each of the terms $\frac{2\Phi g_{2}g_{3}}{\sqrt{\omega_{1}\omega_{2}%
\omega_{3}}},$ $\frac{\Phi g_{3}g_{4}}{\sqrt{\omega_{1}\omega_{3}\omega_{4}}}$
are Radon measures in $\mathbb{R}_{+}^{2}$. In order to prove the convergence
of each of the integrals we just notice that the condition (\ref{S7E1})
implies the estimate:%
\begin{equation}
\int_{\left[  \frac{R}{2},R\right]  }g\left(  d\omega\right)  \leq C\left\Vert
g\right\Vert _{\theta,\rho^{\ast}}\min\left\{  R^{\theta},R^{1-\rho^{\ast}%
}\right\}  \ \ \text{with }\theta>\frac{1}{2},\ \rho^{\ast}>1 \label{S8E1}%
\end{equation}

Therefore, (\ref{S7E3}) defines a continuous function in $\left\{  \omega
_{1}>0\right\}  .$ We can define 
$$A_{g}\left(  0\right)=\lim
_{\omega_{1}}A_{g}\left(  \omega_{1}\right).$$ In order to prove the
existence of this limit we consider separately the two additive terms. In the
case of $J_{2}=\iint\frac{\Phi g_{3}g_{4}}{\sqrt{\omega_{1}\omega_{3}%
\omega_{4}}}$ we decompose the integration region in the sets $Q_{\delta
}=\left\{  \omega_{3}\geq\delta,\ \omega_{4}\geq\delta\right\}  $ with
$\delta>0$ small as well as its complementary $\mathbb{R}_{+}^{2}\setminus
Q_{\delta}$. The integrals $\iint_{Q_{\delta}}\left[  \cdot\cdot
\cdot\right]  $ are independent of $\omega_{1}$ if $\omega_{1}$ is small, due
to the definition of $\Phi.$ On the other hand, the term $\iint
_{\mathbb{R}_{+}^{2}\setminus Q_{\delta}}\left[  \cdot\cdot\cdot\right]  $
converges to zero as $\delta\rightarrow0$ due to (\ref{S8E1}). This implies
the existence of the limit $\lim_{\omega_{1}\rightarrow0}J_{2}.$ In order to
prove the existence of a similar limit for $J_{1}=\iint\frac{2\Phi
g_{2}g_{3}}{\sqrt{\omega_{1}\omega_{2}\omega_{3}}}d\omega_{3}d\omega_{4} $ we
first replace the variable of integration $\omega_{4}$ by $\omega_{2}$ by
means of a change of variables. We now repeat a similar splitting argument of
the integral in the sets $Q_{\delta}$ and $\mathbb{R}_{+}^{2}\setminus
Q_{\delta}$ and use the same argument to prove the existence of $\lim
_{\omega_{1}\rightarrow0}J_{1}.$ This concludes the proof of the existence of
the continuity of the function $A_{g}\left(  \cdot\right)  $ in $\left[
0,\infty\right)  .$

In order to prove (\ref{S7E5}) we rewrite $A_{g}\left(  \omega_{1}\right)  .$
Notice that:%
\begin{equation}
\iint\Phi\frac{2g_{2}g_{3}}{\sqrt{\omega_{1}\omega_{2}\omega_{3}}}%
d\omega_{3}d\omega_{4}=\iint\Phi\frac{g_{2}g_{3}}{\sqrt{\omega_{1}%
\omega_{2}\omega_{3}}}d\omega_{3}d\omega_{4}+\iint\Phi\frac{g_{2}g_{4}%
}{\sqrt{\omega_{1}\omega_{2}\omega_{4}}}d\omega_{3}d\omega_{4} \label{S7E4}%
\end{equation}

We now use the change of variables $\omega_{2}=\omega_{3}+\omega_{4}%
-\omega_{1},$ $d\omega_{2}=d\omega_{4}$ in the first integral and $\omega
_{2}=\omega_{3}+\omega_{4}-\omega_{1},$ $d\omega_{2}=d\omega_{3}$ in the
second one. Then, replacing the variable $\omega_{2}$ by $\omega_{4} $ in the
first resulting integral and $\omega_{2}$ by $\omega_{3}$ in the second, we
obtain that the integral in (\ref{S7E4}) becomes
\[
\iint\frac{g_{3}g_{4}}{\sqrt{\omega_{1}\omega_{3}\omega_{4}}}\Psi
d\omega_{3}d\omega_{4}%
\]
where:%
\[
\Psi=\Psi_{1}+\Psi_{2}%
\]%
\begin{align*}
\Psi_{1}  &  =\chi_{\left\{  \omega_{3}\geq\omega_{4}\right\}  }\chi_{\left\{
\omega_{3}\geq\omega_{1}\right\}  }\sqrt{\left(  \omega_{1}+\omega_{4}%
-\omega_{3}\right)  _{+}}+\chi_{\left\{  \omega_{3}\leq\omega_{4}\right\}
}\chi_{\left\{  \omega_{3}\geq\omega_{1}\right\}  }\sqrt{\omega_{1}}+\\
&  +\chi_{\left\{  \omega_{3}\geq\omega_{4}\right\}  }\chi_{\left\{
\omega_{3}\leq\omega_{1}\right\}  }\sqrt{\omega_{4}}+\chi_{\left\{  \omega
_{3}\leq\omega_{4}\right\}  }\chi_{\left\{  \omega_{3}\leq\omega_{1}\right\}
}\sqrt{\omega_{3}}%
\end{align*}%
\begin{align*}
\Psi_{2}  &  =\chi_{\left\{  \omega_{3}\leq\omega_{4}\right\}  }\chi_{\left\{
\omega_{4}\geq\omega_{1}\right\}  }\sqrt{\left(  \omega_{1}+\omega_{3}%
-\omega_{4}\right)  _{+}}+\chi_{\left\{  \omega_{3}\geq\omega_{4}\right\}
}\chi_{\left\{  \omega_{4}\geq\omega_{1}\right\}  }\sqrt{\omega_{1}}+\\
&  +\chi_{\left\{  \omega_{3}\leq\omega_{4}\right\}  }\chi_{\left\{
\omega_{4}\leq\omega_{1}\right\}  }\sqrt{\omega_{3}}+\chi_{\left\{  \omega
_{3}\geq\omega_{4}\right\}  }\chi_{\left\{  \omega_{4}\leq\omega_{1}\right\}
}\sqrt{\omega_{4}}%
\end{align*}

Notice that $\Psi\geq\Phi$ whence (\ref{S7E5}) follows.
\end{proof}

We now define a nonlinear operator in terms of any given measure
$g\in\mathcal{X}_{\theta,\rho^{\ast}}.$ Notice that it is possible to
characterize Radon measures either by means of the measure of Borel sets or by
means of the action of the measure as an element of the dual of the space of
compactly supported continuous functions. We have decided to follow the second
approach in the definition of $\mathcal{O}\left[  g\right]  $ in the following
Lemma, even if it would be simpler to define the measure of the subsets of
$\bigcup_{\alpha=0}^{\infty}\Omega_{\alpha}$ in order to obtain a definition
consistent with the one given in Lemma \ref{LQg}. 

\begin{lemma}
\label{LOp}Suppose that $g\in\mathcal{X}_{\theta,\rho^{\ast}}$ for some
$\theta>1$ and $\rho^{\ast}>1.$ Then, the following formula defines a mapping
$\mathcal{O}:\mathcal{X}_{\theta,\rho^{\ast}}\rightarrow\mathcal{X}%
_{\theta,\rho^{\ast}}:$%
\begin{equation}
\mathcal{O}\left[  g\right]  =\iint\Phi\frac{g_{2}g_{3}g_{4}}{\sqrt
{\omega_{2}\omega_{3}\omega_{4}}}d\omega_{3}d\omega_{4}\ ,\ \omega_{2}%
=\omega_{3}+\omega_{4}-\omega_{1}\ \label{S8E2}%
\end{equation}
where the action of the measure $\mathcal{O}\left[  g\right]  $ acting over a
test function $\varphi\in C_{0}\left(  \mathbb{R}_{+}\right)  $ is given by:%
\begin{equation}
\left\langle \mathcal{O}\left[  g\right]  ,\varphi\right\rangle =\iiint\Phi\frac{g_{2}g_{3}g_{4}}{\sqrt{\omega_{2}\omega_{3}\omega_{4}}}%
\varphi\left(  \omega_{1}\right)  d\omega_{3}d\omega_{4}d\omega_{1}
\label{S8E3}%
\end{equation}

Moreover, we have the estimate:%
\begin{equation}
\left\Vert \left(  \mathcal{O}\left[  g\right]  \right)  \right\Vert
_{\theta,\rho^{\ast}}\leq C\left\Vert g\right\Vert _{\theta,\rho^{\ast}}^{3}
\label{S8E3a}%
\end{equation}

\end{lemma}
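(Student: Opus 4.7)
The plan is to establish three things in order: (i) that the triple integral in (\ref{S8E3}) is absolutely convergent for each $\varphi\in C_{0}\left(\mathbb{R}_{+}\right)$, so that $\mathcal{O}\left[g\right]$ is a well-defined positive Radon measure via Riesz representation (Theorem \ref{measure}); (ii) that $\mathcal{O}\left[g\right]$ is supported on $\bigcup_{\alpha}\Omega_{\alpha}$, so that $\mathcal{O}\left[g\right]$ lies in $\mathcal{X}_{\theta,\rho^{\ast}}$ in principle; and (iii) the quantitative bound (\ref{S8E3a}). Throughout I would exploit the decomposition (\ref{S8E5}), the estimates $g\left(\Omega_{\beta}\right)\leq\|g\|_{\theta,\rho^{\ast}}2^{-\beta\theta}$ and $g\left(\left[R/2,R\right]\right)\leq\|g\|_{\theta,\rho^{\ast}}R^{1-\rho^{\ast}}$ coming from (\ref{S7E1}), and the trivial bound $\Phi\leq\sqrt{\omega_{j}}$ for any of the four indices $j$.

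For (i) and (ii), I first note that substituting (\ref{S8E5}) into (\ref{S8E3}) reduces the right-hand side to a countable sum of nonnegative terms of the form $c_{\alpha_{2},\alpha_{3},\alpha_{4},k_{2},k_{3},k_{4}}\,\varphi\bigl(x_{\alpha_{3}}(k_{3})+x_{\alpha_{4}}(k_{4})-x_{\alpha_{2}}(k_{2})\bigr)$, restricted to those indices for which the argument is positive. Since $\bigcup_{\beta}\Omega_{\beta}$ consists of positive dyadic rationals and is closed under the map $(\omega_{2},\omega_{3},\omega_{4})\mapsto\omega_{3}+\omega_{4}-\omega_{2}$ when the result is positive, the putative measure is supported in $\bigcup_{\beta}\Omega_{\beta}$. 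Moreover, if $(\omega_{2},\omega_{3},\omega_{4})\in\Omega_{\alpha_{2}}\times\Omega_{\alpha_{3}}\times\Omega_{\alpha_{4}}$ and $\omega_{3}+\omega_{4}-\omega_{2}>0$, then $\omega_{3}+\omega_{4}-\omega_{2}\in\mathcal{Z}_{\max(\alpha_{2},\alpha_{3},\alpha_{4})}$, a fact central to the next step. Absolute convergence, and hence (i), will follow from the norm estimate proved below.

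For the first piece of (\ref{S8E3a}), the closure property above forces $\max(\alpha_{2},\alpha_{3},\alpha_{4})\geq\alpha$ on every triple contributing to $\mathcal{O}\left[g\right]\left(\Omega_{\alpha}\right)$. By symmetry of the integrand in $(\omega_{2},\omega_{3},\omega_{4})$ and a union bound over which index realizes the maximum, it suffices to control
\[
\sum_{\alpha_{2}\geq\alpha}\int_{\Omega_{\alpha_{2}}}g(d\omega_{2})\int_{0}^{\infty}\frac{g(d\omega_{3})}{\sqrt{\omega_{3}}}\int_{0}^{\infty}\frac{g(d\omega_{4})}{\sqrt{\omega_{4}}},
\]
after using $\Phi\leq\sqrt{\omega_{2}}$ to cancel the $\sqrt{\omega_{2}}$ in the denominator. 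The first factor is at most $C\|g\|_{\theta,\rho^{\ast}}2^{-\alpha\theta}$ since $\theta>0$, while each of the other two factors is bounded by $C\|g\|_{\theta,\rho^{\ast}}$ by splitting into $\{\omega<1\}=\bigcup_{\beta\geq1}\Omega_{\beta}\cap(0,1)$, where $\sum_{\beta\geq1}2^{\beta/2}g(\Omega_{\beta})\leq C\|g\|_{\theta,\rho^{\ast}}\sum_{\beta\geq1}2^{\beta(1/2-\theta)}<\infty$ (using $\theta>1/2$), and a dyadic sum on $\{\omega\geq1\}$ converging by $\rho^{\ast}>1/2$. Combining gives $(2^{\alpha})^{\theta}\mathcal{O}\left[g\right]\left(\Omega_{\alpha}\right)\leq C\|g\|_{\theta,\rho^{\ast}}^{3}$.

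For the second piece, given $R\geq1$, the constraint $\omega_{3}+\omega_{4}-\omega_{2}\in[R/2,R]$ forces $\max(\omega_{3},\omega_{4})\geq R/4$. Using $\Phi\leq\sqrt{\omega_{2}}$, symmetry between $\omega_{3}$ and $\omega_{4}$, and the same dyadic decomposition yields
\[
\mathcal{O}\left[g\right]\left(\left[R/2,R\right]\right)\leq 2\int g(d\omega_{2})\int_{R/4}^{\infty}\frac{g(d\omega_{3})}{\sqrt{\omega_{3}}}\int_{0}^{\infty}\frac{g(d\omega_{4})}{\sqrt{\omega_{4}}}\leq C\|g\|_{\theta,\rho^{\ast}}^{3}R^{1/2-\rho^{\ast}},
\]
which is stronger than the desired $R^{1-\rho^{\ast}}$ since $\rho^{\ast}>1/2$. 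The main obstacle, such as it is, will be patient bookkeeping in consistently splitting every integral into its $\{\omega<1\}$ and $\{\omega\geq1\}$ parts and tracking which of the scales $2^{-\alpha}$, $2^{-\beta_{j}}$, and $R$ is the relevant one in each regime; a subtler but less severe point is justifying the symmetrization, which rests on the fact that although the measure $\mathcal{O}\left[g\right]$ is not symmetric in $(\omega_{2},\omega_{3},\omega_{4})$, the \emph{integrand} in (\ref{S8E3}) is, and the asymmetric set constraint on $\omega_{3}+\omega_{4}-\omega_{2}$ can be relaxed to the symmetric one on $\max(\alpha_{2},\alpha_{3},\alpha_{4})$ at the cost of a harmless constant.
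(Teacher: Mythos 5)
Your proof is correct and follows essentially the same underlying approach as the paper's: both hinge on the number-theoretic observation that $\omega_3+\omega_4-\omega_2\in\mathcal{Z}_{\max(\alpha_2,\alpha_3,\alpha_4)}$ whenever the $\omega_j$ lie in $\Omega_{\alpha_j}$ and the result is positive (so any contribution to $\mathcal{O}\left[g\right]\left(\Omega_\alpha\right)$ forces one of the $\alpha_j$ up to $\geq\alpha$), combined with $\Phi\leq\sqrt{\omega_j}$ for the index realizing the maximum to cancel the bad factor. Where you differ is in the measure-theoretic bookkeeping: the paper builds $\mathcal{O}\left[g\right]$ as a weak limit of truncations $(\mathcal{O}\left[g\right])_N$, derives the representation formula (\ref{S8E7}), and then proves the support property $\mathcal{O}\left[g\right]\left(\mathbb{R}_+\setminus\bigcup_\alpha\Omega_\alpha\right)=0$ by a fairly elaborate argument (estimating the mass in punctured neighbourhoods of points of $\mathcal{Z}_N$ and in the far complement separately), whereas you substitute (\ref{S8E5}) directly and invoke Tonelli on the resulting nonnegative triple sum, reading off the atomic structure and the support at once. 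Your route is cleaner and in fact yields a marginally sharper large-$R$ bound ($R^{1/2-\rho^\ast}$ versus the paper's $R^{1-\rho^\ast}$, though both suffice). One small inaccuracy worth fixing: the integrand in (\ref{S8E3}) (equivalently (\ref{S8E4})) is symmetric only in $(\omega_3,\omega_4)$, not in the full triple $(\omega_2,\omega_3,\omega_4)$, since $\Phi$ involves $\omega_1=\omega_3+\omega_4-\omega_2$. The union bound over which index realizes the maximum does not actually require that symmetry — for each case one just applies $\Phi\leq\sqrt{\omega_j}$ to the distinguished index $j$ and the three resulting estimates are of the same form — so the argument stands, but the phrasing ``by symmetry of the integrand'' should be replaced by this case analysis.
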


\begin{proof}
Using the definition of the measure $g_{2}$ (i.e. the change of variables) we
would have:%
\begin{eqnarray}
\left\langle \mathcal{O}\left[  g\right]  ,\varphi\right\rangle =
\iiint\Phi\left(  \omega_{3}+\omega_{4}-\omega_{2},\omega_{2},\omega_{3}%
,\omega_{4}\right)  \frac{g_{2}g_{3}g_{4}}{\sqrt{\omega_{2}\omega_{3}%
\omega_{4}}}\times\nonumber \\ 
\hskip 0.1cm \times \varphi\left(  \omega_{3}+\omega_{4}-\omega_{2}\right)
d\omega_{2}d\omega_{3}d\omega_{4} \label{S8E4}%
\end{eqnarray}

Using the definition of $\Phi$ as well as (\ref{S8E1}) we immediately obtain
that (\ref{S8E4}) converges for any $\varphi\in C_{0}\left(  \mathbb{R}%
_{+}\right)  .$ Moreover $\left\vert \left\langle \mathcal{O}\left[  g\right]
,\varphi\right\rangle \right\vert \leq C\left\Vert g\right\Vert _{\theta
,\rho^{\ast}}^{3}\left\Vert \varphi\right\Vert _{\infty}$ and therefore
$\mathcal{O}\left[  g\right]  \in\mathcal{M}_{+}\left(  \mathbb{R}_{+}\right)
.$ Notice that the constant $C$ is independent of $\varphi,$ due to the decay
assumptions made for $g$ for large and small values$.$ Therefore, the operator
$\left\langle \mathcal{O}\left[  g\right]  ,\varphi\right\rangle $ is well
defined for any $\varphi\in C_{b}\left(  \mathbb{R}_{+}\right)  .$

In order to prove that $\mathcal{O}\left[  g\right]  \in\mathcal{X}%
_{\theta,\rho^{\ast}}$ let us  show that $\mathcal{O}\left[  g\right]
\left(  \mathbb{R}_{+}\setminus\bigcup_{\alpha=0}^{\infty}\Omega_{\alpha
}\right)  =0$ as well as $\left\Vert \mathcal{O}\left[  g\right]  \right\Vert
_{\theta,\rho^{\ast}}<\infty.$ To this end we first approximate $\mathcal{O}%
\left[  g\right]  $ in the weak topology by a sequence $\left(  \mathcal{O}%
\left[  g\right]  \right)  _{N}\in\mathcal{X}_{\theta,\rho^{\ast}}$ as
follows. Suppose that $g=\sum_{\alpha=0}^{\infty}a_{\alpha}$ , with
$a_{\alpha}\in\mathcal{X}_{\theta,\rho^{\ast}}$ satisfying $a_{\alpha}\left(
\mathbb{R}_{+}\setminus\Omega_{\alpha}\right)  $ (cf. (\ref{S8E5})). We then
define:%
\begin{align*}
\left\langle \left(  \mathcal{O}\left[  g\right]  \right)  _{N},\varphi
\right\rangle  &  =\sum_{\alpha\leq N}\sum_{\beta\leq N}\sum_{\gamma\leq
N}\iiint\Phi\left(  \omega_{3}+\omega_{4}-\omega_{2},\omega_{2}%
,\omega_{3},\omega_{4}\right)\times \nonumber \\
& \hskip 3.5cm \times  \frac{a_{\alpha,2}a_{\beta,3}a_{\gamma,4}%
}{\sqrt{\omega_{2}\omega_{3}\omega_{4}}}\varphi\left(  \omega_{3}+\omega
_{4}-\omega_{2}\right)  d\omega_{2}d\omega_{3}d\omega_{4}\\
&  =\iiint\Phi\left(  \omega_{3}+\omega_{4}-\omega_{2},\omega_{2}%
,\omega_{3},\omega_{4}\right)  \frac{G_{N,2}G_{N,3}G_{N,4}}{\sqrt{\omega
_{2}\omega_{3}\omega_{4}}}\times \nonumber \\
& \hskip 3.5cm \times\varphi\left(  \omega_{3}+\omega_{4}-\omega
_{2}\right)  d\omega_{2}d\omega_{3}d\omega_{4}%
\end{align*}
where:%
\[
G_{N}=\sum_{\alpha\leq N}a_{\alpha}%
\]

We claim that $\lim_{N\rightarrow\infty}$ $\left(  \mathcal{O}\left[
g\right]  \right)  _{N}=\mathcal{O}\left[  g\right]  $ in the weak topology.
To prove this, we use that  $\Phi\leq\min\left\{
\sqrt{\omega_{2}},\sqrt{\omega_{3}},\sqrt{\omega_{4}}\right\}$,  and obtain
the estimate:
\begin{equation}
\left\vert \left\langle \left(  \mathcal{O}\left[  g\right]  \right)
_{N}-\mathcal{O}\left[  g\right]  ,\varphi\right\rangle \right\vert
\leq7\left\Vert \varphi\right\Vert _{\infty}\left(  \int_{\bigcup_{\eta\geq
N}\Omega_{\eta}}g\right)\left(  \int\frac{g\left(  \omega\right)
}{\sqrt{\omega}}d\omega\right)  ^{2},\ \label{S8E6}%
\end{equation}
where, in order to compute the difference $\left(  \mathcal{O}\left[  g\right]  \right)
_{N}-\mathcal{O}\left[  g\right]  $ we have  written  $g=G_{N}+H_{N}$ with
$H_{N}\left(  \mathbb{R}_{+}\setminus\bigcup_{\alpha=N}^{\infty}\Omega
_{\alpha}\right)  =0.$ The difference $g_{2}g_{3}g_{4}-G_{N,2}G_{N,3}G_{N,4}$
can be written in terms of sums of products of functions$\ G_{N}$ and $H_{N}$
containing at least one measure $H_{N}.$ Estimating $\Phi$ by one of the
square roots, and using the fact that $H_{N}\left(  \mathbb{R}_{+}%
\setminus\bigcup_{\alpha=N}^{\infty}\Omega_{\alpha}\right)  =0$ and
$H_{N}\left(  \bigcup_{\alpha=N}^{\infty}\Omega_{\alpha}\right)  =G_{N}\left(
\bigcup_{\alpha=N}^{\infty}\Omega_{\alpha}\right)  $ we obtain (\ref{S8E6}).
Using the definition of $\mathcal{X}_{\theta,\rho^{\ast}}$ we then obtain:%
\[
\left\vert \left\langle \left(  \mathcal{O}\left[  g\right]  \right)
_{N}-\mathcal{O}\left[  g\right]  ,\varphi\right\rangle \right\vert \leq
C\left\Vert \varphi\right\Vert _{\infty}\left\Vert g\right\Vert _{\theta
,\rho^{\ast}}^{3}\sum_{\eta\geq N}\left(  2^{-\eta}\right)  ^{\theta
}\rightarrow0\ \ \text{as\ \ }N\rightarrow\infty
\]

This gives the desired convergence $\left(  \mathcal{O}\left[  g\right]
\right)  _{N}\rightharpoonup\mathcal{O}\left[  g\right]  $ as $N\rightarrow
\infty.$ We then obtain the representation formula:%
\begin{align}
&  \left\langle \mathcal{O}\left[  g\right]  ,\varphi\right\rangle
  =\sum_{\alpha}\sum_{\beta}\sum_{\gamma}\iiint\Phi\left(  \omega
_{3}+\omega_{4}-\omega_{2},\omega_{2},\omega_{3},\omega_{4}\right) \times \label{S8E7}\\
& \hskip 4cm \times 
\frac{a_{\alpha,2}a_{\beta,3}a_{\gamma,4}}{\sqrt{\omega_{2}\omega_{3}%
\omega_{4}}}\varphi\left(  \omega_{3}+\omega_{4}-\omega_{2}\right)
d\omega_{2}d\omega_{3}d\omega_{4}\nonumber
\end{align}
for any $\varphi\in C_{0}\left(  \mathbb{R}_{+}\right)  .$ Our next goal is to prove:
\begin{equation}
\mathcal{O}\left[  g\right]  \left(  \mathbb{R}_{+}\setminus\bigcup_{\alpha
=0}^{\infty}\Omega_{\alpha}\right)  =0 \label{Q1}%
\end{equation}
Using again property (i) in Theorem (\ref{measure}) we have:
\begin{align}
\mathcal{O}\left[  g\right]  \left(  \mathbb{R}_{+}\setminus\bigcup_{\alpha
=0}^{\infty}\Omega_{\alpha}\right)    \leq & \,\,\mathcal{O}\left[  g\right]
\left(  \mathbb{R}_{+}\setminus\mathcal{Z}_{N}\right) \label{T1E1}\\
= \,\, &\mathcal{O}\left[  g\right]  \left(  \mathbb{R}_{+}\setminus\bigcup
_{x\in\mathcal{Z}_{N}}\left(  x-\frac{\varepsilon}{2^{N}},x+\frac{\varepsilon
}{2^{N}}\right)  \right) +\nonumber \\
&+\mathcal{O}\left[  g\right]  \left(  \bigcup
_{x\in\mathcal{Z}_{N}}\left[  \left(  x-\frac{\varepsilon}{2^{N}}%
,x+\frac{\varepsilon}{2^{N}}\right)  \setminus\left\{  x\right\}  \right]
\right) \nonumber
\end{align}

We can estimate the first term on the right hand side, using a nonnegative and
continuous test function $\varphi_{\varepsilon,N}$ which takes the value $1$
in the set $\mathbb{R}_{+}\setminus\left[x-\frac{\varepsilon}{2^{N}}%
,x+\frac{\varepsilon}{2^{N}}\right]  $ and vanishes in a neighbourhood of the
points $\left\{  x\in\mathcal{Z}_{N}\right\}  .$ Then:%
\begin{equation}
\mathcal{O}\left[  g\right]  \left(  \mathbb{R}_{+}\setminus\bigcup
_{x\in\mathcal{Z}_{N}}\left(  x-\frac{\varepsilon}{2^{N}},x+\frac{\varepsilon
}{2^{N}}\right)  \right)  \leq\left\langle \mathcal{O}\left[  g\right]
,\varphi_{\varepsilon,N}\right\rangle \label{T1E2}%
\end{equation}

We then use the representation formula (\ref{S8E7}) to compute the right-hand
side of (\ref{T1E2}). We split the triple sum as follows:%
\begin{equation}
\left\langle \mathcal{O}\left[  g\right]  ,\varphi_{\varepsilon,N}%
\right\rangle =\sum_{\substack{\alpha,\beta,\gamma \\ \max\left\{  \alpha,\beta
,\gamma\right\}  \leq N}}\iiint\left[  \cdot\cdot\cdot\right]
+\sum_{\substack{\alpha,\beta,\gamma \\ \max\left\{  \alpha,\beta,\gamma\right\}
>N}}\iiint\left[  \cdot\cdot\cdot\right]  \label{T1e}%
\end{equation}

The first term on the right vanishes, because, due to our choice of the
function $\varphi_{\varepsilon,N},$ this term contains only contributions of
points such that $\omega_{3}+\omega_{4}-\omega_{2}\in\bigcup_{\sigma
=N+1}^{\infty}\Omega_{\sigma},$ with $\omega_{2}\in\Omega_{\alpha}%
,\ \omega_{3}\in\Omega_{\beta},$ $\omega_{4}\in\Omega_{\gamma}.$ However, this
set is empty because $\max\left\{  \alpha,\beta,\gamma\right\}  \leq N.$
Indeed, if such a set of values of $\left(  \omega_{2},\omega_{3},\omega
_{4}\right)  $ exists we would have:%
\begin{equation}
2^{-\beta}\theta_{\beta}+2^{-\gamma}\theta_{\gamma}-2^{-\alpha}\theta_{\alpha
}=2^{-\sigma}\theta_{\sigma}\ \ ,\ \ \sigma\geq N+1\ \ ,\ \ \max\left\{
\alpha,\beta,\gamma\right\}  \leq N\ \label{P1}%
\end{equation}
where $\theta_{\alpha},\ \theta_{\beta},\ \theta_{\gamma},\ \theta_{\sigma}$
are positive integers and in addition $\theta_{\sigma}$ is an odd number.
However, (\ref{P1}) implies:%
\[
\theta_{\sigma}=2^{\sigma-\beta}\theta_{\beta}+2^{\sigma-\gamma}\theta
_{\gamma}-2^{\sigma-\alpha}\theta_{\alpha}%
\]
and since $\sigma\geq\max\left\{  \alpha,\beta,\gamma\right\}  +1$ this
implies that $\theta_{\sigma}$ is an odd number, that would be a
contradiction. Therefore:%
\begin{equation}
\sum_{\substack{\alpha,\beta,\gamma \\ \max\left\{  \alpha,\beta,\gamma\right\}  \leq
N}}\iiint\left[  \cdot\cdot\cdot\right]  =0 \label{I1}%
\end{equation}

On the other hand, the last term in (\ref{T1e}) can be estimated, using the
fact that each of the integrals contains at least one index $\alpha
,\beta,\gamma$ larger than $N.$ Then:%
\begin{align}
\sum_{\substack{\alpha,\beta,\gamma \\ \max\left\{  \alpha,\beta,\gamma\right\}
>N}}\iiint\left[  \cdot\cdot\cdot\right]   &  \leq3\left\Vert
\varphi_{\eta,\varepsilon}\right\Vert _{\infty}\left(  \int_{\bigcup
_{\alpha\geq\left(  N+1\right)  }\Omega_{\alpha}}g\right)  \left(  \int
\frac{g\left(  \omega\right)  }{\sqrt{\omega}}d\omega\right)  ^{2}\label{I2}\\
&  \leq C\left\Vert g\right\Vert _{\theta,\rho^{\ast}}^{3}\left(  2^{-\left(
N+1\right)  }\right)  ^{\theta}\nonumber
\end{align}
and this approaches to zero as $N\rightarrow\infty.$

We now estimate the last term in (\ref{T1E1}). To this end we use the
regularity properties of the measure $\mathcal{O}\left[  g\right]  .$ We first
compute $\mathcal{O}\left[  g\right]  \left(  \left\{  x\right\}  \right)
,\ x\in\mathcal{Z}_{N}$ by means of:%
\begin{equation}
\mathcal{O}\left[  g\right]  \left(  \left\{  x\right\}  \right)
=\lim_{\delta\rightarrow0}\mathcal{O}\left[  g\right]  \left(  \left(
x-\delta,x+\delta\right)  \right)  \label{T1E3}%
\end{equation}

Notice that $\mathcal{O}\left[  g\right]  \left(  \left(  x-\delta
,x+\delta\right)  \right)  $ can be estimated from below and above, using
nonnegative continuous test functions $\varphi_{1},\ \varphi_{2}$ satisfying
$\varphi_{1}\leq\varphi_{2},\ \varphi_{1}=1\ $in $\left[  x-\frac{\delta}%
{2},x+\frac{\delta}{2}\right]  ,\ \varphi_{1}=0\ $in $\mathbb{R}_{+}%
\setminus\left(  x-\delta,x+\delta\right)  ,\ \varphi_{2}=1\ $in\ $\left[
x-\delta,x+\delta\right]  ,\ \ \varphi_{2}=0\ $in $\mathbb{R}_{+}%
\setminus\left(  x-2\delta,x+2\delta\right)  .$ We then compute $\left\langle
\mathcal{O}\left[  g\right]  ,\varphi_{1}\right\rangle ,\ \left\langle
\mathcal{O}\left[  g\right]  ,\varphi_{2}\right\rangle $ using (\ref{S8E7}).
Splitting the contribution of the terms of the sum yielding $x$ and the rest
in the interval of integration we obtain, for any of these functions:%
\[
\left\vert \left\langle \mathcal{O}\left[  g\right]  ,\varphi_{k}\right\rangle
-M_{x}\right\vert \leq R\left(  \delta\right)  \ \ ,\ \ k=1,2
\]
where, using that $\varphi_{k}\left(  x\right)  =1\ ,\ k=1,2$
\[
M_{x}=\hskip -0.65cm \mathop{\sum\sum\sum}_{_{\substack {x_{\beta}+x_{\gamma}=x_{\alpha}+x \\ x_{\alpha}\in
\Omega_{\alpha},\ x_{\beta}\in\Omega_{\beta} ,\ x_{\gamma}\in\Omega_{\gamma} } }}\hskip -0.5 cm\Phi\left(  x_{\beta}+x_{\gamma}-x_{\alpha},x_{\alpha
},x_{\beta},x_{\gamma}\right)  \frac{a_{\alpha}\left(  \left\{  x_{\alpha
}\right\}  \right)  a_{\beta}\left(  \left\{  x_{\beta}\right\}  \right)
a_{\gamma}\left(  \left\{  x_{\gamma}\right\}  \right)  }{\sqrt{x_{\alpha
}x_{\beta}x_{\gamma}}}%
\]
and, using $\left\Vert \varphi_{k}\right\Vert _{\infty}\leq1:$%
\[
R\left(  \delta\right) =\hskip -0.65cm \mathop{\sum\sum\sum}_{ \substack{ 0<\left\vert x_{\beta}+x_{\gamma
}-x_{\alpha}-x\right\vert <\delta \\ x_{\alpha}\in\Omega_{\alpha},\ x_{\beta}%
\in\Omega_{\beta} ,\ x_{\gamma}\in\Omega_{\gamma}}}\hskip -0.5 cm
\Phi\left(  x_{\beta}+x_{\gamma}-x_{\alpha},x_{\alpha},x_{\beta},x_{\gamma
}\right)  \frac{a_{\alpha}\left(  \left\{  x_{\alpha}\right\}  \right)
a_{\beta}\left(  \left\{  x_{\beta}\right\}  \right)  a_{\gamma}\left(
\left\{  x_{\gamma}\right\}  \right)  }{\sqrt{x_{\alpha}x_{\beta}x_{\gamma}}}%
\]

Then, using arguments analogous to those yielding (\ref{I1}), (\ref{I2}) we
obtain:
\[
M_{x}\leq C\left\Vert g\right\Vert _{\theta,\rho^{\ast}}^{3}\left(  2^{-\eta
}\right)  ^{\theta}\ \ \text{if\ \ }x\in\Omega_{\eta}%
\]

We must estimate now  the remainder $R\left(  \delta\right)  .$ We claim
that $\lim_{\delta\rightarrow0}R\left(  \delta\right)  =0$ for each $x$
fixed$.$ Indeed, let us denote  as $\sigma=\max\left\{\alpha,\beta
,\gamma, N\right\}$. We have $0<\left\vert x_{\beta}+x_{\gamma}-x_{\alpha
}-x\right\vert <\delta$.  Then:%
\[
0<\left\vert 2^{\sigma}x_{\beta}+2^{\sigma}x_{\gamma}-2^{\sigma}x_{\alpha
}-2^{\sigma}x\right\vert <\delta2^{\sigma}%
\]
Notice that $2^{\sigma}x_{\beta}+2^{\sigma}x_{\gamma}-2^{\sigma}x_{\alpha
}-2^{\sigma}x$ is an integer. Then its absolute value is larger than one,
whence $1\leq\delta \, 2^{\sigma}.$ Since $x$ and therefore $N$ is fixed this
implies:%
\[
\max\left\{  \alpha,\beta,\gamma,N\right\}  \geq\frac{\log\left(  \frac
{1}{\delta}\right)  }{\log\left(  2\right)  }\rightarrow\infty
\ \ \text{as\ \ }\delta\rightarrow0
\]

We can then estimate $R\left(  \delta\right)  $ as:%
\[
R\left(  \delta\right)  \leq C\left\Vert g\right\Vert _{\theta,\rho^{\ast}%
}^{3}\left(  2^{-\sigma}\right)  ^{\theta}=C\left\Vert g\right\Vert
_{\theta,\rho^{\ast}}^{3}\left(  \delta\right)  ^{\theta}\rightarrow
0\ \ \text{as\ \ }\delta\rightarrow0
\]
whence, using (\ref{T1E3}):\
\begin{equation}
\mathcal{O}\left[  g\right]  \left(  \left\{  x\right\}  \right)  =M_{x}
\label{T1E4}%
\end{equation}

We can now estimate the last term in (\ref{T1E1}). A similar argument shows
that, for $\varepsilon$ small:%
\begin{equation}
\mathcal{O}\left[  g\right]  \left(  \left(  x-\frac{\varepsilon}{2^{N}%
},x+\frac{\varepsilon}{2^{N}}\right)  \right)  \leq M_{x}+C\left\Vert
g\right\Vert _{\theta,\rho^{\ast}}^{3}\left(  \frac{\varepsilon}{2^{N}%
}\right)  ^{\theta} \label{T1E5}%
\end{equation}
Using (\ref{T1E4}) as well as (\ref{T1E5}) we obtain:%
\[
\mathcal{O}\left[  g\right]  \left(  \left(  x-\frac{\varepsilon}{2^{N}%
},x+\frac{\varepsilon}{2^{N}}\right)  \setminus\left\{  x\right\}  \right)
\leq C\left\Vert g\right\Vert _{\theta,\rho^{\ast}}^{3}\left(  \frac
{\varepsilon}{2^{N}}\right)  ^{\theta}%
\]
whence:%
\[
\mathcal{O}\left[  g\right]  \left(  \bigcup_{x\in\mathcal{Z}_{N}}\left(
x-\frac{\varepsilon}{2^{N}},x+\frac{\varepsilon}{2^{N}}\right)  \cap\left\{
\omega\leq R_{0}\right\}  \setminus\left\{  x\right\}  \right)  \leq
CR_{0}\left\Vert g\right\Vert _{\theta,\rho^{\ast}}^{3}\left(  \frac
{\varepsilon}{2^{N}}\right)  ^{\theta}\cdot2^{N}%
\]
where we use the fact that the number of points of $\mathcal{Z}_{N}%
\cap\left\{  \omega\leq R_{0}\right\}  $ can be estimated as $CR_{0}2^{N}.$
Since $\theta>1$ it then follows that this measure converges to zero as
$N\rightarrow\infty.$

On the other hand%

\[
\mathcal{O}\left[  g\right]  \left(  \bigcup_{x\in\mathcal{Z}_{N}}\left(
x-\frac{\varepsilon}{2^{N}},x+\frac{\varepsilon}{2^{N}}\right)  \cap\left\{
\omega>R_{0}\right\}  \setminus\left\{  x\right\}  \right)
\]
can be estimated as $C\left\Vert g\right\Vert _{\theta,\rho^{\ast}}^{3}%
R_{0}^{1-\rho^{\ast}}$. This term can be made small choosing $R_{0}$ large.
Therefore, all the terms on the right-hand side of (\ref{T1E1}) can be made
arbitrarily small, whence (\ref{Q1}) follows.

To conclude the proof of the Lemma it only remains to obtain (\ref{S8E3a}). We
first estimate in the formula for $\left\Vert \left(  \mathcal{O}\left[
g\right]  \right)  \right\Vert _{\theta,\rho^{\ast}}$ the contributions from
the regions\ where $\omega_{1}\geq\frac{1}{2}.$ To this end, let $R\geq1,$ and
define a continuous test function $\varphi=\varphi\left(  \omega_{1}\right)  $
such that $\varphi\left(  \omega_{1}\right)  =1$ if $\omega_{1}\geq\frac{R}%
{2},$ $\varphi\left(  \omega_{1}\right)  =0$ if $\omega_{1}\leq\frac{R}{4},$
$0\leq\varphi\left(  \omega_{1}\right)  \leq1$ if $\omega_{1}\in\mathbb{R}%
_{+}.$ Then:%
\begin{eqnarray*}
&&\mathcal{O}\left[  g\right]  \left(  \left[  \frac{R}{2},R\right]  \right)
\leq\iiint\Phi\left(  \omega_{3}+\omega_{4}-\omega_{2},\omega_{2}%
,\omega_{3},\omega_{4}\right) \times\\
&& \hskip 4cm \times\frac{g_{2}g_{3}g_{4}}{\sqrt{\omega_{2}%
\omega_{3}\omega_{4}}}\varphi\left(  \omega_{3}+\omega_{4}-\omega_{2}\right)
d\omega_{2}d\omega_{3}d\omega_{4}%
\end{eqnarray*}

Using the symmetry $\omega_{3}\leftrightarrow\omega_{4},$ as well as the fact
that $\Phi\leq\sqrt{\omega_{2}}$ we obtain:%
\[
\mathcal{O}\left[  g\right]  \left(  \left[  \frac{R}{2},R\right]  \right)
\leq2\iiint_{\left\{  \omega_{4}\geq\omega_{3}\right\}  }\frac
{g_{2}g_{3}g_{4}}{\sqrt{\omega_{3}\omega_{4}}}\varphi\left(  \omega_{3}%
+\omega_{4}-\omega_{2}\right)  d\omega_{2}d\omega_{3}d\omega_{4}%
\]

Since the function $\varphi\left(  \omega_{3}+\omega_{4}-\omega_{2}\right)  $
vanishes for $\omega_{3}+\omega_{4}-\omega_{2}\leq\frac{R}{4}$, it follows
that the set where the integrand does not vanishes is included in the set
where $\omega_{3}+\omega_{4}\geq\frac{R}{4},$ and since $\omega_{3}\leq
\omega_{4},$ we can then obtain an upper bound for the integral restricting
the integration to the set $\left\{  \omega_{4}\geq\frac{R}{8}\right\}  .$
Since $\varphi\leq1$ it then follows that:%
\[
\mathcal{O}\left[  g\right]  \left(  \left[  \frac{R}{2},R\right]  \right)
\leq2\iiint_{\left\{  \omega_{4}\geq\frac{R}{8}\right\}  }\frac
{g_{2}g_{3}g_{4}}{\sqrt{\omega_{3}\omega_{4}}}d\omega_{2}d\omega_{3}%
d\omega_{4}\leq C\left\Vert g\right\Vert _{\theta,\rho^{\ast}}^{2}%
\int_{\left[  \frac{R}{8},\infty\right)  }\frac{g\left(  \omega\right)
}{\sqrt{\omega}}d\omega
\]

Using then the definition of $\left\Vert g\right\Vert _{\theta,\rho^{\ast}}$
we obtain:\
\begin{equation}
\mathcal{O}\left[  g\right]  \left(  \left[  \frac{R}{2},R\right]  \right)
\leq C\left\Vert g\right\Vert _{\theta,\rho^{\ast}}^{3}R^{1-\rho^{\ast}%
}\ \ ,\ \ R\geq1 \label{S8E9}%
\end{equation}

We now derive estimates for the measures $\mathcal{O}\left[  g\right]  \left(
\Omega_{\alpha}\right)  .$ To this end we use the representation formula
(\ref{S8E7}). We consider a family of functions $\psi_{\varepsilon}\in
C_{0}\left(  \mathbb{R}\right)  $ satisfying $\psi_{\varepsilon}\left(
0\right)  =1,\ \psi_{\varepsilon}\left(  s\right)  =0$ if $\left\vert
s\right\vert \geq\varepsilon,\ 0\leq\psi_{\varepsilon}\leq1.$ We then consider
a sequence of test functions $\varphi_{\eta,\varepsilon}\left(  \omega\right)
=\sum_{\ell=1}^{\infty}\psi_{\varepsilon}\left(  \omega-x_{\eta}\left(
\ell\right)  \right)  .$ Notice that these test functions are not compactly
supported, but they are $C_{b}\left(  \mathbb{R}_{+}\right)  .$ Therefore, it
is possible to define $\left\langle \mathcal{O}\left[  g\right]
,\varphi_{\eta,\varepsilon}\right\rangle .$ Our assumptions on $\psi
_{\varepsilon}$ as well as (\ref{S8E7}) imply:%
\begin{equation}
\left\langle \mathcal{O}\left[  g\right]  ,\Omega_{\eta}\right\rangle
\leq\left\langle \mathcal{O}\left[  g\right]  ,\varphi_{\eta,\varepsilon
}\right\rangle \label{S9E3}%
\end{equation}

We compute $\left\langle \mathcal{O}\left[  g\right]  ,\varphi_{\eta
,\varepsilon}\right\rangle $ using (\ref{S8E7}). We split the sum as:%
\begin{equation}
\left\langle \mathcal{O}\left[  g\right]  ,\varphi_{\eta,\varepsilon
}\right\rangle =\hskip -0.65 cm\sum_{\substack{\alpha,\beta,\gamma \\ \max\left\{  \alpha,\beta
,\gamma\right\}  <\eta}}\hskip -0.4 cm\iiint\left[  \cdot\cdot\cdot\right]
+\hskip -0.65 cm\sum_{\substack{\alpha,\beta,\gamma \\ \max\left\{  \alpha,\beta
,\gamma\right\} \ge \eta}}\hskip -0.4 cm\iiint\left[  \cdot\cdot\cdot\right]  \label{S9E1}%
\end{equation}

We now claim that the first term on the right of (\ref{S9E1}) is identically
zero if $\varepsilon$ is sufficiently small. Indeed, the integrations in that
term are restricted to those in:%
\[
\bigcup_{\ell=1}^{\infty}\left\{  \left\vert \omega_{3}+\omega_{4}-\omega
_{2}-x_{\eta}\left(  \ell\right)  \right\vert \leq\varepsilon:\left(
\omega_{2},\omega_{3},\omega_{4}\right)  \in\Omega_{\alpha}\times\Omega
_{\beta}\times\Omega_{\gamma}\right\}
\]

The elements of this set satisfy:%
\[
\left\vert 2^{-\beta}\theta_{\beta}\left(  j\right)  +2^{-\gamma}%
\theta_{\gamma}\left(  k\right)  -2^{-\alpha}\theta_{\alpha}\left(  m\right)
-2^{-\eta}\theta_{\eta}\left(  \ell\right)  \right\vert \leq\varepsilon
\]
or equivalently:%
\[
\left\vert 2^{\eta-\beta}\theta_{\beta}\left(  j\right)  +2^{\eta-\gamma
}\theta_{\gamma}\left(  k\right)  -2^{\eta-\alpha}\theta_{\alpha}\left(
m\right)  -\theta_{\eta}\left(  \ell\right)  \right\vert \leq\varepsilon
2^{\eta}%
\]
where $\theta_{\eta}\left(  \ell\right)  $ is a positive integer. However,
since $\max\left\{  \alpha,\beta,\gamma\right\}  <\eta$ this set is empty, if
$\varepsilon$ is sufficiently small, whence:%
\begin{equation}
\sum_{\substack{\alpha,\beta,\gamma \\ \max\left\{  \alpha,\beta,\gamma\right\}  <\eta}
}\hskip -0.3cm \iiint\left[  \cdot\cdot\cdot\right]  =0 \label{S9E2}%
\end{equation}

In order to estimate the last term in (\ref{S9E1}) we use the fact that at
least one of the indexes $\alpha,\beta,\gamma$ is larger than $\eta.$ Suppose
without loss of generality that such index is $\alpha.$ We then estimate
$\Phi$ by $\sqrt{\omega_{2}}$ to arrive at the estimate:%
\[
\sum_{\substack{\alpha,\beta,\gamma \\ \max\left\{\alpha,\beta,\gamma\right\}
\geq\eta}}\hskip -0.3cm \iiint\left[  \cdot\cdot\cdot\right]  \leq3\left\Vert
\varphi_{\eta,\varepsilon}\right\Vert _{\infty}\left(  \int_{\bigcup
_{\alpha\geq\eta}\Omega_{\alpha}}g\right)  \left(  \int\frac{g\left(
\omega\right)  }{\sqrt{\omega}}d\omega\right)  ^{2}%
\]
whence, since $\left\Vert \varphi_{\eta,\varepsilon}\right\Vert _{\infty}=1:$
\[
\left\langle \mathcal{O}\left[  g\right]  ,\varphi_{\eta,\varepsilon
}\right\rangle \leq C\left\Vert g\right\Vert _{\theta,\rho^{\ast}}^{3}%
\sum_{\alpha\geq\eta}\left(  2^{-\alpha}\right)  ^{\theta}\leq C\left\Vert
g\right\Vert _{\theta,\rho^{\ast}}^{3}\left(  2^{-\eta}\right)  ^{\theta}%
\]
and using (\ref{S9E3}) we obtain:%
\begin{equation}
\left\langle \mathcal{O}\left[  g\right]  ,\Omega_{\eta}\right\rangle \leq
C\left\Vert g\right\Vert _{\theta,\rho^{\ast}}^{3}\left(  2^{-\eta}\right)
^{\theta} \label{S9E4}%
\end{equation}

Combining (\ref{S8E9}) and (\ref{S9E4}) we obtain (\ref{S8E3a}).
\end{proof}

\begin{remark}
Notice that Lemma \ref{LOp} implies that the operator $\mathcal{O}\left[
\cdot\right]  $ transforms measures $g\in\mathcal{X}_{\theta,\rho^{\ast}}$
into measures in $\mathcal{X}_{\theta,\rho^{\ast}}.$ This will allow to obtain
mild solutions of (\ref{S2E1}) with values $g\left(  t,\cdot\right)
\in\mathcal{X}_{\theta,\rho^{\ast}}$ for $t\geq0.$
\end{remark}

\begin{remark}
It is interesting to remark that Lemma \ref{LOp} implies also that, assuming
that $g$ is given by (\ref{T2E6a}), it is possible to give the following
representation formula for $\mathcal{O}\left[  g\right]  :$
\begin{eqnarray}
&&\mathcal{O}\left[  g\right]  =\sum_{\gamma}\sum_{k=1}^{\infty}\left[
\sum_{\alpha,\beta,\eta}\sum_{\ell,j,m=1}^{\infty}\frac{a_{\eta}\left(
\ell\right)  a_{\alpha}\left(  j\right)  a_{\beta}\left(  m\right)  }%
{\sqrt{x_{\eta}\left(  \ell\right)  x_{\alpha}\left(  j\right)  x_{\beta
}\left(  m\right)  }} \times \right. \label{Z4E3}\\
&& \left.\times \Phi\left(  x_{\gamma}\left(  k\right)  ,x_{\eta}\left(
\ell\right)  ,x_{\alpha}\left(  j\right)  ,x_{\beta}\left(  m\right)  \right)
\delta_{\left(  x_{\gamma}\left(  k\right)  +x_{\eta}\left(  \ell\right)
-x_{\alpha}\left(  j\right)  -x_{\beta}\left(  m\right)  \right)  ,0}\right]
\delta_{x_{\gamma}\left(  k\right)  } \nonumber
\end{eqnarray}
\end{remark}

The following definition is similar to Definition \ref{MeasMildSol} with the
only difference that we restrict the measures $g\left(  t,\cdot\right)  $ to
be in $\mathcal{X}_{\theta,\rho^{\ast}}.$ In addition to the discrete
character of the measures, we assume also more stringent decay conditions at
infinity because we are interested in solutions with finite mass.

\begin{definition}
\label{MiSol}Given $\theta>1,\ \rho^{\ast}>1,\ T\in\left(  0,\infty\right]  $
and $g_{in}\in\mathcal{X}_{\theta,\rho^{\ast}}$ we will say that $g\in
C\left(  \left[  0,T\right]  :\mathcal{X}_{\theta,\rho^{\ast}}\right)  $ is a
mild solution of (\ref{S2E1}) with values in $\mathcal{X}_{\theta,\rho^{\ast}%
}$ and with initial value $g\left(  \cdot,0\right)  =g_{in}$ if the following
identity holds in the sense of measures:%
\begin{eqnarray}
g\left(  \omega_{1},t\right)  &=& g_{in}\left(  \omega_{1}\right)  \exp\left(
-\int_{0}^{t}A_{g}\left(  \omega_{1},s\right)  ds\right)+ \label{S7E2}\\ 
&& +\int_{0}^{t}%
\exp\left(  -\int_{s}^{t}A_{g}\left(  \omega_{1},\xi\right)  d\xi\right)
\mathcal{O}\left[  g\right]  \left(  \cdot,s\right)  ds\nonumber%
\end{eqnarray}
for $0\leq t<T,$ where $A_{g}\left(  \cdot,s\right)  $ is defined as in Lemma
\ref{LA} for each $g\left(  \cdot,s\right)  $ and $\mathcal{O}\left[
g\right]  \left(  \cdot,s\right)  $ is defined as in Lemma \ref{LOp} for each
$g\left(  \cdot,s\right)  .$
\end{definition}

\subsubsection{Proof of a local existence Theorem of measured mild solutions with
values in the space $\mathcal{X}_{\theta,\rho^{\ast}}.$}

As a first step we need to construct local measured valued solutions in
$\mathcal{X}_{\theta,\rho^{\ast}}$ in the sense of Definition \ref{MiSol}.

\begin{theorem}
\label{mildCount}Let $\theta>1,\ \rho^{\ast}>1$ and $g_{0}\in\mathcal{X}%
_{\theta,\rho^{\ast}}$ there exists $T>0,$ and at least one mild solution of
(\ref{S2E1}) in $C\left(  \left[  0,T\right]  :\mathcal{X}_{\theta,\rho^{\ast
}}\right)  $ with initial value $g\left(  \cdot,0\right)  =g_{in}$ in the
sense of the Definition \ref{MiSol}. Moreover, the following identities hold:%
\begin{equation}
\int g\left(  t,d\omega\right)  =\int g_{in}\left(  d\omega\right)
\ \ \text{for any }t\in\left[  0,T\right]  \label{Z3E6}%
\end{equation}%
\begin{equation}
\int_{\left\{  0\right\}  }g\left(  t,d\omega\right)  =0\ \ \text{for any
}t\in\left[  0,T\right]  \label{Z3E7}%
\end{equation}

\end{theorem}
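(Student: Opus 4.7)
The plan is to set up a fixed point argument for the map
$\mathcal{T}[g](t,\cdot) = g_{in}(\cdot)\exp\bigl(-\int_0^t A_g(\cdot,s)\,ds\bigr) + \int_0^t \exp\bigl(-\int_s^t A_g(\cdot,\xi)\,d\xi\bigr)\mathcal{O}[g](\cdot,s)\,ds$
acting on a closed ball in $C([0,T]:\mathcal{X}_{\theta,\rho^*})$, mimicking the strategy of Lemma \ref{globRegul} but in the new functional framework. The key technical inputs are Lemmas \ref{LA} and \ref{LOp}, which give respectively $A_g \geq 0$ and continuous, and $\|\mathcal{O}[g]\|_{\theta,\rho^*} \leq C\|g\|_{\theta,\rho^*}^3$. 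In particular, because $\|\mathcal{O}[g]\|_{\theta,\rho^*}$ is controlled in the same norm as $g$, the standard formula for $\mathcal{T}[g]$ is well-defined; and the nonnegativity of $A_g$ together with $g_{in}\geq 0$ and $\mathcal{O}[g]\geq 0$ shows $\mathcal{T}[g]\geq 0$.

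First I would choose $L = 2\|g_{in}\|_{\theta,\rho^*}$ and define
$$\mathcal{Y}_T = \bigl\{g \in C([0,T]:\mathcal{X}_{\theta,\rho^*}) : \sup_{0\leq t\leq T}\|g(t,\cdot)\|_{\theta,\rho^*} \leq L\bigr\}.$$
Using $\exp(-\int_0^t A_g\,ds)\leq 1$ and the estimate (\ref{S8E3a}) applied inside the Duhamel integral, one obtains
$\|\mathcal{T}[g](t,\cdot)\|_{\theta,\rho^*} \leq \|g_{in}\|_{\theta,\rho^*} + CTL^3$, so choosing $T$ small (depending only on $\|g_{in}\|_{\theta,\rho^*}$) makes $\mathcal{T}$ map $\mathcal{Y}_T$ into itself. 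Continuity in $t$ of $t\mapsto \int \varphi\,\mathcal{T}[g](t,d\omega)$ for $\varphi\in C_0(\mathbb{R}_+)$ is Lipschitz with constant depending on $L$, giving equicontinuity in time.

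Next I would verify that $\mathcal{T}$ sends $\mathcal{Y}_T$ into $\mathcal{Y}_T$ in a way compatible with the closure property in Proposition \ref{closure}: the key point is that by Lemma \ref{LOp}, $\mathcal{O}[g](\mathbb{R}_+\setminus\bigcup_\alpha \Omega_\alpha)=0$, and since $g_{in}$ has the same property, so does $\mathcal{T}[g](t,\cdot)$ for all $t$. Hence $\mathcal{T}[g](t,\cdot)\in\mathcal{X}_{\theta,\rho^*}$. Compactness then follows from Proposition \ref{closure} combined with Theorem \ref{measure}(iv): bounded sets in $\|\cdot\|_{\theta,\rho^*}$ are relatively compact for the weak-$*$ topology because the decay condition $\mu(\Omega_\alpha)\leq L 2^{-\alpha\theta}$ with $\theta>1$ furnishes uniform tightness, and the spatial decay condition at infinity provides a uniform bound on total mass. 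Arzela--Ascoli in the metric space $\mathcal{M}_+([0,\infty):(1+\omega)^\rho)$ (for an appropriate $\rho<-1$ larger than $-\rho^*$) then shows compactness of $\mathcal{T}(\mathcal{Y}_T)$ in $C([0,T]:\mathcal{X}_{\theta,\rho^*})$ endowed with the weak topology.

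Continuity of $\mathcal{T}$ with respect to this weak topology reduces to continuity of the maps $g\mapsto A_g$ and $g\mapsto \mathcal{O}[g]$, which in turn follows from the representation (\ref{Z4E3}) for $\mathcal{O}[g]$ and a dominated convergence argument using the uniform decay $g(t,\Omega_\alpha)\leq L 2^{-\alpha\theta}$ to justify passing the sum and integral through limits. Schauder's fixed point theorem then yields a fixed point $g\in\mathcal{Y}_T$, which by construction is a mild solution in the sense of Definition \ref{MiSol}. Mass conservation (\ref{Z3E6}) comes from Proposition \ref{relSolutions} (the mild solution is a weak solution in the sense of Definition \ref{weakSolution}) applied with test functions approximating $\varphi\equiv 1$, using the decay in $\|\cdot\|_{\theta,\rho^*}$ to justify the limit. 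Property (\ref{Z3E7}) is built into the construction, since all iterates satisfy $\mathcal{T}[g](\{0\})=0$: indeed $g_{in}(\{0\})=0$ because $g_{in}\in\mathcal{X}_{\theta,\rho^*}$ is supported in $\bigcup_\alpha \Omega_\alpha\subset(0,\infty)$, and by Lemma \ref{LOp} the same holds for $\mathcal{O}[g](\cdot,s)$.

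The main obstacle I expect is the continuity of the operator $g\mapsto\mathcal{O}[g]$ from $\mathcal{X}_{\theta,\rho^*}$ (with the weak topology) to itself: since $\sigma=0$ the kernel $\Phi/\sqrt{\omega_2\omega_3\omega_4}$ is singular at the origin, so weak convergence alone is insufficient and one must exploit the uniform decay $\mu(\Omega_\alpha)\leq L 2^{-\alpha\theta}$ with $\theta>1$ to control the contribution of small $\omega$. The splitting used in the proof of Lemma \ref{LOp} (approximating $\mathcal{O}[g]$ by $\mathcal{O}[g]_N$ and estimating the remainder via $\sum_{\alpha\geq N}2^{-\alpha\theta}$) provides the template: the same truncation argument shows that if $g_n\rightharpoonup g$ in $\mathcal{X}_{\theta,\rho^*}$ with uniformly bounded norm, then $\mathcal{O}[g_n]\rightharpoonup\mathcal{O}[g]$ against any $\varphi\in C_0(\mathbb{R}_+)$.
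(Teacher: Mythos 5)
Your proposal follows essentially the same Schauder fixed-point strategy as the paper's proof: the same operator $\mathcal{T}$, the same ball $\mathcal{Y}_T$ with $L=2\|g_{in}\|_{\theta,\rho^*}$, the same reliance on Lemmas \ref{LA} and \ref{LOp} for the invariance and closure properties, compactness from Proposition \ref{closure} with Arzela--Ascoli, and the same derivation of (\ref{Z3E6}) via the weak formulation and of (\ref{Z3E7}) by construction. The only cosmetic differences are that you cite (\ref{S8E3a}) directly to get the norm bound on the Duhamel term instead of re-doing the $[R/2,R]$ estimate as the paper does, and you are slightly more explicit about where the boundedness of total mass enters the compactness argument.
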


\begin{proof}
We define a space of measures as:%
\[
Y\left(  g_{in}\right)  =\left\{  g\in C\left(  \left[  0,T\right]
:\mathcal{X}_{\theta,\rho^{\ast}}\right)  :\sup_{0\leq t\leq T}\left\Vert
g\right\Vert _{\theta,\rho^{\ast}}\leq2\left\Vert g_{in}\right\Vert
_{\theta,\rho^{\ast}}\right\}
\]
and define an operator $\mathcal{T}:Y\left(  g_{in}\right)  \rightarrow
Y\left(  g_{in}\right)  $ as the right-hand side of (\ref{S7E2}), or more
precisely:%
\begin{align*}
\mathcal{T}\left[  g\right]  \left(  t,\omega_{1}\right)   &  =g_{in}\left(
\omega_{1}\right)  \exp\left(  -\int_{0}^{t}A_{g}\left(  s,\omega_{1}\right)
ds\right)  +\\
&  +\int_{0}^{t}\exp\left(  -\int_{s}^{t}A_{g}\left(  \xi,\omega_{1}\right)
d\xi\right)  \left(  \iint\Phi\frac{g_{2}g_{3}g_{4}}{\sqrt{\omega_{2}%
\omega_{3}\omega_{4}}}d\omega_{3}d\omega_{4}\right)  ds\\
&  \equiv\mathcal{T}_{1}\left[  g\right]  \left(  t,\omega_{1}\right)
+\mathcal{T}_{2}\left[  g\right]  \left(  t,\omega_{1}\right)
\end{align*}

Notice that the operator $\mathcal{T}\left[  g\right]  $ is well defined due
to Lemmas \ref{LA}, \ref{LOp}.

We now prove that the operator $\mathcal{T}$ brings $Y\left(  g_{in}\right)  $
to itself if $T$ is sufficiently small. To check this, we integrate
$\mathcal{T}\left[  g\right]  $ in the interval $\left[  \frac{R}{2},R\right]
.$ Then:%
\[
\int_{\left[  \frac{R}{2},R\right]  }\mathcal{T}\left[  g\right]  \left(
t,d\omega\right)  =\int_{\left[  \frac{R}{2},R\right]  }\mathcal{T}_{1}\left[
g\right]  \left(  t,d\omega\right)  +\int_{\left[  \frac{R}{2},R\right]
}\mathcal{T}_{2}\left[  g\right]  \left(  t,d\omega\right)
\]
where, using (\ref{S7E5}) and the definition of $\left\Vert g\right\Vert
_{\theta,\rho^{\ast}}$
\begin{equation}
\int_{\left[  \frac{R}{2},R\right]  }\mathcal{T}_{1}\left[  g\right]  \left(
t,d\omega\right)  \leq\int_{\left[  \frac{R}{2},R\right]  }g_{in}\left(
d\omega\right)  \leq\left\Vert g_{in}\right\Vert _{\theta,\rho^{\ast}}%
R\min\left\{  R^{\theta},R^{-\rho^{\ast}}\right\}  \label{S7E6}%
\end{equation}

We have also, using again\ (\ref{S7E5}), as well as the symmetry of the
integral with respect to the symmetrization $\omega_{3}\leftrightarrow
\omega_{4}:$
\[
\int_{\left[  \frac{R}{2},R\right]  }\mathcal{T}_{2}\left[  g\right]  \left(
t,d\omega\right)  \leq2\int_{0}^{t}\int_{\left[  \frac{R}{2},R\right]
}\left(  \iint_{\left\{  \omega_{3}\leq\omega_{4}\right\}  }\Phi\frac
{g_{2}g_{3}g_{4}}{\sqrt{\omega_{2}\omega_{3}\omega_{4}}}d\omega_{3}d\omega
_{4}\right)  d\omega_{1}ds
\]

We estimate $\Phi$ by $\sqrt{\omega_{1}}.$ Then:%
\begin{equation}
\int_{\left[  \frac{R}{2},R\right]  }\mathcal{T}_{2}\left[  g\right]  \left(
t,d\omega\right)  \leq2\sqrt{R}\int_{0}^{t}\int_{\left[  \frac{R}{2},R\right]
}\left(  \iint_{\left\{  \omega_{3}\leq\omega_{4}\right\}  }\frac
{g_{2}g_{3}g_{4}}{\sqrt{\omega_{2}\omega_{3}\omega_{4}}}d\omega_{3}d\omega
_{4}\right)  d\omega_{1}ds \label{S7E7a}%
\end{equation}

We now distinguish two cases. Suppose that $R\geq1.$ We then use that in the
region of integration we have $\omega_{4}\geq\frac{R}{4}$. Replacing the
integration in $\omega_{1}$ by the integration in $\omega_{2}$ by means of a
change of variables, we obtain the estimate:%
\[
\int_{\left[  \frac{R}{2},R\right]  }\mathcal{T}_{2}\left[  g\right]  \left(
t,d\omega\right)  \leq4\int_{0}^{t}\left(  \int_{\mathbb{R}_{+}}\frac{g\left(
s,d\omega\right)  }{\sqrt{\omega}}\right)  ^{2}\int_{\frac{R}{4}}^{\infty
}g\left(  s,d\omega\right)  ds
\]

Notice that, since $\rho^{\ast}>1$ we have $\int_{\mathbb{R}_{+}}%
\frac{g\left(  s,d\omega\right)  }{\sqrt{\omega}}\leq C\left\Vert g\left(
s,\cdot\right)  \right\Vert _{\theta,\rho^{\ast}},$ as it can be seen
decomposing the region of integration in dyadic intervals. On the other hand,
a similar argument yields $\int_{\frac{R}{4}}^{\infty}g\left(  s,d\omega
\right)  ds\leq CR^{1-\rho^{\ast}}\left\Vert g\left(  s,\cdot\right)
\right\Vert _{\theta,\rho^{\ast}}$ if $R\geq1.$ Then:%
\begin{equation}
\int_{\left[  \frac{R}{2},R\right]  }\mathcal{T}_{2}\left[  g\right]  \left(
t,d\omega\right)  \leq CR^{1-\rho^{\ast}}\int_{0}^{t}\left\Vert g\left(
s,\cdot\right)  \right\Vert _{\theta,\rho^{\ast}}^{3}ds\text{ \ if \ }R\geq1
\label{S7E7}%
\end{equation}

Suppose now that $R\leq1.$ Then (\ref{S7E7a}) implies:%
\begin{eqnarray}
\int_{\left[  \frac{R}{2},R\right]  }\mathcal{T}_{2}\left[  g\right]  \left(
t,d\omega\right)  &\leq & 2\sqrt{R}\int_{0}^{t}\left(  \int_{\mathbb{R}_{+}}%
\frac{g\left(  s,d\omega\right)  }{\sqrt{\omega}}\right)  ^{3}ds  \label{S7E8}\\
&\leq &C\sqrt
{R}\int_{0}^{t}\left\Vert g\left(  s,\cdot\right)  \right\Vert _{\theta
,\rho^{\ast}}^{3}ds\ \text{\ if \ }R\leq1 \nonumber
\end{eqnarray}

Combining (\ref{S7E6}), (\ref{S7E7}), (\ref{S7E8}) and using that
$\theta=\frac{1}{2}$ we obtain:%
\[
\sup_{R>0}\frac{1}{\min\left\{  R^{\theta},R^{-\rho^{\ast}}\right\}  }\frac
{1}{R}\int_{\left[  \frac{R}{2},R\right]  }\mathcal{T}\left[  g\right]
\left(  t,d\omega\right)  \leq\left\Vert g_{in}\right\Vert _{\theta,\rho
^{\ast}}+CT\sup_{0\leq t\leq T}\left\Vert g\left(  t,\cdot\right)  \right\Vert
_{\theta,\rho^{\ast}}^{3}%
\]

Then the operator $\mathcal{T}\left[  \cdot\right]  $ maps $Y\left(
g_{in}\right)  $ into itself. Moreover, arguing as in the Proof of Lemma
\ref{globRegul} we obtain that the operator $\mathcal{T}\left[  \cdot\right]
$ defines a continuous mapping from $Y\left(  g_{in}\right)  $ to $C\left(
\left[  0,T\right]  :\mathcal{X}_{\theta,\rho^{\ast}}\right)  $ in the weak
topology. Notice that in this case $\sigma=0,$ and therefore some of the
functions appearing in the integrals defining $A_{g}$ and $\mathcal{O}\left[
g\right]  $ are singular near $\omega=0.$ However, the contribution to those
integrals of the regions close to the origin can be made estimated if $g\in
C\left(  \left[  0,T\right]  :\mathcal{X}_{\theta,\rho^{\ast}}\right)  $ using
the fact that $\left\Vert g\left(  t,\cdot\right)  \right\Vert _{\theta
,\rho^{\ast}}$ is bounded. Therefore, it is possible to adapt the argument in
the Proof of Lemma \ref{globRegul} to prove the desired continuity of the
operator $\mathcal{T}\left[  \cdot\right]  .$ Moreover, since the set
$\mathcal{X}_{\theta,\rho^{\ast}}$ is closed in $\mathcal{M}_{+}\left(
\left[  0,\infty\right)  \right)  $ it follows that it is compact in the weak
topology. Therefore, applying also Arzela-Ascoli as in the Proof of Lemma
\ref{globRegul}. The existence of solutions then follows using Schauder's
Theorem. To prove the identity (\ref{Z3E6}), we can argue as in the Proof of
Proposition \ref{relSolutions}, in order to show that $g$ is also a weak
solution of (\ref{S2E1}) in the sense of Definition \ref{weakSolution}. This
follows from Proposition \ref{relSolutions} due to the fact that mild measured
values solutions with values in $\mathcal{X}_{\theta,\rho^{\ast}}$ in the
sense of Definition \ref{MiSol} are also mild measured valued solutions in the
sense of Definition \ref{MeasMildSol}. Taking a sequence of test functions converging to $1$ in $\omega\geq0$ we
obtain (\ref{Z3E6}). Finally, we notice that (\ref{Z3E7}) follows by
construction, since $0\in\left[  \mathbb{R}_{+}\setminus\bigcup_{\alpha
=0}^{\infty}\Omega_{\alpha}\right]  ,$ whence the result follows.
\end{proof}

Actually, it turns out that the solutions can be extended as long as
$\left\Vert g\right\Vert _{\theta,\rho^{\ast}}$ remains bounded.
\begin{theorem}
\label{Prolong}Let $g\in C\left(  \left[  0,T\right]  :\mathcal{X}%
_{\theta,\rho^{\ast}}\right)  $ the mild solution of (\ref{S2E1}) obtained in
Theorem \ref{mildCount}. Suppose that $\sup_{0\leq t\leq T}\left\Vert g\left(
t,\cdot\right)  \right\Vert _{\theta,\rho^{\ast}}<\infty.$ Then, there exists
$\delta>0$ and $\tilde{g}\in C\left(  \left[  0,T+\delta\right]
:\mathcal{X}_{\theta,\rho^{\ast}}\right)  $ such that $g\left(  t,\cdot
\right)  =\tilde{g}\left(  t,\cdot\right)  $ for $t\in\left[  0,T\right]  $
and $\tilde{g}$ is a mild solution of (\ref{S2E1}) in the interval
$t\in\left[  0,T+\delta\right]  .$
\end{theorem}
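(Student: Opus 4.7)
The plan is to use a standard continuation-type argument, treating $g(T,\cdot)$ as new initial data and invoking the local existence result of Theorem \ref{mildCount}. First, the hypothesis $\sup_{0\le t\le T}\|g(t,\cdot)\|_{\theta,\rho^*} < \infty$, combined with $g\in C([0,T]:\mathcal{X}_{\theta,\rho^*})$, gives in particular $g(T,\cdot)\in\mathcal{X}_{\theta,\rho^*}$ with a controlled norm, say $\|g(T,\cdot)\|_{\theta,\rho^*} \le M$. I would then apply Theorem \ref{mildCount} with initial datum $g_{in}^{\text{new}} := g(T,\cdot)$ to obtain $\delta>0$ and a mild solution $h\in C([0,\delta]:\mathcal{X}_{\theta,\rho^*})$ with $h(0,\cdot)=g(T,\cdot)$. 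Here the crucial point is that $\delta$ depends only on $\|g(T,\cdot)\|_{\theta,\rho^*}$, as is evident from the fixed-point construction in the proof of Theorem \ref{mildCount} (the smallness of $T$ there is dictated solely by the cubic bound $CT\sup\|g\|_{\theta,\rho^*}^3$ closing the invariance of the ball $Y(g_{in})$).

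Next, I would define the candidate extension $\tilde g:[0,T+\delta]\to\mathcal{X}_{\theta,\rho^*}$ by setting $\tilde g(t,\cdot) = g(t,\cdot)$ for $t\in[0,T]$ and $\tilde g(t,\cdot) = h(t-T,\cdot)$ for $t\in[T,T+\delta]$. Continuity of $\tilde g$ at $t=T$ follows from the common value $g(T,\cdot)=h(0,\cdot)$. The main verification is that $\tilde g$ satisfies the integral identity \eqref{S7E2} on $[0,T+\delta]$. For $t\in[0,T]$ this is immediate from the assumption that $g$ is a mild solution there. For $t\in[T,T+\delta]$, I would split the relevant integrals at $s=T$ and use the semigroup-type identity satisfied by the exponential factor $\exp(-\int_s^t A_{\tilde g}(\xi,\cdot)d\xi)$, namely
\begin{equation*}
\exp\Bigl(-\int_0^t A_{\tilde g}\,d\xi\Bigr) = \exp\Bigl(-\int_T^t A_{\tilde g}\,d\xi\Bigr)\exp\Bigl(-\int_0^T A_{g}\,d\xi\Bigr),
\end{equation*}
together with the mild formulation satisfied by $g$ at $t=T$ and by $h$ on $[0,t-T]$. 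A direct manipulation then shows that the right-hand side of \eqref{S7E2} for $\tilde g$ at time $t$ reproduces $\tilde g(t,\cdot)=h(t-T,\cdot)$.

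The main obstacle in this plan, modest as it is, is the fact that the local existence time in Theorem \ref{mildCount} is not explicitly stated to depend only on the norm $\|g_{in}\|_{\theta,\rho^*}$; I would therefore need to revisit its proof to extract this quantitative statement (which, as noted, follows from the structure of the contraction/compactness argument, since the estimates of $\mathcal{T}_1[g]$ and $\mathcal{T}_2[g]$ in that proof depend on $g_{in}$ and $g$ only through $\|g_{in}\|_{\theta,\rho^*}$ and $\sup_{0\le t\le T}\|g(t,\cdot)\|_{\theta,\rho^*}$). A secondary point requiring care is checking that the gluing preserves the $\mathcal{X}_{\theta,\rho^*}$-valued continuity in time; this is automatic from the continuity of $g$ on $[0,T]$ and of $h$ on $[0,\delta]$ and the matching condition at $t=T$, using Proposition \ref{closure} to ensure that $\mathcal{X}_{\theta,\rho^*}$ is a well-behaved closed subset of $\mathcal{M}_+([0,\infty))$ under the weak topology.
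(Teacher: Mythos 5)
Your proposal is correct and follows essentially the same approach as the paper's proof: restart from the datum $g(T,\cdot)\in\mathcal{X}_{\theta,\rho^{\ast}}$ via the local existence Theorem \ref{mildCount} and glue the two solutions, checking the Duhamel identity across $t=T$ by the exponential semigroup factorization you write out. The worry you flag about whether the local time $\delta$ depends only on $\|g(T,\cdot)\|_{\theta,\rho^{\ast}}$ is actually unnecessary for this statement, since Theorem \ref{mildCount} already yields \emph{some} $\delta>0$ for the new datum, which is all that is claimed; that uniformity would matter only for a maximal-interval/blow-up-alternative formulation, which the theorem does not assert.
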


\begin{proof}
We just construct a mild solution in the time interval $\left[  T,T+\delta
\right]  $ with initial datum $\tilde{g}\left(  T,\cdot\right)  \in
\mathcal{X}_{\theta,\rho^{\ast}}.$ Such solution is well defined for
$\delta>0$ as it can be seen using the argument in the Proof of Theorem
\ref{mildCount}. The function $\tilde{g}$ obtained combining the values of $g$
in $t\in\left[  0,T\right]  $ and $\tilde{g}$ for $t\in\left[  T,T+\delta
\right]  $ gives the provides mild solution of (\ref{S2E1}) as it can be seen
using Definition \ref{MiSol}.
\end{proof}

\subsubsection{Global existence of measure mild solutions with values in
$\mathcal{X}_{\theta,\rho^{\ast}}.$}

We will now prove that, if the masses $m_{\alpha}=\sum_{k=1}^{\infty}%
a_{\alpha}\left(  k\right)  $ contained in each of the families $\Omega
_{\alpha}$ decrease fast enough as $\alpha\rightarrow\infty,$ the mild
solutions obtained in Theorem \ref{mildCount} are globally defined in time. To
this end, we first need to prove the following result which will has a
consequence that the mass cannot propagate from the families $\left\{
\Omega_{\beta}\right\}  _{\beta>\alpha}$ to the family $\Omega_{\alpha}$
unless some meaningful amount of mass is already present in this last family.

In order to prove a global well posedness results in the space
$C\left(  \left[  0,\infty\right)  :\mathcal{X}_{\theta,\rho^{\ast}}\right)  $
we need the following auxiliary Lemmas.

\begin{lemma}
\label{ODEUpper}Let $g\in C\left(  \left[  0,T\right]  :\mathcal{X}%
_{\theta,\rho^{\ast}}\right)  $ be the mild solution of (\ref{S2E1}) obtained in
Theorem \ref{mildCount}. Let us write:%
\begin{equation}
m_{\gamma}=\sum_{k=1}^{\infty}a_{\gamma}\left(  k\right)
\ \ \ ,\ \ \ M_{\gamma+1}=\sum_{\eta\geq\gamma+1}m_{\eta}\ \ ,\ \ S_{\gamma
+1}=\sum_{\alpha>\gamma}\frac{m_{\alpha}}{\sqrt{x_{\alpha}\left(  1\right)  }}
\label{Z4E1}%
\end{equation}

Then, the following inequality holds $a.e.$ $t\in\left[  0,T\right]  :$%
\begin{equation}
\partial_{t}m_{\gamma}\leq\frac{6m_{\gamma}}{x_{\gamma}\left(  1\right)
}+6M_{\gamma+1}\left(  S_{\gamma+1}+\frac{1}{\sqrt{x_{\gamma}\left(  1\right)
}}\right)  ^{2} \label{Z4E2}%
\end{equation}

\end{lemma}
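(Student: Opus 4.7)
The starting point is the mild formulation. Differentiating (\ref{S7E2}) in the sense of measures gives $\partial_{t}g=-A_{g}\,g+\mathcal{O}[g]$; integrating over $\Omega_\gamma$ and using the nonnegativity of $A_{g}$ provided by Lemma~\ref{LA} yields
\[
\partial_{t}m_{\gamma}\;\leq\;\mathcal{O}[g](\Omega_\gamma).
\]
It therefore suffices to estimate $\mathcal{O}[g](\Omega_\gamma)$ using the explicit discrete representation~(\ref{Z4E3}) provided by Lemma~\ref{LOp}. The plan is to split the triple sum there according to the location of $(\omega_{2},\omega_{3},\omega_{4})$: either all three coordinates lie in $\mathcal{Z}_\gamma$ (Case~A), or at least one lies in $\bigcup_{\eta>\gamma}\Omega_\eta$ (Case~B).

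In Case~A, each coordinate satisfies $\omega_{i}\geq x_\gamma(1)$, and the bound $\Phi\leq\sqrt{\omega_{2}}$ immediately gives the pointwise estimate $\Phi/\sqrt{\omega_{2}\omega_{3}\omega_{4}}\leq 1/x_\gamma(1)$. The crucial step is a parity argument: in units of $2^{-\gamma}$, every element of $\Omega_\gamma$ is an \emph{odd} integer (since $\theta_\gamma(k)=2k-1$), whereas every element of $\mathcal{Z}_{\gamma-1}$ is an \emph{even} integer. Since $\omega_{1}\in\Omega_\gamma$ is odd, the resonance $\omega_{1}+\omega_{2}=\omega_{3}+\omega_{4}$ forces at least one of $\omega_{2},\omega_{3},\omega_{4}$ to be odd, hence to lie in $\Omega_\gamma$. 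Using the subadditive decomposition $\mathbf{1}_{\mathrm{Case\,A}}\leq\sum_{i=2,3,4}\mathbf{1}_{\omega_{i}\in\Omega_\gamma}$, together with the mass normalization $\sum_\alpha m_\alpha=1$ guaranteed by~(\ref{Z3E6}) and the fact that the resonance determines $\omega_{1}$ uniquely once the triple $(\omega_{2},\omega_{3},\omega_{4})$ is fixed, one obtains a contribution bounded by $\tfrac{3m_\gamma}{x_\gamma(1)}\leq\tfrac{6m_\gamma}{x_\gamma(1)}$.

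In Case~B, the analogous subadditive splitting $\mathbf{1}_{\mathrm{Case\,B}}\leq\sum_{i=2,3,4}\mathbf{1}_{\omega_{i}\in\mathrm{fine}}$ (with ``fine'' denoting $\bigcup_{\eta>\gamma}\Omega_\eta$) handles the symmetry. For the piece with $\omega_{i}$ fine, choosing $\Phi\leq\sqrt{\omega_{i}}$ cancels the corresponding $1/\sqrt{\omega_{i}}$ factor, and summing the fine variable together with the resonance constraint (which leaves at most one compatible value for the other unsummed index) produces a factor $M_{\gamma+1}$. The two remaining sums are each of the form $\sum_\omega g(\omega)/\sqrt{\omega}$; splitting this into $\eta\leq\gamma$ and $\eta>\gamma$, and using $\sum_\ell a_\eta(\ell)/\sqrt{x_\eta(\ell)}\leq m_\eta/\sqrt{x_\eta(1)}$ together with $1/\sqrt{x_\eta(1)}\leq 1/\sqrt{x_\gamma(1)}$ for $\eta\leq\gamma$, yields
\[
\sum_{\omega}\frac{g(\omega)}{\sqrt{\omega}}\;\leq\;S_{\gamma+1}+\frac{1}{\sqrt{x_\gamma(1)}}.
\]
Squaring and collecting the three symmetric contributions produces the second term $6M_{\gamma+1}\bigl(S_{\gamma+1}+1/\sqrt{x_\gamma(1)}\bigr)^{2}$ of the claimed inequality. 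The main technical obstacle is precisely the parity step of Case~A: without it, the purely coarse contribution can only be bounded by a quantity of the form $\mathrm{const}/x_\gamma(1)$ without the $m_\gamma$ prefactor, which would be far too weak for the subsequent Gronwall-type iteration in which the smallness of the tail masses must be propagated in time. This parity argument is the quantitative counterpart, adapted to the dyadic family $\{\Omega_\alpha\}$, of the abstract rigidity encoded in Lemma~\ref{SetAstar}.
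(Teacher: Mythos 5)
Your proposal is correct and follows essentially the same route as the paper's proof: both start from $\partial_t g\leq\mathcal{O}[g]$, both exploit the dyadic parity fact (the paper's constraint $F(\alpha,\beta,\eta;\gamma)$ in (\ref{Z4E3}), which is exactly your odd/even observation) to force one index to equal $\gamma$ when all are $\leq\gamma$, and both then split into the dichotomy ``largest family index equals $\gamma$'' versus ``exceeds $\gamma$'' before bounding $\Phi$ by a square root of the fine coordinate and estimating the remaining sums by $S_{\gamma+1}+x_\gamma(1)^{-1/2}$. The only presentational difference is that the paper symmetrizes the triple sum by sorting $\alpha\leq\beta\leq\eta$ (hence the factor $6$), whereas you use a subadditive indicator decomposition (giving the sharper factor $3$); these are interchangeable.
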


\begin{proof}
Notice that the Definition of mild solution (cf. Definition \ref{MiSol})
implies the following identity, in the sense of measures:%
\[
\partial_{t}g\left(  t,\cdot\right)  =\mathcal{O}\left[  g\right]  \left(
\cdot,t\right)  -A_{g}\left(  t,\cdot\right)  g\left(  t,\cdot\right)
\ \ ,\ \ a.e.\ t\in\left[  0,T\right]
\]

Due to Lemma \ref{LA} we have $A_{g}\left(  \omega_{1},\cdot\right)  \geq0.$
Then:%
\begin{equation}
\partial_{t}g\left(  t,\cdot\right)  \leq\mathcal{O}\left[  g\right]  \left(
t,\cdot\right)  \label{Z4E4}%
\end{equation}

Due to the definition of $\mathcal{X}_{\theta,\rho^{\ast}}$ implies that for
each $t\geq0$ the measure $g$ has the form (\ref{T2E6a}), (\ref{I1c}). Then
$\mathcal{O}\left[  g\right]  \left(  \cdot,t\right)  $ is given by
(\ref{Z4E3}). We then use that:%
\begin{eqnarray*}
\Phi\left(  x_{\gamma}\left(  k\right)  ,x_{\eta}\left(  \ell\right)
,x_{\alpha}\left(  j\right)  ,x_{\beta}\left(  m\right)  \right) & \leq &
\min\left\{  x_{\eta}\left(  \ell\right)  ,x_{\alpha}\left(  j\right)
,x_{\beta}\left(  m\right)  \right\}\\
& = &\Phi\left(  x_{\eta}\left(
\ell\right)  ,x_{\alpha}\left(  j\right)  ,x_{\beta}\left(  m\right)  \right)
\end{eqnarray*}

Combining then (\ref{Z4E4}), (\ref{Z4E3}) we obtain:%
\begin{eqnarray*}
&&\partial_{t}g\left(  \cdot,t\right)  \leq  \sum_{\gamma}\sum_{k=1}^{\infty}
\left[  \sum_{\alpha,\beta,\eta}\sum_{\ell,j,m=1}^{\infty}\frac{a_{\eta
}\left(  \ell\right)  a_{\alpha}\left(  j\right)  a_{\beta}\left(  m\right)
}{\sqrt{x_{\eta}\left(  \ell\right)  x_{\alpha}\left(  j\right)  x_{\beta
}\left(  m\right)  }}\times \right.\\
&& \left. \hskip 2cm \times \Phi\left(  x_{\eta}\left(  \ell\right)  ,x_{\alpha
}\left(  j\right)  ,x_{\beta}\left(  m\right)  \right)  \delta_{\left(
x_{\gamma}\left(  k\right)  +x_{\eta}\left(  \ell\right)  -x_{\alpha}\left(
j\right)  -x_{\beta}\left(  m\right)  \right)  ,0}\right]  \delta_{x_{\gamma
}\left(  k\right)  }%
\end{eqnarray*}

Adding the contributions associated to the family $\gamma$ and using the
definition of $m_{\gamma}$ in (\ref{Z4E1}) we obtain:%
\begin{eqnarray*}
&&\partial_{t}m_{\gamma}\leq\sum_{\alpha,\beta,\eta}\sum_{\ell,j,s=1}^{\infty
}\frac{a_{\eta}\left(  \ell\right)  a_{\alpha}\left(  j\right)  a_{\beta
}\left(  s\right)  }{\sqrt{x_{\eta}\left(  \ell\right)  x_{\alpha}\left(
j\right)  x_{\beta}\left(  s\right)  }}\Phi\left(  x_{\eta}\left(
\ell\right)  ,x_{\alpha}\left(  j\right)  ,x_{\beta}\left(  s\right)  \right)\times \\
&&\hskip 6cm \times \sum_{k=1}^{\infty}\delta_{\left(  x_{\gamma}\left(  k\right)  +x_{\eta
}\left(  \ell\right)  -x_{\alpha}\left(  j\right)  -x_{\beta}\left(  s\right)
\right)  ,0}%
\end{eqnarray*}

We now claim that:%
\[
\sum_{k=1}^{\infty}\delta_{\left(  x_{\gamma}\left(  k\right)  +x_{\eta
}\left(  \ell\right)  -x_{\alpha}\left(  j\right)  -x_{\beta}\left(  s\right)
\right)  ,0}\leq F\left(  \alpha,\beta,\eta;\gamma\right)
\]
where $F\left(  \alpha,\beta,\eta;\gamma\right)  =1$ if $\gamma\leq
\max\left\{  \alpha,\beta,\eta\right\}  $ and $F\left(  \alpha,\beta
,\eta;\gamma\right)  =0$ otherwise. This can be proved with the same argument
yielding (\ref{I1}) in the Proof of Lemma \ref{LOp}. Then:%
\[
\partial_{t}m_{\gamma}\leq\sum_{\alpha,\beta,\eta}\sum_{\ell,j,m=1}^{\infty
}\frac{a_{\eta}\left(  \ell\right)  a_{\alpha}\left(  j\right)  a_{\beta
}\left(  m\right)  }{\sqrt{x_{\eta}\left(  \ell\right)  x_{\alpha}\left(
j\right)  x_{\beta}\left(  m\right)  }}\Phi\left(  x_{\eta}\left(
\ell\right)  ,x_{\alpha}\left(  j\right)  ,x_{\beta}\left(  m\right)  \right)
F\left(  \alpha,\beta,\eta;\gamma\right)
\]

Using the symmetry in the indexes we obtain:%
\[
\partial_{t}m_{\gamma}\leq6\sum_{\alpha\leq\beta\leq\eta}\sum_{\ell
,j,m=1}^{\infty}\frac{a_{\eta}\left(  \ell\right)  a_{\alpha}\left(  j\right)
a_{\beta}\left(  m\right)  }{\sqrt{x_{\eta}\left(  \ell\right)  x_{\alpha
}\left(  j\right)  x_{\beta}\left(  m\right)  }}\Phi\left(  x_{\eta}\left(
\ell\right)  ,x_{\alpha}\left(  j\right)  ,x_{\beta}\left(  m\right)  \right)
F\left(  \alpha,\beta,\eta;\gamma\right)
\]
Then, using $\Phi\left(  x_{\eta}\left(  \ell\right)  ,x_{\alpha}\left(
j\right)  ,x_{\beta}\left(  m\right)  \right)  \leq\sqrt{x_{\eta}\left(
\ell\right)  }:$
\begin{align*}
\partial_{t}m_{\gamma}  &  \leq6\sum_{\alpha\leq\beta\leq\eta}\sum
_{\ell,j,m=1}^{\infty}\frac{a_{\eta}\left(  \ell\right)  a_{\alpha}\left(
j\right)  a_{\beta}\left(  m\right)  F\left(  \alpha,\beta,\eta;\gamma\right)
}{\sqrt{x_{\eta}\left(  \ell\right)  x_{\alpha}\left(  j\right)  x_{\beta
}\left(  m\right)  }}\sqrt{x_{\eta}\left(  \ell\right)  }\leq\\
&  \leq6\sum_{\alpha\leq\beta\leq\eta}\sum_{\ell,j,m=1}^{\infty}\frac{a_{\eta
}\left(  \ell\right)  a_{\alpha}\left(  j\right)  a_{\beta}\left(  m\right)
F\left(  \alpha,\beta,\eta;\gamma\right)  }{\sqrt{x_{\alpha}\left(  j\right)
x_{\beta}\left(  m\right)  }}%
\end{align*}
whence:%
\[
\partial_{t}m_{\gamma}\leq6\sum_{\alpha\leq\beta\leq\eta}m_{\eta}\sum
_{j,m=1}^{\infty}\frac{a_{\alpha}\left(  j\right)  a_{\beta}\left(  m\right)
F\left(  \alpha,\beta,\eta;\gamma\right)  }{\sqrt{x_{\alpha}\left(  j\right)
x_{\beta}\left(  m\right)  }}%
\]
Then, using the estimates $x_{\alpha}\left(  j\right)  \geq x_{\alpha}\left(
1\right)  ,\ x_{\beta}\left(  m\right)  \geq x_{\beta}\left(  1\right)  $ and
adding in $j,m:$%
\[
\partial_{t}m_{\gamma}\leq6\sum_{\alpha\leq\beta\leq\eta\ ,\ \eta\geq\gamma
}\frac{m_{\eta}m_{\alpha}m_{\beta}}{\sqrt{x_{\alpha}\left(  1\right)
x_{\beta}\left(  1\right)  }}%
\]
We now split the sum in two cases:%
\begin{align}
\partial_{t}m_{\gamma}  &  \leq6\sum_{\alpha\leq\beta\leq\eta=\gamma}%
\frac{m_{\eta}m_{\alpha}m_{\beta}}{\sqrt{x_{\alpha}\left(  1\right)  x_{\beta
}\left(  1\right)  }}+6\sum_{\alpha\leq\beta\leq\eta\ ,\ \eta>\gamma}%
\frac{m_{\eta}m_{\alpha}m_{\beta}}{\sqrt{x_{\alpha}\left(  1\right)  x_{\beta
}\left(  1\right)  }}\ \label{Z4E6}\\
&  \leq6m_{\gamma}\sum_{\alpha\leq\beta\leq\gamma}\frac{m_{\alpha}m_{\beta}%
}{\sqrt{x_{\alpha}\left(  1\right)  x_{\beta}\left(  1\right)  }}%
+6M_{\gamma+1}\sum_{\alpha\leq\beta}\frac{m_{\alpha}m_{\beta}}{\sqrt
{x_{\alpha}\left(  1\right)  x_{\beta}\left(  1\right)  }}\ \nonumber
\end{align}
where $M_{\gamma+1}$ is as in (\ref{Z4E1}). Then:%
\begin{equation}
6m_{\gamma}\sum_{\alpha\leq\beta\leq\gamma}\frac{m_{\alpha}m_{\beta}}%
{\sqrt{x_{\alpha}\left(  1\right)  x_{\beta}\left(  1\right)  }}\leq
\frac{6m_{\gamma}}{x_{\gamma}\left(  1\right)  }\sum_{\alpha\leq\beta
\leq\gamma}m_{\alpha}m_{\beta}\leq\frac{6m_{\gamma}}{x_{\gamma}\left(
1\right)  }\ \label{Z4E7}%
\end{equation}
where we use the fact that the total mass of $g$ is bounded. On the other
hand:%
\begin{align}
\sum_{\alpha\leq\beta}\frac{m_{\alpha}m_{\beta}}{\sqrt{x_{\alpha}\left(
1\right)  x_{\beta}\left(  1\right)  }}  &  \leq\left(  \sum_{\alpha}%
\frac{m_{\alpha}}{\sqrt{x_{\alpha}\left(  1\right)  }}\right)  ^{2}=\left(
\sum_{\alpha>\gamma}\frac{m_{\alpha}}{\sqrt{x_{\alpha}\left(  1\right)  }%
}+\sum_{\alpha\leq\gamma}\frac{m_{\alpha}}{\sqrt{x_{\alpha}\left(  1\right)
}}\right)  ^{2}\label{Z4E8}\\
&  \leq\left(  S_{\gamma+1}+\frac{1}{\sqrt{x_{\gamma}\left(  1\right)  }%
}\right)  ^{2}\ \nonumber
\end{align}
where we define $S_{\gamma+1}$ as in (\ref{Z4E1}). Plugging (\ref{Z4E7}),
(\ref{Z4E8}) into (\ref{Z4E6}) we obtain (\ref{Z4E2}) and the Lemma follows.
\end{proof}

\begin{lemma}
\label{approxODEnew}Suppose that $g\in C\left(  \left[  0,T\right]
:\mathcal{X}_{\theta,\rho^{\ast}}\right)  $ solves (\ref{S2E1}) in the sense
of Definition \ref{MiSol}. Suppose that $g_{0}=g\left(  0,\cdot\right)  $
satisfies
\begin{equation}
\int_{\left[  0,R2^{-\gamma}\right]  }g_{0}d\omega\geq1-\eta\label{S2E3n}%
\end{equation}
for suitable $R$ and $\eta.$ Let us assume that $g$ has the form
(\ref{T2E6a})-(\ref{I1c}). Then, the following inequality holds:%
\begin{equation}
\partial_{t}\left[  a_{\gamma}\left(  1\right)  \left(  t\right)  +R_{\gamma
}\left(  t\right)  \right]  \geq\frac{a_{\gamma}\left(  1\right)  \left(
t\right)  \left[  \left(  1-2\eta\right)  -a_{\gamma}\left(  1\right)  \left(
t\right)  \right]  ^{2}}{L^{2}2^{\gamma}} \label{S2E2n}%
\end{equation}
where:%
\[
R_{\gamma}\left(  t\right)  \leq3M_{\gamma+1}\left(  t\right)  \ .
\]

\end{lemma}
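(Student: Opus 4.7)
The plan is to derive \eqref{S2E2n} from the monotonicity formula of Proposition \ref{atractiveness} together with Lemma \ref{strictConvex}, applied to a convex test function that isolates the mass at $x_\gamma(1)$ up to a controlled remainder. First I would fix a piecewise affine convex function $\varphi$ that has a single kink at (or near) $x_\gamma(1)$ and vanishes slightly past it; a natural choice is
\[
\varphi(\omega)=\frac{3}{x_\gamma(1)}\left(\tfrac{4}{3}x_\gamma(1)-\omega\right)_+,
\]
so that $\varphi(x_\gamma(1))=1$, $\varphi\equiv0$ for $\omega\ge \tfrac{4}{3}x_\gamma(1)$, and $0\le \varphi\le 3$. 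Because the atoms of $g(t)$ sit at the points $x_\alpha(k)$ and, for $\alpha\le\gamma$, the only one of them inside $[0,\tfrac{4}{3}x_\gamma(1)]$ is $x_\gamma(1)$ itself (as is immediate from the definition \eqref{T2E6} and the parities $\theta_\alpha(k)$), I obtain the decomposition
\[
\int\varphi\, g(t,d\omega)=a_\gamma(1)(t)+R_\gamma(t),\qquad R_\gamma(t)=\sum_{\alpha>\gamma}\sum_k\varphi(x_\alpha(k))\,a_\alpha(k)(t),
\]
with $0\le R_\gamma(t)\le 3\,M_{\gamma+1}(t)$ as required.

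Next I would apply Lemma \ref{Convex} (so that $t\mapsto\int\varphi g$ is nondecreasing) together with Proposition \ref{atractiveness} and the explicit representation of $\mathcal G_{0,\varphi}$ in Lemma \ref{strictConvex}, obtaining
\[
\partial_t\big[a_\gamma(1)+R_\gamma\big]=\tfrac{1}{3}\iiint \frac{g_1g_2g_3}{\sqrt{\omega_1\omega_2\omega_3}}\!\left[\sqrt{\omega_-}\,H_\varphi^1+\sqrt{(\omega_0+\omega_--\omega_+)_+}\,H_\varphi^2\right]d\omega_1d\omega_2d\omega_3,
\]
where both $H_\varphi^1$ and $H_\varphi^2$ are nonnegative since $\varphi$ is convex. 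To recover a lower bound of the form \eqref{S2E2n} I would drop $H_\varphi^2$ and restrict the integration to configurations of the form
$\omega_-=x_\gamma(1)$ and $\omega_0,\omega_+\in\left([0,R2^{-\gamma}]\cap\mathrm{supp}\,g\right)\setminus\{x_\gamma(1)\}$. On such configurations the ``kink'' of $\varphi$ at $x_\gamma(1)$ guarantees a strictly positive lower bound for $H_\varphi^1$ (of size proportional to $x_\gamma(1)$, since shifting $\omega_\pm$ across the kink forces a contribution comparable to the slope jump), while the integral over $\omega_\pm$ reduces to the square of the mass carried by $g(t)$ in $[0,R2^{-\gamma}]\setminus\{x_\gamma(1)\}$, and the integration over $\omega_-$ brings the factor $a_\gamma(1)(t)$. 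Combining the various $\sqrt{\omega_i}$ factors with $\sqrt{\omega_-}=\sqrt{x_\gamma(1)}=2^{-\gamma/2}$ produces the scaling $1/(L^2 2^\gamma)$ after absorbing $R$ and the geometric constants into $L$.

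The remaining ingredient is a lower bound for the mass of $g(t)$ restricted to $[0,R2^{-\gamma}]\setminus\{x_\gamma(1)\}$. For this I would invoke Lemma \ref{cotInf} (or directly Lemma \ref{Convex} with $\bar\varphi(\omega)=(1-K\omega)_+$, $K=\eta/(R2^{-\gamma})$): the hypothesis \eqref{S2E3n} propagates into the bound $\int_{[0,L_\eta]}g(t,d\omega)\ge 1-2\eta$ on a slightly larger interval $L_\eta$, so that, subtracting the atom at $x_\gamma(1)$, the mass of $g(t)$ in that region minus $a_\gamma(1)(t)$ is at least $(1-2\eta)-a_\gamma(1)(t)$. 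Feeding the square of this quantity into the restricted integral from the previous paragraph yields exactly \eqref{S2E2n}.

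The principal obstacle is step three: one must verify quantitatively that the restriction of the monotonicity formula to the chosen configurations gives the factor $a_\gamma(1)\,[(1-2\eta)-a_\gamma(1)]^2/2^\gamma$ with the correct sign and the correct power of $2^\gamma$, rather than a cancellation or a wrong power. This amounts to combining carefully the kink estimate for $H_\varphi^1$, the $\sqrt{\omega_i}$ prefactors (all of comparable size $\sqrt{x_\gamma(1)}$ on the selected region), and the restriction of $g$ to the interval provided by Lemma \ref{cotInf}; it also requires checking that the region $L_\eta$ can be chosen so that the mass it contains, minus $a_\gamma(1)$, indeed contributes to the integration region of $(\omega_0,\omega_+)$ in the step-three restriction.
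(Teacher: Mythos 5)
Your choice of test function, the use of Proposition~\ref{atractiveness} and Lemma~\ref{strictConvex}, the decomposition $\int\varphi g=a_\gamma(1)+R_\gamma$, and the use of Lemma~\ref{cotInf} to propagate \eqref{S2E3n} are all the same as the paper's. But the central estimate in your ``step three'' contains a genuine gap that breaks the derivation.

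You restrict to $\omega_-=x_\gamma(1)$ and let $\omega_0,\omega_+$ range over the atoms of $g$ in a bounded interval (minus $x_\gamma(1)$), and you assert that $H_\varphi^1$ then has a uniform positive lower bound, so that the double integration over $\omega_0,\omega_+$ yields the \emph{square} of the mass in that region. This is not the case. On your restricted region,
\[
H_\varphi^1=\varphi\bigl(\omega_+-(\omega_0-\omega_-)\bigr)+\varphi\bigl(\omega_++(\omega_0-\omega_-)\bigr)-2\varphi(\omega_+),
\]
the symmetric second difference of $\varphi$ at $\omega_+$ with step $h=\omega_0-\omega_-$. Since $\varphi$ is affine on each side of its kink (located below $\tfrac{3}{2}x_\gamma(1)$), this is nonzero only when the window $[\omega_+-h,\omega_++h]$ straddles the kink, i.e.\ when $\omega_+-\omega_0\lesssim \tfrac12 x_\gamma(1)$; but any two distinct atoms of $\mathcal Z_\gamma$ differ by at least $2^{-\gamma}=x_\gamma(1)$. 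Hence the only contributing configurations are $\omega_0=\omega_+$, in which case the double sum collapses to $\sum_y a(y)^2$, a \emph{sum of squares of atomic masses}, not $\bigl(\sum_y a(y)\bigr)^2$. To recover the square of the total mass one needs the additional Cauchy--Schwarz/Jensen step $\sum a^2\geq \bigl(\sum a\bigr)^2/N_\gamma$ together with the count $N_\gamma\leq L\,2^\gamma$ of atoms of $\mathcal Z_\gamma$ below $L$; this step is absent from your argument, and it is precisely what produces one factor of $1/L$ and the factor $2^{-\gamma}$. Correspondingly, your explanation that the $2^\gamma$ ``comes from $\sqrt{\omega_-}=2^{-\gamma/2}$'' is incorrect: the $\sqrt{\omega_-}$ in $\mathcal G_{0,\varphi}$ exactly cancels the $1/\sqrt{\omega_-}$ in the kinetic kernel, so it contributes nothing to the $2^\gamma$; the other factor of $1/L$ comes from bounding $1/\sqrt{\omega_0\omega_+}\geq 1/L$ on $\{\omega_0,\omega_+\leq L\}$.

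(As a minor point, with $\varphi(\omega)=\tfrac{3}{x_\gamma(1)}(\tfrac43 x_\gamma(1)-\omega)_+$ you have $\sup\varphi=\varphi(0)=4$, not $3$, so your $R_\gamma\leq 3M_{\gamma+1}$ would become $R_\gamma\leq 4M_{\gamma+1}$; the paper's choice $\varphi(\omega)=2(\tfrac32-\tfrac{\omega}{x_\gamma(1)})_+$ gives exactly $\sup\varphi=3$.)
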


\begin{proof}
We use (\ref{S1E12a}). The function $\mathcal{G}_{\varphi}(\mathcal{\omega
}_{1}\,\mathcal{\omega}_{2}\,\mathcal{\omega}_{3})$ can be written as
(cf. Lemma \ref{strictConvex}):%
$$
\mathcal{G}_{\varphi}\left(  \omega_{1},\omega_{2},\omega_{3}\right)   
=\frac{1}{3}\left[  \sqrt{\omega_{-}}H_{\varphi}^{1}\left(  \omega_{1}%
,\omega_{2},\omega_{3}\right)  +\sqrt{\left(  \omega_{0}+\omega_{-}-\omega
_{+}\right)  _{+}}H_{\varphi}^{2}\left(  \omega_{1},\omega_{2},\omega
_{3}\right)  \right] \nonumber
$$
Using  the
convex test function $\varphi\left(  \omega\right)  =2\left(  \frac{3}%
{2}-\frac{\omega}{x_{\gamma}\left(  1\right)  }\right)  _{+}$ we deduce by 
 Lemma \ref{strictConvex}:

\begin{equation}
\mathcal{G}_{\varphi}\left(  \omega_{1},\omega_{2},\omega_{3}\right)
\geq\frac{\sqrt{\mathcal{\omega}_{-}}}{3}\left[  \varphi\left(  \omega
_{+}+\omega_{-}-\omega_{0}\right)  +\varphi\left(  \omega_{+}+\omega
_{0}-\omega_{-}\right)  -2\varphi\left(  \omega_{+}\right)  \right]
\geq0\ \label{S1E8a}%
\end{equation}

We split the measure $g$ in two pieces:%
\[
g=g_{\gamma}+\tilde{g}_{\gamma}%
\]
where:%
\[
g_{\gamma}=\sum_{\alpha\leq\gamma}a_{\alpha}\ \ ,\ \ \tilde{g}_{\gamma}%
=\sum_{\alpha>\gamma}a_{\alpha}%
\]

We now use the monotonicity formula (\ref{S2E5}). Then, using also
(\ref{S1E8a}) and ignoring all the terms containing $\tilde{g}_{\gamma}$
(since they are nonnegative):
\begin{eqnarray*}
&&\frac{d}{dt}\left(  \int_{0}^{\infty}g\left(  \omega\right)  \varphi\left(
\omega\right)  d\omega\right)     =6\iiint_{\left\{  \omega_{-}%
\leq\omega_{0}\leq\omega_{+}\right\}  }\frac{g\left(  \omega_{-}\right)
g\left(  \omega_{0}\right)  g\left(  \omega_{+}\right)  }{\sqrt{\omega
_{-}\omega_{0}\omega_{+}}} \times \\
&& \hskip 7cm\times \mathcal{G}_{\varphi}\left(  \omega_{-},\omega
_{0},\omega_{+}\right)  d\omega_{-}d\omega_{0}d\omega_{+}\\
&&\hskip 3.9cm   \geq6\iiint_{\left\{  \omega_{-}\leq\omega_{0}\leq\omega_{+}\leq
L\right\}  }\frac{g_{\gamma}\left(  \omega_{-}\right)  g_{\gamma}\left(
\omega_{0}\right)  g_{\gamma}\left(  \omega_{+}\right)  }{\sqrt{\omega
_{-}\omega_{0}\omega_{+}}}\times \\
&&\hskip 7cm\times \mathcal{G}_{\varphi}\left(  \omega_{-},\omega
_{0},\omega_{+}\right)  d\omega_{-}d\omega_{0}d\omega_{+}%
\end{eqnarray*}

Notice that the smallest particle with mass in $g_{\gamma}$ is $x_{\gamma
}\left(  1\right)  .$ Then we have the following inequalities in the sense of
measures for $\omega_{-}\leq\omega_{0}\leq\omega_{+}\leq L:$
\begin{align*}
&  \frac{g_{\gamma}\left(  \omega_{-}\right)  g_{\gamma}\left(  \omega
_{0}\right)  g_{\gamma}\left(  \omega_{+}\right)  }{\sqrt{\omega_{-}\omega
_{0}\omega_{+}}}\mathcal{G}_{\varphi}\left(  \omega_{-},\omega_{0},\omega
_{+}\right) \\
&  \geq\frac{g_{\gamma}\left(  \omega_{-}\right)  g_{\gamma}\left(  \omega
_{0}\right)  g_{\gamma}\left(  \omega_{+}\right)  }{3\sqrt{\omega_{0}%
\omega_{+}}}\left[  \varphi\left(  \omega_{+}+\omega_{-}-\omega_{0}\right)
+\varphi\left(  \omega_{+}+\omega_{0}-\omega_{-}\right)  -2\varphi\left(
\omega_{+}\right)  \right] \\
&  \geq\frac{g_{\gamma}\left(  \omega_{-}\right)  g_{\gamma}\left(  \omega
_{0}\right)  g_{\gamma}\left(  \omega_{+}\right)  }{3L}\left[  \varphi\left(
\omega_{+}+\omega_{-}-\omega_{0}\right)  +\varphi\left(  \omega_{+}+\omega
_{0}-\omega_{-}\right)  -2\varphi\left(  \omega_{+}\right)  \right]
\end{align*}
This expression is nonnegative at every point. Moreover, if $\omega_{+}%
=\omega_{0}$ and $\omega_{+}>\omega_{-},$ and using the fact that $\varphi$
vanishes for the points $x_{\gamma}\left(  k\right)  ,\ k\geq2,$ $x_{\alpha
}\left(  j\right)  ,\ j\geq1,\ \alpha<\gamma$ it then follows that:
\begin{align*}
&  6\iiint_{\left\{  \omega_{-}\leq\omega_{0}\leq\omega_{+}\leq
L\right\}  }\frac{g_{\gamma}\left(  \omega_{-}\right)  g_{\gamma}\left(
\omega_{0}\right)  g_{\gamma}\left(  \omega_{+}\right)  }{\sqrt{\omega
_{-}\omega_{0}\omega_{+}}}\mathcal{G}_{\varphi}\left(  \omega_{-},\omega
_{0},\omega_{+}\right)  d\omega_{-}d\omega_{0}d\omega_{+}\\
&  \geq\frac{2a_{\gamma}\left(  1\right)  }{L}\sum_{\alpha\leq\gamma}%
\sum_{j=1}^{\infty}\left(  a_{\alpha}\left(  j\right)  \left(  1-\delta
_{j,1}\delta_{\alpha,\gamma}\right)  \right)  \varphi\left(  x_{\gamma}\left(
1\right)  \right)
\end{align*}
Since $\varphi\left(  x_{\gamma}\left(  1\right)  \right)  =1,$  we deduce:
\begin{align*}
&  6\iiint_{\left\{  \omega_{-}\leq\omega_{0}\leq\omega_{+}\leq
L\right\}  }\frac{g_{\gamma}\left(  \omega_{-}\right)  g_{\gamma}\left(
\omega_{0}\right)  g_{\gamma}\left(  \omega_{+}\right)  }{\sqrt{\omega
_{-}\omega_{0}\omega_{+}}}\mathcal{G}_{\varphi}\left(  \omega_{-},\omega
_{0},\omega_{+}\right)  d\omega_{-}d\omega_{0}d\omega_{+}\\
&  \geq\frac{2a_{\gamma}\left(  1\right)  }{L}\sum_{\alpha\leq\gamma}%
\sum_{j=1,\left\{  x_{\alpha}\left(  j\right)  \leq L\right\}  }^{\infty
}\left(  a_{\alpha}\left(  j\right)  \left(  1-\delta_{j,1}\delta
_{\alpha,\gamma}\right)  \right)  ^{2}\\
&  \geq\frac{2a_{\gamma}\left(  1\right)  }{L}\sum_{\alpha\leq\gamma}\left[
\int_{\left\{  x\leq L\right\}  \setminus\left\{  x_{\gamma}\left(  1\right)
\right\}  }g_{\gamma}\right]  ^{2}  \geq\frac{2a_{\gamma}\left(  1\right)  }{LN_{\gamma}}\left(  \sum
_{\alpha\leq\gamma}\int_{\left\{  x\leq L\right\}  \setminus\left\{
x_{\gamma}\left(  1\right)  \right\}  }g_{\gamma}\right)  ^{2}\\
&  \geq\frac{2a_{\gamma}\left(  1\right)  }{LN_{\gamma}}\left(  \sum
_{\alpha\leq\gamma}\left[  \left(  1-2\eta\right)  -a_{\gamma}\left(
1\right)  \right]  \right)  ^{2}%
\end{align*}
where $N_{\gamma}$ is the number of elements of $\mathcal{Z}_{\gamma}$ which
are smaller than $L.$ We have used Jensen's inequality in the sum. Notice that
$N_{\gamma}\leq L2^{\gamma}.$ Therefore:%
\begin{align*}
&  6\iiint_{\left\{  \omega_{-}\leq\omega_{0}\leq\omega_{+}\leq
L\right\}  }\frac{g_{\gamma}\left(  \omega_{-}\right)  g_{\gamma}\left(
\omega_{0}\right)  g_{\gamma}\left(  \omega_{+}\right)  }{\sqrt{\omega
_{-}\omega_{0}\omega_{+}}}\mathcal{G}_{\varphi}\left(  \omega_{-},\omega
_{0},\omega_{+}\right)  d\omega_{-}d\omega_{0}d\omega_{+}\\
&  \geq\frac{2a_{\gamma}\left(  1\right)  }{L^{2}2^{\gamma}}\left(
\sum_{\alpha\leq\gamma}\left[  \left(  1-2\eta\right)  -a_{\gamma}\left(
1\right)  \right]  \right)  ^{2}%
\end{align*}

We then have:%
\[
\frac{d}{dt}\left(  \int_{0}^{\infty}g\left(  \omega\right)  \varphi\left(
\omega\right)  d\omega\right)  \geq\frac{2a_{\gamma}\left(  1\right)  }%
{L^{2}2^{\gamma}}\left(  \sum_{\alpha\leq\gamma}\left[  \left(  1-2\eta
\right)  -a_{\gamma}\left(  1\right)  \right]  \right)  ^{2}%
\]

We then write $\int_{0}^{\infty}g\varphi d\omega=\int_{0}^{\infty}g_{\gamma
}\varphi d\omega+\int_{0}^{\infty}\tilde{g}_{\gamma}\varphi d\omega.$ Let us
denote as $R_{\gamma}\left(  t\right)  $ the quantity $\int_{0}^{\infty}%
\tilde{g}_{\gamma}\varphi d\omega.$ Then, since $\varphi\leq3$ we obtain:%
\[
R_{\gamma}\left(  t\right)  =\int_{0}^{\infty}\tilde{g}_{\gamma}\varphi
d\omega\leq3\int_{0}^{\infty}\tilde{g}_{\gamma}d\omega=3M_{\gamma+1}%
\]
This concludes the Proof of Lemma \ref{approxODEnew}.
\end{proof}

\begin{lemma}
\label{ODELemma}Suppose that the set of functions $\left\{  a_{\gamma}\left(
1\right)  \right\}  ,\ \left\{  m_{\gamma}\right\}  $ defined above for
$\gamma=0,1,2,...$ satisfy the following set of inequalities:%
\[
\frac{db_{\gamma}}{dt}\geq C_{1}\left(  \gamma\right)  \left[  b_{\gamma
}\left[  \left(  1-2\eta_{\gamma}\right)  -b_{\gamma}\right]  ^{2}%
-BM_{\gamma+1}\right]
\]
where $b_{\gamma}=a_{\gamma}\left(  1\right)  +R_{\gamma},$ $R_{\gamma}%
\leq3M_{\gamma+1}$, $\lim_{\gamma\rightarrow\infty}\eta_{\gamma}=0,$ as well
as:%
\[
\frac{dm_{\gamma}}{dt}\leq C_{2}\left(  \gamma\right)  m_{\gamma}%
+6M_{\gamma+1}\left(  S_{\gamma+1}+\frac{1}{\sqrt{x_{\gamma}\left(  1\right)
}}\right)  ^{2}%
\]
where $S_{\gamma+1}$ is as in (\ref{Z4E1}). Then, there exists a sequence
$\left\{  \varepsilon_{\gamma}\right\}  $ of positive numbers satisfying
$\sum_{\gamma=0}^{\infty}\varepsilon_{\gamma}=1,$ such that, if we assume
that:\
\[
a_{\gamma}\left(  \cdot\right)  \left(  0\right)  =\varepsilon_{\gamma}%
\delta_{x_{\gamma}\left(  1\right)  }%
\]
there exists a solution $g$ of (\ref{S2E1}) in the sense of Definition
\ref{MiSol} globally defined in time and there exists an increasing sequence
of times $\left\{  t_{n}\right\}  $ such that $\lim_{n\rightarrow\infty}%
t_{n}=\infty$ and:%
\begin{equation}
a_{n}\left(  1\right)  \left(  t\right)  \geq1-4\eta_{n}\ \ \text{for\ \ }%
t_{n}\leq t\leq2t_{n} \label{S6E1}%
\end{equation}

\end{lemma}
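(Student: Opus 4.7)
My plan is to prove this lemma by an induction on $n$, where at stage $n$ I simultaneously construct the initial mass $\varepsilon_n$ (which will be defined recursively, chosen much smaller than all the data already fixed at previous stages) and the time $t_n$, and verify that $a_n(1)(t)\geq 1-4\eta_n$ on $[t_n,2t_n]$. The heuristic picture in the paper suggests that the $n$-th slow phase is driven by the nonlinear ODE for $b_n$, which has a stable equilibrium at $1-2\eta_n$, while the tail mass $M_{n+1}$ is controlled by a linear ODE. The aim is to make the error term $BM_{n+1}$ in the lower bound for $\dot b_n$ always much smaller than the nonlinear driving term $b_n[(1-2\eta_n)-b_n]^2$, so that $b_n$ monotonically moves towards $1-2\eta_n$, and then to use $R_n\leq 3M_{n+1}$ to translate this into the required lower bound on $a_n(1)=b_n-R_n$.

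First, I would integrate the upper bound for $m_\gamma$ by a Gronwall-type argument. Summing over $\gamma>n$ and using that $S_{\gamma+1}+1/\sqrt{x_\gamma(1)}\leq C\,2^{(\gamma+1)/2}$ (bounded once the total mass is $1$), one obtains an estimate of the form
\[
M_{n+1}(t)\leq M_{n+1}(t_n)\exp\!\left(\int_{t_n}^{t}\!K(n,s)\,ds\right)+\text{(Duhamel term)},
\]
where $K(n,s)$ is bounded on any finite interval. For a time $t\in[t_n,2t_n]$, this gives $M_{n+1}(t)\leq \Lambda_n M_{n+1}(t_n)$ for a constant $\Lambda_n$ depending only on $n$ and on the constants $C_2(\cdot)$ appearing in the hypotheses. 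Crucially, this upper bound depends only on data from previous stages.

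Second, I would use the nonlinear differential inequality for $b_n$. Starting from $b_n(t_n)\geq \varepsilon_n$ (which is guaranteed if the pre-existing mass at $x_n(1)$ is at least $\varepsilon_n$, combined with any additional transferred mass from the previous phase), one uses a barrier argument: as long as
\[
B\,M_{n+1}(t)\;\leq\;\tfrac12\, b_n(t)\bigl[(1-2\eta_n)-b_n(t)\bigr]^2,
\]
the lower bound $\dot b_n\geq \tfrac12 C_1(n)b_n[(1-2\eta_n)-b_n]^2$ forces $b_n$ to reach $1-3\eta_n$ in a time $T_n$ depending on $C_1(n),\eta_n,\varepsilon_n$. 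Choosing $\varepsilon_{n+1}$ so small (and hence $M_{n+1}(t_n)\leq C\varepsilon_{n+1}$ so small) that $\Lambda_n B M_{n+1}(t_n)\leq \tfrac12\eta_n^3$ throughout $[t_n,t_n+T_n]$, this barrier condition is preserved. Once $b_n\geq 1-3\eta_n$, the relation $a_n(1)=b_n-R_n\geq b_n-3M_{n+1}$ together with $3M_{n+1}\leq \eta_n$ yields $a_n(1)\geq 1-4\eta_n$. I would then set $t_n=T_n$ (shifted if needed so the slow phase just starts) and verify that this estimate persists for $t\in[t_n,2t_n]$ by the same barrier argument, choosing $\varepsilon_{n+1}$ still smaller to absorb the extra time.

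The main obstacle lies in the recursive choice of $\varepsilon_{n+1}$: at stage $n+1$ we need a non-trivial lower bound on $a_{n+1}(1)(t_{n+1})$, but a priori all we have at the end of stage $n$ is a mass $\varepsilon_{n+1}$ which we just chose to be extremely small. The resolution is that between $2t_n$ and $t_{n+1}$ the \emph{fast phase} transfers essentially all the mass sitting at $x_n(1)$ down to $x_{n+1}(1)$. Showing this cascade rigorously is the delicate point: one must use Theorem \ref{Asympt} applied locally (after rescaling $\omega \mapsto 2^{n+1}\omega$, $g\mapsto 2^{-(n+1)}g$ as in \eqref{S4E5}) to conclude that, on the rescaled set $\mathcal{Z}_{n+1}$, the measure concentrates at the minimum of $\mathrm{supp}^*$, which is $x_{n+1}(1)$. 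The quantitative version of Theorem \ref{Asympt} needed here follows by combining Lemma \ref{gSupport} with the monotonicity of $b_{n+1}$, and implies that after a finite rescaled time, the mass at $x_{n+1}(1)$ is at least $1-\eta_{n+1}$; pulling back, this defines $t_{n+1}$. Choosing $\varepsilon_{n+2}$ small enough to make $M_{n+2}(t_{n+1})\leq \eta_{n+1}/3$ then closes the induction step, and one obtains $t_{n+1}\to\infty$ because the rescaled time needed at each stage is bounded below by a positive constant, so in original units $t_{n+1}-t_n\geq c\cdot 2^{-(n+1)}$ can be arranged to be larger than $t_n$ by making $\varepsilon_{n+1}$ smaller (which slows the relaxation of $b_{n+1}$).
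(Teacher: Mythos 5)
Your proposal runs the same ODE machinery as the paper (Gronwall for $m_\gamma$, barrier for $b_n$, recursive choice of $\varepsilon_\gamma$), but the centerpiece of your argument --- the ``fast-phase'' analysis between $2t_n$ and $t_{n+1}$ --- is both unnecessary and, as formulated, not provable with the tools you invoke. The ``obstacle'' you identify (that at time $t_{n+1}$ you only have a tiny mass $\varepsilon_{n+1}$ at $x_{n+1}(1)$ and need a fresh supply from the previous phase) is not an obstacle at all in the paper's framework. The differential inequality for $b_\gamma$ coming from Lemma~\ref{approxODEnew} is active from $t=0$, for \emph{every} $\gamma$ simultaneously, not only during the ``slow phase for family $\gamma$''. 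One therefore starts $b_\gamma$ at its initial value $b_\gamma(0)\geq \varepsilon_\gamma/2$ and integrates the ODE over the entire interval $[0,2t_\gamma]$; the time $t_\gamma$ is \emph{defined} as the transit time of this nonlinear flow from $\varepsilon_\gamma/2$ to $1-3\eta_\gamma$, which is large precisely because $\varepsilon_\gamma$ is small. The recursive condition (\ref{S6E9}) ensures that the sufficient conditions (\ref{S6E2}) persist throughout $[0,2t_\gamma]$, so no transfer-of-mass argument from $x_\gamma(1)$ to $x_{\gamma+1}(1)$ is needed --- the growth of $b_{\gamma+1}$ is an autonomous consequence of its own ODE.

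The specific tools you propose for the fast phase also do not apply here. Lemma~\ref{gSupport} requires $R_\ast>0$, but for these initial data the support contains $x_\gamma(1)=2^{-\gamma}$ for every $\gamma$, so $R_\ast=\inf(A^\ast)=0$ and the lemma's hypothesis fails; restricting ``locally'' to $\mathcal{Z}_{n+1}$ is not legitimate, since the dynamics couples all the families $\Omega_\alpha$ and one cannot discard the mass at smaller scales. Moreover Theorem~\ref{Asympt} is only an asymptotic ($t\to\infty$) statement with no rate; you acknowledge needing a ``quantitative version'', but no such version exists in the paper, and combining Lemma~\ref{gSupport} with monotonicity of $b_{n+1}$ does not supply it. Finally, a smaller point: when you bound $M_{n+1}(t_n)\leq C\varepsilon_{n+1}$ you must already control all the $\varepsilon_\alpha$ with $\alpha>n+1$; the paper handles this by building the dyadic factor $2^{-\alpha}$ into the recursive definition (\ref{S6E9}), which fixes the sum $\sum_{\alpha\geq 1}\varepsilon_\alpha\leq 1/4$ a priori, whereas your induction as written leaves this circularity unresolved.
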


\begin{remark}
Notice that the Theorem implies
\[
\sup_{t_{n}\leq t\leq2t_{n}}\left(  \sum_{\alpha\neq\gamma}a_{\alpha}\left(
\cdot\right)  \left(  t\right)  +\sum_{\ell=2}^{\infty}a_{\gamma}\left(
\ell\right)  \left(  t\right)  \right)  \rightarrow0\ \ \text{as\ \ }%
n\rightarrow\infty
\]

\end{remark}

\begin{remark}
The result can be reformulated also in terms of the weak topology of measures
for \thinspace$g.$ We recall $dist_{\ast}$ denotes the
distance associated to the weak topology of measures (cf. Notation
\ref{distWeak}). Then:%
\[
\sup_{t_{n}\leq t\leq2t_{n}}\left[dist_{\ast}\left(
2^{n}g\left(  2^{n}\left(  \cdot\right)  ,t\right)  ,\delta_{1}\right)
\right]  \rightarrow0\ \ \text{as\ \ }n\rightarrow\infty
\]

\end{remark}

\begin{proof}
We will assume, without loss of generality that $C_{1}\left(  \gamma\right)
\geq0,\ C_{2}\left(  \gamma\right)  \geq0.$ We need to guarantee the existence
of several inequalities for a suitable range of times. These inequalities are:%
\begin{align}
BM_{\gamma+1}  &  \leq\frac{1}{8}\min\left\{  b_{\gamma},\eta_{\gamma}%
,\frac{Bb_{\gamma}}{3}\right\} \nonumber\\
S_{\gamma+1}  &  \leq\frac{1}{\sqrt{x_{\gamma}\left(  1\right)  }}\nonumber\\
6M_{\gamma+1}\left(  \frac{2}{\sqrt{x_{\gamma}\left(  1\right)  }}\right)
^{2}  &  \leq C_{2}\left(  \gamma\right)  m_{\gamma} \label{S6E2}%
\end{align}

As long as these inequalities are satisfied we have:%
\begin{equation}
\frac{db_{\gamma}}{dt}\geq\frac{C_{1}\left(  \gamma\right)  b_{\gamma}}%
{2}\left[  \left(  1-2\eta_{\gamma}\right)  -b_{\gamma}\right]  ^{2}
\label{S6E3}%
\end{equation}%
\begin{equation}
\frac{dm_{\gamma}}{dt}\leq2C_{2}\left(  \gamma\right)  m_{\gamma} \label{S6E4}%
\end{equation}

We need to precise sufficient conditions to have (\ref{S6E2}). Notice that if
(\ref{S6E2}) holds we have $\frac{db_{\gamma}}{dt}\geq0.$ Moreover, the first
inequality in (\ref{S6E2}) guarantees also that $a_{\gamma}\left(  1\right)
\left(  t\right)  $ is comparable to $b_{\gamma}\left(  t\right)  .$ Then
$b_{\gamma}\left(  0\right)  \geq\frac{\varepsilon_{\gamma}}{2}.$ We can
assume that $\eta_{\gamma}\geq\frac{\varepsilon_{\gamma}}{2}$ and also that
$B\geq3.$ Then, since $m_{\gamma}\geq a_{\gamma}\left(  1\right)  $ and
$a_{\gamma}\left(  1\right)  \left(  t\right)  $ is comparable to $b_{\gamma
}\left(  t\right)  ,$ we would have the first and the third inequalities in
(\ref{S6E2}) if we have:%
\begin{equation}
BM_{\gamma+1}\leq\varepsilon_{\gamma}\ \ \text{and\ \ }6M_{\gamma+1}\left(
\frac{2}{\sqrt{x_{\gamma}\left(  1\right)  }}\right)  ^{2}\leq C_{2}\left(
\gamma\right)  \varepsilon_{\gamma} \label{S6E5}%
\end{equation}

Notice that (\ref{S6E4}) implies:%
\[
m_{\gamma}\left(  t\right)  \leq m_{\gamma}\left(  0\right)  \exp\left(
2C_{2}\left(  \gamma\right)  t\right)
\]
and since $m_{\gamma}\left(  0\right)  =\varepsilon_{\gamma}$ we obtain:%
\[
m_{\gamma}\left(  t\right)  \leq\varepsilon_{\gamma}\exp\left(  2C_{2}\left(
\gamma\right)  t\right)
\]

Therefore:%
\begin{equation}
M_{\gamma+1}\leq\sum_{\alpha\geq\gamma+1}\varepsilon_{\alpha}\exp\left(
2C_{2}\left(  \alpha\right)  t\right)  \ \label{S6E6}%
\end{equation}
and:%
\begin{equation}
S_{\gamma+1}\leq\sum_{\alpha\geq\gamma+1}\frac{m_{\alpha}}{\sqrt{x_{\alpha
}\left(  1\right)  }}\leq\sum_{\alpha\geq\gamma+1}\frac{\varepsilon_{\alpha
}\exp\left(  2C_{2}\left(  \alpha\right)  t\right)  }{\sqrt{x_{\alpha}\left(
1\right)  }}=\sum_{\alpha\geq\gamma+1}\varepsilon_{\alpha}2^{\frac{\alpha}{2}%
}\exp\left(  2C_{2}\left(  \alpha\right)  t\right)  \label{S6E7}%
\end{equation}

Using (\ref{S6E3}) as well as the fact that \thinspace$b_{\gamma}\left(
0\right)  \geq\frac{\varepsilon_{\gamma}}{2}$ we obtain that, if (\ref{S6E3})
holds during the time interval $0\leq t\leq2t_{\gamma}$ for we would have
$b_{\gamma}\left(  t\right)  \geq1-3\eta_{\gamma}$ for $t_{\gamma}\leq
t\leq2t_{\gamma},$ where:%
\begin{eqnarray*}
&&t_{\gamma}\left(  \varepsilon_{\gamma},\eta_{\gamma}\right)  =\frac
{C_{1}\left(  \gamma\right)  }{2\left(  1-2\eta_{\gamma}\right)  ^{2}}\left[
\frac{\left(  1-2\eta_{\gamma}\right)  }{\eta_{\gamma}}-\frac{2\left(
1-2\eta_{\gamma}\right)  }{2+4\eta_{\gamma}-\varepsilon_{\gamma}}+\right.\\
&&\hskip 5cm +\left.\log\left(
\frac{\left(  1-3\eta_{\gamma}\right)  \left(  2+4\eta_{\gamma}-\varepsilon
_{\gamma}\right)  }{\eta_{\gamma}\varepsilon_{\gamma}}\right)  \right]
\end{eqnarray*}
Notice that $t_{\gamma}$ tends to infinity if $\eta_{\gamma}$ or
$\varepsilon_{\gamma}$ approach zero.

We would have (\ref{S6E3}), (\ref{S6E4}) for $0\leq t\leq2t_{\gamma}$ if the
inequalities (\ref{S6E2}) hold in the same time interval. Sufficient condition
for this are the inequalities (\ref{S6E5}) as well as the second inequality in
(\ref{S6E2}). Using (\ref{S6E6}), (\ref{S6E7}) we would have those
inequalities for $0\leq t\leq2t_{\gamma}$ if:%
\begin{align*}
B\sum_{\alpha\geq\gamma+1}\varepsilon_{\alpha}\exp\left(  4C_{2}\left(
\alpha\right)  t_{\gamma}\left(  \varepsilon_{\gamma},\eta_{\gamma}\right)
\right)   &  \leq\varepsilon_{\gamma}\ \ \\
\text{\ }6\left(  \frac{2}{\sqrt{x_{\gamma}\left(  1\right)  }}\right)
^{2}\sum_{\alpha\geq\gamma+1}\varepsilon_{\alpha}\exp\left(  4C_{2}\left(
\alpha\right)  t_{\gamma}\left(  \varepsilon_{\gamma},\eta_{\gamma}\right)
\right)   &  \leq C_{2}\left(  \gamma\right)  \varepsilon_{\gamma}\\
\sum_{\alpha\geq\gamma+1}\varepsilon_{\alpha}2^{\frac{\alpha}{2}}\exp\left(
4C_{2}\left(  \alpha\right)  t_{\gamma}\left(  \varepsilon_{\gamma}%
,\eta_{\gamma}\right)  \right)   &  \leq\frac{1}{\sqrt{x_{\gamma}\left(
1\right)  }}%
\end{align*}

The three inequalities hold if we have:%
\begin{equation}
\sum_{\alpha\geq\gamma+1}\varepsilon_{\alpha}2^{\frac{\alpha}{2}}\exp\left(
4C_{2}\left(  \alpha\right)  t_{\gamma}\left(  \varepsilon_{\gamma}%
,\eta_{\gamma}\right)  \right)  \leq\min\left\{  \frac{\varepsilon_{\gamma}%
}{B},\frac{C_{2}\left(  \gamma\right)  2^{-\gamma}\varepsilon_{\gamma}}%
{24},2^{\frac{\gamma}{2}}\right\}  \equiv Q\left(  \gamma;\varepsilon_{\gamma
}\right)  \label{S6E8}%
\end{equation}

We can now construct the sequence $\varepsilon_{\gamma}$ inductively. We will
assume that $\varepsilon_{0}\geq\frac{1}{2}.$ We will then select
$\varepsilon_{\alpha}$ inductively for $\alpha\geq1,$ as any positive number
satisfying the inequalities:%
\begin{equation}
\varepsilon_{\alpha}\leq\min\left\{  \frac{1}{2^{\alpha}}\frac{\min\left\{
Q\left(  \gamma;\varepsilon_{\gamma}\right)  ,\frac{1}{2}\right\}  }%
{2^{\frac{\alpha}{2}}\exp\left(  4C_{2}\left(  \alpha\right)  t_{\gamma
}\left(  \varepsilon_{\gamma},\eta_{\gamma}\right)  \right)  }:0\leq\gamma
\leq\alpha-1\right\}  \ \label{S6E9}%
\end{equation}
where we replace in these inequalities that $\varepsilon_{0}=\frac{1}{2}.$
Since $\exp\left(  4C_{2}\left(  \alpha\right)  t_{\gamma}\left(
\varepsilon_{\gamma},\eta_{\gamma}\right)  \right)  \geq1,$ these inequalities
imply that:%
\[
\sum_{\alpha\geq1}\varepsilon_{\alpha}\leq\frac{1}{2}\sum_{\alpha\geq1}%
\frac{1}{2^{\alpha}}=\frac{1}{4}<\frac{1}{2}%
\]

We then choose $\varepsilon_{0}$ as $\left(  1-\sum_{\alpha\geq1}%
\varepsilon_{\alpha}\right)  .$ Then $\sum_{\alpha\geq0}\varepsilon_{\alpha
}=1.$ Notice that $\varepsilon_{0}\geq\frac{1}{2}$ and since the right-hand
side of (\ref{S6E9}) is increasing in $\varepsilon_{0},$ it follows that this
inequalities hold with this new choice of $\varepsilon_{0},$ since they were
valid with $\varepsilon_{0}=\frac{1}{2}.$ Moreover, since $a_{\gamma}\left(
1\right)  \geq b_{\gamma}-3M_{\gamma+1}$ it then follows that $a_{\gamma
}\left(  1\right)  \left(  t\right)  \geq1-4\eta_{\gamma}$ if $t_{\gamma}\leq
t\leq2t_{\gamma}.$
\end{proof}

\subsubsection{Proof of Theorem \ref{globOsc}.}

%We now prove the existence of weak solutions of (\ref{S2E1}) satisfying the
%alternative (ii) in Theorem \ref{AsympOsc}. Moreover, as stated in Theorem
%\ref{globOsc}, these solutions will satisfy $\int_{\left\{  0\right\}
%}g\left(  t,d\omega\right)  =0.$ We just need to collect some of the previous results.

\begin{proof}
[Proof of Theorem \ref{globOsc}] Lemma \ref{ODELemma} combined with Lemmas
\ref{ODEUpper} and \ref{approxODEnew} imply the existence of $g\in
C\left(  \left[  0,\infty\right)  :\mathcal{X}_{\theta,\rho^{\ast}}\right)  $
which solves (\ref{S2E1}) in the sense of Definition \ref{MiSol}. Notice that,
by construction, this solution satisfies $\int_{\left\{  0\right\}  }g\left(
t,d\omega\right)  =0$ for all $t>0.$ Using then Theorem \ref{AsympOsc} we
obtain that the alternative (ii) holds. 

It only remains to prove  (\ref{S4distance}). The construction of the family implies that for $t$ sufficiently large, most of the mass of the measure $g$ is contained in $\Omega _J\cup \Omega  _{ J+1 }$, with $J$ depending of $t$ and the mass contained out of this set tends to zero as $t\to \infty$. Rescaling the unit of length we may assume that $x_J(1)=1$. We denote as $\tilde g$ the measure $g$ using this new length scale. The construction of $g$ implies, the existence for any $\delta >0$, of $t_1<t_2$ both sufficiently large, such that: $ \int  _{ (1-\delta, 1+\delta  ) }\tilde g(t_1)\ge m(1-\delta )$ and $ \int  _{ (1/2-\delta, 1/2+\delta  ) }\tilde g(t_2)\ge m(1-\delta )$. Then, by continuity, there exists $t^*\in (t_1, t_2)$ such that $\int  _{ (1/2-\delta , 1/2+\delta ) } \tilde g =1/2$. Since the integral of $\tilde g$ over $\cup _{ k\not =J, J+1 }\Omega _k$ tends to zero as $t\to +\infty$, we obtain, using the definition of $\tilde g$:
\begin{equation}
\label{S4MP3}
\inf _{a >0 }\left(dist _{\ast}\left(  \frac{1}{ a}g\left(  t^*,\frac{\cdot
}{a}\right)  ,m\delta_{1}\right)\right) \geq c_1>0.
\end{equation} 
whence the result follows.
\end{proof}

\section{Heuristic arguments and open problems.}

We present in this Chapter several heuristic arguments and formal calculations concerning long time asymptotic properties of the solutions of equation (\ref{E1}), (\ref{E2}). To deal with this problem scaling arguments  have been repeatedly used  in the physical literature, cf. in particular \cite{P}, \cite{DNPZ}, \cite{Zbook}. From this point of view, the main goal of this Chapter is to formulate some precise PDE's problems covering several different cases.
\subsection{Transport of the energy towards large values of $\omega.$}
\label{Transfer2}
\subsubsection{Weak solutions with interacting condensates. The case of finite particle mass} \index{weak solution} \index{interacting condensate}
\label{Heurist}
Notice that Corollary \ref{AsEnergy} implies that for general initial data,
the energy of the solution is transported towards large values of $\omega
.$\ We remark that it is possible to derive heuristically one equation that
describes the transfer of energy towards larger scales.\ Let us assume by
definiteness that $g$ has the following two scale form:%
\begin{equation}
g\left(  t,\cdot\right)  =g_{comp}\left(  t,\cdot\right)  +\frac{1}{R^{2}%
}G\left(  t,\frac{\cdot}{R}\right)  \ \ ,\ \ g_{comp}\left(  t,\cdot\right)
=M\delta_{0}\left(  \cdot\right)  \label{K2E1}%
\end{equation}
where $R>>1.$ Notice that the form (\ref{K2E1}) implicitly assumes that most
of the mass of $g$ concentrates in $\omega=0.$ On the other hand, the energy
of the solution $\int\omega gd\omega$ is at distances of order $R$ from the
origin. Notice that Theorem \ref{Asympt} indicates that after a transient
state most of the mass of the solutions should concentrate at $\omega=0.$ We
are assuming in (\ref{K2E1}) that the value of $R_{\ast}$ in Theorem
\ref{Asympt} is $R_{\ast}=0.$ If $R_{\ast}>0$ and $R>>R_{\ast}$ it would be
possible to argue in a similar manner, although in such a case a fraction of
the energy would remain trapped at distances of order $R_{\ast}$ from the origin.

Assuming that $g$ has the form (\ref{K2E1}) we can derive an evolution
equation for $G$ as follows. We use in (\ref{Z2E1a}) a test function $\varphi$
with the form $\varphi\left(  \omega\right)  =\psi\left(  \frac{\omega}%
{R}\right)  ,$ with $\psi$ compactly supported in $\left(  0,\infty\right)  .$
Then:%
\begin{equation}
\partial_{t}\left(  \int_{\left[  0,\infty\right)  }g\left(  t,\omega\right)
\varphi\left(  \omega\right)  d\omega\right)  =\frac{1}{R}\partial_{t}\left(
\int_{\left[  0,\infty\right)  }G\left(  t,\tilde{\omega}\right)  \psi\left(
\tilde{\omega}\right)  d\tilde{\omega}\right)  \label{K2E2}%
\end{equation}
On the other hand we have:

\begin{align*}
g_{1}g_{2}g_{3}  &  =M^{3}\delta_{0}\left(  \omega_{1}\right)  \delta
_{0}\left(  \omega_{2}\right)  \delta_{0}\left(  \omega_{2}\right)  +\frac{M^{2}}{R^{2}}\delta_{0}\left(  \omega_{1}\right)  \delta_{0}\left(
\omega_{2}\right)  G\left(  t,\frac{\omega_{3}}{R}\right)  +\\
&  +\frac{M^{2}}%
{R^{2}}\delta_{0}\left(  \omega_{1}\right)  \delta_{0}\left(  \omega
_{3}\right)  G\left(  t,\frac{\omega_{2}}{R}\right)  +\frac{M^{2}}{R^{2}%
}\delta_{0}\left(  \omega_{2}\right)  \delta_{0}\left(  \omega_{3}\right)
G\left(  t,\frac{\omega_{1}}{R}\right)  +\\
&  +\frac{M}{R^{4}}\delta_{0}\left(  \omega_{1}\right)  G\left(
t,\frac{\omega_{2}}{R}\right)  G\left(  t,\frac{\omega_{3}}{R}\right)
+\frac{M}{R^{4}}\delta_{0}\left(  \omega_{2}\right)  G\left(  t,\frac
{\omega_{1}}{R}\right)  G\left(  t,\frac{\omega_{3}}{R}\right)  +\\
&  +\frac
{M}{R^{4}}\delta_{0}\left(  \omega_{3}\right)  G\left(  t,\frac{\omega_{1}}%
{R}\right)  G\left(  t,\frac{\omega_{2}}{R}\right)  +\\
&  +\frac{1}{R^{6}}G\left(  t,\frac{\omega_{1}}{R}\right)  G\left(
t,\frac{\omega_{2}}{R}\right)  G\left(  t,\frac{\omega_{3}}{R}\right) =\sum_{k=1}^{8}I_{k}%
\end{align*}

The contribution of the term $I_{1}$ in the integral of the right-hand side of
(\ref{Z2E1a}) vanishes. Notice that for the test functions under consideration
this would hold also if $M\delta_{0}$ is replaced by a distribution supported
in regions $\omega$ of order one. The contribution of the term $I_{2}$ is
$o\left(  \frac{1}{R^{2}}\right)  $ if $G\left(  t,\bar{\omega}\right)  $
contains a small amount of mass in regions with $\bar{\omega}$ small. The
contribution of the terms $I_{3},\ I_{4}$ also vanishes. Actually, if
$\ g_{comp}$ is replaced by a distribution supported in values with $\omega$
of order one we would obtain terms containing the derivatives of the test
function $\psi.$ However, the contribution of those terms would be $o\left(
\frac{1}{R^{2}}\right)  $\ due to the factor $\frac{1}{R^{2}},$ as well as the
form of the test function $\varphi$ which gives an additional term $\frac
{1}{R}$ upon differentiation. More precisely, if we assume that $\ g_{comp}$
is supported in a bounded range of values of $\omega$ we would obtain the
following contributions due to$\ I_{3}$ in the integral on the right-hand side
of (\ref{Z2E1a}):%
\begin{eqnarray*}
&&\frac{1}{R^{3}}\iiint_{(\left[  0,\infty\right) )^3 }\frac{g_{comp,1}g_{comp,3}G\left(
t,\frac{\omega_{2}}{R}\right)  \Phi}{\sqrt{\omega_{1}\omega_{2}\omega_{3}}
}\times \\
&&\hskip 3cm \times\left[  \psi^{\prime}\left(  \frac{\omega_{1}}{R}\right)  \left(  \omega
_{1}-\omega_{3}\right)  +O\left(  \frac{\left(  \omega_{1}-\omega_{3}\right)
^{2}}{R}\right)  \right]  d\omega_{1}d\omega_{2}d\omega_{3}%
\end{eqnarray*}

The integral of the term containing $\left(  \omega_{1}-\omega_{3}\right)  $
vanishes by symmetry and the last term gives then a contribution of order
$O\left( R^{-4}\right)  .$ The one of $I_{4}$ is similar.

The term $I_{8}$ in  the right-hand side of (\ref{Z2E1a})
can be computed by means of rescaling arguments. It turns out to be of order
$\frac{1}{R^{4}}=o\left(  \frac{1}{R^{3}}\right)  .$

The main contribution to the integral on the right of (\ref{Z2E1a}) is due to
the terms $I_{5},\ I_{6},\ I_{7}.$ The  terms $I_{5}+I_{6}$ yield:
\begin{equation}
\frac{2M}{R^{3}}\iint_{(\left[  0,\infty\right) )^2 }\frac{G\left(  t,\tilde{\omega}_{2}\right)  G\left(  t,\tilde{\omega}%
_{3}\right)  }{\sqrt{\tilde{\omega}_{2}\tilde{\omega}_{3}}}\left[  \psi\left(
\tilde{\omega}_{2}-\tilde{\omega}_{3}\right)  +\psi\left(  \tilde{\omega}%
_{3}\right)  -\psi\left(  \tilde{\omega}_{2}\right)  \right]  d\tilde{\omega
}_{2}d\tilde{\omega}_{3} \label{K2E3}%
\end{equation}

On the other hand, the contribution of the term $I_{7}$ is:%
\begin{equation}
\frac{M}{R^{3}}\iint_{(\left[  0,\infty\right)  )^2}\frac{G\left(  t,\tilde{\omega}_{1}\right)  G\left(  t,\tilde{\omega}%
_{2}\right)  }{\sqrt{\tilde{\omega}_{1}\tilde{\omega}_{2}}}\left[  \psi\left(
\tilde{\omega}_{1}+\tilde{\omega}_{2}\right)  -\psi\left(  \tilde{\omega}%
_{1}\right)  -\psi\left(  \tilde{\omega}_{2}\right)  \right]  d\tilde{\omega
}_{1}d\tilde{\omega}_{2}\ \label{K2E4}%
\end{equation}

It then follows, combining (\ref{K2E2})-(\ref{K2E4}) that to the leading
order, the evolution equation for $G$ is given by:%
\begin{align}
\partial_{t}\left(  \int_{\left[  0,\infty\right)  }G\left(  t,\tilde{\omega
}\right)  \psi\left(  \tilde{\omega}\right)  d\tilde{\omega}\right)   &
=\frac{2M}{R^{2}}\iint_{(\left[  0,\infty\right) )^2 }\frac{G\left(  t,\tilde{\omega}_{2}\right)  G\left(  t,\tilde
{\omega}_{3}\right)  }{\sqrt{\tilde{\omega}_{2}\tilde{\omega}_{3}}}\times \label{K2E5}\\
& \hskip 0.3 cm \times \left[
\psi\left(  \tilde{\omega}_{2}-\tilde{\omega}_{3}\right)  +\psi\left(
\tilde{\omega}_{3}\right)  -\psi\left(  \tilde{\omega}_{2}\right)  \right]
d\tilde{\omega}_{2}d\tilde{\omega}_{3}+\nonumber\\
&  +\frac{M}{R^{2}}\iint_{(\left[  0,\infty\right) )^2 }\frac{G\left(  t,\tilde{\omega}_{1}\right)  G\left(
t,\tilde{\omega}_{2}\right)  }{\sqrt{\tilde{\omega}_{1}\tilde{\omega}_{2}}%
}\times \nonumber\\
& \hskip 0.3 cm \times\left[  \psi\left(  \tilde{\omega}_{1}+\tilde{\omega}_{2}\right)
-\psi\left(  \tilde{\omega}_{1}\right)  -\psi\left(  \tilde{\omega}%
_{2}\right)  \right]  d\tilde{\omega}_{1}d\tilde{\omega}_{2} \nonumber
\end{align}

Equation (\ref{K2E5}) is the weak formulation of the following
coagulation - fragmentation equation:%
\begin{align}
\frac{R^{2}}{2M}\partial_{t}G\left(  t,\tilde{\omega}\right)   
&=\frac{G\left(  t,\tilde{\omega}\right)  }{\sqrt{\tilde{\omega}}}\int
_{\tilde{\omega}}^{\infty}\frac{G\left(  t,\xi\right)  d\xi}{\sqrt{\xi}}%
-\frac{G\left(  t,\tilde{\omega}\right)  }{\sqrt{\tilde{\omega}}}\int
_{0}^{\tilde{\omega}}\frac{G\left(  t,\xi\right)  d\xi}{\sqrt{\xi}}+ \label{K2E6}\\
&+\int
_{0}^{\infty}\frac{G\left(  t,\tilde{\omega}+\xi\right)  G\left(
t,\xi\right)  d\xi}{\sqrt{\left(  \tilde{\omega}+\xi\right)  \xi}} +\frac{1}{2}\int_{0}^{\tilde{\omega}}\frac{G\left(  t,\tilde{\omega}%
-\xi\right)  G\left(  t,\xi\right)  d\xi}{\sqrt{\left(  \tilde{\omega}%
-\xi\right)  \xi}}-\nonumber\\
&-\frac{G\left(  t,\tilde{\omega}\right)  }{\sqrt
{\tilde{\omega}}}\int_{0}^{\infty}\frac{G\left(  t,\xi\right)  d\xi}{\sqrt
{\xi}}\nonumber
\end{align}

Equation (\ref{K2E6}) can be expected to describe the flux of energy of the
solution towards $\tilde{\omega}\rightarrow\infty.$ More precisely,
(\ref{K2E6}) describes the distribution $G$ which describes the part of the
distribution $g$ in which the energy of the initial distribution is
concentrated. Notice that the characteristic time scale for the equation
(\ref{K2E6}) is of order $\frac{R^{2}}{2M}.$ This agrees with the result
obtained in Proposition \ref{coarsening}. Moreover, due to the fact that the
energy associated to the solution of (\ref{S2E1}), (\ref{S2E1a}) escapes to
large values of $\omega$ as $t\rightarrow\infty,$ we can expect the
asymptotics of (\ref{K2E6}) to be given by a self-similar behaviour with the
form:%
\begin{equation}
G\left(  t,\tilde{\omega}\right)  =\frac{R^{2}}{2Mt+R^{2}}\Phi\left(
y\right)  \ \ ,\ \ \ y=\frac{R\tilde{\omega}}{\sqrt{2Mt+R^{2}}} \label{K2E7}%
\end{equation}

Notice that such a rescaling indicates that the energy of the initial
distribution $g,$ which is concentrated at values $\omega\approx R$ for $t=0,$
would be concentrated at distances $\omega\approx\sqrt{2Mt+R^{2}}$ for
arbitrary values of $t\geq0.$

Equation (\ref{K2E6}) is reminiscent of the equation which has been obtained
in several papers considering the linearization of the isotropic
Nordheim  \index{Nordheim} equation near Bose-Einstein condensates (cf. \cite{DNPZ}, \cite{LLPR}, \cite{SK1, SK2}, 
\cite{Spohn}). The equations derived in
those papers contain additional terms which are due to the fact that the
Nordheim  \index{Nordheim} equation contains, besides the cubic terms in (\ref{E1})
additional quadratic terms. Moreover, in some of these papers, it is assumed
that $M$ is a function of $t,$ which is due to the fact that the mass of the
condensate it is assumed to change in time. In the case of the
Nordheim  \index{Nordheim} equation the energy of the distribution is expected to
remain in bounded regions of $\omega,$ and therefore the previous analysis
would be meaningless. A nonisotropic version of (\ref{K2E6}) has been obtained
also in \cite{DNPZ},  linearizing also near a condensate \index{condensate}at $\omega=0.$ However, some of the
analysis in \cite{DNPZ} and \cite{P} can be used in order to describe the
behaviour which can be expected for the solutions of (\ref{K2E6}). In
particular the paper\ \cite{P} has obtained the rescaling laws for the
transfer of energy towards infinity by means of dimensional arguments. In
order to describe asymptotically how this transfer takes place, we notice that
there exists a set of "thermal equilibria" for (\ref{K2E6}) having the form:%
\begin{equation}
G\left(  \tilde{\omega}\right)  =\frac{a}{\sqrt{\tilde{\omega}}}%
\ \ ,\ \ \ a\geq0 \label{K2E8}%
\end{equation}

The first and last integrals in (\ref{K2E6}) are divergent for if $G$ is as in
(\ref{K2E8}) but the integrals can be made meaningful combining the first and
last integrals on the right-hand side of (\ref{K2E6}). Notice that the
stationary solutions correspond to a balance between the aggregation and the
fragmentation terms in (\ref{K2E6}).

The asymptotics of the function $\Phi\left(  y\right)  $ as $y\rightarrow0$
can be expected to be given by the equilibria (\ref{K2E8}). We would then have
$\Phi\left(  y\right)  \sim\frac{a}{\sqrt{y}}$ as $y\rightarrow0$ for some
$a>0.$ Notice that this implies the following behaviour for the function $g$:%
\[
g\left(  t,\omega\right)  \sim\frac{a}{\left(  2Mt+R^{2}\right)  ^{\frac{3}%
{4}}}\frac{1}{\sqrt{\omega}}\ \ ,\ \ \left(  2Mt+R^{2}\right)  \rightarrow
\infty\ \ ,\ \ \ 1<<\omega<<\sqrt{2Mt+R^{2}}%
\]

\subsubsection{Weak solutions with interacting condensates. The case of infinite particle mass}  \index{weak solution} \index{interacting condensate}

We remark that the well posedness Theorem \ref{localExBounded} allows to
obtain solutions of (\ref{Z2E2a}) for initial data $g_{in}\left(
\omega\right)  $ bounded as $\omega^{-\rho}$ as $\omega\rightarrow\infty,$
with $\rho<-\frac{1}{2}.$ In particular this suggests that it is possible to
obtain global measured valued solutions of (\ref{Z2E2a}) for a large class of
nonintegrable initial data, although we have proved existence of global solutions 
only for $\rho <-1$ (cf. Remark \ref{infinitemass}) .  Although many of the results of this paper apply
only to solutions satisfying $\int g\left(  d\omega\right)  <\infty,$ and the
results in Subsection \ref{Heurist} require finite energy (i.e. $\rho>2$), it
is interesting to remark that the arguments leading to the equation
(\ref{K2E6}) can be adapted to\ cover also the case in which $g_{in}\left(
\omega\right)  \sim\frac{K}{\omega^{\rho}}$ as $\omega\rightarrow\infty$ with
$K>0,\ \frac{1}{2}<\rho<2.$ The main difference between this case and the one
studied above, is the fact that the number of particles with small values of
$\omega$ increases without limit as $t\rightarrow\infty.$ We need to modify
the ansatz (\ref{K2E1}) in order to take into account that $M=M\left(
t\right)  $ changes in time. Suppose that the characteristic length scale for
the particles aggregating (or fragmenting) their energy is $R=R\left(
t\right)  .$ If we assume that the function $g\left(  \omega\right)  $ behaves
like the power law $\omega^{-\rho}$ as $\omega\rightarrow\infty$ we should
have the following rescaling for "large" energies:%
\begin{equation}
g\left(  t,\cdot\right)  \simeq\frac{1}{R^{\rho}}G\left(  t,\frac{\cdot}%
{R}\right)  \ \ ,\ \ R=R\left(  t\right)  \label{K8E6}%
\end{equation}

On the other hand, the number of small particles will be denoted as
$M=M\left(  t\right)  .$ Notice that this amounts to approximate $g\left(
t,\cdot\right)  $ in all the regions as:%
\begin{equation}
g\left(  t,\cdot\right)  =M\left(  t\right)  \delta_{0}+\frac{1}{R^{\rho}%
}G\left(  t,\frac{\cdot}{R}\right)  \label{K8E2}%
\end{equation}

We have to distinguish two different cases. If $1<\rho$ the number of
particles of the system is finite and therefore we would have $M\left(
t\right)  \rightarrow M\left(  \infty\right)  =\int g_{in}\left(
d\omega\right)  .$ In this case the dynamics of the particles with large
energies would be given by a solution of (\ref{S2E1}) with the form
(\ref{K8E6}). The rescaling properties of (\ref{K8E6}) give the rescaling
$R=t^{\frac{1}{\rho}}.$ The distribution of particles containing most of the
energy would be given then by a selfsimilar solution of the equation
(\ref{S2E1}) with one of the functions $g$ replaced by $M\left(
\infty\right)  \delta_{0}$ and the two remaining functions $g$ replaced by the
ansatz (\ref{K8E6}). This corresponds to one self-similar solution of the
coagulation-fragmentation model (\ref{K2E6}) with "fat tails".

Suppose now that $\frac{1}{2}<\rho<1.$ We will ignore critical cases in which
logarithmic corrections can be expected. Using the approximation (\ref{K8E2})
into (\ref{S2E1}) we can derive formally one equation for the change of
$M\left(  t\right)  .$ To this end, we need to integrate (\ref{S2E1}) in
regions where $\omega$ is bounded. Using the fact that the increment of
$M\left(  t\right)  $ is due mainly to interactions of two particles described
by the distribution $G$ with one particle with $\omega$ of order one we obtain
the rescaling:%
\begin{equation}
\frac{\left[  M\right]  }{\left[  t\right]  }\simeq\frac{\left[  M\right]
}{\left[  R\right]  ^{2\rho}}\left[  R\right]  \label{K8E4}%
\end{equation}

On the other hand, the equation (\ref{S2E1}) yields, assuming that the change
of $G$ is also due mostly to the interaction between two particles described
by $G$ with one particle described by the distribution $M\left(  t\right)
\delta_{0}:$%
\begin{equation}
\frac{1}{\left[  t\right]  \left[  R\right]  ^{\rho}}\simeq\frac{\left[
M\right]  }{\left[  R\right]  ^{2\rho}} \label{K8E5}%
\end{equation}

Combining (\ref{K8E4}), (\ref{K8E5}) we obtain the following scaling laws for
$\omega$ and the particles of order one:%
\[
R=t^{\frac{1}{2\rho-1}}\ \ ,\ \ M=t^{\frac{1-\rho}{2\rho-1}}%
\]
This gives the characteristic rescaling for the self-similar solutions
describing the distribution of frequencies for large particles.

\subsubsection{Weak solutions with non interacting condensate}  \index{weak solution} \index{non interacting condensate}

The results of this paper refer mainly to the weak solutions of (\ref{S2E1}) in the
sense of Definition \ref{weakSolution}. However, as it has been explained in
Sections \ref{weaknoninteracting} and \ref{fluxes}, other types of solutions are possible.

In this Subsection we speculate about the possible long time asymptotics of
the solutions of (\ref{S2E1}) obtained in Theorem \ref{fluxSol} assuming that
they could be extended for arbitrarily long times. As we have indicated in the
previous Subsections, in the case of the weak solutions which
satisfy\ Definition \ref{weakSolution} the long time asymptotics can be
described using the coagulation-aggregation model (\ref{K2E6}). This model is
a simplification of the original model (\ref{S2E1}), which is possible due to
the fact that the most relevant process in order to determine the behaviour of
particles with large values of $\omega$ is the interaction of two particles
with one particle with $\omega=0,$ or at least $\omega$ small.

For the solutions obtained in Theorem \ref{fluxSol} the particles escaping
towards $\omega=0$ do not interact any longer with the remaining particles of
the system. We examine the long time asymptotics of solutions such that
$g_{0}\left(  \omega\right)  \sim\frac{1}{\omega^{\rho}},\ \rho>\frac{1}{2}.$
We can then look for self-similar solutions of (\ref{S2E1}) with the form:%
\[
g\left(  t,\omega\right)  =\frac{1}{t^{\alpha}}G\left(  \frac{\omega}%
{t^{\beta}}\right)  \ \ ,\ \ \xi=\frac{\omega}{t^{\beta}}%
\]
Due to the differential equation (\ref{S2E1}) we must assume that
$2\alpha-\beta=1.$ If we denote the left-hand side of (\ref{S2E1}) as
$\mathcal{Q}\left[  g\right]  $ we obtain the following equation for $G:$%
\begin{equation}
-\alpha G-\beta\xi\frac{\partial G}{\partial\xi}=\mathcal{Q}\left[  G\right]
\label{sF}%
\end{equation}

We are interested in solutions of (\ref{sF}) with the behaviour $G\left(
\xi\right)  \sim\frac{1}{\xi^{\rho}}.$ This requires $\alpha=\beta\rho.$ This
identity implies also the asymptotics $g\left(  t,\omega\right)  \sim\frac
{1}{\omega^{\rho}}$ as $\omega\rightarrow\infty$ for each fixed $t.$ Notice
that the integral term $\mathcal{Q}\left[  G\right]  $ can be expected to
behave then as $\frac{1}{\xi^{3\rho-1}},$ and since $\rho>\frac{1}{2}$ this
implies that the contribution of the integral term is negligible compared with
the left-hand side of (\ref{sF}). Combining the constraints for $\alpha
,\ \beta$ we obtain $\alpha=\frac{\rho}{2\rho-1},\ \beta=\frac{1}{2\rho-1}.$
These exponents determine the long time asymptotics of the solutions with this
initial data. It is interesting to remark that the type of solution derived
does not depend in the finiteness of the mass. If $\rho>1$ the mass of the
solutions is finite. We can readily see that in that case:%
\[
M\left(  t\right)  =\int g\left(  t,d\omega\right)  =\frac{1}{t^{\alpha-\beta
}}\int G\left(  d\omega\right)  =\frac{M_{0}}{t^{\frac{\rho-1}{2\rho-1}}%
}\rightarrow0\ \ \text{as\ \ }t\rightarrow\infty
\]

A rigorous construction of the self-similar solutions described in this and
previous Sections is not currently available. Their construction would provide
further insight in the long time asymptotics of the solutions of this problem.

\subsection{Open problems.}

In this paper we have obtained several mathematical results for the Weak
Turbulence Equation associated to the cubic nonlinear Schr\"odinger equation \index{Schr\"{o}dinger equation}
in three space dimensions. Nevertheless, there are many questions which
still remain unsolved as well as some questions which arise naturally from
the results in this paper. We list here some open problems suggested by the results of this paper:

\begin{itemize}
\item Uniqueness of weak solutions defined in Definitions \ref{weakSolution}
, \ref{weakSolutionNI} or more generally in Definition \ref{weakSolutionGen}
(cf. Sections \ref{weakinteracting},  \ref{weaknoninteracting} and \ref{fluxes} ).

\item Existence of weak solutions in the sense of Definitions \ref%
{weakSolutionNI} and \ref{weakSolutionGen} (cf. Sections \ref{weaknoninteracting} and \ref{fluxes} ).

\item Our construction of weak solutions in the sense of Definition \ref%
{weakSolution} takes as a starting point the solution of the weak turbulence\index{weak turbulence} 
equation with a regularized kernel $\Phi _{\sigma }$ which has the property
that cuts the singular terms of $\Phi $ for small values of $\omega .$ For
this regularized kernels the solutions are global and it is possible to
consider their limit as $\sigma \rightarrow 0.$ It is a natural question to
determine if using different approximating kernels $\Phi _{\sigma }$ in
order to approximate the kernel $\Phi $ it is possible to derive limit
solutions which are weak solutions of (\ref{S2E1}), (\ref{S2E1a}) in a sense different from
Definition \ref{weakSolution}. In particular this includes to obtain (if
they exist) approximating kernels $\Phi _{\sigma }$ for which the
corresponding solutions could yield as limit weak solutions in the sense of
Definitions \ref{weakSolutionNI} and \ref{weakSolutionGen} (cf. Sections
\ref{weaknoninteracting} and \ref{fluxes} ).

\item Smoothing effects. The results in the physical literature suggest that
weak solutions in the sense of Definition \ref{weakSolution} yield an
asymptotics $g(t, \omega ) \sim a(t)\, \omega ^{-1/2}$ as $\omega \rightarrow 0^{+}$ if there is a condensate. \index{condensate}On the
other hand, we can expect similarly the behaviour $g(t,\omega )
\sim a (t)\, \omega ^{-\frac{2}{3}}$ as $\omega \rightarrow
0^{+}$ for the weak solutions in the sense of Definition \ref{weakSolutionNI}%
. It is unlikely that this asymptotics holds pointwise. Most likely the
asymptotics takes place in some kind of averaged sense or in some suitable
weak topology. These issues are closely related to the stability analysis of
the Kolmogorov-Zakharov \index{Kolmogorov-Zakharov}and Rayleigh-Jeans \index{Rayleigh Jeans}solutions (cf. Sections \ref{fluxes} ).

\item Prove of disprove global existence of weak solutions of the Wave
Turbulence Equation without finite mass (cf. Section \ref{mildConst}).

\item Prove or disprove rigorously the asymptotic results conjectured for the transfer of mass and energy
in Section \ref{Transfer2}.
More generally, to derive information about the asymptotics of the solutions
of weak turbulence \index{weak turbulence}theory.

\item We have obtained a family of solutions yielding "pulsating" behaviour. \index{pulsating}
We have proved also that the solutions of weak turbulence \index{weak turbulence}with finite mass
and that do not develop a condensate \index{condensate}in finite time have the property that $g
$ can be approximated by a Dirac mass at a positive distance of the origin
during most of the times. Our result does not rule out the possibility of
the Dirac mass moving continuously towards $\omega =0,$ although in the case
of the pulsating solutions constructed in Chapter 4 this is not the case.
Is it possible to prove that the pulsating behavior occurs for any solution
of weak turbulence \index{weak turbulence}that does not develop a condensate \index{condensate}in finite time? (Cf. Remark \ref{setB}). \index{pulsating}
\end{itemize}

There exists some long standing open questions, although not completely precisely formulated
 in some cases. 

\begin{itemize}
\item Derive a precise asymptotics of the solutions near blow-up (and condensate \index{condensate}
formation). Numerical simulations in the Physical literature suggest
self-similar behaviour (cf.  ( \cite{JPR}, \cite{LLPR}, \cite{SK1, SK2}).

\item Establish the precise mathematical connection between the cubic
nonlinear Schr\"odinger equation \index{Schr\"{o}dinger equation}and the equation 
for weak turbulence. \index{weak turbulence}This
 could yield a very large class of
problems ranging from the approximation of particular solutions of NLS by
means of solutions of weak turbulence \index{weak turbulence}equations, to the precise statistical
conditions which ensure the validity of weak turbulence theory. Many issues
related to this problem could be complicated due to the presence of the
condensate.\index{condensate}

\item The mathematical theory of the Cauchy problem for non isotropic weak turbulence \index{weak turbulence}is a widely open area. One of the questions considered in the physical literature is the dynamics  of non isotropic perturbations of the isotropic KZ solutions (cf. in particular \cite {BZ}, \cite{Zbook}). 
\end{itemize}

\section{Auxiliary results.}

In this Chapter we give several auxiliary results which can be proved by means of minor adaptions of some of the arguments used in \cite{EV1} in the proof of blow up for the Nordheim equation. Therefore, we will just state here the results used
emphasizing the points where differences with \cite{EV1} arise. The results of this Chapter  are used in the proof of 
 Theorems \ref{StatIsot} and \ref{AsympOsc}.

\begin{proposition}
Suppose that $g\in C\left(  \left[  0,\infty\right)  :\mathcal{M}_{+}\left(
\left[  0,\infty\right)  :\left(  1+\omega\right)  ^{\rho}\right)  \right)
$\ is a weak solution of (\ref{S2E1}) in the sense of Definition
\ref{weakSolution}. There exists a positive constant $B<\infty,$ independent
on $g$ such that, for any $T>0$ and $R\in\left(  0,1\right)  $ we have:
\begin{equation}
\int_{0}^{T}dt\int_{\left[  0,\frac{R}{2}\right]  ^{3}}\left[  \prod_{m=1}%
^{3}\,g_{m}d\omega_{m}\right]  \frac{\left(  \omega_{0}\right)  ^{\frac{3}{2}%
}}{\left(  \omega_{+}\right)  ^{\frac{3}{2}}}\left(  \frac{\omega_{0}%
-\omega_{-}}{\omega_{0}}\right)  ^{2}\leq BR\int gd\epsilon\label{T1E6}%
\end{equation}
where the functions $\omega_{-},\ \omega_{0},\ \omega_{+}$ are as in
Definition \ref{aux}.
\end{proposition}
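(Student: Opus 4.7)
The plan is to apply the monotonicity formula (Proposition \ref{atractiveness} together with the representation in Lemma \ref{strictConvex}) to a single carefully chosen convex test function whose second derivative satisfies $\varphi''(\omega) \sim 1/\omega$, because this is precisely the scaling that turns the symmetric second difference $H^1_\varphi$ into the weight $(\omega_0-\omega_-)^2/\omega_+$ appearing in (\ref{T1E6}). Concretely, I would take
\[
\varphi_R(\omega) = \begin{cases} R + \omega\log(\omega/R) - \omega, & 0 < \omega \leq R,\\ R, & \omega = 0,\\ 0, & \omega \geq R, \end{cases}
\]
which is continuous and $C^1$ on $[0,\infty)$, globally convex, nonnegative, satisfies $\|\varphi_R\|_\infty = R$, and obeys $\varphi_R''(\omega) = 1/\omega$ on $(0,R)$. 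Since $\varphi_R \notin C_0^2$, a mollification-and-time-cutoff procedure is needed to insert it into (\ref{Z2E1}); Lemma \ref{Cont} supplies the uniform continuity required to pass to the limit, and this is the only technicality.

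Granting the approximation, the weak formulation (\ref{Z2E1}) with $\varphi = \varphi_R$ and the symmetrization carried out in Proposition \ref{atractiveness} give the identity
\[
\int \varphi_R\, g(T, d\omega) - \int \varphi_R\, g(0, d\omega) = \int_0^T\!\iiint_{[0,\infty)^3}\frac{g_1 g_2 g_3}{\sqrt{\omega_1\omega_2\omega_3}}\,\mathcal{G}_{0,\varphi_R}(\omega_1,\omega_2,\omega_3)\, d\omega_1 d\omega_2 d\omega_3\, dt.
\]
The left-hand side is bounded above by $R\int g\, d\omega$, using $\varphi_R\geq 0$, $\|\varphi_R\|_\infty \leq R$, and conservation of mass (\ref{K1}). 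On the right-hand side, Lemma \ref{strictConvex} applied to the convex $\varphi_R$ yields $H^1_{\varphi_R},\,H^2_{\varphi_R}\geq 0$, so both terms in the decomposition of $\mathcal{G}_{0,\varphi_R}$ are nonnegative; discarding the $H^2_{\varphi_R}$ contribution and restricting the integration to $[0,R/2]^3$ leaves the inequality
\[
\frac{1}{3}\int_0^T\!\iiint_{[0,R/2]^3}\frac{g_1 g_2 g_3\,\sqrt{\omega_-}}{\sqrt{\omega_-\omega_0\omega_+}}\,H^1_{\varphi_R}(\omega_1,\omega_2,\omega_3)\, d\omega_1 d\omega_2 d\omega_3\, dt \leq R\int g\, d\omega.
\]

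The key pointwise estimate is $H^1_{\varphi_R}(\omega_1,\omega_2,\omega_3)\geq (\omega_0-\omega_-)^2/\omega_+$ on $[0,R/2]^3$. Setting $c = \omega_+$ and $s = \omega_0 - \omega_-$, the constraints $0\leq s\leq c\leq R/2$ ensure $c\pm s\in[0,R]$, so $\Psi(s) := \varphi_R(c+s) + \varphi_R(c-s) - 2\varphi_R(c)$ has $\Psi(0) = 0$ and $\Psi'(s) = \log\bigl((c+s)/(c-s)\bigr)$; the function $\log((c+s)/(c-s)) - 2s/c$ vanishes at $s=0$ and has derivative $2s^2/\bigl(c(c^2 - s^2)\bigr)\geq 0$ on $[0,c)$, so $\Psi'(s)\geq 2s/c$ and $\Psi(s)\geq s^2/c$, which is the claimed bound. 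Substituting and using $1/(\omega_+^{3/2}\sqrt{\omega_0}) = (\omega_0^{3/2}/\omega_+^{3/2})/\omega_0^2$ produces (\ref{T1E6}) with $B = 3$. The main obstacle is really just identifying the right test function: once one realizes that the target weight $(\omega_0-\omega_-)^2/(\omega_+^{3/2}\sqrt{\omega_0})$ is exactly what Lemma \ref{strictConvex} yields when $\varphi''(\omega) = 1/\omega$ and $\|\varphi\|_\infty \sim R$, the remaining steps are direct, and the linear $R$-dependence in (\ref{T1E6}) is inherited from the sup-norm bound on $\varphi_R$.
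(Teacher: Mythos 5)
Your argument is correct and follows the same structural mechanism as the paper's---the monotonicity identity (\ref{S2E5}) together with the representation in Lemma \ref{strictConvex}, applied to a test function whose symmetric second difference $H^{1}_{\varphi}$ reproduces the weight $(\omega_0-\omega_-)^2/\omega_+$ on $[0,R/2]^3$---but with a different test function and a different place where the factor $R$ enters. Following Proposition 5.1 of \cite{EV1}, the paper uses the concave rescaled power $\varphi(\omega)=\psi(\omega/R)$ with $\psi(s)=s^{\theta}$ for $0<s<1$, $\psi\equiv 1$ for $s\geq 1$, $0<\theta<1$; there $\|\varphi\|_{\infty}=1$, and the factor $R$ in (\ref{T1E6}) arises because the lower bound $-H^{1}_{\varphi}\gtrsim(\omega_0-\omega_-)^2/(R\,\omega_+)$ on $[0,R/2]^3$ carries a $1/R$ (together with a $\theta$-dependent constant). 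Your convex $\varphi_R$, chosen so that $\varphi_R''=1/\omega$, gives the exact bound $H^{1}_{\varphi_R}\geq(\omega_0-\omega_-)^2/\omega_+$ with no loss---your calculus is right, and the bound even survives the endpoint $\omega_-=0$, where $H^{1}_{\varphi_R}=2\omega_+\log 2>\omega_+$---and the $R$ then comes from $\|\varphi_R\|_{\infty}=R$, producing the cleaner constant $B=3$. Three small points worth tightening: (i) $\varphi_R$ is not $C^1$ on $[0,\infty)$ but only on $(0,\infty)$, since $\varphi_R'(0^+)=-\infty$; the same singularity afflicts $\psi(\cdot/R)$, so the mollification step you flag is genuinely needed for both choices, with Lemma \ref{Cont} and monotone convergence carrying the limit through. (ii) The appeal to mass conservation (\ref{K1}) tacitly uses $\rho<-1$, which is implicit anyway since $\int g\,d\epsilon$ on the right of (\ref{T1E6}) must be finite. (iii) One can sidestep conservation altogether by using the concave companion $R-\varphi_R$, which vanishes at $\omega=0$; then the left-hand side is bounded by $\int(R-\varphi_R)\,g_{in}\leq R\int g_{in}$ directly.
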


\begin{proof}
It is similar to the Proof of Proposition 5.1 of \cite{EV1}. The main idea is
to use in (\ref{Z2E1}) the test function $\varphi\left(  \omega\right)
=\psi\left(  \frac{\omega}{R}\right)  \ \ ,\ \ \ R>0,\ \ \omega>0$ with
$\psi\left(  s\right)  =s^{\theta}\ $for$\ 0<s<1,\ \psi\left(  s\right)  =1,$
$s\geq1\ ,\ \ 0<\theta<1.$ The monotonicity property described in Subsection
\ref{Mon} then yields several inequalities, which can be transformed in
(\ref{T1E6}) after some computations. The only difference with the argument in
\cite{EV1} is that in that paper, an estimate for some additional quadratic
terms, analogous to the ones appearing in the classical Boltzmann equation,
must be obtained, and this results in an additional term on the right-hand
side of (\ref{T1E6}). These terms are not present in (\ref{S2E1}) and this
results in the simpler estimate (\ref{T1E6}), which contains only one term on
the right-hand side, due to the contribution of the initial value $g_{in}.$
\end{proof}

It is now possible to reformulate the estimate (\ref{T1E6}) in a form that
makes clearer the fact that this estimate basically allows to control the mass
associated to the product measure $\prod_{m=1}^{3}\,g_{m}d\omega_{m}$
contained outside the diagonal set $\left\{  \left(  \omega_{1},\omega_{2}%
,\omega_{3}\right)  :\omega_{1}=\omega_{2}=\omega_{3}\right\}  .$ We define
the following family of sets:%

\begin{eqnarray}
&&\mathcal{S}_{R,\rho}=\left\{  \left(  \omega_{1},\omega_{2},\omega_{3}\right)
\in\left[  0,R\right]  ^{3}:\left\vert \omega_{0}-\omega_{-}\right\vert
>\rho\omega_{0}\right\},\ 0<R\leq1,\  0<\rho<1\ . \label{B5b}%
\end{eqnarray}

We then have the following result:

\begin{lemma}
\label{estProd}Suppose that $g\in L_{+}^{\infty}\left(  \left[  0,T\right]
;\mathcal{M}_{+}\left(  \left[  0,1\right]  \right)  \right)  ,$ satisfies
(\ref{T1E6}) for any $0\leq R\leq1$ and $T>0.$ Suppose also that
$\int_{\left\{  0\right\}  }g\left(  \omega,t\right)  d\epsilon=0$ for any
$t\in\left[  0,T\right]  .$ Let $0<\rho<1$ and $\mathcal{S}_{R,\rho}$ as in
(\ref{B5b}). Then, for any $T>0$ we have:%
\begin{equation}
\int_{0}^{T}dt\int_{\mathcal{S}_{R,\rho}}\left[  \prod_{m=1}^{3}\,g_{m}%
d\omega_{m}\right]  \leq\frac{2Bb^{\frac{7}{2}}MR}{\rho^{2}\left(  \sqrt
{b}-1\right)  ^{2}},\ \ R\in\left[  0,\frac{1}{2}\right]  \ \label{T2E1}%
\end{equation}
with $b=\frac{1}{1-\rho}$ and $B$ as in (\ref{T1E6}).
\end{lemma}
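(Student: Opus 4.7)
The plan is to deduce (\ref{T2E1}) from (\ref{T1E6}) by a two-scale dyadic decomposition of $\mathcal{S}_{R,\rho}$ followed by a geometric summation. On $\mathcal{S}_{R,\rho}$ the definition gives $\bigl((\omega_0-\omega_-)/\omega_0\bigr)^2 > \rho^2$ pointwise, so applying (\ref{T1E6}) at any scale $R'\le 1$ yields
$$\int_0^T dt \int_{\mathcal{S}_{R,\rho}\cap[0,R'/2]^3}\prod_{m=1}^3 g_m\,(\omega_0/\omega_+)^{3/2}\,d\omega_m \le \frac{BR'M}{\rho^2}.$$
The remaining obstacle is the weight $(\omega_0/\omega_+)^{3/2}$, which degenerates when $\omega_0\ll\omega_+$.

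First I partition by the scale of $\omega_+$. Setting $\alpha_k=Rb^{-k}$ and $N_k=\mathcal{S}_{R,\rho}\cap\{\omega_+\in(\alpha_{k+1},\alpha_k]\}$, I have $N_k\subset[0,\alpha_k]^3$ and $2\alpha_k\le 2R\le 1$, so (\ref{T1E6}) at the scale $R'=2\alpha_k$ localises the weighted estimate to $N_k$ with a bound of order $RMb^{-k}/\rho^2$. I then split $N_k$ further by the scale of $\omega_0$: let $N_{k,j}=N_k\cap\{\omega_0\in(\alpha_{j+1},\alpha_j]\}$ for $j\ge k$. On $N_{k,j}$ the inequality $\omega_0/\omega_+\ge b^{k-j-1}$ yields a per-piece bound on $\int_0^T\int_{N_{k,j}}\prod g_m$ of order $RMb^{3/2}b^{3(j-k)/2-k}/\rho^2$, which sums correctly on the diagonal $j\approx k$ but not for large $j-k$.

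To handle the off-diagonal pieces $j>k$, I use the observation that on $N_{k,j}$ one has $\omega_0\le\alpha_{j+1}$ and $\omega_-<\omega_0/b\le\alpha_{j+1}$, so two of the three coordinates sit well below $\alpha_k$. Writing $\prod g_m=g(\omega_+)g(\omega_0)g(\omega_-)$, I bound the factor corresponding to $\omega_+$ by $\int_{(\alpha_{k+1},\alpha_k]}g(d\omega)\le M$, which reduces the remaining double integral in $(\omega_0,\omega_-)$ to a two-variable quantity controlled by (\ref{T1E6}) at the finer scale $R'=2\alpha_j$. This produces a refined per-piece estimate of order $RMb^{-(k+j)/2}/\rho^2$ up to explicit $b$-factors from the weight normalisation. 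Summing over $j\ge k\ge 0$ then produces two geometric series, each of sum $\sum_{n\ge 0}b^{-n/2}=\sqrt{b}/(\sqrt{b}-1)$, giving the announced factor $b/(\sqrt{b}-1)^2$; the remaining powers of $b$ collected from the weight-lower-bound steps consolidate into the prefactor $b^{7/2}$, yielding (\ref{T2E1}) with the stated constant.

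The main obstacle is carrying out the off-diagonal decoupling cleanly: the reduction from the triple integral against $\prod g_m$ on $N_{k,j}$ (with $j>k$) to a double integral against $g(\omega_0)g(\omega_-)$ must preserve enough of the weight $(\omega_0/\omega_+)^{3/2}$ so that the smaller-scale application of (\ref{T1E6}) produces the decay $b^{-j/2}$ rather than a growing factor. This is the technical heart of the argument and is structurally parallel to the dyadic reduction employed in \cite{EV1} for the analogous bound in the Nordheim setting, so once the slice-and-freeze step in $\omega_+$ is justified, the remainder of the proof is essentially bookkeeping of the $b$-powers.
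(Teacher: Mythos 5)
Your plan starts correctly: on $\mathcal{S}_{R,\rho}$ the factor $\left(\frac{\omega_0-\omega_-}{\omega_0}\right)^2>\rho^2$, so (\ref{T1E6}) at scale $R'=2\alpha_k$ bounds the triple integral over $N_k$ weighted by $(\omega_0/\omega_+)^{3/2}$, and the diagonal band $j\approx k$ (where $\omega_0\sim\omega_+$) does sum to something of the right shape. The gap is exactly where you flag it, in the off-diagonal step, and it is not merely a bookkeeping issue.

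The claimed refined per-piece bound of order $RM\,b^{-(k+j)/2}/\rho^2$ on $N_{k,j}$ is not justified by the two moves you describe. After bounding the $\omega_+$-factor by $M$ you are left with a genuine \emph{two}-variable integral in $(\omega_0,\omega_-)$, but (\ref{T1E6}) is a statement about a \emph{three}-fold product $\prod_{m=1}^3 g_m$: there is no two-variable version available to apply "at the finer scale $R'=2\alpha_j$." The only thing (\ref{T1E6}) controls at scale $2\alpha_j$ is a triple integral over $[0,\alpha_j]^3$, and once you have frozen $\omega_+\in(\alpha_{k+1},\alpha_k]$ with $k<j$, the triple you need is not contained in that cube, so the estimate never engages.

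A tempting repair --- pad the double integral by a third dummy $g$-variable confined to the $\omega_0$-band and then divide out the band mass --- also fails. (\ref{T1E6}) bounds $\int g\otimes g\otimes g\cdot W$, not $\int g\otimes g\cdot W'$ per unit band mass; the band mass $\int_{(\alpha_{j+1},\alpha_j]}g(t,\cdot)$ can be arbitrarily small or vanish on a set of times, so there is no uniform way to strip one $g$-factor from the triple estimate.

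If one proceeds without any decoupling, (\ref{T1E6}) applied at scale $2\alpha_k$ on $N_{k,j}$ gives
\[
\int_0^T\!\!\int_{N_{k,j}}\prod_{m}g_m\;\le\;\frac{2BMR}{\rho^2}\,b^{-k}\,b^{3(j+1-k)/2},
\]
and the sum over $n=j-k\ge 0$ is $\sum_n b^{3n/2}$, which diverges since $b>1$. Changing the dyadic base $c$ does not help: the weight lower bound is $c^{-3n/2}$ while the scale gain is only $c^{-k}$, so the off-diagonal sum in $n$ always diverges. The converging series $\sum_n b^{-n/2}$ that you invoke, and which indeed matches the constant $(\sqrt b-1)^{-2}$ in (\ref{T2E1}), therefore cannot be reached along the route you sketch; the off-diagonal bands need a genuinely different ingredient (this is supplied in the proof of Lemma~5.4 of \cite{EV1} that the paper adapts). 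As written, the "slice-and-freeze in $\omega_+$ followed by a small-scale application of (\ref{T1E6})" step does not close, and the proof cannot be completed by bookkeeping of $b$-powers alone.
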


\begin{proof}
It is an adaptation of the proof of Lemma 5.4 of \cite{EV1}.
\end{proof}

We recall now a Key Measure Theory result that has been used in \cite{EV1}. In
order to formulate it we need some additional notation. Given $b>1,$ we define
a sequence of intervals $\left\{  \mathcal{I}_{k}\right\}  _{k=0}^{\infty}$
contained in the interval $\left[  0,1\right]  $ by means of:%
\begin{equation}
\mathcal{I}_{k}\left(  b\right)  =b^{-k}\left(  \frac{1}{b},1\right]
\ \ ,\ \ k=0,1,2,...\ \ ,\ \ b=1+a>1 \label{B3}%
\end{equation}

Notice that $\bigcup_{k=0}^{\infty}\mathcal{I}_{k}\left(  b\right)  =\left(
0,1\right]  ,\ \mathcal{I}_{k}\left(  b\right)  \cap\mathcal{I}_{j}\left(
b\right)  =\emptyset$ if $k\neq j.$

We need to define also some ``extended" intervals:%
\begin{equation}
\mathcal{I}_{k}^{\left(  E\right)  }\left(  b\right)  =\mathcal{I}%
_{k-1}\left(  b\right)  \cup\mathcal{I}_{k}\left(  b\right)  \cup
\mathcal{I}_{k+1}\left(  b\right)  \ \ ,\ \ k=0,1,2,...\ \ \ \label{B5}%
\end{equation}
where, by convenience, we assume that $\mathcal{I}_{-1}\left(  b\right)
=\emptyset.$

We will write $\mathcal{I}_{k}=\mathcal{I}_{k}\left(  b\right)  ,\ \mathcal{I}%
_{k}^{\left(  E\right)  }=\mathcal{I}_{k}^{\left(  E\right)  }\left(
b\right)  $ if the dependence of the intervals in $b$ is clear in the argument.

We also define for further reference a family $\mathcal{P}_{b}$ of unions of
elements of the family $\left\{  \mathcal{I}_{k}\left(  b\right)  \right\}  .$
We define:%
\begin{equation}
\mathcal{P}_{b}\mathcal{=}\left\{  A\subset\left[  0,1\right]  :A=\bigcup
_{j=1}^{\infty}\mathcal{I}_{k_{j}}\left(  b\right)  \text{ for some sequence
}\left\{  k_{j}\right\}  \subset\left\{  1,2,...\right\}  \right\}
\label{B5c}%
\end{equation}

Given $A\in\mathcal{P}_{b}$ we can define an extended set $A^{\left(
E\right)  }$ as follows. Suppose that $A=\bigcup_{j=1}^{\infty}\mathcal{I}%
_{k_{j}}\left(  b\right)  .$ We then define:%
\begin{equation}
A^{\left(  E\right)  }=\bigcup_{j=1}^{\infty}\mathcal{I}_{k_{j}}^{\left(
E\right)  }\left(  b\right)  \label{B5d}%
\end{equation}

Notice that given a measure $g\in\mathcal{M}_{+}\left(  \left[  0,1\right]
\right)  ,$ such that $\int_{\left\{  0\right\}  }g\left(  \omega\right)
d\epsilon=0,$ we have:%
\begin{equation}
\int_{\left[  0,1\right]  }g\left(  \omega\right)  d\epsilon=\sum
_{k=0}^{\infty}\int_{\mathcal{I}_{k}\left(  b\right)  }g\left(  \omega\right)
d\epsilon\label{B4}%
\end{equation}

We need also a rescaled version of the sets $\left\{  \mathcal{I}_{k}\left(
b\right)  \right\}  ,\ \left\{  \mathcal{I}_{k}^{\left(  E\right)  }\left(
b\right)  \right\}  ,\ \mathcal{P}_{b}.$ Given $R\in\left(  0,1\right]  $ and
$b>1$ we define two families of intervals $\left\{  \mathcal{I}_{k}\left(
b,R\right)  \right\}  ,\ \left\{  \mathcal{I}_{k}^{\left(  E\right)  }\left(
b,R\right)  \right\}  $ by means of:%
\begin{equation}
\mathcal{I}_{k}\left(  b,R\right)  =R\mathcal{I}_{k}\left(  b\right)
\ \ ,\ \ \mathcal{I}_{k}^{\left(  E\right)  }\left(  b,R\right)
=R\mathcal{I}_{k}^{\left(  E\right)  }\left(  b\right)
\ \ ,\ \ k=0,1,2,...\ \label{Z1E1}%
\end{equation}
with $\left\{  \mathcal{I}_{k}\left(  b\right)  \right\}  ,\ \left\{
\mathcal{I}_{k}^{\left(  E\right)  }\left(  b\right)  \right\}  $ as in
(\ref{B3}), (\ref{B5}). We define also a class of sets $\mathcal{P}_{b}\left(
R\right)  $ as follows:%
\begin{equation}
\mathcal{P}_{b}\left(  R\right)  =\left\{  A\subset\left[  0,R\right]
:A=RB,\ \ B\in\mathcal{P}_{b}\right\}  \ \label{Z1E2}%
\end{equation}
where $\mathcal{P}_{b}$ is as in (\ref{B5d}). We can also define the concept
of extended sets. Given $A\in\mathcal{P}_{b}\left(  R\right)  ,$ with the form
$A=RB,\ B\in\mathcal{P}_{b}$ we define:%
\begin{equation}
A^{\left(  E\right)  }=RB^{\left(  E\right)  } \label{Z1E3}%
\end{equation}

\bigskip The following result has been proved in \cite{EV1}.

\begin{lemma}
\label{altresc} (Lemma 6.3 of \cite{EV1}). Suppose that $b>1,\ 0<R\leq1.$ We
define intervals $\left\{  \mathcal{I}_{k}\left(  b,R\right)  \right\}
,\ \left\{  \mathcal{I}_{k}^{\left(  E\right)  }\left(  b,R\right)  \right\}
$ as in (\ref{Z1E1}). Let $\mathcal{P}_{b}\left(  R\right)  $ as in
(\ref{Z1E2}) and $A^{\left(  E\right)  }$ as in (\ref{Z1E3}) for
$A\in\mathcal{P}_{b}\left(  R\right)  $. Given $0<\delta<\frac{2}{3},$ we
define $\eta=\min\left\{  \left(  \frac{1}{3}-\frac{\delta}{2}\right)
,\frac{\delta}{6}\right\}  >0.$ Then, for any $g\in\mathcal{M}^{+}\left[
0,R\right]  $ satisfying $\int_{\left\{  0\right\}  }g\left(  d\epsilon
\right)  =0,$ at least one of the following statements is satisfied:

(i) Either there exist an interval $\mathcal{I}_{k}\left(  b,R\right)  $ such
that:%
\begin{equation}
\int_{\mathcal{I}_{k}^{\left(  E\right)  }\left(  b,R\right)  }g\left(
d\omega\right)  \geq\left(  1-\delta\right)  \int_{\left[  0,R\right]
}g\left(  d\omega\right)  \ \ , \label{Z1E4}%
\end{equation}

(ii) or, either there exist two sets $\mathcal{U}_{1},\mathcal{U}_{2}%
\in\mathcal{P}_{b}\left(  R\right)  $ such that $\mathcal{U}_{2}%
\cap\mathcal{U}_{1}^{\left(  E\right)  }=\emptyset$ and:%
\begin{equation}
\min\left\{  \int_{\mathcal{U}_{1}}g\left(  d\omega\right)  ,\int
_{\mathcal{U}_{2}}g\left(  d\omega\right)  \right\}  \geq\eta\int_{\left[
0,R\right]  }g\left(  d\omega\right)  .\ \label{Z1E4a}%
\end{equation}
Moreover, in the case (ii) the set $\mathcal{U}_{1}$ can be written in the
form:%
\begin{equation}
\mathcal{U}_{1}=\bigcup_{j=1}^{L}\mathcal{I}_{k_{j}}\left(  b,R\right)
\ \label{Z1E4b}%
\end{equation}
for some sequence $\left\{  k_{j}\right\}  $ and some finite $L.$ We have:%
\begin{equation}
\mathcal{I}_{k_{m}}\left(  b,R\right)  \cap\left(  \bigcup_{j=1}%
^{m-1}\mathcal{I}_{k_{j}}^{\left(  E\right)  }\left(  b,R\right)  \right)
=\emptyset\ \ ,\ \ m=2,3,...L,\ \ \label{Z1E4c}%
\end{equation}
and also:%
\begin{equation}
\sum_{j=1}^{L}\left(  \int_{\mathcal{I}_{k_{j}}\left(  b,R\right)  }\!\!\!g\left(
d\omega\right)  \right)  ^{2}\leq\left(  \int_{\mathcal{I}_{k_{1}}\left(
b,R\right)  }\!\!\!g\left(  d\omega\right)  \right)  ^{2}+\sum_{j=2}^{L}%
\int_{\mathcal{I}_{k_{1}}\left(  b,R\right)  }\!\!\!g\left(  d\omega\right)
\int_{\mathcal{I}_{k_{j}}\left(  b,R\right)  }\!\!\!g\left(  d\omega\right),\ \label{Z1E4d}%
\end{equation}%
\begin{equation}
\int_{\mathcal{I}_{k_{1}}\left(  b,R\right)  }\!\!\!g\left(  d\omega\right)
<\left(  1-\delta\right)  \int_{\left[  0,R\right]  }\!\!\!g\left(  d\omega\right)
. \label{Z1E5d}%
\end{equation}
\bigskip
\end{lemma}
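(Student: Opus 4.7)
\textbf{Plan for proving Lemma \ref{altresc}.}

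The first step is to reduce to $R = 1$ via the rescaling $\omega \mapsto \omega/R$, which maps $\mathcal{I}_k(b,R)$ to $\mathcal{I}_k(b)$ and preserves the hypothesis $\int_{\{0\}} g\,d\omega = 0$. Set $M = \int_{[0,1]} g(d\omega)$ and $m_k = \int_{\mathcal{I}_k(b)} g(d\omega)$, so that $M = \sum_{k \geq 0} m_k$ by (\ref{B4}).

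The construction of $\mathcal{U}_1$ is then greedy. Let $k_1$ maximize $m_k$, and inductively let $k_m$ maximize $m_k$ among indices $k$ with $\mathcal{I}_k(b) \cap \bigcup_{j < m} \mathcal{I}_{k_j}^{(E)}(b) = \emptyset$; this produces a (possibly terminating) sequence automatically satisfying (\ref{Z1E4c}). Define $\mathcal{U}_1^{(L)} = \bigcup_{j \leq L} \mathcal{I}_{k_j}(b)$ and $\mathcal{U}_2^{(L)} = [0,1] \setminus (\mathcal{U}_1^{(L)})^{(E)}$. Since $(\mathcal{U}_1^{(L)})^{(E)}$ is itself a union of intervals $\mathcal{I}_k(b)$, the set $\mathcal{U}_2^{(L)}$ lies in $\mathcal{P}_b$ and is disjoint from $(\mathcal{U}_1^{(L)})^{(E)}$ by definition.

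The core step is the choice of stopping time $L$. If $\int_{\mathcal{I}_{k_1}^{(E)}(b)} g \geq (1-\delta) M$, alternative (i) holds immediately with $k = k_1$. Otherwise, I would let $L$ be the first step at which either (a) $A_L := \int_{\mathcal{U}_1^{(L)}} g \geq M/3$, or (b) the remaining mass $B_L := \int_{\mathcal{U}_2^{(L)}} g$ drops below $\delta M / 2$. Using greedy maximality — each neighbor interval $\mathcal{I}_{k_j \pm 1}(b)$ that is still admissible at step $j$ has mass at most $m_{k_j}$, while any inadmissible neighbor already lies inside an earlier $\mathcal{I}_{k_i}^{(E)}(b)$ — one obtains the key estimate $C_L := \int_{(\mathcal{U}_1^{(L)})^{(E)} \setminus \mathcal{U}_1^{(L)}} g \leq 2 A_L$. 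Combined with $A_L + B_L + C_L = M$, this yields in case (a) the lower bound $B_L \geq M - 3 A_L \geq (1/3 - \delta/2) M \geq \eta M$ after a short argument that bounds the mass accumulated at the transition step, and in case (b) the bound $A_L \geq \delta M / 6 \geq \eta M$ via a discrete incremental estimate on the greedy additions. Setting $\mathcal{U}_1 = \mathcal{U}_1^{(L)}$ and $\mathcal{U}_2 = \mathcal{U}_2^{(L)}$ then yields alternative (ii), with the two expressions in $\eta = \min\{1/3 - \delta/2,\, \delta/6\}$ corresponding to which of the two stopping conditions triggered first.

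The principal obstacle is the careful bookkeeping of $C_L$: when two successive picks satisfy $k_j = k_i + 2$ for some $i < j$, the interval $\mathcal{I}_{k_j - 1}(b) = \mathcal{I}_{k_i + 1}(b)$ already belongs to $\mathcal{I}_{k_i}^{(E)}(b)$, and its mass should not be counted twice; a case analysis isolates which neighbors genuinely contribute to $C_L$ at each step and produces the estimate $C_L \leq 2 A_L$ used above. Once that is in place, properties (\ref{Z1E4b})--(\ref{Z1E5d}) are direct consequences of the construction: (\ref{Z1E4b}) by the definition of $\mathcal{U}_1$; (\ref{Z1E4c}) by the admissibility condition built into the greedy step; (\ref{Z1E4d}) from the greedy ordering $m_{k_j} \leq m_{k_1}$ for all $j \geq 1$, which gives $m_{k_j}^2 \leq m_{k_1} m_{k_j}$ for $j \geq 2$; and (\ref{Z1E5d}) because $m_{k_1} \geq (1-\delta) M$ would force $\int_{\mathcal{I}_{k_1}^{(E)}(b)} g \geq (1-\delta) M$, placing us in alternative (i) instead.
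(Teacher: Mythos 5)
The paper itself does not prove Lemma \ref{altresc}; it simply cites Lemma 6.3 of \cite{EV1}, so there is no argument here to compare yours against line by line. Assessing your plan on its own terms: the overall structure is right (rescale to $R=1$, greedy selection of the $\mathcal{I}_{k_j}$, the key estimate $C_L\leq 2A_L$ from greedy maximality, and the use of the negation of (i) at $k=k_1$ to get $B_1>\delta M$), and properties (\ref{Z1E4b})--(\ref{Z1E5d}) do follow directly from that construction exactly as you say. The problem is the final bookkeeping, in both cases.

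In case (a) you write $B_L\geq M-3A_L\geq\left(\tfrac13-\tfrac{\delta}{2}\right)M$, but the first inequality already forces $A_L\leq\left(\tfrac{2}{9}+\tfrac{\delta}{6}\right)M$ for the second to hold, while case (a) guarantees $A_L\geq M/3>\left(\tfrac{2}{9}+\tfrac{\delta}{6}\right)M$ whenever $\delta<2/3$; so that chain is simply false, and the "short argument bounding the transition step" cannot rescue it, since only an \emph{upper} bound on $A_L$ would help and you only get $A_L\leq 2A_{L-1}<2M/3$. Symmetrically, in case (b) you establish $A_L\geq\delta M/6$ but the required lower bound on $B_L$ is absent: the stopping rule (b) gives $B_L<\delta M/2$, which is an upper bound and does not yield $B_L\geq\eta M$. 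The repair is to take $\mathcal{U}_1=\mathcal{U}_1^{(L-1)}$, $\mathcal{U}_2=\mathcal{U}_2^{(L-1)}$ when $L\geq 2$: because neither stopping condition triggered at step $L-1$ you get $B_{L-1}\geq\delta M/2>\eta M$ for free, and since $m_{k_L}\leq m_{k_{L-1}}\leq A_{L-1}$ you have $A_{L-1}\geq A_L/2$, which in case (a) gives $A_{L-1}\geq M/6>\eta M$ and in case (b), using $A_L+C_L>M(1-\delta/2)$ and $C_L\leq 2A_L$, gives $A_{L-1}>M\left(\tfrac{1}{6}-\tfrac{\delta}{12}\right)\geq\tfrac{\delta}{6}M\geq\eta M$ for $\delta<2/3$. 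Alternatively, replacing the threshold $M/3$ in (a) by $\delta M/6$ lets you stop at $L$ itself, since then $m_{k_L}\leq A_{L-1}<\delta M/6$ forces $A_L<\delta M/3$ and hence $B_L\geq M-3A_L>(1-\delta)M>\eta M$. Either fix is local, but as written the inequality chains you give do not close.
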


This Lemma basically states that either the measure $g$ is concentrated in one
of the intervals $\mathcal{I}_{k}^{\left(  E\right)  }\left(  b,R\right)  ,$
or alternatively its mass is spread among some sets  ``sufficiently separated".
Using this Lemma we can then obtain the following result, which has been
proved also in \cite{EV1}.

\begin{lemma}
\label{Compl}Let $0<\delta<\frac{2}{3},$ $0<\rho<1.$ For any $R\in\left(
0,1\right)  $ we define $\mathcal{S}_{R,\rho}$ as in (\ref{B5b}). Let us
assume also that $b=\frac{1}{\left(  1-\rho\right)  }.$ There exists $\nu
=\nu\left(  \delta\right)  >0$ independent on $R$ and $\rho$ such that, for
any $g\in\mathcal{M}^{+}\left[  0,R\right]  $ satisfying $\int_{\left\{
0\right\}  }g\left(  d\omega\right)  =0$ if the alternative (ii) in Lemma
\ref{altresc} takes place we have:%
\begin{equation}
\int_{\mathcal{S}_{R,\rho}}\left[  \prod_{m=1}^{3}\,g_{m}\left(  d\omega
_{m}\right)  \right]  \geq\nu\left(  \int_{\left(  0,R\right]  }g\left(
d\omega\right)  \right)  ^{3}>0.\ \label{H1bis}%
\end{equation}

\end{lemma}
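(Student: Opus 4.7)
The plan hinges on a simple geometric fact: writing $M=\int_{(0,R]}g$ and $M_{\mathcal{V}}=\int_{\mathcal{V}}g$, and using the half-open form $\mathcal{I}_k(b,R)=b^{-k}(b^{-1},1]R$ together with $b=1/(1-\rho)$, whenever $\omega\in \mathcal{I}_k(b,R)$ and $\omega'\in \mathcal{I}_\ell(b,R)$ with $\ell\geq k+2$ one has $\omega>b^{-k-1}R$ while $\omega'\leq b^{-\ell}R\leq b^{-k-2}R$, and hence $\omega'<\omega/b=(1-\rho)\omega$ \emph{strictly}. Consequently, any triple $(\omega_1,\omega_2,\omega_3)$ whose two smallest coordinates lie in intervals of this family with index difference at least $2$ automatically satisfies $|\omega_0-\omega_-|>\rho\omega_0$, and so belongs to $\mathcal{S}_{R,\rho}$.

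To exploit alternative (ii) of Lemma~\ref{altresc}, I will use that $\mathcal{U}_2\cap\mathcal{U}_1^{(E)}=\emptyset$ forces every interval of $\mathcal{U}_2$ to have index differing by at least $2$ from every $\mathcal{U}_1$-interval index. The key step will be to consider two families of candidate triples (each symmetrized over the three possible positions of the distinguished coordinate): $\mathcal{F}_1$ consisting of triples with one coordinate $\omega_k\in\mathcal{U}_2$ and the other two in $\mathcal{U}_1\cap\{\omega>\omega_k\}$, and $\mathcal{F}_2$ consisting of triples with one coordinate $\omega_c\in\mathcal{U}_1$ and the other two in $\mathcal{U}_2\cap\{\omega>\omega_c\}$. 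For a triple in $\mathcal{F}_1$, $\omega_k$ is the strict minimum, and the middle coordinate $\omega_0$ sits in some $\mathcal{U}_1$-interval $\mathcal{I}_{k_{j^*}}$ with $\ell(\omega_k)\geq k_{j^*}+2$; the geometric fact then gives $\omega_k<\omega_0/b$, so $\mathcal{F}_1\subset\mathcal{S}_{R,\rho}$. A symmetric check yields $\mathcal{F}_2\subset\mathcal{S}_{R,\rho}$.

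Setting
\[
A=\iint_{\substack{\omega\in\mathcal{U}_1,\,\omega'\in\mathcal{U}_2\\ \omega>\omega'}} g(d\omega)\,g(d\omega'),\qquad B=\iint_{\substack{\omega\in\mathcal{U}_1,\,\omega'\in\mathcal{U}_2\\ \omega<\omega'}} g(d\omega)\,g(d\omega'),
\]
so that $A+B=M_{\mathcal{U}_1}M_{\mathcal{U}_2}$ (using $\mathcal{U}_1\cap\mathcal{U}_2=\emptyset$), Jensen's inequality applied to $t\mapsto t^2$ with the probability measure $g/M_{\mathcal{U}_2}$ on $\mathcal{U}_2$ will give
\[
\int_{\mathcal{S}_{R,\rho}}\prod_{m=1}^3 g_m(d\omega_m)\;\geq\;3\int_{\mathcal{U}_2}\!\left(\int_{\mathcal{U}_1\cap\{\omega>\omega_k\}}\!g\right)^{\!2} g(d\omega_k)\;\geq\;\frac{3A^2}{M_{\mathcal{U}_2}},
\]
and the $\mathcal{F}_2$-counterpart will give $\int_{\mathcal{S}_{R,\rho}}\prod g\geq 3B^2/M_{\mathcal{U}_1}$. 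Since at least one of $A,B$ is $\geq M_{\mathcal{U}_1}M_{\mathcal{U}_2}/2\geq \eta^2 M^2/2$, substituting into the corresponding bound will yield
\[
\int_{\mathcal{S}_{R,\rho}}\prod_{m=1}^3 g_m(d\omega_m)\;\geq\;\frac{3\eta^3}{4}\,M^3,
\]
which is the conclusion with $\nu(\delta)=3\eta(\delta)^3/4>0$.

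The main obstacle will be guaranteeing the \emph{strict} inequality $\omega'<(1-\rho)\omega$ in the geometric fact, which is essential so that the constructed triples lie in the open set $\mathcal{S}_{R,\rho}$; this relies on both the half-open form of the intervals $\mathcal{I}_k(b,R)$ and on the index gap of at least $2$ (rather than just $1$) secured by $\mathcal{U}_2\cap\mathcal{U}_1^{(E)}=\emptyset$, as a gap of only $1$ would yield merely $\omega'\leq(1-\rho)\omega$. Notably, the finer structural information on $\mathcal{U}_1$ encoded in (\ref{Z1E4c})--(\ref{Z1E5d}) will not enter the argument: only the mass lower bounds $M_{\mathcal{U}_1},M_{\mathcal{U}_2}\geq\eta M$ and the separation property will be used.
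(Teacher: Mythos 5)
Your argument is correct. The geometric fact is stated and used accurately: with $\mathcal{I}_k(b,R)=\bigl(Rb^{-k-1},Rb^{-k}\bigr]$ and $b=1/(1-\rho)$, an index gap of at least $2$ between the intervals containing $\omega$ and $\omega'$ (with $\omega'<\omega$) gives the \emph{strict} inequality $\omega'\leq Rb^{-k-2}<\omega/b=(1-\rho)\omega$, and the gap of $\geq 2$ is exactly what $\mathcal{U}_2\cap\mathcal{U}_1^{(E)}=\emptyset$ guarantees between every $\mathcal{U}_1$-interval and every $\mathcal{U}_2$-interval (a gap of $1$ would only give $\omega'\leq(1-\rho)\omega$ at the interval endpoints, which is why the extended intervals in the hypothesis are essential). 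Your verification that $\mathcal{F}_1,\mathcal{F}_2\subset\mathcal{S}_{R,\rho}$ is sound: in either family the strict minimum and the middle coordinate sit in opposite $\mathcal{U}_i$'s, and the index of the smaller one is necessarily at least two larger, so $|\omega_0-\omega_-|>\rho\,\omega_0$. The Jensen step $\int_{\mathcal{U}_2}h^2\,g\geq A^2/M_{\mathcal{U}_2}$, the decomposition $A+B=M_{\mathcal{U}_1}M_{\mathcal{U}_2}$ (using $\mathcal{U}_1\cap\mathcal{U}_2=\emptyset$ so the diagonal carries no product mass), and the pigeonhole on $\max\{A,B\}\geq M_{\mathcal{U}_1}M_{\mathcal{U}_2}/2$ are all correct, yielding $\nu(\delta)=\tfrac{3}{4}\eta(\delta)^3$, independent of $R$ and $\rho$ as required.

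The paper itself gives no proof of this lemma, deferring to the companion reference on the Nordheim equation, so there is no in-text argument to compare against. It is worth noting that your route uses \emph{only} the mass lower bounds (\ref{Z1E4a}) and the disjointness $\mathcal{U}_2\cap\mathcal{U}_1^{(E)}=\emptyset$ from alternative (ii); the additional structural data (\ref{Z1E4b})--(\ref{Z1E5d}), in particular the quadratic inequality (\ref{Z1E4d}), play no role. Since those extra conclusions are packaged into Lemma~\ref{altresc} presumably because the reference's argument exploits them, your approach appears to be a genuinely leaner derivation: a single symmetrization into the two families $\mathcal{F}_1,\mathcal{F}_2$, a Cauchy--Schwarz in the distinguished variable, and a pigeonhole, replacing whatever more delicate bookkeeping (\ref{Z1E4d}) was designed to support. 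One small point you could make explicit for the strict positivity claim in (\ref{H1bis}): alternative (ii), via (\ref{Z1E5d}), forces $\int_{(0,R]}g>0$, so the final bound $\nu M^3$ is indeed $>0$.
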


\bibliographystyle{amsalpha}

\end{document}